\documentclass[11pt]{article}
\usepackage{fullpage}
\usepackage{enumitem}

\IfFileExists{orcidlink.sty}{\usepackage{orcidlink}}{\newcommand{\orcidlink}[1]{}}
\usepackage{mathpazo}

\usepackage{dsfont}
\usepackage{comment}
\usepackage{graphicx}
\usepackage{url}
\usepackage{latexsym}
\usepackage{amscd,amsmath}
\usepackage{amsfonts}
\usepackage{color}
\usepackage{float}
\usepackage{longtable}
\usepackage{pgf}
\usepackage{tikz}
\usetikzlibrary{arrows,automata,positioning}
\usepackage{comment}
\usepackage{hyperref}
\hypersetup{pdflang=en,
    urlcolor=blue,colorlinks=true,
    citecolor=black!70!green,
    linkcolor=black!70!red,
}
\usetikzlibrary{patterns}

\usepackage{amsthm}
\usepackage{mathtools}

\usepackage{cryptocode}
\usepackage{tcolorbox}
\tcbuselibrary{skins}

\def\secpar{\lambda}

\DeclarePairedDelimiter{\ceil}{\lceil}{\rceil}

\newcommand{\from}{\leftarrow}

\newcommand{\poly}{\mathrm{poly}}
\newcommand{\polylog}{\mathrm{polylog}}
\newcommand{\tO}{\tilde{O}}

\newcommand{\bits}{\{0,1\}}
\newcommand{\getsr}{\overset{{\scriptscriptstyle\$}}{\leftarrow}}

\newcommand{\A}{\mathcal{A}}
\newcommand{\B}{\mathcal{B}}

\newcommand{\calX}{\mathcal{X}}
\newcommand{\calY}{\mathcal{Y}}

\newcommand{\negl}{\mathsf{negl}}
\newcommand{\minH}{H_{\infty}}
\newcommand{\Ex}{\mathbb{E}}

\newcommand{\tx}{\tau}

\newcommand{\txoff}{\tx_{\mathrm{off}}}
\newcommand{\txon}{\tx_{\mathrm{on}}}
\newcommand{\btxoff}{\btx_{\mathrm{off}}}
\newcommand{\btxon}{\btx_{\mathrm{on}}}

\newcommand{\RO}{\mathsf{RO}}
\newcommand{\Or}{\mathsf{O}}

\newcommand{\Game}{\mathbf{Game}}

\newcommand{\btx}{\boldsymbol{\tx}}

\newcommand{\bqrys}{\mathbf{\qrys}}
\newcommand{\bdb}{\mathbf{\db}}

\newcommand{\bS}{\mathbf{S}}
\newcommand{\bomega}{\boldsymbol{\omega}}

\newcommand{\db}{D}
\newcommand{\edb}{E}
\newcommand{\bedb}{\mathbf{E}}
\newcommand{\st}{\mathsf{st}}
\newcommand{\bst}{\mathbf{st}}
\newcommand{\qry}{q}
\newcommand{\qrys}{Q}
\newcommand{\pir}{\Pi}
\newcommand{\spir}{\Pi}
\newcommand{\setup}{\mathsf{Setup}}
\newcommand{\init}{\mathsf{Init}}
\newcommand{\query}{\mathsf{Query}}
\newcommand{\dpir}{\widetilde{\Pi}}
\newcommand{\advice}{\mathsf{Advice}}
\newcommand{\hint}{\mathsf{Hint}}
\newcommand{\recon}{\mathsf{Recon}}
\newcommand{\ans}{\mathsf{ans}}
\newcommand{\bans}{\mathbf{ans}}

\newcommand{\adviceout}{A}
\newcommand{\hintout}{H}
\newcommand{\reconout}{R}
\newcommand{\badviceout}{\mathbf{\adviceout}}
\newcommand{\bhintout}{\mathbf{\hintout}}
\newcommand{\breconout}{\mathbf{\reconout}}

\newcommand{\bR}{\mathbf{R}}

\newcommand{\client}{\mathsf{Client}}
\newcommand{\server}{\mathsf{Server}}
\newcommand{\prf}{\mathsf{F}}
\newcommand{\key}{\mathsf{k}}

\newcommand{\1}{\mathds{1}}

\newcommand{\heading}[1]{\smallskip\noindent{\bf {#1}}}

\newcommand{\apref}[1]{Appendix~\ref{#1}}

\newcommand{\figref}[1]{Figure~\ref{#1}}

\newcommand{\lemref}[1]{Lemma~\ref{#1}}
\newcommand{\secref}[1]{Section~\ref{#1}}
\newcommand{\thref}[1]{Theorem~\ref{#1}}
\newcommand{\defref}[1]{Definition~\ref{#1}}

\newtheorem{thm}{Theorem}[section]
\newtheorem{obs}[thm]{Observation}
\newtheorem{assm}[thm]{Assumption}
\newtheorem{defn}[thm]{Definition}
\newtheorem{lem}[thm]{Lemma}
\newtheorem{cor}[thm]{Corollary}
\newtheorem{clm}[thm]{Claim}

\newtheorem{rem}[thm]{Remark}

\newenvironment{remark}{\begin{rem}}{\end{rem}}
\newenvironment{theorem}{\begin{thm}}{\end{thm}}
\newenvironment{lemma}{\begin{lem}}{\end{lem}}
\newenvironment{corollary}{\begin{cor}}{\end{cor}}
\newenvironment{definition}{\begin{defn}}{\end{defn}}

\title{Lower Bounds for PIR with Preprocessing\\ from Blackbox Cryptography}

\author{
    Alexander Hoover \orcidlink{0009-0003-9818-1419}\footnote{
    Stevens Institute of Technology, Email: \texttt{\href{ahoover@stevens.edu}{ahoover@stevens.edu}}}
    \and
    Giuseppe Persiano \orcidlink{0000-0001-6579-4807}\footnote{Università di Salerno, Email: \texttt{\href{giuper@gmail.com}{giuper@gmail.com}}}
    \and
    Kevin Yeo \orcidlink{0009-0009-4997-6307}\footnote{Google, Email: \texttt{\href{kwlyeo@google.com}{kwlyeo@google.com}}}
}

\date{} %
\begin{document}

{
\hypersetup{urlcolor=black} %
\maketitle
}
\thispagestyle{empty}
 \begin{abstract}
     Single-server private information retrieval (PIR) schemes are known to require
linear query time. Recent works
circumvent these classical lower bounds by leveraging preprocessing to answer
queries in sublinear time. We prove computation lower bounds for PIR with preprocessing schemes making
blackbox usage of any cryptography (such as random oracles or virtual blackbox
obfuscation). If the client stores $s$ bits about
an $n$-bit database, then answering $k = \Omega(s)$ queries requires either
$\Omega(n/s)$ amortized online communication or $\Omega(n/s)$ amortized
server cryptographic operations. This is tight, as known constructions match
either bound while outperforming the other. Our lower bound is unconditional
and allows arbitrary query
protocols, weakened privacy, and server-side database encodings (including
doubly efficient PIR) whenever the encoding is independent of the blackbox
cryptography. Previous bounds were only known conditionally
and for restricted classes of preprocessing, e.g., under
non-encoding assumptions.
Our framework also yields $\Omega(n/s)$ communication lower bounds for
schemes with $o(n/s)$ server cryptographic operations,
communication-determined server cryptography, or perfect privacy in the
idealized model.
Finally, we prove lower bounds for symmetric PIR with
client preprocessing in the random oracle model and give a matching
construction using only one-way functions in the online phase.

     \vfill
 \end{abstract}
 \clearpage

\pagebreak
{
\hypersetup{linkcolor=black} %
\tableofcontents
}
\pagebreak

\pagenumbering{arabic}
\section{Introduction}\label{sec:intro}

Private Information Retrieval (PIR) is a very powerful
cryptographic primitive that enables a client to retrieve
the $i$-th entry from a $n$-bit database $D \in \{0,1\}^n$ held by a server
with the privacy guarantee that
the server cannot learn the index $i$ queried by the client.
Leveraging its strong privacy guarantees, PIR is a core building block for a wide range of privacy enhancing technologies including
advertising~\cite{CCS:GreLadMie16}, blocklist checks~\cite{USENIX:KogCor21}, contact discovery~\cite{PoPETS:DRRT18,USENIX:KRSSW19}, distributed file systems~\cite{ipfs}, leaked password check~\cite{USENIX:TPYRKI19,CCS:LPASCR19,USENIX:ALPRSS21}, private communication~\cite{SP:ACLS18} and web search~\cite{tiptoe}.
In recent years, PIR has also seen real-world deployments in industry including caller ID~\cite{apple}, device enrollment~\cite{google} and password leak detection~\cite{microsoft}.
Given the importance of PIR, there has been substantial work in studying its complexity and efficiency.

PIR has been studied in two main settings:
the multiple-server setting where the database is held by two or more servers
that are non-colluding,
introduced by Chor, Kushilevitz, Goldreich and Sudan~\cite{FOCS:CGKS95},
and the single-server setting put forth by
Kushilevitz and Ostrovsky~\cite{FOCS:KusOst97}.
It is known that any single-server PIR with 
sublinear communication implies
oblivious transfer~\cite{EC:DiCMalOst00} that cannot be built from blackbox OWFs~\cite{STOC:ImpRud89} and the server must perform a linear
number of public-key cryptography~\cite{EC:DujHaj24}.
In contrast, there exists two-server PIR with 
$\polylog(\lambda, n)$ communication using only
one-way functions (OWFs)~\cite{STOC:ChoGil97,CCS:BoyGilIsh16} as well as $n^{o(1)}$ communication without any cryptographic assumptions at all~\cite{STOC:DviGop15}. 
Even though multi-server PIR constructions are known to be more efficient, their privacy guarantees are more fragile as they
critically depend on stronger non-collusion assumptions between different parties that are challenging to obtain in practice.
For this reason, the study of single-server PIR is important and it will be the focus of our work.

\heading{PIR with Client Preprocessing.} As a way to improve the efficiency of PIR, there has been a recent line of work that studies PIR with client preprocessing where the 
database
is preprocessed by the client in an offline phase and the information gathered
is used to improve the efficiency of online queries 
(see~\cite{CCS:PatPerYeo18,EC:CorKog20,USENIX:KogCor21,C:SACM21,EC:CorHenKog22,USENIX:MZRA22,EC:Yeo23,C:LazPap23,C:Yeo23,TCC:LazPap23,EC:GhoZhoShi24,USENIX:LazPap24} and references therein).
These constructions obtained sublinear query time circumventing known linear query time lower bounds for PIR without preprocessing~\cite{C:BeiIshMal00,EC:DujHaj24}.
Surprisingly, it turns out the preprocessing model also enables building single-server PIR from weaker assumptions.
Recent works, starting from Piano~\cite{SP:ZPZS24} and follow-ups~\cite{CCS:RenMugSun24,EC:HPPY25,EC:WanRen25}, have shown it is possible to build
single-server PIR with sublinear query communication from blackbox OWFs in the client preprocessing model.
In more detail, these works show that, if a client stores $s$ bits about an $n$-bit database, then PIR schemes require only $\tilde{O}(n/s)$ (amortized) communication and computation per query.
In a standard parameter setting with client storage $s = O(\sqrt{n})$, the query computation and communication becomes $\tilde{O}(\sqrt{n})$.
Furthermore, there exist schemes~\cite{SP:ZPZS24,CCS:RenMugSun24,EC:HPPY25,EC:WanRen25} where the server performs no cryptography during queries
(contrasting the linear public key operations lower bound in~\cite{EC:DujHaj24}).

In terms of lower bounds for PIR with client preprocessing model, 
the known limitations for constructions apply only to special cases. 
The results in~\cite{EC:CorKog20,EC:CorHenKog22,EC:Yeo23} 
prove computation lower bounds,
but require very strong non-encoding assumptions for the server that must store the $n$-bit database without any modification for responding to queries.
Recent work~\cite{C:IshShiWic24} showed barriers for communication-efficient schemes
restricted to database-oblivious and single-roundtrip query constructions, where
all client requests may depend only on the queried index and the randomness (and not on the database).
In particular, such a construction would imply a separation between
the complexity classes SZK and BPP, which remains an open complexity theory problem for many natural settings including the random oracle model.
Therefore, it remains open to unconditionally resolve the best possible efficiency achievable by a single-server PIR with client preprocessing.

\heading{PIR with Server Preprocessing.}
We note that there also exists a line of work studying PIR with server preprocessing where all the additional information is stored publicly by the server~\cite{C:BeiIshMal00,TCC:BIPW17,TCC:CanHolRic17,EC:HOWW19,SODA:PerYeo22} that is also known as doubly efficient PIR (DEPIR). A recent breakthrough work presented a single-server DEPIR using $\tilde{O}(1)$ query communication and time assuming the hardness of Ring LWE~\cite{STOC:LinMooWic23}.
In this model, it is already known that
public-key cryptography is necessary to obtain sublinear query communication (as observed in~\cite{EC:HPPY25}).
Furthermore, it was proven that restricted classes
of DEPIR with single-roundtrip queries and passive servers acting as memory (without any computation) are impossible from blackbox cryptography~\cite{EC:LinMooWic25}. However, it remains unknown whether this impossibility applies to all possible
DEPIR schemes from blackbox cryptography without any other restrictions.

\heading{Symmetric PIR with Client Preprocessing.} We also extend the problem to try and understand what additional properties could be achieved
by single-server PIR with preprocessing when depending only on one-way functions.
For example, we could consider the stronger notion of symmetric PIR~\cite{STOC:GIKM98} where we also
wish to guarantee that the client may retrieve at most one database entry per query from the server.
By definition, it is known that symmetric PIR (without preprocessing) implies oblivious transfer even if the symmetric PIR is inefficient and uses linear (or even more) communication~\cite{EC:DiCMalOst00}. One can wonder whether it is possible to circumvent this blackbox impossibility in the
client preprocessing model. Given the above, it is clear that it is insufficient to depend only on one-way functions in the entire construction. However, it remains plausible to build a SPIR with client preprocessing that relies only upon one-way functions during query time.

\subsection{Our Contributions}

As our main result, we prove lower bounds for single-server PIR with preprocessing making blackbox usage of cryptography. 
Throughout our work, we will use PIR with preprocessing to encompass both client preprocessing and server preprocessing (unless otherwise specified).
To model blackbox cryptography usage, we suppose schemes have access to a 
crypto oracle~\cite{EC:DujHaj24,EC:LinMooWic25} that may be used to implement a wide-range of ideal powerful cryptographic primitives,
including random oracles, generic multilinear groups and virtual blackbox obfuscation (see Section~\ref{sec:prelim:oracles} for more details).
All our lower bounds will focus on the efficiency during online query time
and are agnostic to the cost of the offline phase (used to compute the client storage and/or server encodings). In particular, our results apply to all constructions whose offline phase proceeds in two steps. First, the server is able to perform a one-time encoding of the database. Afterwards, the client and server perform joint computation to enable
the client to obtain some private state. We note this encompasses a wide range of PIR with preprocessing
models including PIR with client preprocessing and public-key DEPIR.
Our lower bounds apply to all constructions where the first server database encoding step
is performed without any oracle queries (or no database encoding is performed at all).
To our knowledge, this aligns with all blackbox constructions known to date (as well as state-of-the-art non-blackbox constructions too). All other steps of the PIR protocol are able to have unrestricted oracle
access. See \defref{def:oracle-free} for more details on oracle-free database encodings.

\heading{Computation Lower Bounds.}
We present computation lower bounds showing that any single-server PIR with client preprocessing
that uses any blackbox cryptography (in the crypto oracle model)
cannot be more efficient than currently known constructions.

\begin{theorem}[Informal]
There does not exist a single-server PIR with client preprocessing scheme making blackbox usage of cryptography
for an $n$-bit database with $s$-bit client storage supporting a single batch query of $k = \Theta(s)$ indices with
amortized online query computation of $t = o(n/s)$. In particular, we show each such single-server PIR scheme must satisfy at least
one of the following:
\begin{itemize}[noitemsep]
\item The amortized online query communication must be $\Omega(n/s)$ bits.
\item The amortized server cryptographic bit operations must be $\Omega(n/s)$.
\end{itemize}
\end{theorem}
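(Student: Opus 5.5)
The approach is a compression (encoding) argument against the random database. Let $D \getsr \bits^n$ be uniform. Fix a PIR with client preprocessing that has client storage $s$, a batch of $k = \Theta(s)$ queries, amortized online communication $c = o(n/s)$ and amortized server cryptographic (oracle) bit operations $t = o(n/s)$. From such a scheme I will build an encoder/decoder pair that stores $D$ in fewer than $n$ bits with noticeable probability, which contradicts incompressibility. The key idea is that the oracle is the only source of hardness. Because the database encoding step is oracle-free (\defref{def:oracle-free}), a decoder can simulate the oracle lazily and consistently. Everything the server ``knows'' about $D$ and sends back online is then either carried in the transcript or passes through a small set of oracle queries.

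The concrete steps are as follows.

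First, I will sample the oracle, run the offline phase, and record the client state $\st$; this costs $s$ bits.

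Second, I will pick a random batch of $k$ indices $\bqrys$ and run the online protocol, obtaining the transcript $\btxon$, of total length $kc$, and the set of server oracle queries, of total size $kt$ bits.

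Third, I will argue, using privacy, that the $k$ queried indices are ``spread out'': for a random batch, with good probability the client correctly reconstructs $\Omega(k)$ database bits that are distinct from one another.

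Fourth, the encoder writes down $\st$, the client randomness restricted to the online phase, the transcript $\btxon$, the server oracle query/answer pairs made online (about $O(kt)$ bits, plus the positions needed to index them), and the remaining $n - \Omega(k)$ database bits in the clear. The decoder runs the client's reconstruction. It answers any client-side oracle query either from the recorded table or by fresh lazy sampling; this is consistent because the encoding step made no queries.

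The total length is then
\[
s + kc + O(kt) + n - \Omega(k),
\]
which is less than $n$ when $k = \Theta(s)$ and $c, t = o(n/s)$. Hmm — this gives only $\Omega(k)$ savings against $s + kc$. I therefore need the savings to be $\Omega(n)$ rather than $\Omega(k)$. To get that, I will use that the client must be able to answer any of the $n$ indices. I will amortize over $n/s$ independent batches, or equivalently argue that the batch of $k$ queries hides which $k$ bits are read. The decoder can then reconstruct the database bits at all the query indices learned across many virtual batches, rerunning the protocol from the same $\st$. Setting $k \approx s$ and counting, the encoding costs $s + k(c+t)$ while recovering $\Omega(k \cdot n/s)$-worth of hidden entropy. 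This is the standard Corrigan-Gibbs–Kogan style balance $s(c+t) \gtrsim n$.

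The main obstacle is the simulation step: replacing the real oracle, whose answers the server computed as functions correlated with $D$ and $\st$, with the decoder's lazy simulation. The decoder must get the real answers on every query whose answer matters. The server's online oracle queries are recorded explicitly, but the client's offline queries that overlap with them, and queries made during the offline joint computation, must be handled as well. I would address this with a ``heavy query'' argument. First record the $\poly$ queries that the offline phase makes with high probability (absorbed into $s$, or shown unnecessary). Then show via a hybrid that, conditioned on the recorded set, fresh lazy sampling changes the client's output distribution only negligibly. Weakened privacy is then handled by requiring only a constant advantage in the ``spreading'' step.
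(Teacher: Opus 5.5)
There is a genuine gap, and it sits exactly where you noticed the count fails. Encoding one batch costs $s$ plus the full online transcript and server query set. That is $o(n)$ in total but may be far larger than $k$, and it recovers only $k$ bits, so nothing is saved.

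Your repair does not fix this. The scheme is only assumed to handle a single batch of $k$ queries per state, so you cannot rerun it from the same $\st$ on $n/s$ batches. Even if you could, every batch produces its own transcript and server oracle queries that the encoder must write down. Amortizing therefore multiplies cost and gain by the same factor, and the ratio never improves.

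Reusing one recorded transcript $\tx$ for a different batch $\qrys$ with the \emph{same} $\st$ also fails. The client's messages depend on $\qrys$, and privacy only equates the marginal law of $\tx$ across batches, not the joint law of $(\st,\tx)$.

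Finally, ``the batch hides which $k$ bits are read'' has no content here. The server may use an arbitrary oracle-free encoding $\edb$ and do unbounded plain computation over all of it, since only communication and oracle bits are charged. The Corrigan-Gibbs--Kogan probe-counting balance needs a non-encoding server, so it is unavailable. Your $s(c+t)\gtrsim n$ is asserted, not derived.

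The missing idea is that one $o(n)$-bit transcript $\tx$, produced for a fixed batch $\qrys'$, plus only $O(s)$ further bits depending on $\qrys$, suffices to recover $\db[\qrys]$ for an \emph{arbitrary} batch $\qrys$. The paper obtains this in three moves:
\begin{enumerate}
\item It puts the server's oracle query set $S$ alongside $\tx$ in the advice.
\item It searches over oracle/state pairs resampled to agree with $\Or$ on $S$. This is where oracle-freeness of $\init$ is used: $\edb$ is independent of $\Or$, so the server's execution replays exactly and the resampled pairs have the correct conditional law.
\item It encodes the chosen pair as a counter below $2^{3s}$. The counter is the number of transcript-matching candidates that precede the first matching candidate whose uniformly drawn state also equals an honest $\setup$ output. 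A counter is needed because the reconstructor, lacking $\db$, cannot itself test $\setup$-consistency.
\end{enumerate}
Privacy is then used, via a distinguisher that queries the real oracle only on $S$, to show this works for $\qrys\neq\qrys'$. The subkey-prediction bound then shows that advice of fewer than $n/2$ bits plus a $3s$-bit hint cannot predict $k\ge 27s$ random positions with probability $2^{-s}$.

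Your ``main obstacle'' is also misplaced. Lazily resampling client oracle answers would break correctness, since the state was computed with the real oracle during $\setup$. But in an encoding argument the decoder can simply share the real oracle, which is independent of $\db$.

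If you add per-batch $O(s)$-bit hints of the above kind, your compression route essentially becomes the paper's dual-PIR argument.
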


Our result is tight in multiple ways. First, there exist schemes
that achieve exactly $t = O(n/s)$ online query time~\cite{EC:WanRen25} for
an unbounded number of queries using OWF as a blackbox. Secondly,
there exist schemes that bypass exactly one of the above requirements on
the batch size, computation, and communication.
For small batch sizes, there exist single-server PIR schemes with $o(n/s)$
online communication for a bounded number of less than $s$
queries (such as~\cite{EC:ZLTS23,TCC:LazPap23,EC:GhoZhoShi24}).
To obtain small communication, these schemes
may be combined with blackbox fully homomorphic encryption (FHE) techniques
to support unbounded queries by privately refreshing
the $s$-bit client storage using $o(n)$ communication~\cite{CCS:FLLP24}. 
However, all these FHE approaches require amortized $\Omega(n/s)$ FHE operations
to refresh the $s$-bit client storage (even when ignoring the communication
from the original offline phase).
For server computation, there exists
schemes~\cite{SP:ZPZS24,EC:HPPY25} supporting unbounded queries with
$\tilde{O}(n/s)$ amortized online communication where the server performs no 
cryptographic operations. In other words, our lower bound identifies two 
fundamental bottlenecks such that any scheme will inevitably suffer from at 
least one of them.

Furthermore, our results are more widely applicable than prior lower bounds.
In particular, our lower bounds hold for any
single-server PIR with client preprocessing scheme that makes blackbox usage of cryptography, requiring only that the server's database encoding is computed without crypto oracle queries (the client's private state may still be computed using the crypto oracle).
In contrast, the server time lower bounds in~\cite{EC:CorKog20,EC:CorHenKog22,EC:Yeo23} only apply to schemes with server non-encoding assumptions such that the server must use the database without modification for queries.
Note that our restriction is considerably weaker than non-encoding: the
server may still encode the database arbitrarily, so long as the encoding is
computed without querying the crypto oracle.
Additionally, the communication (and, thus, computation) lower bounds in~\cite{C:IshShiWic24} only apply to database-oblivious, single-roundtrip query constructions where
all client requests depend only on the queried index and internal randomness (along with the lower bound requiring additional complexity assumptions).
Finally, we note that the lower bounds in~\cite{EC:DujHaj24} may be interpreted into the preprocessing model showing
that the total time across the offline and online phases must be linear. In contrast, our lower bounds
show that the amortized online query time must be $\Omega(n/s)$ regardless of the cost of the offline phase (including even if
the client's offline phase performs a linear number of public-key operations).

Our techniques also give lower bounds for doubly efficient PIR (DEPIR)
with server preprocessing.
In particular, we rule out the existence of DEPIR built from blackbox cryptography. 

\begin{theorem}[Informal]
There does not exist a single-server doubly efficient PIR making blackbox usage of cryptography
for an $n$-bit database with query computation of $t = o(n)$.
\end{theorem}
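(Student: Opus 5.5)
The plan is to derive the DEPIR statement as a special case of the main computation lower bound, with client storage $s = O(1)$, and then remove the dependence on batch size and communication.

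A DEPIR is a PIR with preprocessing in which the server computes a public encoding $\edb$ of $\db$ and the client keeps no private state about the database (beyond at most a public key). Two things must be checked before the main theorem applies. First, the encoding must be oracle-free in the sense of \defref{def:oracle-free}. For public-key DEPIR this means we treat the key generation and encryption of the database as happening outside the crypto oracle; that is the standing assumption for all our lower bounds. Second, we may take $s = O(1)$, since state independent of $\db$ carries no information about $\db$ and contributes nothing to the incompressibility argument. Plugging $s = O(1)$ into the main theorem, a batch of $k = \Theta(1)$ queries must incur either $\Omega(n)$ online communication or $\Omega(n)$ server cryptographic operations.

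It remains to rule out DEPIR with $o(n)$ query time. Both communication and server oracle calls are bounded by the server's (and client's) online computation time $t$. So $t = o(n)$ contradicts both alternatives at once, and no such DEPIR can exist.

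For self-containment, I would also sketch the core argument directly in this setting, as a compression argument.
\begin{itemize}[noitemsep]
\item Fix a query index $i$. Because the server makes only $o(n)$ cryptographic operations, a simulator can answer those oracle queries lazily.
\item A decoder that holds the oracle-free encoding $\edb$ can then rerun the server.
\item Privacy forces the set of encoding cells the server touches to be nearly independent of $i$. Call this set $S$; it has size $o(n)$.
\item Correctness then lets us recover $\db_i$ for a random $i$ from $\edb|_S$ plus the transcript, with total length $o(n)$ bits.
\item Running this over many indices compresses a random $\db$ below $n$ bits, which is a contradiction.
\end{itemize}

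The main obstacle is the simulation of the crypto oracle. The server's oracle answers may correlate with the client's offline oracle queries, for instance through a public key. To handle this, I would first "lift" all heavy oracle queries made during setup into the public information; this costs no client storage since it is public. Only after that can the online oracle be treated as fresh. Beyond that, I expect the argument to reduce directly to the main theorem.
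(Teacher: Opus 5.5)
Your central step, instantiating Theorem~\ref{thm:main-lower-bound} with $s = O(1)$ and a batch of $k = \Theta(1)$ queries, is not available. That theorem and its proof only work in the regime $s = \omega(\log(n+\lambda))$, $k \ge 27s$, $k \ge 1000\log n$, for two reasons. First, the reduction of Theorem~\ref{thm:dualpirmain} loses an additive $2^{-s+1}$ from the counter. For the empty DEPIR state $s = 0$, its guarantee $\varepsilon - 2\delta - 2^{-s+1} - 2^{-n+1}$ is negative. Second, the subkey-prediction bound of Lemma~\ref{lem:subkey-ours} carries a factor $n$ and yields $2^{-\Omega(k)}$ only once $k = \Omega(\log n)$. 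So for constant $k$, Lemma~\ref{lem:dpir} rules out nothing.

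Padding the state to $s = \omega(\log(n+\lambda))$ and batching $k = \Theta(s)$ independent DEPIR queries does not rescue this. You only learn that the \emph{total} cost over the batch is $\Omega(n)$, i.e.\ $\Omega(n/k)$ per query. That is compatible with, say, $t = n/\log^2 n$.

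The paper makes exactly this observation and instead proves a single-query version. It specializes the dual PIR to $k = 1$, $s = 0$: $\hint$ outputs nothing and $\recon$ uses $c = 0$. With an empty state every transcript match is automatically an acceptance, so the counter loss disappears. Rerunning Lemma~\ref{lem:p} with threshold $2^{-2n}$ makes the $2^{3n}$ resampling iterations suffice. This gives an $(n,1,t,0)$-dual PIR with $\varepsilon' \ge \varepsilon - 2\delta - 2^{-n+2}$ (Lemma~\ref{lem:depir-dual}). The paper then replaces subkey prediction with the one-way communication bound for \textsc{Index} (Lemma~\ref{lem:dualpir-single-query}), which caps success at $\tfrac12 + \sqrt{(\ln 2)\,t/(2n)}$. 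This is why the formal statement assumes constant correctness $\varepsilon \ge 9/10$ and $\delta \le 1/20$.

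Your self-contained compression sketch does not close the gap either. Privacy makes only the \emph{distribution} of the server's touched cells nearly independent of $i$. To extract many bits from one $o(n)$-bit object, you must fix one transcript generated under a query $\qrys'$. Then, without knowing $\db$, you must find client coins and an oracle that reproduce that exact transcript under a \emph{different} query. That re-randomization is the heart of the paper's argument. It records the server's oracle query set $S$ in the advice and resamples oracles agreeing with $\Or$ on $S$ (Lemma~\ref{lem:oracle_resample}). This makes the server's replay exact, and it is precisely where oracle-freeness of $\init$ is used (Claim~\ref{clm:replay}).

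Lazily answering only the server's oracle queries does not suffice. The client's own online oracle queries must also be answered consistently with the fixed server messages. Your ``lift heavy setup queries'' step addresses a non-issue here: with an oracle-free encoding, the public key and $\edb$ involve no oracle queries, and a DEPIR has no meaningful $\setup$. Finally, each bit is recovered only with probability $\varepsilon < 1$, so a plain counting compression argument does not conclude. You need an information-theoretic bound such as the Pinsker/Cauchy--Schwarz argument of Lemma~\ref{lem:dualpir-single-query}.
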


As a caveat, our lower bound only applies to public-key DEPIR
whereas the prior impossibility~\cite{EC:LinMooWic25} also applies to secret-key DEPIR
where the server's encoding may depend on a client private state as studied in~\cite{TCC:BIPW17,TCC:CanHolRic17,eprint:CIMR25}.
Our lower bound places no restriction on the query protocol, which may
consist of arbitrarily many rounds with an actively computing server. In
contrast, the prior impossibility in~\cite{EC:LinMooWic25} only applied to
DEPIR with single-roundtrip query algorithms with a passive server
restricted to acting only as memory and performing no computation, while
allowing the preprocessing to use the oracle (unlike our result). Therefore, the two results are
incomparable. Nevertheless, our oracle-free database encoding restriction
aligns with the current state-of-the-art public-key DEPIR~\cite{STOC:LinMooWic23}, even
though it is a non-blackbox construction.
See
Section~\ref{sec:depir} for further discussion. 

Finally, we note that our lower bounds also apply to PIR schemes that provide privacy guarantees
against weaker adversaries that are restricted to a fixed polynomial number of crypto oracle queries
(similar to weak key exchange from Merkle puzzles).

\begin{theorem}[Informal]
There does not exist a weak single-server PIR with preprocessing scheme 
for an $n$-bit database with $s$-bit client storage
supporting a single batch query of $k = \Theta(s)$ indices with amortized online query computation of $t = o(n/s)$ that makes $q$ crypto oracle
queries and provides privacy against adversaries that make at most $O(q)$ crypto oracle queries.
\end{theorem}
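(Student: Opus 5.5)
The plan is to reduce weak privacy to the same incompressibility-style argument used for the main computation lower bound. Since the adversary's oracle budget is now only $O(q)$, every simulation step must itself use $O(q)$ crypto oracle queries. Suppose, for contradiction, that a scheme supports a batch of $k=\Theta(s)$ queries with amortized online computation $t=o(n/s)$, so total online work $kt=o(n)$. We will build an encoding of a uniformly random database $D\in\{0,1\}^n$ of length less than $n$, which is impossible.

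The encoder and decoder share public randomness: the oracle (lazily sampled) and the coins of the parties. The key structural fact is the oracle-free database encoding (\defref{def:oracle-free}). Because the server's encoding $E=\mathsf{Enc}(D)$ uses no oracle queries, all oracle dependence lies in the client state and the online interaction. The encoder runs the offline phase and stores the $s$-bit client state $\st$. It then runs the batch query on $k$ indices drawn from a distribution, for example a random $k$-subset, and records three things: the transcript, the positions of $E$ probed by the server (at most $kt$ probes), and the probed values. It also records the results of the server's oracle queries, which are compressible given the public oracle since the decoder can recompute them.

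The decoder replays the online phase from $\st$, the transcript, and the recorded probe answers, and so learns the $k$ retrieved bits. Taken alone this yields only $k$ bits from $s+kt+\text{comm}$ bits of encoding. To get a contradiction we must learn far more of $D$ than the encoding length.

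This is where privacy enters, in a sequence of steps.

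\textbf{Step 1.} By privacy, the server's view (its probes and its oracle queries) is computationally independent of the queried indices. Hence, for a random batch, each database position $j$ is retrieved with noticeable probability conditioned on that view.

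\textbf{Step 2.} Averaging over many independent batch executions, we rerun the online phase $O(n/k)$ times from the same $\st$ using fresh client coins. The decoder thereby recovers a constant fraction of $D$. The encoding cost is $s+O(n/k)\cdot(kt+\text{comm}+\text{server crypto})$. This falls below $n$ exactly when both amortized communication and amortized server cryptographic operations are $o(n/s)$, with $k=\Theta(s)$.

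\textbf{Step 3.} The contradiction must be obtained by a distinguisher making only $O(q)$ oracle queries. We therefore phrase the argument as follows: if the reconstruction fails to cover some position $j$, then an adversary who simulates the server, holding $E$ and seeing the transcript, can tell whether $j$ was among the queried indices. Such an adversary only re-executes the server's side of the protocol, so it makes at most $q$ oracle queries and is admissible for weak privacy.

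The main obstacle is Step 2. Oracle answers correlated with $\st$ during the offline phase could let the client "hide" information that the decoder cannot reproduce. The first thing I would try is the standard heavy-query technique: identify the $O(q)$ oracle points the client's offline phase queried with noticeable probability, include them (or their answers) in the encoding, and treat every remaining oracle point as fresh randomness, independent of $D$ given $E$.

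The heavy-query learning adds only $\poly(q)$ bits and $O(q)$ oracle queries. This keeps the distinguisher within the $O(q)$ budget, which is precisely why weak privacy suffices here.
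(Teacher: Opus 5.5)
\textbf{Gap 1: the compression argument cannot give a contradiction.} The failure is structural, not a matter of parameters. In every run the client legitimately learns its $k$ answer bits from $\st$ plus that run's transcript. So once you record each run's transcript (and the server's oracle answers), the recorded material must carry, in total, at least the number of recovered bits minus $s$.

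Repeating runs from the same $\st$ only amortizes away the $s$ state bits. The encoding cost per recovered bit stays equal to the amortized online cost, which is far above $1$ for any nontrivial scheme (known schemes send $\tilde{O}(n/s)$ bits per query).

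The arithmetic in Step~2 is also wrong. $O(n/k)$ batches at $o(n)$ bits each is $o(n^2/s)$ bits. That is below $n$ only if the amortized cost is $o(1)$, not merely $o(n/s)$.

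Recording probed cells of $\edb$ does not help. $\edb$ is an arbitrary oracle-free encoding, so those cells are not bits of $\db$; this is exactly the non-encoding assumption the paper is built to avoid. The client replay needs only $\tx_{\server}$ anyway.

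Privacy also does no real work in Steps~1 and~3. In a compression scheme the decoder knows every queried batch from the shared randomness, so there is no ``coverage failure'' for a distinguisher to detect.

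\textbf{Missing idea: one transcript as query-independent leakage.} The paper records a single pair $(\tx, S)$, produced for a fixed dummy batch $\qrys'$ before the real $\qrys$ is chosen. This is $o(n)$ bits, counting both communication and server oracle inputs.

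For the real $\qrys$, $\hint$ resamples (state, oracle) pairs consistent with $\tx$ and $\Or(S)$ using \lemref{lem:oracle_resample}, and outputs only an $O(s)$-bit counter. $\recon$ then replays the client against $\tx_{\server}$. Privacy is used exactly once, in \lemref{lem:dualpiradv}: it shows that success on $\qrys$ loses at most $2\delta$ relative to success on $\qrys'$.

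The contradiction comes from subkey prediction, \lemref{lem:subkey-ours}. With fewer than $n/2$ bits of leakage, $k$ random positions can be guessed with probability at most $2^{-0.15k}$. A $3s$-bit hint multiplies this by at most $2^{3s}$, which is impossible once $k \ge 27s$.

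\textbf{Gap 2: your oracle step breaks the weak-privacy budget.} Heavy-query learning in the Impagliazzo--Rudich/Barak--Mahmoody style costs on the order of $q^2$ oracle queries, not $O(q)$. Merkle puzzles show this quadratic gap cannot be avoided. So that step pushes your distinguisher outside the weak-privacy budget.

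The paper needs no heavy queries. Its distinguisher queries the real $\Or$ only on $S$, which is exactly the honest server's online queries. Every other execution runs under resampled oracles built from a locally lazily sampled random oracle.

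The correlation between $\st$ and the oracle is handled differently too. With an oracle-free encoding, an oracle resampled conditioned on $\Or(S)$ is distributed exactly as the real one given $\edb$ (Claim~\ref{clm:replay}). The weak-PIR statement then follows from \thref{thm:main-lower-bound} by noting this query count.
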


The above lower bound may also be applied to single-server PIR (without preprocessing) implying there does not exist
any such weak schemes that obtain sublinear online query time against weak adversaries that make the same number of
crypto oracle queries as an honest execution of the PIR protocol. To our knowledge, this was not previously known. The
prior blackbox impossibility shows sublinear communication PIR implies oblivious transfer~\cite{EC:DiCMalOst00,EC:DujHaj24}, but there are known
constructions of weak oblivious transfer that are secure against such restricted adversaries~\cite{TCC:BihGorIsh08}.

\heading{Communication Lower Bounds.} We also prove communication lower bounds for certain classes of single-server PIR with preprocessing schemes that match current constructions.

\begin{theorem}[Informal]
There does not exist a single-server PIR with preprocessing scheme making blackbox usage of cryptography for an $n$-bit database with $s$-bit client storage
supporting a batch query of $k = \Theta(s)$ indices with amortized online communication of $o(n/s)$ if the construction satisfies at least one of the following requirements:
\begin{itemize}[noitemsep]
\item The amortized server cryptographic bit operations is $o(n/s)$.
\item All server cryptographic operations depend only on the communication transcript.
\item Perfect privacy (in an idealized model).
\end{itemize}
For the first two settings, we require the server's database
encoding to be computed without crypto oracle queries. Our perfect privacy
result, however, requires no such server's database encoding restriction.
\end{theorem}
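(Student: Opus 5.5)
We reduce all three cases to a single incompressibility argument. Suppose the scheme has $s$-bit client storage, answers a batch of $k = \Theta(s)$ queries, and has $o(n/s)$ amortized online communication. Then the total online communication over the batch is $c = o(n)$ bits.

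\textbf{The core incompressibility argument.} Consider a uniformly random database $\db \in \bits^n$ and a uniformly random batch of $k$ indices. By correctness, after the online phase the client recovers $\db_{i_1},\dots,\db_{i_k}$. By privacy, the transcript, and hence the set of locations the protocol effectively depends on, is nearly independent of which indices were queried. So in expectation over the batch, the pair (client state of $s$ bits, online transcript) determines about $k$ database bits spread across $\Omega(n)$ "potential" positions.

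Following the standard approach of the client-preprocessing lower bounds, we build an encoding of $\db$. Fix the oracle and the client's coins via shared randomness. The encoder sends the client state, the transcripts for many batches, and the database bits the server's answers depend on beyond what is derivable. The decoder simulates the client to recover the queried bits. Summing over $n/k$ disjoint-ish batches, each covering fresh positions, we get
\[
s + (n/k)\cdot c_{\mathrm{batch}} \ge \Omega(n).
\]
With $k=\Theta(s)$ this gives $c_{\mathrm{batch}} = \Omega(n/s)\cdot k/s$. The amortized communication per query is therefore $\Omega(n/s)$, a contradiction.

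\textbf{The main obstacle: the server is active.} The server's answers are functions of $\db$, so the transcript is not simulable without $\db$. This is exactly where each of the three hypotheses is used.

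\emph{Communication-determined server cryptography.} Here the oracle calls the server makes depend only on the transcript. The decoder, holding the transcript, can therefore reproduce every oracle interaction itself. Because the encoding of $\db$ is oracle-free, the server's remaining computation is a plain function of $\db$ plus transcript. We would then argue that the online transcript in the real world and in a world where the server's oracle answers are resampled are indistinguishable to the client. That reduces the problem to an oracle-free server, for which privacy against an unbounded server combined with the single-server setting forces the transcript to carry $\Omega(n/s)$ bits per query. This is a Beimel–Ishai–Malkin style argument against a server that computes without cryptography.

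\emph{$o(n/s)$ server cryptographic operations.} A server making only $o(n)$ oracle bit-queries over the batch can have those query/answer pairs appended to the encoding at cost $o(n)$. This reduces to the previous case and preserves the same bound.

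\emph{Perfect privacy.} We use the idealized model. The transcript distribution is then exactly independent of the indices, so the compression argument goes through information-theoretically even if the encoding uses the oracle. The reason is that we can fix the oracle as part of the shared randomness between encoder and decoder, and perfect privacy removes the need for any hybrid over oracle answers.

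The step we expect to be hardest is the first case: arguing that oracle answers obtained by the server cannot smuggle $\db$-dependent information beyond the transcript. We would try first to handle it by a lazy-sampling argument: the decoder answers every oracle query consistently from a table sampled independently of $\db$, and we show the joint distribution with the client's view is unchanged.
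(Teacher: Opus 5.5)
There is a genuine gap, and it sits in the core counting, before any of the three hypotheses is used. From $s+(n/k)\cdot c_{\mathrm{batch}}\ge\Omega(n)$ you only get $c_{\mathrm{batch}}\ge\Omega(k)-sk/n=\Omega(k)$. That is $\Omega(1)$ amortized communication per query, not $\Omega(n/s)$. The cause is that you charge the state once and a fresh transcript per batch. In that direction each batch recovers $k$ bits from $s+c_{\mathrm{batch}}$ bits, and nothing beyond roughly $c_{\mathrm{batch}}\ge k-s$ can come out. (Running $n/k$ batches on one state is also outside the single-batch privacy guarantee.)

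The paper reverses the roles; this is the point of its \emph{dual PIR}. A single transcript $\tau$, together with the server's oracle query set $S$, is produced for a fixed dummy batch $Q'$ \emph{before} the real batch $Q$ is chosen. Only an $O(s)$-bit hint depends on $Q$. Heuristically this gives $|\tau|+|S|+(n/k)\cdot O(s)\ge n$, so $|\tau|+|S|=\Omega(n)$ once $k$ is a large constant multiple of $s$. Correctness may be as low as $2^{-s}$, so the paper does not compress exactly. It treats $(\tau,S)$ as $t$-bit leakage chosen independently of a uniformly random $Q$ and applies the Bellare--Kane--Rogaway subkey-prediction bound: $D[Q]$ keeps min-entropy $0.15k$ when $t<n/2$, and the hint removes only $3s$ bits of it.

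Making one $\tau$ serve every $Q$ is where the hypotheses are actually needed, and your plan for this step would not work. The hint must pin down a client state \emph{and} client-side oracle behaviour that reproduce the recorded $\tau$ on input $Q$. The paper does this in three moves:
\begin{itemize}
\item It searches over crypto oracles resampled to agree with $\mathcal{O}$ on $S$ (Lemma~\ref{lem:oracle_resample}), so the deterministic server, whose encoding is oracle-free, replays its messages exactly.
\item It encodes the accepted candidate by a counter of at most $3s$ bits.
\item It transfers correctness from $Q'$ to $Q$ through a privacy adversary that queries $\mathcal{O}$ only on $S$.
\end{itemize}
The hypotheses exist so that the decoder knows $S$. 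In case~1, $S$ is appended to the advice at cost $o(n)$; in case~2, it is recomputed as $\mathsf{SvrQ}^{\mathcal{O}}(\tau)$.

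Your alternative is to reduce to an oracle-free server and then use privacy against an unbounded server plus a Beimel--Ishai--Malkin argument. This fails on two counts:
\begin{itemize}
\item Beimel--Ishai--Malkin bounds server computation, not communication.
\item A scheme whose server never calls the oracle is still only computationally private, because the client uses the oracle. Piano/HPPY-type schemes are exactly of this kind, so only privacy against adversaries making polynomially many oracle queries is available.
\end{itemize}
Lazily answering oracle calls from a fresh table also cannot replay the client on the recorded $\tau$, since its client messages were computed under the real oracle.

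In the perfect-privacy case you additionally need the client's coins to be a function of its state and the oracle, as the paper assumes. The hint is then a uniformly guessed state, a $2^{-s}$ loss that the subkey bound tolerates. Perfect privacy then makes the success probability identical whether $\tau$ was generated under $Q'$ or under $Q$. With independent client coins, the decoder cannot replay at all.
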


The first setting is a direct corollary of our computation lower bound, which is tight as there exist schemes
with $\tilde{O}(n/s)$ communication where the server performs no cryptography~\cite{SP:ZPZS24,EC:HPPY25}.
For the second setting, we consider PIR schemes where the server's cryptographic operations depend only on the communication
transcript.
To our knowledge, all OWF-based single-server PIR schemes satisfy this criterion and, furthermore, there exist OWF-based schemes
with $O(n/s)$ communication exactly matching our lower bound~\cite{EC:WanRen25}.
Additionally, there do exist schemes with cryptographic operations that depend on the database (not appearing in the communication) that achieve $o(n/s)$ communication
using FHE~\cite{CCS:FLLP24}.
Finally, we prove an $\Omega(n/s)$ communication lower bound
for perfect privacy matched
by known constructions~\cite{CCS:RenMugSun24,EC:HPPY25} when adapted to the random oracle
model.
This lower bound applies to schemes that derive their client query
randomness from the random oracle and key material in their state,
as done by the adapted constructions
of~\cite{CCS:RenMugSun24,EC:HPPY25}. Notably, this is our one lower bound
that requires no restriction on the server's database encoding.

Our lower bounds substantially improve over the only existing
communication lower bound in this model~\cite{C:IshShiWic24} that is only established under certain
complexity-theoretic assumptions, and therefore yields only a barrier for
blackbox usage of specific crypto oracles.
Moreover, this bound only
applies to certain restricted classes of PIR by making non-encoding and round
assumptions.
While our first two communication bounds assume an oracle-free database
encoding, they permit arbitrary (oracle-free) encodings otherwise, as well
as multi-round and database-dependent queries. See Section~\ref{sec:related} for a
detailed comparison.

\heading{Symmetric PIR with Client Preprocessing.}
We show that it is possible to obtain even stronger properties while still only relying upon OWFs during online queries.
In particular, we present a construction of symmetric single-server PIR with client preprocessing where it is guaranteed that
the client only obtains the queried entry and no other information about the database.

\begin{theorem}[Informal]
There exists a symmetric single-server PIR with client preprocessing %
where the query algorithm only relies upon OWFs
using $s$-bit client storage with query communication $\tilde{O}(n/s)$ and query time $\tilde{O}(n/s)$ supporting $\tilde{O}(s)$ queries
per offline phase.
\end{theorem}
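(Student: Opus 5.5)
Since the statement is an existence claim, I will prove it with an explicit construction that combines a OWF-based client-preprocessing PIR with oblivious transfer, which is used only in the offline phase. The skeleton is a Piano-style scheme. The client holds about $s/\log n$ hint sets. Each set is described compactly by a PRF key and covers one element per chunk, where the database is split into $\sqrt{n}$-style chunks generalized to $n/s$ chunks of size $s$ (or the reverse, depending on how the parameters are balanced). The client also holds the parity of the database restricted to each set. To query $i$, the client finds a hint set containing $i$, removes $i$, and sends the edited set. The server returns the parity of that set, and the client XORs it with its hint. That gives $\tilde{O}(n/s)$ communication and time with only PRF evaluations online.

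The issue with symmetry is that the raw Piano hints leak parities of many entries to the client, and an answer to a malformed set leaks arbitrary parities. I would handle this with server-side masking.

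\emph{Offline.} The server samples a PRF key $K$ and defines the masked database $\hat D_j = D_j \oplus \prf_K(j)$. The client and server then run a secure two-party computation, using OT, that gives the client the parities of its PRF-defined hint sets over $D$ itself. The server additionally samples one-time pads, so that the client instead receives each hint parity XOR a fresh server-known mask $r_\ell$. The client also receives the evaluations it needs in a form that does not reveal $K$. The client's sets stay hidden from the server, and the client learns nothing beyond these masked values, by 2PC security. This is the only place OT or public-key operations are used, which is consistent with the blackbox impossibility.

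\emph{Online.} The client reveals the index $\ell$ of the hint being used together with the edited set. Privacy of $i$ is preserved as in Piano, because the edited set is distributed uniformly, given PRF pseudorandomness and the refresh via backup hints. The server then returns the parity of $D$ over the edited set XOR $r_\ell$. The XOR with the client's masked hint unmasks exactly $D_i$, and each $r_\ell$ is used at most once.

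The main obstacle is the step after that: making sure the client cannot send an edited set that is inconsistent with hint $\ell$ in order to learn a different parity, for example two differences. I would fix this by having the server verify consistency using only symmetric primitives. During the offline 2PC, the client receives a commitment or MAC under a server-only key on the description of each hint set. Online, the client proves that the sent set equals the committed set minus exactly one element. The simplest version sends the chunk-wise offsets, and the server checks a MAC on the seed-derived set. For that check, the offline phase can hand the client, per hint, a MAC on each chunk element, and the server verifies the $n/s - 1$ revealed elements. This reveals to the server which elements are present, which is already public in Piano, so no privacy is lost.

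Backup hints are refreshed the same way, with their masks and MACs prepared offline. This supports $\tilde{O}(s)$ queries before a new offline phase is needed. Finally, I would prove symmetric privacy via a simulator that, given only $D_{i}$ for each query, programs the answer to be the masked hint XOR $D_i$. The one-time pads make this indistinguishable, and unforgeability of the MAC rules out deviating queries.
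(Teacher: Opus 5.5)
Your offline phase matches the paper's: a 2PC hands the client its hint parities, one-time padded with server randomness, and public-key work is confined there. The online step as you describe it, however, fails on both privacy fronts.

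\emph{Query privacy.} To return a single padded parity of $S_\ell\setminus\{i\}$, the server must know that set. Every hint set has exactly one element per chunk, so the edited set is missing exactly the chunk of $i$, and the server learns that chunk. The edited set is therefore not ``distributed uniformly as in Piano.''

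\emph{Data privacy.} The only way to hide the gap is not to tell the server where it is, which is what Piano does. The server then must return more than one parity; Piano returns a candidate for every gap position. If all candidates are padded with the same $r_\ell$, XORing two adjacent ones gives the client the XOR of two entries in neighboring chunks. That value is not determined by $D[Q]$, so data privacy fails even for an honest-but-curious client. Replacing $i$ by a fresh element $y$ of its chunk makes the set uniform, but then the client recovers only $D_i\oplus D_y$.

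The missing idea is the dummy-subset format of \cite{CCS:RenMugSun24,EC:HPPY25} that the paper adapts:
\begin{itemize}
\item A hint covers a random half (plus one) of the blocks.
\item To query, the client drops $i$'s block, uses fresh offsets for all uncovered blocks, and sends all offsets together with either the real block set $P$ or its complement $\overline{P}$.
\item The server returns both parities, padded with \emph{independent} bits $r_{\iota,0}$ and $r_{\iota,1}$.
\item A secret bit $B$, fed into the offline 2PC by the client, fixes which pad is folded into the hint, and hence whether $P$ or $\overline{P}$ is sent.
\end{itemize}
The server then sees a uniformly random balanced partition. The parity the client does not need is padded by a bit that is independent of its view.

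Several further issues remain.
\begin{itemize}
\item \emph{Hint index.} Sending the raw index $\ell$ is fine for one query but not across $\tilde{O}(s)$ queries. Refreshed hints sit at backup indices and always contain a previously queried index, so repeated queries are more likely to reveal a backup index. The paper sends $\pi(\ell)$ for a secret permutation $\pi$, so the server sees distinct uniform indices.
\item \emph{MAC layer.} Your MACs target malicious clients, which the paper does not need: its model and proof are semi-honest, and data privacy follows from the pad argument alone. As designed, the MACs also contradict your bounds. One MAC per chunk element per hint costs about $(n/s)\lambda$ bits per hint, so $\tilde{O}(n\lambda)$ in total instead of $s$. And if the MACs bind elements to chunks, as they must to stop offsets being shifted between chunks, verifying them tells the server which chunk is missing.
\item \emph{Parameters.} With chunks of size $s$ you need on the order of $s\log n$ hints, not $s/\log n$, for every index to be covered with high probability. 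Take chunks of size $s$ divided by a polylogarithmic factor.
\item The masked database $\hat D$ plays no role in your construction.
\end{itemize}
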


We note that the query communication costs of our construction match many prior works that only depend on OWFs during queries~\cite{SP:ZPZS24,CCS:RenMugSun24,EC:HPPY25,EC:WanRen25} without the additional symmetric privacy guarantees and
similarly require no cryptographic server operations. Furthermore, the computational times match known lower bounds in the preprocessing model~\cite{EC:CorKog20,EC:CorHenKog22,EC:Yeo23}.
One may notice that our construction is weaker than previous schemes in that it only supports $\tilde{O}(s)$ queries per offline phase
whereas prior non-symmetric PIR schemes can handle an unbounded polynomial number of queries. This turns out to be a limitation of symmetric PIR with preprocessing as we show:

\begin{theorem}[Informal]
In the random oracle model, there does not exist a symmetric single-server PIR with client preprocessing %
where the query algorithm only relies upon OWFs
using $s$-bit client storage that supports $k > s$ queries per offline phase.
\end{theorem}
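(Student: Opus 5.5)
The plan is an incompressibility argument showing that a symmetric PIR with client preprocessing, whose online phase uses only the random oracle, cannot support $k > s$ queries. Work in the random oracle model. Split the protocol into the offline phase, which may use arbitrary (public-key) cryptography and ends with the client holding an $s$-bit state $\st$, and the online phase, where client and server interact with access only to the random oracle $\RO$.

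First I will use the known blackbox separation to replace the online phase by something with no cryptography. Key agreement or oblivious transfer cannot be built from a random oracle alone (Impagliazzo--Rudich, and the Barak--Mahmoody attack). So there is an efficient adversary making $\poly$ many oracle queries that learns, with high probability, every oracle query of the honest parties that is correlated with the transcript. I will apply this as a simulation: a malicious client who honestly holds $\st$ and engages in the online queries can, using Barak--Mahmoody style heavy-query learning, find all oracle queries of the server that matter. Conditioned on that, the server's responses are effectively a function of the encoded database, the transcript, and a small set of learned oracle values. In other words, the online phase behaves like an oracle-free protocol up to a small error.

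Next comes the counting step. Take a uniformly random database $D\in\bits^n$. Symmetric privacy says the client's entire view after $k$ queries can be simulated from $\st$ together with the $k$ retrieved bits $D_{i_1},\dots,D_{i_k}$. Correctness says those $k$ bits are determined by that view. I will turn this around: a malicious client who holds $\st$ can run the online protocol locally by simulating the server with lazily-sampled oracle answers. This works because, after the Barak--Mahmoody step, the server's relevant oracle behaviour is reproducible from the transcript. The client therefore has no need to interact with the real server, and can recover the entries at $k$ indices of its choice, chosen adaptively or all $n$ of them, using only $\st$.

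More carefully, the honest interaction lets the real client obtain $k$ entries. Symmetric privacy says these $k$ entries are essentially all that the transcripts reveal beyond $\st$. By a chain-rule argument, $k$ successful queries yield $k$ bits of information about $D$. The offline phase hands the client only $s$ bits, $\st$, and symmetric privacy caps the online leakage at exactly the queried bits. Hence the client's knowledge after the queries is at most $s + k$ bits. The extra piece is the following: because the online phase is cryptographically trivial, the server cannot "rate-limit" the client. The client could rerun the online queries, choosing the indices itself, as a deterministic function of $\st$ and the oracle. That rerun produces $k$ entries of $D$ with no help from the real server, so those $k > s$ bits of a random $D$ would be computed from just $s$ bits of state, contradicting incompressibility. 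Making this precise requires that the server's online messages are computable from $(D,\text{transcript},\RO)$, and that the client, via symmetric privacy, cannot distinguish real from simulated responses until it learns entries. I will formalize this with a hybrid in which the server's online messages are produced from a compressed description of length $s + O(\log)$ bits, and then bound $H(D_{i_1..i_k}\mid \st) \le s$.

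I expect the main obstacle to be this middle step: justifying that the online server can be simulated from the client's view without losing correctness. The server does hold $D$, so the argument must use symmetric privacy to argue that the server's answers, given $\st$ and the transcript, carry only the queried bits. It must also use the random-oracle-only online phase to argue there is no hidden trapdoor. I would try to handle this by an encoding argument in the style of De--Trevisan--Tulsiani: encode $D$ restricted to the $k$ queried positions by $\st$ plus the Barak--Mahmoody learned oracle queries, charging the latter to a negligible-error term.
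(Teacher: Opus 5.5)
Your proposal has a genuine gap at the step everything rests on. You claim that after a Barak--Mahmoody learning phase, a client holding $\st$ can run the online protocol locally, simulating the server with lazily sampled oracle answers, and so recover $k$ (or even all $n$) entries of $D$ without the real server. This is false. The server's online messages depend on its private inputs $D$ and $R$, not only on oracle answers, and learning its oracle queries does nothing to remove that dependence.

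The argument also proves too much. The paper's own construction (Theorem~\ref{thm:spir-hppy}) has an online server that makes no oracle queries at all: it returns $\bigoplus_{x\in X_b} D[x]\oplus r_{\iota,b}$. Yet the client cannot unmask its stored parities without those replies, and data privacy holds. So the separation machinery (Impagliazzo--Rudich, Barak--Mahmoody) is beside the point. The quantity you aim to bound, $H(D[Q]\mid\st)\le s$, is also the wrong one: together with $H(\st)\le s$ it only gives $k\le 2s$.

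What is missing is using query privacy and data privacy together to show that the online transcript, on its own, carries no information about $D$.
\begin{itemize}
\item Perfect privacy says the distribution of $\txon$ does not depend on $Q$.
\item Perfect data privacy says that, for fixed $Q$, this distribution depends on $D$ only through $D[Q]$.
\item For two databases differing in one position, choose a query tuple that avoids that position. Chaining single-bit changes shows the distribution of $\txon$ is the same for every pair $(D,Q)$.
\end{itemize}
Hence a sample of $\txon$, together with $\RO$ and the client's online coins, can be fixed as common randomness before $D[Q]$ is chosen. This gives a compression scheme:
\begin{itemize}
\item An encoder who knows $D[Q]$ picks an $s$-bit state whose client-side replay against the fixed server messages outputs $D[Q]$. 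Such a state exists with probability at least $\varepsilon$, because the real execution supplies one.
\item A decoder receives the state and simply replays the client.
\item For each value of the common randomness, the decoder has at most $2^s$ possible outputs. Averaging over uniform $D[Q]\in\bits^k$ gives $\varepsilon\le 2^{s-k}$, contradicting $\varepsilon>2^{s-k}$.
\end{itemize}
In entropy terms, if the client's output is determined by $(\st,\txon)$ and correctness is perfect, this reads $k=H(D[Q])\le I(D[Q];\txon)+H(\st)\le 0+s$. This is the paper's proof of Theorem~\ref{thm:spir}: a direct counting argument that never simulates the server and needs no separation result.
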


Our symmetric PIR construction is tight in the number of supported queries up to $\tilde{O}(1)$ factors.
Note, the offline phase of our construction relies upon public-key cryptography.
However, this is necessary since symmetric PIR implies oblivious transfer that cannot be built only from blackbox OWFs~\cite{STOC:ImpRud89}.
If we insist on the online queries to be efficient and avoid public-key cryptography, then the offline phase must rely upon public-key cryptography (as done in our construction).

\subsection{Our Techniques}

We first quickly overview three types of lower bound techniques used in prior
work to highlight where our techniques and results differ.
The line of lower bounds in~\cite{EC:CorKog20,EC:CorHenKog22,SODA:PerYeo22,EC:Yeo23} prove time lower bounds via compression arguments
where a random database may be compressed using only a subset of database entries that are accessed by the server during online
queries. However, this approach inherently requires a strong server non-encoding assumption for the compression method to identify
the exact database entries accessed during the query algorithm.

The communication lower bound in~\cite{C:IshShiWic24} takes a different approach by
constructing PIR privacy adversaries assuming oracle access to a SZK-complete problem.
However, this inherently means the lower bound is conditional on an additional complexity assumption
along with imposing additional restrictions such as database-obliviousness and single-round queries like prior works~\cite{TCC:LiuVai16}.

Finally, recent works~\cite{EC:DujHaj24,EC:LinMooWic25} consider schemes using crypto oracles encapsulating blackbox cryptography usage. These lower bounds show that one can compile out all server crypto oracle usages in single-server PIR schemes to obtain a new construction that effectively
only uses a random oracle. Afterwards, they rely on the fact that such a random oracle construction is impossible.
While our lower bound techniques will borrow the crypto oracle modeling, the overall framework is not compatible
with our goals as there do exist OWF-based single-server PIR with preprocessing schemes.
Instead, we will develop new techniques to directly prove efficiency lower bounds.

\subsubsection{Computation Lower Bounds}

We start by presenting our computation lower bounds. We will consider single-server PIR with preprocessing schemes
with access to a crypto oracle $\Or$: $\Pi^{\Or} = (\init, \setup^{\Or}, \query^{\Or})$.
The $\init$ algorithm enables the server to preprocess the $n$-bit database in polynomial time to produce
an encoded database $\edb$ for the server.
In our model, $\init$ is a plain randomized algorithm that makes no
queries to the crypto oracle $\Or$ (an \emph{oracle-free encoding}), whereas
$\setup$ and $\query$ have full oracle access; we explain below why our
proof requires the encoding to be independent of $\Or$.
The $\setup$ algorithm receives the encoded database $\edb$ and outputs a $s$-bit storage $\st$
for the client. As a note, this could be a two-party protocol between the client and server, but
we assume the server observes nothing from this offline phase (as we are proving a lower bound, this only strengthens the result).
The $\query$ algorithm is an interactive two-party protocol between
the client and the server. The client receives the $k = \Theta(s)$ indices of queries $\qrys$ along with the $s$-bit client storage $\st$
while the server receives the encoded database $\edb$. We denote the execution of the query algorithm as
$(\ans; \bot \mid \tx) \leftarrow \query^{\Or}(\st, \qrys; \edb)$ where the client receives the PIR's output as $\ans$, the server
receives no output and both parties observe the communication transcript $\tx$.

Our goal is to prove a lower bound of $t = \Omega(n/s)$ amortized online query computation across $k = \Theta(s)$ queries.
Equivalently, we want to show that the total computation during the execution of the query algorithm is $\Omega(n)$. In fact,
we will prove that either the total communication is $\Omega(n)$ bits or the server queries the crypto oracle over $\Omega(n)$ bits
during the execution of the query algorithm.
For the sake of contradiction, we will suppose the existence of a scheme $\Pi^{\Or}$ that is more efficient with both $o(n)$ bit communication
and the server queries the crypto oracles on $o(n)$ bits during the query algorithm.
We note that we make no assumptions about the efficiency of the setup or server initialization algorithms. Thus, our results will
lower bound the online computational cost.
The one structural assumption we make is that $\init$ performs no
crypto oracle queries. We note our results extend to encodings that query
$\Or$ on a sublinear total number of bits (by recording $\init$'s queries in
the dual PIR's advice, as in our sublinear-server-cryptography setting) or
whose oracle queries are determined by the communication transcript, though
with additional caveats; see Remark~\ref{rem:init-ext}.

\heading{Dual PIR.}
A central step in our proof is a reduction from PIR with preprocessing to a new notion called \emph{dual PIR} that essentially inverts
the order of the preprocessing and efficiency requirements of PIR. A dual PIR consists of three algorithms:
\begin{itemize}
  \item $\adviceout \leftarrow \advice(\db)$: before any query indices are known, the client
    computes a $t$-bit \emph{advice} $\adviceout$ about the $n$-bit database $\db$. We require the
    advice to be sublinear, $t = |\adviceout| = o(n)$, so that it cannot encode the entire database.
  \item $\hintout \leftarrow \hint(\qrys, \db, \adviceout)$: given a set of $k$ query indices
    $\qrys$, the database $\db$, and the advice $\adviceout$, the client computes a $s$-bit
    \emph{hint} $\hintout$.
  \item $\reconout \leftarrow \recon(\qrys, \adviceout, \hintout)$: using only the query set
    $\qrys$, the advice $\adviceout$, and the hint $\hintout$ --- but, critically, \emph{without}
    the database $\db$ --- the client reconstructs the relevant entries $\db[\qrys]$.
\end{itemize}
Reconstruction would be trivial if the hint $\hintout$ had room to write down $\db[\qrys]$, i.e.,
when $s \ge k$. We are therefore interested in the limits of when a dual PIR can be correct once
the hint is smaller than the number of queries, $s < k$.

The above is suggestively called dual PIR, because it intuitively inverts the efficiency requirements of
standard PIR with preprocessing.
The preprocessing step of dual PIR (without the query) uses $t = o(n)$ bits similar
to the $\query$ algorithm.
The online portion of dual PIR with the query produces a $s$-bit encoding
similar to the $\setup$ algorithm.
Furthermore, dual PIR has no privacy guarantees and is not even a two-party protocol.
Instead, one can view it as a compression problem.

Using dual PIR, our proof will proceed in two steps. First, we show that when $t = o(n)$ and $s < k$, there cannot exist
a dual PIR construction that is correct even with probability $2^{s-\Omega(k)}$ using a connection with leakage-resilient cryptography. Afterwards, we show that our too-efficient-to-be-true PIR with preprocessing $\Pi^{\Or}$ may be used to construct such an impossible dual PIR.

\heading{Connection with Leakage-Resilient Cryptography.}
To prove that there cannot exist dual PIR schemes that are correct (even with exponentially small probability), we make a connection with the well-studied area of leakage-resilient cryptography~\cite{CHES:AARR02,AC:ShiKobIma03,FOCS:DziPie08,C:AlwDodWic09,ICITS:AlwDodWic09,EC:ADNSWW10}. At a high level, this model considers the setting
where an adversary successfully compromises a remote system, but it is infeasible for the adversary to download all the data
in the system. To build cryptography resilient to leakage in this attack, one can consider a very large private key (say, 1 TB) of $n$ bits
where it is assumed the adversary obtains at most $t < n$ bits.
For efficiency purposes, any cryptographic primitive built in this system will only use a subset of $k$-bits of the large $n$-bit key.
Therefore, a core problem in leakage-resilient cryptography is {\em subkey prediction} that upper bounds
the probability
that a computationally unbounded adversary may correctly guess $k$ random locations 
of a $n$-bit random key when the adversary may compute any $t$-bit leakage about the key~\cite{C:BelKanRog16}.

Interestingly, we show that a very good dual PIR (in terms of efficiency and correctness) implies a successful adversary
for the subkey prediction problem in leakage-resilient cryptography. We will view the $n$-bit database in dual PIR
as the $n$-bit private key in subkey prediction. Then, we will view the $t$-bit advice output of dual PIR
as the $t$-bit leakage chosen by the subkey prediction adversary.
Finally, when the subkey prediction adversary is given the $k$ random locations $\qrys$ of the $n$-bit key to guess, we will execute
the final $\recon$ step of dual PIR where we simply choose a $s$-bit random string
to replace the output of the dual PIR's second $\hint$ step. If the underlying dual PIR was correct with probability $\varepsilon$, then the subkey prediction adversary is correct
with probability at least $\varepsilon \cdot 2^{-s}$ corresponding to when the correct $s$-bit encoding output of $\hint$ is guessed correctly.

We can now combine our subkey prediction adversary built from dual PIR with prior bounds proven on the best possible
success probability for any subkey prediction adversary in~\cite{C:BelKanRog16}.
In particular, it was shown that for $t \le n/2$ leakage and $k = \Omega(\log n)$ locations of interest,
the adversary can predict the $k$ bits with probability at most $2^{-\Omega(k)}$.
This immediately implies that there does not exist any dual PIR that is correct with probability larger than
$\varepsilon > 2^{s - \Omega(k)}$ unless the output of the $\advice$ step in the dual PIR is $t \ge n/2$ bits.
We note that the subkey prediction bound in~\cite{C:BelKanRog16} is an information-theoretic result that
holds for any computationally unbounded adversary 
with arbitrary shared randomness across the adversary's algorithms.
Thus, the impossibility holds for dual PIR algorithms that are computationally inefficient 
with access to oracles.

\heading{Dual PIR Construction Overview.}
Next, we will construct a dual PIR using a too-efficient-to-be-true PIR with preprocessing scheme $\Pi^{\Or} = (\init, \setup^{\Or}, \query^{\Or})$ with an oracle-free database encoding. Our dual PIR will have access both to a crypto oracle $\Or$ and a random oracle $\RO$.
Furthermore, our dual PIR will not necessarily be computationally efficient,
but the above impossibility on the subkey prediction problem 
will still apply since it is an information-theoretic result.
However, we will ensure our dual PIR makes at most a polynomial number of queries to the crypto oracle $\Or$ in total as we will use our dual PIR as a tool to build a privacy adversary for the underlying PIR scheme (critical to proving correctness of our dual PIR). In contrast, our dual PIR may make exponentially many queries to the random oracle $\RO$ (for example, to sample the exponentially many candidate crypto oracles $\Or_1,\ldots,\Or_z$); this is harmless, since the dual PIR need not be efficient and only its polynomially many crypto oracle queries are inherited by the privacy adversary.

At a high level, we wish to use the strong privacy properties of the underlying PIR $\Pi^{\Or}$ to build the dual PIR. In the first step, we can have
$\advice^{\Or,\RO}(\db)$ execute $\Pi^{\Or}$ properly for an arbitrary query 
$\qrys'$ to obtain a communication transcript $\tx$,
which will be the $t$-bit advice $\adviceout = \tx$.
Now, the second $\hint$ algorithm receives the real queries $\qrys$.
By privacy, then it should be that this communication transcript $\adviceout = \tx$ must also be consistent with the given query $\qrys$ even if it is different than the original query $\qrys' \ne \qrys$ used in $\advice$.
Then, $\hint$ can iterate through all possible $s$-bit client storage that may be output by $\st \leftarrow \setup^{\Or}(\edb)$ of the underlying PIR  scheme that would produce the communication transcript $\tx$ when executing
$\query^{\Or}(\st, \qrys; \edb)$ where $\edb \leftarrow \init(\db)$ is the server's encoding of the database in the offline phase (recomputed by $\hint$ using encoding randomness derived from the shared random oracle, so that all dual PIR algorithms agree on the same $\edb$).
The output of the hint algorithm will be this $s$-bit client storage $\hintout = \st$.
Effectively, the second hint algorithm is re-randomizing the client of the underlying PIR while keeping the communication transcript
$\tx$ the same. Finally, $\recon^{\Or,\RO}(\qrys, \adviceout, \hintout)$ will execute the client-side portion of the underlying PIR's query algorithm with the $s$-bit client storage discovered by the hint algorithm $\hintout = \st$ using the communication transcript from advice
$\adviceout = \tx$ where the server is simulated using the communication transcript.

While the above is the general framework behind our dual PIR construction, we note that there are many obstacles
to concretely instantiate these ideas into a dual PIR scheme that is 
both provably correct and efficient in terms of crypto oracle queries (to apply PIR privacy later).

\heading{Sampling Client States.} We start with the core challenge of our dual PIR scheme that is to sample a $s$-bit client storage
in the $\hint$ algorithm that is consistent with the communication transcript $\tx$ output by $\advice$.
A trivial approach is to try all the possible internal random coin tosses
of the underlying PIR's $\setup$ algorithm to obtain a candidate $s$-bit client storage and execute the underlying PIR's $\query$ algorithm to check if the resulting communication transcript matches $\tx$
Note, this is possible since $\hint$ receives all the necessary information including the query $\qrys$ and database $\db$.
However, this would require an exponential number of queries to the crypto oracle $\Or$.

Instead, our $\hint$ algorithm will sample new crypto oracles $\Or'$ with the goal of finding a crypto oracle and $s$-bit client storage pair $(\Or', \st')$ such that the execution of the underlying PIR's $\query$ algorithm results in the desired communication transcript $\tx$.
Since the crypto oracle $\Or'$ is locally sampled, we no longer care about the number of queries performed. Thus, we can now use
the above approach of trying all internal random coin tosses to the PIR's $\setup$ algorithm and check if the resulting
$s$-bit client storage used in the execution of the PIR's $\query$ algorithm results in a communication transcript $\tx$.
If no such $s$-bit client storage is found, we can continue to sample new crypto oracles $\Or'$ until finding a successful tuple
of $s$-bit client storage and crypto oracle $\Or'$.

While the above approach avoids too many queries to the crypto oracle $\Or$, it creates several new problems.
First, $\hint$ must also now encode the newly sampled crypto oracle $\Or'$ to be used by $\recon$, which we will discuss in more detail later. An even more subtle (but critical) issue is that $\recon$ can no longer discern whether a crypto oracle $\Or'$ is compatible
with the database $\db$ and communication transcript $\tx$. In particular, $\recon$ cannot determine whether the
server's communication can match the transcript $\tx$ for the given crypto oracle $\Or'$ since $\recon$ does not receive the
database $\db$ as input. As an example, the underlying PIR scheme could include a $\sqrt{n}$ bits consisting of XOR of random bits
from the crypto oracle and the first $\sqrt{n}$ bits of the input database $\db$.
Now, a randomly sampled crypto oracle $\Or'$ is very unlikely to be compatible with the input database $\db$ and communication
transcript $\tx$ (with probability $2^{-\sqrt{n}}$). However, $\recon$ has no way to figure this out since it does not have the database $\db$.

Instead, we will modify our dual PIR construction such that the $\hint$ algorithm only randomly samples crypto oracles $\Or'$
such that the server's communication will always be consistent with communication transcript for the input database $\db$.
To do this, the $\advice$ algorithm's $t$-bit output $A$ will consist of both the communication transcript $\tau$ and of all the queries to the crypto oracle $\Or$
performed by the server during the execution of the $\query$ algorithm that we denote by the set $S$.
As a reminder, we assumed towards a contradiction that the server of the underlying PIR queries the crypto oracle on $o(n)$ bits, meaning that $|S| = o(n)$ bits.
As all algorithms of the dual PIR have access to the crypto oracle $\Or$, they can all compute the crypto oracle responses: $\Or(S) = \{(q, \Or(q)) \mid q \in S\}$. Now, the $\hint$ algorithm will only randomly sample crypto oracles $\Or'$ conditioned on $\Or'(S) = \Or(S)$.
For any such sampled crypto oracle $\Or'$, we note that the server's execution is now completely deterministic since all its crypto oracle
responses are fixed (the server's other input, the encoded database $\edb$, is unchanged as well since $\init$ makes no crypto oracle queries). Thus, the server's resulting communication using $\Or'$ with input database $\db$ will always match
the transcript $\tx$ generated using the original crypto oracle $\Or$.
With this modification, our result will now lower bound the total size of the online communication and the total number of server-side
crypto oracle queries (i.e., blackbox cryptographic operations).

\heading{Efficiently Encoding Crypto Oracles.}
It still remains to show that the $\hint$ algorithm can efficiently encode the newly sampled crypto oracle using $O(s)$ bits
as required by the dual PIR. To do this, we will use the random oracle $\RO$ that is shared across all the dual PIR algorithms.
Using $\RO$, $\hint$ will sample crypto oracles $\Or_1,\ldots,\Or_z$ such that $\Or_i(S) = \Or(S)$ for all $i \in [z]$. Furthermore,
$\recon$ can also re-sample the exact same crypto oracles using $\RO$. A first attempt might be for $\hint$ to find some index $i \in [z]$ that
enables finding a $s$-bit client storage $\st$ that successfully matches the communication transcript $\tx$ and encode both $\st$ and the index $i \in [z]$ using $\log_2(z)$ bits. However, one may need to try as many as $z = 2^{\Omega(|\tx|)}$
different sampled crypto oracles before finding one that successfully matches the communication transcript $\tx$ such as if
the communication transcript distribution is uniformly random.
Note, this is too large as encoding the index requires $\log_2(z) = \Omega(|\tx|)$ bits, and $|\tx|$ may be much larger than $s$, so it will not fit in the $s$-bit hint.

Instead, we design our $\hint$ algorithm to take a different approach. For convenience, we suppose each sampled crypto oracle $\Or_i$ is also associated with a uniformly random $s$-bit string $\st_i$ and client randomness $\omega_{\client, i}$ generated using the random oracle $\RO$ so that it is shared
across both $\hint$ and $\recon$.
For each $i \in [z]$ in increasing order, $\hint$ will do two things. First, it checks whether $\st_i$ is a feasible output the underlying PIR's $\setup$ algorithm for some internal random coin tosses. Secondly, $\hint$ checks whether the execution of the query algorithm $\query^{\Or_i}(\qrys, \st_i; \edb)$ (using $\omega_{\client, i}$ for client randomness) matches the communication transcript $\tx$, where $\edb \leftarrow \init(\db)$ is the recomputed encoded database.
The $\hint$ algorithm will maintain a counter $c$ tracking the number of times that a pair $(\Or_i, \st_i)$ ends up matching
the communication transcript $\tx$ (that is, every time the second check succeeds). For the first time when both checks performed
by the $\hint$ algorithm succeeds, the output of the $\hint$ algorithm will be the current counter value of the counter, $\hintout = c$.
In other words, the output of $\hint$ is the index $c$ such that the $c$-th pair $(\Or_i, \st_i)$ that reproduces the communication transcript $\tx$ under the underlying PIR's $\query$ algorithm
is the first one whose $\st_i$ is also a valid $s$-bit client storage output of the underlying PIR's $\setup$ algorithm.
This counter $H=c$ will be passed to $\recon$
that does not receive the database $\db$. It turns out that $\recon$ can check whether $(\Or_i,\st_i)$ generates transcript $\tau$ (even without $\db$) as we will show later. However, $\recon$ cannot check if $\st_i$ is output of $\setup$ that requires $\db$, but will use this counter instead.
Finally, we also show that this counter may always be encoded using at
most $3s$ bits except with $2^{-s}$ probability.

$\recon$ will iterate through all $z$ sampled crypto oracle and client storage pairs $(\Or_i, \st_i)$ and perform the second
check as done in $\hint$. We note that $\recon$ does not have the database $\db$ as input, so it cannot execute
the full $\query$ algorithm as done in $\hint$. Instead, $\recon$ will only execute the client-side of the $\query$ algorithm
using the server's communication in the transcript $\tx$ to simulate the server's algorithm that requires the database $\db$.
We critically leverage the fact that each crypto oracle $\Or_i$ is sampled to match $\Or(S)$ here. Due to this restriction, we know
that the server's responses will always match the transcript $\tx$. As a result, the client-side
simulation of the server during $\recon$ is performed correctly and will match the execution performed during $\hint$.
To produce the final answer, $\recon$ finds the $c$-th instance of the communication transcripts matching and returns the underlying
PIR's output when executing the client-side of the $\query$ algorithm.

\heading{Proving Correctness of Our Dual PIR.} Our correctness proof for the dual PIR proceeds in two steps. First, we show that the dual
PIR is correct when the arbitrary query $\qrys'$ used in $\advice$ is also given as input to the other steps of the dual PIR, $\hint$ and $\recon$. To prove this is correct, we show that the output distribution of our dual PIR is identical to the output distribution of a proper
execution of the underlying PIR using $\qrys'$ (except when the $\hint$ algorithm fails). Our proof shows that the dual PIR
samples from the same distribution as the underlying PIR in an inverted order: sampling communication transcripts first and client storage conditioned on the transcript.
As the
encoded database $\edb$ is then independent of $\Or$, a crypto oracle
resampled conditioned only on agreeing with $\Or$ on the server's query set
$S$ is distributed exactly as the real oracle conditioned on everything the
dual PIR has fixed (namely, $\edb$ and $\Or(S)$).

In the second step, we prove correctness of our dual PIR even when the $\hint$ and $\recon$ algorithms receive any query $\qrys \ne \qrys'$ different from $\qrys'$ used in $\advice$. If our dual PIR is incorrect
for some $\qrys \ne \qrys'$, we build a privacy adversary that distinguishes between transcripts generated from $\qrys$ and those from $\qrys'$.
At a high level, the adversary executes the last two steps of the dual PIR with the challenge transcript.
As the adversary chooses the database $\db$, the adversary may check
whether the dual PIR's output is correct. The adversary successfully distinguishes transcripts from the two query sequences by observing the difference in correctness between $\qrys$ and $\qrys'$.
Note that our PIR adversary 
makes a polynomial number of crypto oracle queries since its main cost is executing the underlying dual PIR.
In fact, our PIR adversary makes the same number of crypto oracle queries as an honest server makes during the $\query$ protocol of the underlying PIR.
As a note, our adversary may lazily sample a local random oracle that may be used the dual PIR (as it is
not used by the underlying PIR).
Thus, our lower bound even applies to weaker PIR schemes provide privacy only against
adversaries with a fixed polynomial number of queries.

\subsubsection{Communication Lower Bounds}

\heading{Small Server Cryptographic Operations.}
We note this setting is a direct corollary of our above computational lower bound. As a reminder, the $t$-bit advice in the dual PIR
consists of both the communication transcript and all server crypto oracle queries. If the server makes crypto oracle queries
on at most $o(n/s)$ amortized bits (or $o(n)$ total bits), this immediately implies that the communication must be
$\Omega(n)$ bits in total or $\Omega(n/s)$ amortized per retrieval.

\heading{Communication-Dependent Server Cryptographic Operations.}
We consider the setting where all the server crypto oracle queries depend only on the communication transcript. To our knowledge, this matches all OWF-based PIR schemes (in some cases trivially, when the server performs no cryptographic
operations). We
modify our dual PIR construction to omit all the server crypto oracle queries $S$ in the $t$-bit output of $\advice$.
Instead, the latter two steps
of our dual PIR, $\hint$ and $\recon$, may recompute the server crypto oracle queries $S$ using the communication transcript
and the crypto oracle $\Or$. So we obtain a $\Omega(n/s)$ amortized communication lower bound.

\heading{PIR with Perfect Privacy.}
Next, we consider schemes with perfect privacy that derive their
client query randomness from the random oracle and a seed in their state 
(making the online query algorithm deterministic for a fixed state and
oracle). At first glance,
this category of PIR may seem overly restrictive; however, our bound applies
to any such perfectly private PIR relative to an \emph{idealized} crypto oracle as achieved by prior works~\cite{CCS:RenMugSun24,EC:HPPY25}.
Perfect privacy enables a simpler dual PIR construction. If a PIR is so perfectly private, there
is no difference in how the dual PIR performs when $\advice$ is run with
the real query or a substitute query. Thus,
we can simply have $\hint$ output a random state $\st$ (which also
determines the client's query randomness) leading to a $2^{-s}$
loss in correctness probability. However, we can still apply the impossibility of
dual PIR when considering $k = \Omega(s)$.
In particular, this construction never resamples crypto oracles
($\recon$ replays the client with the real oracle $\Or$), so this is our one
lower bound where $\init$ may retain full crypto oracle access.

\subsubsection{Symmetric PIR without Online Public-Key Cryptography}

\heading{SPIR without Online Public-Key Cryptography.}
We present the first construction of SPIR that uses client-side preprocessing
to avoid any public-key operations during online queries.
Notably, our construction does not require per-client
storage by the server. If per-client storage is allowed, then
the problem is related to OT-extension~\cite{C:IKNP03}.
We start from recent PIR with client preprocessing
schemes~\cite{EC:CorKog20,EC:CorHenKog22,CCS:RenMugSun24,EC:HPPY25},
that preprocess a database
by storing parities of small random sets that we refer to as hints.
To add data privacy, we have the server one-time pad
each hint during offline preprocessing without learning the hints
(using generic 2PC protocol for example).
At query time, the client sends the hint alongside the normal
PIR query so the server can remove the one-time pad and 
return the correct answer.
There are additional complications around data privacy as
prior schemes use parallel repetition or require the server to return two XORs~\cite{CCS:RenMugSun24,EC:HPPY25}.
We resolve these issues with a more complex hint structure including backup hints~\cite{EC:CorHenKog22} and 
the client randomly permuting hints to maintain query privacy.

\heading{Limits of SPIR with Preprocessing.}
We also prove stronger limits of SPIR with preprocessing beyond our broad
PIR lower bounds. 
In the random oracle model, we prove that no perfectly private,
single-server SPIR relative with $s$ 
bits of client-side preprocessing 
can support batch queries of size $k > s$
regardless of online query communication costs, (even if the preprocessing
is done ``for free'' without the adversary seeing it).
Furthermore, we extend prior work~\cite{STOC:GIKM98} to prove that
offline and online phases of SPIR supporting $k$ queries require
$k$-bits of correlated randomness.
Our proofs utilize compression techniques as opposed to 
our more complex reductions using dual PIR.

\section{Related works}\label{sec:related}

We note that there is a long literature on single-server PIR without preprocessing (see~\cite{FOCS:KusOst97,EC:CacMicSta99,EC:Paillier99,PKC:DamJur01,STOC:IKOS04,ICALP:GenRam05,PKC:OstSke07,SP:ACLS18,TCC:GenHal19,ESORICS:ParTib20,CCS:MugCheRen21,SP:MenWu22,SP:MugRen23,USENIX:PatSeoYeo23,USENIX:BPSY24,CCS:BurMenWu24} and references therein as examples). In this section, we
mainly survey results concerning single-server PIR with preprocessing.

\heading{PIR with Server-Side Preprocessing.} The model was first proposed
by Beimel, Ishai and Malkin~\cite{C:BeiIshMal00} in the server-side preprocessing setting where the server can preprocess
the database in hopes of faster online queries. There have been several works in this line including~\cite{TCC:BIPW17,TCC:CanHolRic17,EC:HOWW19,TCC:BoyHolWei19,SODA:PerYeo22} leading
to the breakthrough work of Lin, Mook and Wichs~\cite{STOC:LinMooWic23} that obtained $\tilde{O}(1)$ online query communication
and computation in the public-key DEPIR with server preprocessing model from the ring LWE assumption. Recent work also ruled out the existence of restricted classes
of such schemes built from blackbox cryptography~\cite{EC:LinMooWic25}.
We note that the lower bounds of~\cite{EC:LinMooWic25} apply to DEPIR
schemes whose preprocessing may query the crypto oracle (under structural
restrictions on the query protocol), whereas our DEPIR result
(Section~\ref{sec:depir}) allows arbitrary query protocols but requires the
preprocessing to make no oracle queries, making
the two results incomparable.
Lastly, a recent work~\cite{eprint:CIMR25} presented a construction
of secret-key DEPIR
with $O(n^\varepsilon)$ communication from
the LPN assumption with parameters that are not known to imply public-key cryptography.

\heading{PIR with Client-Side Preprocessing.}
Using client-side preprocessing to reduce the amount of server computation was first studied by Patel, Persiano, and Yeo~\cite{CCS:PatPerYeo18} that showed how to obtain a sublinear number of server public-key operations in a single-server PIR.
The breakthrough work of Corrigan-Gibbs and Kogan~\cite{EC:CorKog20} showed that it is possible to even obtain
sublinear total computation during query time. There has been a long line of follow-up works
with improved constructions such as~\cite{USENIX:KogCor21,C:SACM21,EC:CorHenKog22,USENIX:MZRA22,C:LazPap23,C:Yeo23,EC:GhoZhoShi24,USENIX:LazPap24}.
By relying upon FHE during query time, it was shown that single-server schemes could obtain
both $\tilde{O}(1)$ communication and sublinear computation~\cite{EC:ZLTS23,TCC:LazPap23}.

More recent works starting from Piano~\cite{SP:ZPZS24} along with follow-up works~\cite{CCS:RenMugSun24,EC:HPPY25,EC:WanRen25} present single-server PIR with preprocessing schemes
using only one-way functions with $t = \tilde{O}(n/s)$ communication with $s$-bit client storage.
There also exist works that use heavy public-key cryptography during the offline phase to obtain sublinear communication
while the query algorithm only uses one-way functions (see~\cite{EC:CorKog20,EC:CorHenKog22,CCS:FLLP24}).
However, these constructions only support $\tilde{O}(s)$ queries per offline phase when using $s$-bit client storage. If they wish to support more queries, they must re-execute their offline phase that requires public-key operations during queries (or streaming the entire database).
Finally, we note that all prior single-server constructions (to our knowledge) may be modified to support
an unbounded number of queries in the following way (this is effectively the approach used in~\cite{SP:ZPZS24,CCS:RenMugSun24,EC:HPPY25,EC:WanRen25} as well).
Roughly speaking, in these constructions, a single offline phase supports $\tilde{O}(s)$ queries for $s$-bit client storage.
Over the $\tilde{O}(s)$ queries, the client can stream the entire database and perform the offline phase locally to compute
the next $s$-bit client storage. This avoids the usage of any public-key cryptography during online queries, but incurs $t = \Omega(n/s)$
amortized communication that is shown to be required by our lower bound.
One can view our lower bound as showing that this approach is optimal in terms of communication complexity.
Finally, prior works also consider information-theoretic security including~\cite{EC:CorKog20,C:IshShiWic24,TCC:SinWeiZik24}.

\heading{Lower Bounds.}
Di Crescenzo, Malkin, and Ostrovsky \cite{EC:DiCMalOst00} were the first to show that sublinear communication (i.e., non-trivial) single-server
PIR without any preprocessing implies oblivious transfer. As oblivious transfer may be used to build key agreement,
this may be combined with the well-known separation of Impagliazzo and
Rudich \cite{STOC:ImpRud89} (and improvement by
Barak and Mahmoody~\cite{C:BarMah09}), which shows that there is no key agreement
between unbounded computational parties with polynomial queries to a random oracle.
Furthermore, there are known server computation lower bounds for PIR without preprocessing.
Beimel, Ishai and Malkin~\cite{C:BeiIshMal00} show that PIR (both single- and multi-server) require
linear server computation.

For single-server PIR, Dujmovic and Hajiabadi~\cite{EC:DujHaj24} show
that a linear number of public-key operations is necessary without preprocessing.
In the preprocessing model, their lower bound shows that the server's total cryptographic operations (across
both the offline and online phases) must be $\Omega(n)$. 
Their result makes no claims about, for example,
the online server cryptographic operations for PIR with preprocessing in
which the server makes $\Omega(n)$ public-key operations.
In contrast, our lower bound is stronger as it shows that $\Omega(n/s)$ amortized cryptographic bit operations
must be performed by the server during online queries.

For the preprocessing model, there are known lower bounds on the trade-offs between
client storage and query time in both the client~\cite{EC:CorKog20,EC:CorHenKog22,EC:Yeo23} and server~\cite{C:BeiIshMal00,SODA:PerYeo22} preprocessing models. However, all of these lower
bounds only apply to the restricted class of constructions with very strong non-encoding server
requirements such that the server must use the $n$-bit database without modification to serve PIR queries.
In contrast, our computation lower bounds apply to schemes in which the
server may arbitrarily encode the database for use in answering queries, so
long as the encoding is computed without crypto oracle queries.
Furthermore, none of the techniques in this line of work yield any communication lower bounds.

Ishai, Shi, and Wichs \cite{C:IshShiWic24}
were the first to present communication lower bounds for single-server PIR with client preprocessing.
First, they showed that there exists no information-theoretic construction with $s$-bit client storage
and query communication $t = O(n/s)$. In particular, such a construction implies the existence of
a one-way function. For the restricted class of database-oblivious and single roundtrip query constructions,
they showed that a $s$-bit client scheme with $t = o(n/s)$ query communication would imply a
separation between complexity classes SZK and BPP that remains an open problem in complexity theory
in many natural settings (such as relative to a random oracle).
Our new result cannot be directly compared to~\cite{C:IshShiWic24}.
Notably, we show a stronger (unconditional) lower bound on the use of blackbox cryptography for a class of PIR that is broader in several respects (multi-round and database-dependent queries). However, our result does not have the same complexity theoretic implications. In particular,
if one were to show that another assumption $X$ (e.g., learning parity with noise, attribute-based encryption, etc.)
or a white-box reduction from one-way functions
implied a too-efficient PIR with preprocessing, then a corollary of the prior work~\cite{C:IshShiWic24} would be that $X$
implies a hard problem in SZK. This is something which our result does not comment on.
Additionally, both the barrier in~\cite{C:IshShiWic24} and
our communication lower bounds only apply to restricted classes of schemes. We note the imposed restrictions
are incomparable. The barrier in~\cite{C:IshShiWic24} required the query algorithm to be single roundtrip and ensure that all client requests to be database-oblivious (i.e., client requests may only depend on the queried index and internal randomness). In contrast, our work makes no restrictions
on the communication (the query algorithm may have multiple rounds and depend on the database). Instead, we
require one of the following: the server to make $o(n/s)$ cryptographic operations, all server cryptographic operations depend
only on the communication transcript or obtain perfect privacy in the idealized model.
In the first two cases, we additionally require the server's one-time
database encoding to be computed without cryptographic operations. The
perfect-privacy case does not need this restriction and instead applies to
schemes deriving their query randomness from the random oracle (see
Section~\ref{sec:comm-perfect}).

\heading{Symmetric PIR.}
Symmetric PIR (SPIR), where the contents of the database are also protected, has been studied
for decades (see~\cite{STOC:GIKM98,TCC:FIPR05,USENIX:ALPRSS21} and references therein).
There are ways to generically construct SPIR from (non-symmetric) PIR using, for example, oblivious
PRFs \cite{TCC:FIPR05,USENIX:ALPRSS21}, which works in the preprocessing setting for certain constructions.
To our knowledge, these generic transformations require public-key operations at query time.
One could also try to use these public operations to the offline phase (such as using
OT-extension \cite{C:IKNP03}), but this would require the server to store client-specific state that is undesirable as
the server storage would grow with the user base.

Notice that for SPIR (unlike PIR), we \emph{must} have public-key operations in the offline
or online settings, since it is equivalent to oblivious transfer. This means we cannot use
tricks like streaming the database to the client in the offline phase that were used in
prior PIR works. We discuss this further in Section~\ref{sec:spir}.

\section{Preliminaries}\label{sec:prelim}

\heading{Notation.}
Throughout our paper, we use $n$ to denote the size of a database $\db$, which we consider
to be a bitstring $\db \in \bits^n$. We additionally take $\secpar$ to be an integer which represents
the security parameter of the schemes in question. We also make use of the notation
$[n] = \{1,\ldots,n\}$. For a tuple of indices/queries $\qrys = (\qry_1,\ldots,\qry_k)\in [n]^k$,
we  use $\db[\qrys]$ to  denote $(\db[\qry_1],\ldots,\db[\qry_k])$. 

In this work, each algorithm denotes a possibly randomized (Turing machine) computation,
which takes zero or more inputs and has one or more outputs. A (two-party) protocol is
a pair of interactive computers each with their own zero or more inputs. The messages passed
between the computers define the transcript. At the end of the protocol,
the machines output their results. We denote a protocol $\Pi$ with inputs $x_i$, outputs $y_i$,
and transcripts $\tau$, as $\Pi(x_1;x_2) \to (y_1;y_2 \mid \tau)$. If one party has no output,
we denote its output as $\bot$. Sometimes, we ignore the transcript $\tau$ when it is
not relevant.
We say an algorithm or protocol is \emph{efficient} if there exists a polynomial $p$, such that
for any input(s) $x$, the runtime of algorithm or protocol is at most $p(|x|)$.

In our pseudocode, we use $x\gets y$ to denote assigning $x$ the value of $y$.
Likewise we use $x\getsr y$ to denote sampling (uniformly)
from a set $y$ and assigning the value to $x$ (e.g. $x\getsr [n]$ denotes selecting a random
integer between $1$ and $n$ inclusive). The item $y$ may also be a randomized
function, which could just be viewed as running the function with fresh randomness.
We use $\negl(\cdot)$ as a placeholder for some negligible function, i.e.,
one where for every polynomial $p$ there is some $n$ such that,
$\negl(n') < 1/p(n')$, for  every $n' \ge n$.
For conceptual clarity, we use \textbf{bold} font to denote random variables
when measuring probabilities or other information-theoretic quantities. For
example, we may have a variable $\db \in \bits^n$ which represents a
database, which corresponds to $\bdb$ to represent a uniformly random
choice of database.

\heading{Information theory.}
At various points in our results, we are concerned with the minimum entropy or
min-entropy that we define in \defref{def:min-entropy}. This quantity essentially
measures the predictability of a random variable. In particular, if a random variable $\mathbf{X}$
has min-entropy at least $h$, then an input-less algorithm can predict the
value of $\mathbf{X}$
only with probability at most $2^{-h}$. In particular, for all $\A$,
$\Pr_{\mathbf{X}}[\A = \mathbf{X}] \le 2^{-\minH(\mathbf{X})}$.

In our results, it is also useful for us to consider a slightly more general
version of min-entropy as well, called conditional min-entropy. Although there are few ways
to define this quantity, our choice is related to the predictability of a random variable
$\mathbf{X}$, given some side information (also a random variable) $\mathbf{Y}$, which may be correlated with $\mathbf{X}$.
Using our definition however, it is easy to see that for any algorithm $\A$,
$\Pr_{\mathbf{X},\mathbf{Y}} [\A(\mathbf{Y}) = \mathbf{X}] \le 2^{-\minH(\mathbf{X}\mid \mathbf{Y})}$.

\begin{definition}\label{def:min-entropy}
	Let $\mathbf{X},\mathbf{Y}$ be random variables supported on $\calX$ and $\calY$ respectively.
	The \emph{min-entropy} of $\mathbf{X}$ is defined as
	\[
		\minH(\mathbf{X}) := - \log \left(\max_{x\in \calX} \Pr_{\mathbf{X}} [\mathbf{X} = x]\right).
	\]
	We further define the \emph{conditional min-entropy} of  $\mathbf{X}$ given $\mathbf{Y}$ as
	\[
		\minH(\mathbf{X}\mid \mathbf{Y}) := - \log \left( \Ex_{y\sim \mathbf{Y}}
		\left[\max_{x\in \calX} \Pr_{\mathbf{X}} [\mathbf{X} = x \mid \mathbf{Y} = y]\right]\right).
	\]
\end{definition}

We use the following as a supporting lemma
in the proof of our main result. This lemma provides a concrete bound on the amount
that min-entropy can drop when given some additional side-information.

\begin{lemma}[\cite{EC:DodReySmi04}]\label{lem:cond-minh}
	If $\mathbf{X},\mathbf{Y},\mathbf{Z}$ are random variables and $\mathbf{Y}$ is supported on at most $k$ elements,
	then $$\minH(\mathbf{X} \mid \mathbf{Y}, \mathbf{Z}) \ge \minH(\mathbf{X}\mathbf{Y}\mid \mathbf{Z}) - \log k \ge \minH(\mathbf{X} \mid \mathbf{Z}) - \log  k.$$
\end{lemma}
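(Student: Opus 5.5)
The plan is to prove the two inequalities separately, starting from the averaged-guessing characterization of conditional min-entropy in \defref{def:min-entropy}. Write $2^{-\minH(\mathbf{X}\mid \mathbf{Y},\mathbf{Z})} = \Ex_{z}\left[\sum_{y} \max_{x} \Pr[\mathbf{X}=x,\mathbf{Y}=y\mid \mathbf{Z}=z]\right]$. This identity follows by expanding $\Ex_{(y,z)}[\max_x \Pr[\mathbf{X}=x\mid \mathbf{Y}=y,\mathbf{Z}=z]]$ and multiplying each conditional probability by $\Pr[\mathbf{Y}=y\mid \mathbf{Z}=z]$.

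For the first inequality, fix $z$. The inner sum ranges over at most $k$ values of $y$ in the support of $\mathbf{Y}$. Each summand satisfies $\max_x \Pr[\mathbf{X}=x,\mathbf{Y}=y\mid \mathbf{Z}=z] \le \max_{(x,y')} \Pr[(\mathbf{X},\mathbf{Y})=(x,y')\mid \mathbf{Z}=z]$. Hence the inner sum is at most $k$ times the maximal joint probability. Taking the expectation over $z$ gives
\[
2^{-\minH(\mathbf{X}\mid \mathbf{Y},\mathbf{Z})} \le k\cdot 2^{-\minH(\mathbf{X}\mathbf{Y}\mid \mathbf{Z})},
\]
and taking $-\log$ yields $\minH(\mathbf{X}\mid \mathbf{Y},\mathbf{Z}) \ge \minH(\mathbf{X}\mathbf{Y}\mid \mathbf{Z}) - \log k$.

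For the second inequality, it suffices to show $\minH(\mathbf{X}\mathbf{Y}\mid \mathbf{Z}) \ge \minH(\mathbf{X}\mid \mathbf{Z})$; subtracting $\log k$ from both sides then finishes. This holds pointwise in $z$, because $\Pr[\mathbf{X}=x,\mathbf{Y}=y\mid \mathbf{Z}=z] \le \Pr[\mathbf{X}=x\mid \mathbf{Z}=z]$. Maximizing both sides over $(x,y)$ and averaging over $z$ preserves the inequality, and $-\log$ reverses it into the desired direction.

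There is no real obstacle here. The only care needed is in the rewriting step: the averaging over $y$ must be absorbed into the joint conditional probability given $z$, so that the support bound $k$ enters as a sum of at most $k$ terms. Degenerate $z$ with probability zero can be dropped, and the case where $\mathbf{Z}$ is trivial recovers the unconditional version.
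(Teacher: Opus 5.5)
Your proof is correct, and it is essentially the standard argument from the cited Dodis--Reyzin--Smith work (the paper itself only cites the lemma and gives no proof). You rewrite $2^{-\minH(\mathbf{X}\mid\mathbf{Y},\mathbf{Z})}$ as $\Ex_z\bigl[\sum_y \max_x \Pr[\mathbf{X}=x,\mathbf{Y}=y\mid\mathbf{Z}=z]\bigr]$, bound the at most $k$ nonzero summands by $k$ times the joint maximum for the first inequality, and get the second inequality from the pointwise bound $\Pr[\mathbf{X}=x,\mathbf{Y}=y\mid\mathbf{Z}=z]\le\Pr[\mathbf{X}=x\mid\mathbf{Z}=z]$.
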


In addition to min-entropy, we denote the statistical distance between two random
variables $\mathbf{X}$ and $\mathbf{Y}$ with the same support $\mathcal{U}$ as
$\Delta(\mathbf{X},\mathbf{Y}):= \frac12 \sum_{u\in \mathcal{U}} |\Pr[\mathbf{X} = u] - \Pr[\mathbf{Y} = u]|.$
We also use the fact that the best unbounded distinguisher between $\mathbf{X}$ and $\mathbf{Y}$
successfully identifies the correct random variable with probability
$1/2 + \Delta(\mathbf{X},\mathbf{Y})/2$.

\subsection{Crypto Oracles}\label{sec:prelim:oracles}

This work considers relativized worlds where some or all
parties have access to (possibly randomized) oracles. In particular,
we denote an algorithm $\A$ with access to an oracle $\Or$ as $\A^{\Or}$.
In relevant settings, the presence of a random
oracle requires us to change the sample space of a probability
experiment
to additionally consider the underlying randomness of the random oracle $\Or$
(for example, our definitions of PIR with preprocessing in~\defref{def:pir}).

We will utilize the notion of {\em crypto oracles}\footnote{In \apref{sec:ap-oracles}, we discuss how our results generalize beyond crypto oracles.} that encapsulates the usage of powerful blackbox cryptography primitives. We use the definition introduced by Lin, Mook, and Wichs~\cite{EC:LinMooWic25} that considers stateless simulators, but first considered
by Dujmovic and Hajibadi~\cite{EC:DujHaj24}.\footnote{Note, the definition in~\cite{EC:DujHaj24} considers stateful simulators, but there does not seem to be any meaningful cryptographic primitive not captured by stateless simulators as pointed out in~\cite{EC:LinMooWic25}.}

\begin{definition}[Due to \cite{EC:LinMooWic25}]
A crypto oracle is a function $\mathcal{B}^R$ where $B$ is a stateless, polynomial time, deterministic Turing machine with oracle access to a secret random function $R: \{0,1\}^* \rightarrow \{0,1\}$.
\end{definition}

In the same work~\cite{EC:LinMooWic25}, it was shown that crypto oracles may be used to implement both virtual blackbox obfuscation and the generic multilinear group model. We extend this result to (unsurprisingly) show that one can
implement public-key fully homomorphic encryption (with key and ciphertext
length independent of the size of the circuit evaluated), which can be used to
construct PIR directly. We refer the reader to~\cite{KatLin14}
for security definitions relevant for public-key encryption.

\begin{lemma}
\label{lem:fhe-oracle}
There exists a crypto oracle $\B^R$ that implements public-key fully
homomorphic encryption with $\mathsf{poly}(\lambda)$-size keys.
\end{lemma}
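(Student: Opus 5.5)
The plan is to build $\B^R$ as a stateless deterministic machine that uses the secret random function $R$ both as a source of keys and as the ``ideal'' encryption. Concretely, $\B^R$ answers the following query types, each a polynomial-time computation with oracle calls to $R$. Every call to $R$ returning many bits is obtained by querying $R$ on $\lambda$ consecutive tagged inputs.

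\begin{enumerate}[noitemsep]
\item $\mathsf{Gen}(r)$. Compute $sk = R(\texttt{sk}, r)$ truncated to $\lambda$ bits and $pk = R(\texttt{pk}, sk)$ truncated to $3\lambda$ bits, and output $(pk, sk)$.
\item $\mathsf{Enc}(pk, m, r)$ for $m \in \bits$. Output the ciphertext $c = R(\texttt{ct}, pk, m, r)$ truncated to $3\lambda$ bits.
\item $\mathsf{Dec}(sk, c)$. Recompute $pk = R(\texttt{pk}, sk)$ and invert $c$ to a plaintext.
\item $\mathsf{Eval}(pk, C, c_1,\ldots,c_\ell)$.
\end{enumerate}

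A stateless machine cannot ``invert'' a random function, so I would instead make the ciphertext self-describing under a hidden key. I would define $c = (\rho, \mathsf{E}_{pk}(m \| \rho))$, where $\mathsf{E}$ is a deterministic random-function-based injective encoding that the oracle can open only when handed a key it recognizes.

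The cleaner route is to reuse the generic result of \cite{EC:LinMooWic25} stated just above: crypto oracles implement virtual blackbox obfuscation. I would therefore take any FHE-friendly public-key encryption that is built from an idealized oracle and obfuscate the decrypt-evaluate-reencrypt functionality. Specifically, set
\[
\mathsf{Eval}(pk, C, c_1,\ldots,c_\ell) := \mathsf{Enc}\bigl(pk, C(\mathsf{Dec}(sk,c_1),\ldots,\mathsf{Dec}(sk,c_\ell)); R(\texttt{ev}, pk, C, \vec c)\bigr).
\]
Here the oracle must obtain $sk$ from $pk$. I would handle this by letting $pk$ itself be an authenticated encryption of $sk$ under a master secret $R(\texttt{msk})$ that is embedded in the oracle. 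Since $R$ is secret, the master key is hidden from all parties, and $\B^R$ can decrypt $pk$ internally.

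For this to be well defined, I would instantiate symmetric authenticated encryption inside $\B$ from $R$ in the standard way, e.g.\ counter-mode PRF-based encryption with a MAC. Keys and ciphertexts then have $\poly(\lambda)$ length independent of $C$, because $\mathsf{Eval}$ outputs a fresh encryption of a single bit for a one-output circuit, or one ciphertext per output bit. Correctness of $\mathsf{Dec}\circ\mathsf{Eval}$ is then immediate by construction.

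The main obstacle is semantic (IND-CPA) security given that $\B^R$ hands out an evaluation functionality that internally decrypts. I would argue it through a hybrid in which $R$ is a truly random function and the adversary makes polynomially many queries. The argument proceeds in three steps.

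\begin{enumerate}[noitemsep]
\item By MAC unforgeability, with all but negligible probability every $pk$ or ciphertext the adversary submits that the oracle accepts was previously output by the oracle.
\item Consequently, every $\mathsf{Eval}$ answer is a fresh encryption whose randomness is $R$ at an unqueried point. It can therefore be simulated by encrypting $0$ without changing the adversary's view, except with negligible probability.
\item Hence the challenge ciphertext never has its plaintext influence any output, and the adversary's advantage is negligible.
\end{enumerate}

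One subtle point I would check carefully is that $\mathsf{Dec}$ must only be available with $sk$. The adversary obtains $sk$ only from its own $\mathsf{Gen}$ calls, and the challenge key's $sk$ is never revealed, so decryption queries do not help it.
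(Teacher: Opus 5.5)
Your final design can be made to work, but it takes a heavier route than the paper, and the security sketch needs two repairs.

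\textbf{Comparison with the paper.} The paper never recovers $sk$ inside $\mathsf{Eval}$. A ciphertext is $(i, R(\mathsf{pad}\|pk\|i)\oplus m)$ with $i = R(\mathsf{enc}\|pk\|\rho)$. So $\B^R$ can strip the pad knowing only $pk$, while nobody else can because $R$ is secret. The secret key is merely a preimage witness for $pk = R(\mathsf{kg}\|sk)$, used by $\mathsf{Dec}$. No integrity is needed: if $\mathsf{Eval}$ opens a malformed or foreign ciphertext, it only emits a fresh-index encryption of an unknown value. IND-CPA then reduces to two events, each of probability $q/2^\lambda$: some derived index hits the challenge index, or someone finds a preimage of $pk$.

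Your sealing of $sk$ inside $pk$ under $R(\mathsf{msk})$, with MACs on everything, achieves the same effect generically. The oracle becomes an ideal decrypt--evaluate--reencrypt box that only opens objects it produced. The price is an INT-CTXT/IND-CPA hybrid. Recovering $sk$ is also unnecessary: it suffices that decryption be a function of $pk$ that only $\B$ can evaluate. The VBB paragraph is a detour you never use, since you implement $\mathsf{Eval}$ directly as an interface of $\B$.

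\textbf{First repair: key binding and the scope of step 2.} Step 2 is false as stated. For key pairs the adversary generated itself, it can apply $\mathsf{Dec}$ to $\mathsf{Eval}$ outputs, so those outputs cannot be replaced by encryptions of $0$.

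More seriously, steps 1--2 do not block the real threat. The challenge ciphertext $c^*$ \emph{was} produced by the oracle, so it passes every MAC check. Then $\mathsf{Eval}(pk', C, c^*)$ under the adversary's own $pk'$ would re-encrypt $C(m_b)$ under a key it can decrypt.

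You must bind each ciphertext to its key pair, for example by encrypting under $R(\mathsf{k}\|sk)$ or authenticating $pk$ as associated data. $\mathsf{Eval}$ and $\mathsf{Dec}$ must reject mismatches. Then restrict the simulation to the challenge key: answer $\mathsf{Eval}$ under $pk^*$ from a table of plaintexts of oracle-produced ciphertexts, and switch every encryption under $pk^*$, including $c^*$, to an encryption of $0$.

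\textbf{Second repair: hide the coins.} Since $\B$ is deterministic and receives the caller's coins, the ciphertext must not reveal those coins. This fails for your tentative $c = (\rho, \mathsf{E}_{pk}(m\|\rho))$ when $\rho$ is the caller's coins. It also fails when the coins are used directly as a counter-mode IV sent in the clear. In either case the adversary calls $\mathsf{Enc}(pk^*, m_0, \rho^*)$ with the visible coins and compares the result with $c^*$. Derive the nonce by applying $R$ to the coins and $pk$, just as the paper derives $i = R(\mathsf{enc}\|pk\|\rho)$.
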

\begin{proof}
We construct a crypto oracle $\B^R$ providing interfaces
$(\mathsf{KG}^R, \mathsf{Enc}^R, \mathsf{Eval}^R, \mathsf{Dec}^R)$,
where $R : \{0,1\}^* \to \{0,1\}^{\lambda}$ is the underlying
secret random function.\footnote{We use notation with $\lambda$-bit outputs
to simplify our presentation. This can be implemented without loss of
generality by querying $R$ with distinct prefixes.}

Since $\B^R$ must be deterministic, the randomized algorithms
$\mathsf{KG}$ and $\mathsf{Enc}$ take their random coins
$\rho \in \bits^\lambda$ as an explicit input, and all indices are derived
using $R$ with the distinct prefixes $\mathsf{kg}$, $\mathsf{enc}$, $\mathsf{eval}$, and $\mathsf{pad}$.

\begin{itemize}
    \item $\mathsf{KG}^R(1^\lambda; \rho)$: Set $sk = \rho$,
    set $pk = R(\mathsf{kg} \| sk)$ and output $(pk, sk)$.

    \item $\mathsf{Enc}^R(pk, m; \rho)$: Compute the index
    $i = R(\mathsf{enc} \| pk \| \rho)$ and output
    $\mathsf{ct} = (i, R(\mathsf{pad} \| pk \| i) \oplus m)$.

    \item $\mathsf{Eval}^R(pk, f, \mathsf{ct}_1, \ldots, \mathsf{ct}_\ell)$:
    Parse each $\mathsf{ct}_j = (i_j, c_j)$, recover $m_j = c_j \oplus
    R(\mathsf{pad} \| pk \| i_j)$. Then, compute $m^* = f(m_1, \ldots,
    m_\ell)$ and the index $i^* = R(\mathsf{eval} \| pk \| f \| \mathsf{ct}_1 \| \cdots \| \mathsf{ct}_\ell)$,
    and output $(i^*, R(\mathsf{pad} \| pk \| i^*) \oplus m^*)$.

    \item $\mathsf{Dec}^R(sk, \mathsf{ct})$: Recover $pk = R(\mathsf{kg} \| sk)$,
    parse $\mathsf{ct} = (i, c)$, and output $c \oplus R(\mathsf{pad} \| pk \| i)$.
\end{itemize}

Both $pk$ and $sk$ have size $\mathsf{poly}(\lambda)$.
Standard correctness is immediate.
Encryption is public-key: $\mathsf{Enc}^R$ requires only $pk$ and makes no
use of $sk$. Note that IND-CPA security holds against any adversary making $q = \poly(\lambda)$
queries to $\mathcal{B}^R$. The challenge ciphertext is
$(i^*, R(\mathsf{pad} \| pk \| i^*) \oplus m_b)$ for $i^* = R(\mathsf{enc} \| pk \| \rho^*)$
with uniformly random coins $\rho^*$. The pad $R(\mathsf{pad} \| pk \| i^*)$ is revealed
to the adversary only through a ciphertext under the same index $i^*$ whose plaintext
the adversary knows. This requires an $\mathsf{Enc}^R$ or $\mathsf{Eval}^R$ query whose derived
index equals $i^*$, or a $\mathsf{Dec}^R$ query with a secret key $sk'$ such that
$R(\mathsf{kg} \| sk') = pk$. Since $R$ is a random function evaluated on distinct inputs
(the prefixes ensure that index derivations never coincide with pad evaluations) and $\rho^*$ is
hidden, each of these happens with probability at most $q / 2^\lambda$.
Otherwise, the pad is uniformly random and independent of the
adversary's entire view, so the challenge
ciphertext is uniformly random regardless of $b$, and the adversary
gains no advantage.
Homomorphic correctness follows because $\mathsf{Eval}^R$
decrypts and re-encrypts internally using $R$, without revealing any plaintext
to the caller.
\end{proof}

\begin{theorem}[Due to \cite{EC:LinMooWic25} and \lemref{lem:fhe-oracle}]
\label{thm:oraclecrypto}
There exists crypto oracles that implement the following:
\begin{itemize}[noitemsep]
\item (Trivially) random oracles,
\item Virtual blackbox obfuscation for Turing machines,
\item Generic multilinear groups.
\item Public-key fully homomorphic encryption.
\end{itemize}
\end{theorem}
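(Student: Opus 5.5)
The theorem collects four items, and each can be handled separately by exhibiting a crypto oracle $\B^R$, meaning a stateless deterministic polynomial-time machine with access to a secret random function $R$.

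The first item is immediate. Take $\B^R(x) = R(x)$, extended to $\lambda$-bit outputs by querying $R$ on distinct prefixes such as $x\|1,\ldots,x\|\lambda$. This is stateless, deterministic and polynomial time, and its input-output behaviour is exactly that of a random oracle.

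For the second and third items I would not reprove anything. I would invoke the constructions of \cite{EC:LinMooWic25}, which already fit the stateless-simulator definition of crypto oracle used here. I would only recall their shape, to show that the definition is met:

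\begin{itemize}[noitemsep]
\item \textbf{Virtual blackbox obfuscation.} Obfuscating a machine $M$ under coins $\rho$ returns a handle derived from $R(\mathsf{obf}\|\rho)$. $M$ is stored as $R(\mathsf{pad}\|\text{handle})\oplus M$, or equivalently is recoverable by $\B$ through $R$. Evaluation on input $x$ recovers $M$ internally and outputs $M(x)$. Hence handles reveal nothing beyond oracle access to $M$, except with probability $q/2^\lambda$ over the adversary's $q$ queries.
\item \textbf{Generic multilinear groups.} Group elements at each level are encoded by random labels drawn from $R$. $\B$ implements addition, pairing and zero-testing by inverting labels internally.
\end{itemize}

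The fourth item is exactly \lemref{lem:fhe-oracle}, which has just been proved. It gives a crypto oracle for public-key FHE with $\poly(\lambda)$-size keys.

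The only potential obstacle is a formal one: checking that the \cite{EC:LinMooWic25} constructions stay valid under the stateless definition, since some earlier formulations use stateful simulators. This is already handled in that work, as noted in the footnote on \cite{EC:DujHaj24}. The statement therefore follows by combining the four items, and the main work, namely the FHE case, is already done in \lemref{lem:fhe-oracle}.
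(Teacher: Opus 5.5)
Your decomposition matches the paper's. The statement is a compilation: random oracles are immediate, VBB obfuscation and generic multilinear groups are taken from \cite{EC:LinMooWic25}, and FHE is \lemref{lem:fhe-oracle}. The proof therefore stands by citation. However, the constructions you ``recall'' for the second and third items cannot be the cited ones. The security conclusion you draw from the first of them is false, so they should be corrected or removed rather than offered as the shape of those constructions.

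For obfuscation, storing $M$ as $R(\mathsf{pad}\|h)\oplus M$ with no integrity check makes the encoding malleable. Given $(h,c)$, an adversary can call the evaluation interface on $(h, c\oplus\Delta)$ and input $x$, and receive $(M\oplus\Delta)(x)$. Let $M$ be a point function whose description is a publicly known code segment followed by a secret point $K$. Choosing $\Delta$ to rewrite that code into ``output the stored constant'' returns $K$, which no simulator with only black-box access to $M$ can obtain. So ``handles reveal nothing beyond oracle access to $M$'' fails with probability $1$, not $q/2^\lambda$. The same unauthenticated pad is harmless in \lemref{lem:fhe-oracle} only because FHE is meant to be malleable and only IND-CPA is claimed there. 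A VBB crypto oracle needs $\B$ to verify the encoding before using it, for example by recomputing and checking a tag $R(\mathsf{tag}\|h\|c)$, or by binding the handle to $M$ and re-deriving it after decoding.

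For generic groups, labels that are literally outputs of $R$ cannot be ``inverted internally'' by a stateless, deterministic, polynomial-time $\B^R$, since that would mean searching an exponentially large exponent space. Each label must itself contain the level and exponent in masked form, and that encoding must also be authenticated. Otherwise an adversary can flip one bit of the masked exponent and zero-test the difference from the original against an encoding of that power of two, learning that bit and hence the discrete logarithm bit by bit.
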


The above is very powerful as it enables instantiating a wide range of powerful cryptographic primitives in a blackbox manner
including one-way functions, fully homomorphic encryption and obfuscation,
which will be relevant as they have been
used in past constructions of PIR with preprocessing.

We note that a single crypto oracle may be used to implement arbitrarily many different crypto oracles
that implement different cryptographic primitives and/or ideal models. In particular, a single random function $R$ may be used to derive an arbitrary number of independent random functions (for example,
by prepending with distinct prefixes). Therefore, without loss of generality, we can consider a single crypto oracle
that implements multiple (or all) of the above cryptographic primitives and idealized models.

We will need the following result that shows it is possible to sample a random crypto oracle such that a polynomial number
of queries agree with a previously sampled crypto oracle using a random oracle $\RO$.
In particular, we consider a crypto oracle $\Or$ along with a polynomial set of queries $S$. We want to sample another random crypto oracle $\Or'$ such that $\Or(q) = \Or'(q)$ for all $q \in S$. For convenience, we will denote the set of queries and results as $\Or(S) = \{(q, \Or(q)) \mid q \in S\}$.
The following Lemma~\ref{lem:oracle_resample} is inspired by a step at the 
heart of blackbox separations in the random oracle
model~\cite{STOC:ImpRud89,C:BarMah09}. After learning the oracle queries 
relevant to a transcript, an attacker samples a fresh random oracle that agrees 
with them and simulates the honest parties. For a random oracle this resampling 
is trivial, whereas we show it remains possible for crypto oracles, where the 
resampled oracle must again be of the form $\mathcal{B}^{R'}$.

\begin{lemma}
\label{lem:oracle_resample}
Let $\Or$ be a crypto oracle and let $S$ be a set of queries for $\Or$ of size $|S| = \poly(\lambda)$.
For any $\zeta = \poly(\lambda)$, there exists a deterministic algorithm $\mathcal{A}$ with access to a random oracle $\RO$ such that $\mathcal{A}^{\RO}(\Or(S))$ outputs a new uniformly random crypto oracle $\Or'$ with $\Or(q) = \Or'(q)$ for all $q \in S$, except with probability $e^{-\zeta}$ over the random choice of $\RO$.
Moreover, conditioned on the algorithm succeeding, the output $\Or'$ is
distributed exactly as a uniformly random crypto oracle conditioned on the
event that $\Or'(q) = \Or(q)$ for all $q \in S$.
\end{lemma}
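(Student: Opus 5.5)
The plan is rejection sampling, driven by the public random oracle $\RO$ as the source of coins. A crypto oracle $\Or=\B^{R}$ is determined by the underlying secret random function $R$. Each query $q\in S$ is answered by running the deterministic polynomial-time machine $\B$, so each query touches only $\poly(\lambda)$ positions of $R$. Together, the queries in $S$ touch a set $T$ of positions with $|T|=\poly(\lambda)$, say $|T|\le m$.

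The event $\Or'(S)=\Or(S)$ depends only on the restriction $R'|_{T'}$, where $T'$ is the set of positions touched when evaluating $\B^{R'}$ on $S$. This set is adaptively determined, but it has size at most $m$. So the event ``$\B^{R'}(q)=\Or(q)$ for all $q\in S$'' is an event over $R'$ that depends on at most $m$ adaptively chosen bits. Since $\Or$ itself is a valid crypto oracle, the event is nonempty, and a priori its probability could be as small as $2^{-m}$.

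\textbf{Step 1: the sampler.} Let $N=\lceil \zeta\cdot 2^{m}\rceil$. For $j=1,\dots,N$, define a candidate function $R'_j(x)=\RO(j\,\|\,x)$, where $\RO$ has one-bit outputs by domain separation. The algorithm $\A^{\RO}(\Or(S))$ scans $j=1,2,\dots$ and outputs $\Or'=\B^{R'_j}$ for the first $j$ with $\B^{R'_j}(q)=\Or(q)$ for all $q\in S$. Checking a candidate needs only $\poly(\lambda)$ queries to $\RO$, and the algorithm is deterministic given $\RO$. Efficiency is not required here, since the dual PIR is allowed exponentially many $\RO$ queries.

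\textbf{Step 2: the distribution.} The candidates $R'_j$ are i.i.d.\ uniform random functions. The output is the first candidate satisfying the event $E$. By the standard rejection-sampling argument, conditioned on some candidate succeeding, the selected $R'_j$ is distributed exactly as a uniform $R'$ conditioned on $E$, so $\Or'=\B^{R'_j}$ has the claimed conditional distribution.

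\textbf{Step 3: the failure probability.} This is the main obstacle, because $\Pr[E]$ may be exponentially small and must be lower bounded. I would first try to show $\Pr_{R'}[E]\ge 2^{-m}$. Fix the true $R$ and run $\B^{R'}$ on $S$ query by query. Under $R'$, the computation agrees with the computation under $R$ as long as every queried bit of $R'$ matches $R$. The path $\B$ follows is determined by the answers, so the probability that $R'$ agrees with $R$ on the adaptively queried positions is exactly $2^{-|T|}\ge 2^{-m}$. On that event $E$ holds. Hence each candidate succeeds independently with probability $p\ge 2^{-m}$, and the probability that all $N$ fail is
\[
(1-p)^N\le e^{-pN}\le e^{-\zeta}.
\]

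Because $T$ depends on $R$, I would take $m$ to be a fixed polynomial bounding $|S|$ times the runtime of $\B$, which is valid for every $R$. The bound then holds uniformly, and the statement follows for any $\zeta=\poly(\lambda)$.
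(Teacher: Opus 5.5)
Your proposal is correct and follows the paper's proof essentially step for step: the paper also derives $2^{|S|\cdot t}\cdot\zeta$ candidates $\RO_i(\cdot)=\RO(i\,\|\,\cdot)$, outputs the first $\B^{\RO_i}$ that agrees with $\Or$ on $S$, lower bounds each candidate's success probability by $2^{-|S|\cdot t}$ (the chance of agreeing with $R$ on every queried position), and uses the standard rejection-sampling argument for the conditional distribution. Your explicit handling of the adaptively chosen query positions justifies that $2^{-|S|\cdot t}$ bound somewhat more carefully than the paper's one-line version does.
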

\begin{proof}
As $\Or$ is a crypto oracle, there exists some stateless, deterministic simulator $\mathcal{B}^R$ that makes at most $t = \poly(\lambda)$ queries to the private random function $R$. Across all oracle queries $\Or(q)$ for all $q \in S$, we note that $\mathcal{B}^R$ makes
at most $|S| \cdot t$ queries to $R$. First, we use $\RO$ to generate $k = 2^{|S| \cdot t} \cdot \zeta$ different random $\RO_1,\ldots,\RO_k$ by
setting $\RO_i(\cdot) = \RO(i \mid\mid \cdot)$ where $i$ is represented using a $\ceil{\log_2 k}$-bit string.
Note that $\ceil{\log_2 k} = O(|S| \cdot t + \poly(\lambda)) = \poly(\lambda)$ since both $|S|$ and $t$ are polynomial-sized, so
all inputs to $\RO$ are poly-sized.
Afterwards, our algorithm constructs $\Or_i$ by executing $\mathcal{B}^{\RO_i}$ and checks whether $\Or_i(q) = \Or(q)$ for all $q \in S$. The final output oracle $\Or'$ is $\Or_i$ for the first index $i$ satisfying the above condition. Now, we bound the probability
that the algorithm finds such an $\Or_i$. First, we note that the probability that any $\Or_i$ constructed using $\RO_i$
will match $\Or$ at all queries in $S$ is at least $2^{-|S| \cdot t}$, which is the probability that $\RO_i$ exactly matches the underlying private random function $R$ used by the simulator $\mathcal{B}^R$ corresponding to $\Or$.
Note, there might exist more choices of random functions that result in oracles that agree on queries in $S$, but that would only increase
this probability.
Therefore, the probability that
all $k = 2^{|S| \cdot t} \cdot \zeta$ random oracles fail to produce a crypto oracle that matches the original oracle $\Or$ on all queries in $S$
is at most $(1 - 2^{-|S| \cdot t})^{k} \le e^{-\zeta}$ using the fact that $(1 - 2^{-|S| \cdot t})^{2^{|S| \cdot t}} \le e^{-1}$.
Finally, we argue the distribution of the output conditioned on success.
Each candidate oracle $\Or_i = \mathcal{B}^{\RO_i}$ for an independent
uniformly random function $\RO_i$ is an independent sample of a uniformly
random crypto oracle, and the algorithm outputs the first candidate agreeing
with $\Or$ on all of $S$. By the standard rejection sampling argument, the
first accepted sample among independent and identically distributed
candidates, conditioned on acceptance occurring, is distributed as a single
candidate conditioned on acceptance; that is, as a uniformly random crypto
oracle conditioned on agreeing with $\Or(S)$.
\end{proof}

Note, as the above algorithm $\mathcal{A}$ is deterministic, it will output the same newly sampled oracle $\Or'$ when given access to the same random oracle $\RO$ as well as input $\Or(S)$.

\subsection{PIR with Preprocessing}\label{sec:pir-def}

We focus on single-server PIR in the client preprocessing model, a notion
that we formally present in \defref{def:pir}.
In this model, the client is allowed to preprocess the database 
$\db\in\bits^n$ during the offline phase;
then, in the online phase, the client issues its set of queries $\qrys$.
In our definition, we choose a slightly weaker model that enables constructions with worse privacy and weaker functionality.
As we prove lower bounds, this makes our results stronger 
as they will apply to a wider range of schemes.
We remind the reader that the syntax $\db[\qrys]$ for a tuple $\qrys = (\qry_1,\ldots,\qry_k)$ is used to denote the tuple $(\db[\qry_1],\ldots,\db[\qry_k])$.

\begin{definition}[Private Information Retrieval]\label{def:pir}
A $k$-batch \emph{private information retrieval with preprocessing scheme (PIR)}
is a triplet of efficient two-party protocols
with access to an oracle $\Or$,
$\pir^{\Or} = (\init^{\Or},\setup^{\Or},\query^{\Or})$, parameterized
by a database size $n$, batch size $k$, total communication $t$,
and state size $s$ with the following syntax:\footnote{Note that $k,t,s$ are typically functions of
$\secpar$ and $n$ that we omit for simplicity.}
\begin{itemize}
\item $\init^{\Or}(1^{\secpar}, \db) \to \edb$, a (possibly randomized)
algorithm run by the server to encode the original database $\db \in \bits^n$
into some encoded database $\edb$.
At the end of the protocol, the client receives no output and the server receives an encoded database $\edb$.
\item $\setup^{\Or}(1^{\secpar} ; \edb) \to (\st; \bot)$, where the client receives
as input the security parameter $1^\secpar$ and the server receives the encoded database
$\edb$. At the end of the protocol, the client receives a state $\st\in \bits^s$ and the
server receives no output.
\item $\query^{\Or}(\st, \qrys ; \edb) \to (\ans; \bot \mid \tau)$, where the client
takes its precomputed state $\st$ and an ordered set of queries 
$\qrys\in [n]^k$ and the server
receives the encoded database $\edb$. At the end of the protocol,
the client receives $\ans \in \bits^k$
and the server receives no output.
We denote the messages sent between the client and server as $\tx \in \bits^t$. Where relevant,
we will also separate $\tx = (\tx_\client, \tx_\server)$ to distinguish 
the messages $\tx_\client$ sent by the client to the server and the messages
$\tx_\server$ sent by the server to the client.
\end{itemize}

A PIR is \emph{$\varepsilon$-correct} if for all sufficiently large $\secpar$,
$\qrys\in [n]^k$, and $\db \in \bits^n$,
\[
    \Pr%
    \left[\query^{\Or}(\bst, \qrys ; \bedb) = (\db[\qrys]; \bot)\Big|
    \substack{
    	\bedb \getsr \init^{\Or}(1^\secpar,\db)\\
        (\bst;\bot) \getsr \setup^{\Or}(1^{\secpar} ; \bedb)
    }
	\right]
    \ge \varepsilon.
\]

A PIR is \emph{$\delta$-private} if for all sufficiently large $\secpar$,
tuples $\qrys_0,\qrys_1 \in [n]^k$, $\db \in \bits^n$, and
adversaries $\A$ issuing $\poly(\secpar)$ queries to $\Or$,
\[
    \Pr
    \left[\A^{\Or}(1^{\secpar},\btx,\bedb) = b ~\Bigg|~
    \substack{
    	\bedb \getsr \init^{\Or}(1^\secpar,\db)\\
        (\bst;\bot) \getsr \setup^{\Or}(1^{\secpar} ; \bedb)\\
        b \getsr \bits\\
        (\bans;\bot \mid \btx) \getsr  \query^{\Or}(\bst,\bqrys_b;\bedb)
    }\right]
    \le \frac{1}{2} + \delta.
\]
When $\delta=0$ we say the PIR is \emph{perfectly private}.
\end{definition}

We note that the above %
definition is flexible enough to cover almost all prior studied PIR definitions.
For PIR with client preprocessing, we essentially allow the server to perform arbitrary encoding
of the database that may be used by the client to generate their private state.

While our definition places no restriction on the encoding algorithm,
our computation lower bounds will apply to the class of schemes whose
encoding is computed without access to the crypto oracle, which we
formalize as follows.

\begin{definition}[Oracle-Free Encoding]\label{def:oracle-free}
A PIR $\pir^{\Or} = (\init^{\Or},\setup^{\Or},\query^{\Or})$ has an
\emph{oracle-free encoding} if the encoding algorithm $\init$ makes no queries
to the oracle $\Or$. In this case, we drop the oracle superscript and simply
write $\init(1^\secpar, \db) \to \edb$.
\end{definition}

Our computation lower bounds and our first two communication lower
bounds apply to schemes with oracle-free encodings, while $\setup$ and
$\query$ may use the oracle $\Or$ arbitrarily. To our knowledge, this
captures every existing single-server PIR with client preprocessing
construction most of which have the server store the database unmodified
and all cryptographic work occurs inside the interactive
$\setup$ and $\query$ protocols, including FHE-based offline
phases~\cite{EC:CorKog20,EC:CorHenKog22,CCS:FLLP24}.
Our definition also encompasses public-key (and unkeyed)
doubly efficient PIR (DEPIR)~\cite{TCC:BIPW17,STOC:LinMooWic23}:
the server runs, inside $\init$, a key-generation step together with the
database encoding, and outputs both the public key and the encoded database as
$\edb$ (discarding the secret key), while the $\setup$ algorithm
is essentially ignored as the client receives no state (see
Section~\ref{sec:depir}).

On the other hand, our main lower bounds do not apply to schemes whose
database encoding queries the crypto oracle (see Remark~\ref{rem:init-ext}
for plausible relaxations). In particular, our lower bound aligns with the server's database encoding for the state-of-the-art public-key DEPIR~\cite{STOC:LinMooWic23} but not earlier schemes that use cryptographic preprocessing for the server's database encoding~\cite{TCC:BIPW17}. Although, we note none of these schemes make blackbox usage of cryptography and, thus, are outside the scope
of our lower bounds.
Similarly, our definition itself does not capture
secret-key DEPIR, where the server's encoded database may depend
on the client's privately generated state~\cite{TCC:BIPW17,TCC:CanHolRic17,EC:LinMooWic25,eprint:CIMR25}, as
our proofs require the server's encoding to be performed \emph{first} before generating the client's private state.

\heading{Offline Phase.}
One can note that our definition is {\em essentially agnostic} to
$\setup$. In particular,
the adversary is not given the transcript for the offline phase when
$\setup$ is executed.
As a result, our lower bounds
will apply to all choices of $\setup$ including the two popular choices of 
relying on fully-homomorphic encryption~\cite{EC:CorKog20,EC:CorHenKog22,CCS:FLLP24} 
and streaming the entire database~\cite{SP:ZPZS24,CCS:RenMugSun24,EC:HPPY25}.
In fact, our lower bound applies even
if we assume a setting where the offline phase is run securely and the adversary only appears during the online query portion.

\heading{Batch Queries.}
Our definition considers the setting of batch PIR
where it is assumed that all $k$ queried entries are given at once.
Again, this only strengthens our lower bound as it applies to both batch and standard, single-query PIR.
In particular, it is easy to see that it is straightforward to construct a $k$-batch query PIR from a single-query PIR by
performing $k$ queries in sequence.

\heading{Unrestricted Query Algorithm.}
Additionally, we note that our definition does not make any restrictions on the query algorithm of
the PIR with preprocessing scheme. In contrast, we note that the prior work~\cite{C:IshShiWic24}
considers a restricted class of query algorithms that are database-oblivious (client requests are entirely independent of the
database) and single roundtrip. Our lower bounds apply to schemes beyond these restrictions including those that are multiple roundtrips and may depend on the contents of the database.

\subsection{Subkey prediction}\label{sec:subkey-def}

We will rely upon a result about subkey prediction due to Bellare, Kane, and Rogaway~\cite{C:BelKanRog16}.
Subkey prediction considers the ability of an adversary to predict a uniformly random subkey of a large key, given some short leakage (side information).
In particular, the large key consists of $n$ independent and uniformly random bits and the adversary is interested in a subkey consisting of exactly $k$ uniformly random locations in the large key. 
Additionally, the %
adversary obtains $t$ bits of arbitrary leakage (side information)
of the large key, chosen independently of the subkey's uniformly random locations.
For example, one can imagine the adversary is able to preprocess the large key of $n$ bits into
any arbitrary $t$ bits of information before the indices of the subkey are chosen. Afterwards, the $k$ uniformly random locations of the subkey are provided and the adversary must predict the $k$ bits of the subkey
using the $t$ bits of leakage.

In more detail, suppose a random $n$-bit key $\bdb \in \{0,1\}^n$ is chosen. The adversary
is first able to output $t$-bit leakage denoted $\A_1(\bdb)$. Afterwards, $k$ random locations are chosen $\bqrys \in [n]^k$. Lastly,
the adversary is given the $t$-bit leakage and $k$ random locations and outputs its guess denoted by $\A_2(\bqrys, \A_1(\bdb))$.
We use the following lemma directly from~\cite{C:BelKanRog16}.
Before we present the lemma, we recall some notation
that will be important.
We denote by $B_n(r)$ the radius $r$ ball centered at the origin of
the $n$-dimensional boolean hypercube; that is,
all points of $\{0,1\}^n$
whose distance is at most $r$ from the all-zero vector.
Then, we see that  $|B_n(r)| = \sum_{i = 0}^r {n \choose i}$.
Additionally, we denote the radius
$\mathsf{rd}_n(N) :=
\max\{\hat{r}\in \mathbb{Z}_{\ge 0} : |B_n(\hat{r})| \le N\}$ 
as the radius of the largest ball centered at the all-zero vector 
of the boolean hypercube that contains at most $N$ points.

\begin{lemma}[Subkey Prediction \cite{C:BelKanRog16}]\label{lem:subkey}
For sufficiently large integers $n, t, k$ and for all
(computationally unbounded) adversaries $\A=(\A_1,\A_2)$,
\[
	\Pr[\A_2(\bqrys, \A_1(\bdb)) = \bdb[\bqrys]] \le \frac{1}{2^{n-t}} \cdot \sum\limits_{i = 0}^{ \min(1 + \mathsf{rd}_n(2^{n-t}), n)} {n \choose i} \cdot \left(1 - \frac{i}{n}\right)^k
\]
where $\bqrys \getsr [n]^k$ are the $k$ subkey locations, $\bdb \getsr \bits^n$ is the $n$-bit large key, adversary $\A_1$ outputs at most $t$ bits of leakage and $\A_2$ outputs $k$ bits.
\end{lemma}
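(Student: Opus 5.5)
The plan is to reduce the statement to a purely combinatorial extremal problem about subsets of the hypercube, and to solve that problem with a compression (shifting) argument plus a concavity step. First fix $\A_1$ and, for each leakage value $\ell\in\bits^t$, let $C_\ell=\A_1^{-1}(\ell)\subseteq\bits^n$. The sets $C_\ell$ partition $\bits^n$ into at most $2^t$ parts. Since $\A_2$ sees only $(\qrys,\ell)$, its best strategy is the maximum-likelihood guess. This gives
\[
\Pr[\A_2(\bqrys,\A_1(\bdb))=\bdb[\bqrys]]\le \frac{1}{2^n}\sum_{\ell} f(C_\ell),\qquad f(C):=\Ex_{\bqrys}\Bigl[\max_{y\in\bits^k}\bigl|\{D\in C: D[\bqrys]=y\}\bigr|\Bigr].
\]
Randomness shared between $\A_1$ and $\A_2$ is handled by averaging: it suffices to bound every deterministic pair.

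The core step is to show $f(C)\le g(|C|)$. Here $g(N)$ is the sum of $(1-|D|/n)^k$ over the $N$ lowest-Hamming-weight points $D$ of the cube, where $|D|$ denotes weight. I would proceed in two moves.

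\emph{Down-sets.} For a down-set $C$ (closed under turning ones into zeros), the guess $y=0^k$ is optimal for every $\bqrys$. To see this, map each $D$ with $D[\bqrys]=y$ to $D$ with the $\bqrys$-coordinates zeroed; this map is injective within a fixed pattern class and stays inside $C$. Hence
\[
f(C)=\sum_{D\in C}\Pr_{\bqrys}[D[\bqrys]=0^k]=\sum_{D\in C}(1-|D|/n)^k.
\]
This is at most $g(|C|)$, because the summand decreases in weight.

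\emph{General $C$.} Apply the standard down-compressions $\mathcal{D}_j$: for each coordinate $j$, replace $D$ by $D$ with bit $j$ cleared whenever that point is not already in $C$. I must show that each $\mathcal{D}_j$ preserves $|C|$ and does not decrease $f$, and that iterating $\mathcal{D}_j$ terminates in a down-set. The monotonicity is where I expect the main obstacle. The argument must be made query-by-query. For each fixed $\bqrys$, I want $\max_y$ of the class sizes not to decrease under compression. If $j\notin\bqrys$, the compression merges pairs within the same pattern class, so class sizes are unchanged. If $j\in\bqrys$, compression pairs up the classes $y$ and $y\oplus e_j$, and I expect the largest class to be preserved by an exchange argument on the pairs. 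A subtle point is that $\bqrys$ is a tuple and may contain repeated positions; the argument has to accommodate this.

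Finally, $g$ is concave in $N$, since its increments $(1-w/n)^k$ are nonincreasing. By Jensen's inequality over at most $2^t$ parts with $\sum_\ell|C_\ell|=2^n$, we get $\sum_\ell g(|C_\ell|)\le 2^t g(2^{n-t})$. Dividing by $2^n$ gives $2^{-(n-t)}g(2^{n-t})$. The $2^{n-t}$ lowest-weight points fill the ball $B_n(\mathsf{rd}_n(2^{n-t}))$ plus part of the next layer. Bounding that partial layer by the full layer yields the sum up to $\min(1+\mathsf{rd}_n(2^{n-t}),n)$ of $\binom{n}{i}(1-i/n)^k$, which is the stated bound.
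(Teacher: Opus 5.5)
Your proof is correct. The paper itself gives no proof of this lemma: it quotes it verbatim from \cite{C:BelKanRog16} and only estimates the right-hand side afterwards, in Lemma~\ref{lem:subkey-ours}. So your proposal supplies an argument the paper omits. Your chain is:
\begin{itemize}
\item maximum-likelihood decoding within each leakage class;
\item down-compression of each class to a down-set;
\item optimality of the all-zero guess on down-sets, giving $f(C)=\sum_{D\in C}(1-|D|/n)^k\le g(|C|)$;
\item concavity of $g$ over the at most $2^t$ classes.
\end{itemize}
This establishes the statement. It actually gives a slightly sharper form, in which the outermost layer contributes only $(2^{n-t}-|B_n(r)|)(1-(r+1)/n)^k$, with $r=\mathsf{rd}_n(2^{n-t})$, instead of the full $\binom{n}{r+1}(1-(r+1)/n)^k$.

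Your argument also shows exactly which hypotheses are used: the leakage ranges over $\bits^t$, and $\bqrys$ is uniform on $[n]^k$ and independent of everything $\A_1$ sees. Any oracle independent of $(\bdb,\bqrys)$ can be fixed and averaged over just like shared coins. So your argument also directly yields the oracle-relative version asserted in Lemma~\ref{lem:dpir}. The citation route gains only brevity.

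The step you flag as the main obstacle closes in one line.
\begin{itemize}
\item Index classes by the restriction $D|_T\in\bits^T$, where $T$ is the set of distinct positions occurring in $\bqrys$. This also handles repeated positions. For $j\in T$, compression pairs $z$ with $z\oplus e_j$ in $\bits^T$, not $y$ with $y\oplus e_j$ in $\bits^k$.
\item Let $A$ and $B$ be the sets of values on $[n]\setminus\{j\}$ taken by members of $C$ in classes $z$ (with $z_j=0$) and $z\oplus e_j$, respectively. Then $\mathcal{D}_j$ turns the class sizes $|A|,|B|$ into $|A\cup B|,|A\cap B|$. This preserves their sum, and $|A\cup B|\ge\max(|A|,|B|)$, so no maximum class size drops for any $\bqrys$.
\item Each nontrivial compression strictly lowers $\sum_{D\in C}|D|$, so the process terminates. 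A set fixed by every $\mathcal{D}_j$ is a down-set.
\end{itemize}
For the Jensen step, pad with empty classes to exactly $2^t$ of them. Then use the piecewise-linear extension of $g$, which is concave with $g(0)=0$.
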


\subsection{Blackbox constructions}
\label{sec:blackbox}
In our work, we will consider blackbox constructions. We borrow the following definitions directly from the
work of Reingold, Trevisan and Vadhan~\cite{TCC:ReiTreVad04}.

\begin{definition}[Blackbox Constructions]
Let $\mathcal{P}$ and $\mathcal{Q}$ be two primitives. A blackbox construction of primitive $\mathcal{Q}$
from primitive $\mathcal{P}$ consists of two PPT algorithms $(Q,S)$ satisfying the following:
\begin{itemize}[noitemsep]
\item For oracle $P$ that implements primitive $\mathcal{P}$, the algorithm $Q^P$ implements primitive $\mathcal{Q}$.
\item For any oracle $P$ implementing primitive $\mathcal{P}$ along with any computationally unbounded oracle
adversary $A$ successfully breaking the security of $Q^P$, then the oracle adversary $S^{P,A}$ breaks the security of $P$.
\end{itemize}
\end{definition}

Next, we will say that a primitive exists relative to an oracle using the following definition.

\begin{definition}
A primitive $\mathcal{P}$ exists relative to an idealized oracle $\mathcal{O}$ if there exists
an implementation $P$ for primitive $\mathcal{P}$ that is computable by a PPT algorithm
with access to idealized oracle $\mathcal{O}$. Furthermore, there exists no (computationally unbounded) adversary that makes at most polynomial number of queries to idealized oracle $\mathcal{O}$
that breaks the security of the implementation $P$.
\end{definition}

Finally, we will say that a blackbox construction relativizes using the following result from~\cite{TCC:ReiTreVad04}.

\begin{theorem}[Relativization]
If there is a blackbox construction of primitive $\mathcal{Q}$ from primitive $\mathcal{P}$, then the following two statements holds:
\begin{itemize}[noitemsep]
\item If primitive $\mathcal{P}$ exists relative to idealized oracle $\mathcal{O}$, then primitive $\mathcal{Q}$ also exists relative to idealized oracle $\mathcal{O}$.
\item If there exists a blackbox construction of primitive $\mathcal{P}$ from idealized oracle $\mathcal{O}$,
then there also exists a blackbox construction of primitive $\mathcal{Q}$ from idealized oracle $\mathcal{O}$.
\end{itemize}
\end{theorem}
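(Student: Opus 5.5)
The plan is to prove both items by \emph{composition}: plug the given implementation of $\mathcal{P}$ into the oracle algorithm $Q$, and plug the reduction $S$ into the security argument for $\mathcal{P}$. We only need to check that polynomial running time and polynomial oracle-query counts survive the composition. Throughout, let $(Q,S)$ be the assumed blackbox construction of $\mathcal{Q}$ from $\mathcal{P}$.

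For the first item, let $P$ be the implementation of $\mathcal{P}$ guaranteed relative to $\mathcal{O}$. It is computable by a PPT machine with oracle access to $\mathcal{O}$, and no adversary making polynomially many $\mathcal{O}$-queries breaks it. We define the candidate implementation of $\mathcal{Q}$ to be $Q^{P^{\mathcal{O}}}$. Each oracle call of $Q$ is answered by running the PPT machine for $P$ on that input, and that machine forwards its own calls to $\mathcal{O}$.
\begin{itemize}
\item \emph{Efficiency.} $Q$ makes polynomially many calls on polynomial-length inputs, and each call costs polynomial time and polynomially many $\mathcal{O}$-queries. Hence $Q^{P^{\mathcal{O}}}$ is PPT with access to $\mathcal{O}$.
\item \emph{Functionality.} The function $P^{\mathcal{O}}$ is an oracle implementing $\mathcal{P}$. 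So the first bullet of the definition of a blackbox construction says that $Q^{P^{\mathcal{O}}}$ implements $\mathcal{Q}$.
\item \emph{Security.} Suppose some adversary $A$ making polynomially many $\mathcal{O}$-queries breaks $Q^{P^{\mathcal{O}}}$. The second bullet of the definition quantifies over arbitrary, even unbounded, oracle adversaries, so it applies to $A$. It gives that $S^{P^{\mathcal{O}},A}$ breaks $P^{\mathcal{O}}$. Now view $S^{P^{\mathcal{O}},A}$ as a single adversary with access to $\mathcal{O}$: every $P$-call is simulated by running $P^{\mathcal{O}}$, and every $A$-call is simulated by running $A$ and forwarding its $\mathcal{O}$-queries. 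Since $S$ is PPT, it makes polynomially many calls of each type. Each call contributes polynomially many $\mathcal{O}$-queries, so the total is polynomial. This is a polynomial-query adversary breaking $P$ relative to $\mathcal{O}$, contradicting the assumption.
\end{itemize}

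For the second item, let $(P',S_P)$ be the blackbox construction of $\mathcal{P}$ from the idealized oracle $\mathcal{O}$. We propose the pair $(Q'',S'')$, where $Q'' := Q^{P'^{(\cdot)}}$ and $S''^{\mathcal{O},A} := S_P^{\mathcal{O},\,S^{P'^{\mathcal{O}},A}}$.
\begin{itemize}
\item \emph{Efficiency.} Both $Q''$ and $S''$ are PPT oracle machines, since a composition of PPT oracle machines is PPT.
\item \emph{Functionality.} This follows by applying the first bullet twice: $P'^{\mathcal{O}}$ implements $\mathcal{P}$, hence $Q^{P'^{\mathcal{O}}}$ implements $\mathcal{Q}$.
\item \emph{Security.} Let $A$ break $Q''^{\mathcal{O}}$. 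The reduction $S$ gives that $S^{P'^{\mathcal{O}},A}$ is an oracle adversary breaking $P'^{\mathcal{O}}$. The reduction $S_P$ then gives that $S_P^{\mathcal{O},S^{P'^{\mathcal{O}},A}}$ breaks $\mathcal{O}$.
\end{itemize}

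The main obstacle is the bookkeeping in the security step of the first item. We must make sure that the notion "breaks the security" used in the blackbox-construction definition, which quantifies over unbounded adversaries, lines up with the polynomial-query notion used in "exists relative to $\mathcal{O}$". We also must make sure that $S$ treats $A$ purely as a blackbox, so that each of $A$'s $\mathcal{O}$-queries is simulated exactly once per invocation. I would handle this by counting queries explicitly: if $S$ makes at most $p_S$ calls, $P$ makes at most $p_P$ queries per call, and $A$ makes at most $p_A$ queries per call, then the composed adversary makes at most $p_S\cdot(p_P+p_A)$ queries to $\mathcal{O}$, which is polynomial. I would also note that any success-probability loss in $S$ is preserved as a non-negligible advantage by the definition.
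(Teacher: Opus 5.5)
Your proposal is correct and is essentially the standard argument behind this result, which the paper does not prove itself but cites from~\cite{TCC:ReiTreVad04}. The first item is composition: $S^{P^{\mathcal{O}},A}$, with both $P^{\mathcal{O}}$ and $A$ simulated through $\mathcal{O}$, is itself a polynomial-query adversary, and it is the fully-blackbox quantification over arbitrary $A$ that lets $A$ query $\mathcal{O}$ directly. The second item is transitivity of fully blackbox reductions.

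One point deserves to be stated explicitly. When $\mathcal{O}$ is a random idealized oracle, you are applying the blackbox guarantee to the randomized implementation $P^{\mathcal{O}}$, with breaking measured on average over $\mathcal{O}$ rather than oracle by oracle. The paper's informal definitions support that reading, but it is an interpretive step that your query-counting bookkeeping does not itself supply.
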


Note, we can use the above to essentially show the following. If we prove that any cryptographic primitive cannot exist relative a crypto oracle $\Or$, then we prove that there cannot exist any blackbox construction
of the cryptographic primitive using any primitive that can be built from a crypto oracle $\Or$
(including random oracles, generic multilinear groups and virtual blackbox obfuscation as shown
in \thref{thm:oraclecrypto} taken from~\cite{EC:LinMooWic25}).

In the above definitions, we consider fully blackbox constructions from~\cite{TCC:ReiTreVad04}. One can also consider semi-blackbox constructions where the security reduction is not necessarily blackbox and may make use of the internal structure of the adversary. In this paper, we consider fully blackbox constructions for simplicity, but our results may be extended to rule out semi-blackbox constructions
using the same techniques explained in~\cite{EC:LinMooWic25}. In particular, most cryptographic
primitives such as encryption, obfuscation, and oblivious transfer enable the embedding property
introduced in~\cite{TCC:ReiTreVad04} which extends our results to semi-blackbox constructions for those primitives.

\section{Computation Lower Bounds}\label{sec:bb-sep}

We present our main result: an amortized computation lower bound of $\Omega(n/s)$
for any single-server PIR with client preprocessing with $s$-bit client storage supporting
$k = \Theta(s)$ queries.

\begin{theorem}\label{thm:main-lower-bound}
Relative to any crypto oracle $\Or$,\footnote{In \apref{sec:ap-oracles},
we explain how our results extend to \emph{any} ``consistently sampleable''
oracles, of which crypto oracles are a subset.} there does not exist a single-server
PIR with preprocessing scheme with an oracle-free encoding
(\defref{def:oracle-free}) that is $\varepsilon$-correct
and $\delta$-private
with $s$-bit client storage and total computation of
$t = o(n)$ %
supporting $k = \Omega(s)$ queries with $k \le 0.19n$
against computationally unbounded adversaries making $\poly(\lambda)$ queries to $\Or$
for any choice of $s = \omega(\log(n + \lambda))$, $\varepsilon = \Omega(2^{-s})$ and $\delta \le \varepsilon / 3$.
In more detail, there exists no such scheme satisfying both of the following efficiency measures:
\begin{itemize}[noitemsep]
\item Total communication of $o(n)$ bits.
\item Total of $o(n)$ input bits to the crypto oracle $\Or$ by the server during the online query protocol.
\end{itemize}
\end{theorem}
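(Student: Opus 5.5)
We argue by contradiction, following the dual PIR framework sketched in the overview. Suppose $\Pi^{\Or}=(\init,\setup^{\Or},\query^{\Or})$ has an oracle-free encoding, is $\varepsilon$-correct and $\delta$-private, has total communication $o(n)$, and has server oracle inputs totalling $o(n)$ bits during $\query$.

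\textbf{Step 1: dual PIR impossibility.} We first isolate an information-theoretic statement. Any triple $(\advice,\hint,\recon)$ with $|\adviceout|=t'\le n/2$ and $|\hintout|\le 3s$ recovers $\db[\qrys]$ for random $\bdb,\bqrys$ with probability at most $2^{3s}\cdot 2^{-\Omega(k)}$. This holds even when the algorithms are unbounded and share oracles $\Or,\RO$. The proof guesses the hint uniformly at random, which costs a factor $2^{-3s}$, and so yields a subkey predictor. We then apply \lemref{lem:subkey}. With $t'\le n/2$ and $k\le 0.19n$, the radius $\mathsf{rd}_n(2^{n/2})$ is a constant fraction $cn$ of $n$, and the sum is dominated by terms $\binom{n}{i}2^{-n/2}(1-i/n)^k$ with $i$ near $cn$. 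Each such term is $2^{-\Omega(k)}$, and the constant $0.19$ is what makes the calculation close. Shared randomness is absorbed by averaging. To make the bound beat the correctness we will derive later, we take $k\ge Cs$ for a large enough constant; by batching sub-batches, $k=\Omega(s)$ can be reduced to this case.

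\textbf{Step 2: construct the dual PIR from $\Pi$.} Fix a query $\qrys'$.
\begin{itemize}
\item $\advice^{\Or,\RO}(\db)$ computes $\edb\gets\init(\db)$ with coins taken from $\RO$ and runs $\setup$ and $\query(\cdot,\qrys';\edb)$. It outputs the transcript $\tx$ together with the server's oracle query set $S$. This is $o(n)\le n/2$ bits.
\item $\hint$ recomputes $\edb$. Using \lemref{lem:oracle_resample} with $\RO$, it derives candidates $(\Or_i,\st_i,\omega_{\client,i})$ with $\Or_i(S)=\Or(S)$. It outputs the rank $c$, among the indices whose query run $\query^{\Or_i}(\st_i,\qrys;\edb)$ reproduces $\tx$, of the first such index for which $\st_i$ is also a valid $\setup^{\Or_i}(\edb)$ output.
\item $\recon$ replays only the client side against the server messages in $\tx$. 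This simulation is faithful because the server's view is determined by $\edb$, $\Or(S)$ and the client messages, and $\edb$ is oracle-free. It returns the answer from the $c$-th matching candidate.
\end{itemize}
Truncating $\hintout$ to $3s$ bits fails with probability at most $2^{-s}$: each matching candidate has conditional probability roughly $2^{-s}$ of being a valid state, so the rank is geometric-like.

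\textbf{Step 3: correctness when $\qrys=\qrys'$.} Since $\edb$ is independent of $\Or$ and resampling is exact conditioned on $\Or(S)$ (\lemref{lem:oracle_resample}), the triple (selected $\Or_i$, $\st_i$, client coins) is distributed as the real posterior given $(\edb,\tx,\Or(S))$. The reconstructed answer is therefore distributed as the answer of a real execution, and correctness is at least $\varepsilon-2^{-s}-e^{-\zeta}$.

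\textbf{Step 4: correctness for arbitrary $\qrys$.} Suppose correctness drops by more than $2\delta$ for $\qrys$ compared with $\qrys'$. A distinguisher then knows $\db$, receives $(\tx,\edb)$, and reconstructs $S$ by rerunning the server with $\edb$ and its own $\Or$ access. It runs $\hint,\recon$ with a lazily simulated $\RO$ and checks whether the output equals $\db[\qrys]$. This breaks $\delta$-privacy using only polynomially many $\Or$ queries; resampled oracles use $\RO$, not $\Or$. Hence correctness for random $\qrys$ is at least $\varepsilon-2\delta-2^{-s}\ge\varepsilon/3-2^{-s}=\Omega(2^{-s})$. This contradicts Step 1 once $2^{3s-\Omega(k)}\ll 2^{-s}$, which holds for $k\ge Cs$ and $s=\omega(\log(n+\lambda))$.

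\textbf{Main obstacle.} The hardest step is Step 3: making rigorous that the inverted sampling order (transcript first, then the $(\Or',\st)$ posterior) matches the real joint distribution, including the subtlety that $S$ is itself a function of $\Or$. A secondary difficulty is the rank-encoding bound.
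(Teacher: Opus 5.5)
Your architecture matches the paper's. It has the advice $(\tx,S)$, oracles $\Or_i$ resampled from $\RO$ to agree with $\Or(S)$, a rank-encoded hint, and client-side replay against $\tx_{\server}$. It also has the distinguisher that recovers $S$ by rerunning the deterministic server on the challenge $\edb$, and subkey prediction with the hint guessed.

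\textbf{The rank bound.} This is the genuine gap, and the justification you give for it is false as stated. Fix an advice outcome $(\tx,S,v)$. Let $p$ (resp.\ $q$) be the probability that one candidate with an honestly sampled (resp.\ uniform) state reproduces $\tx$. An iteration matches with probability $q$ and accepts with probability $2^{-s}p$. So a match accepts with conditional probability $2^{-s}p/q$, not roughly $2^{-s}$, and the rank has conditional mean about $2^{s}q/p$. For particular transcripts $q/p$ can be enormous, namely those that non-honest states produce far more often than honest ones. For such $\tx$ the $3s$-bit truncation fails with high probability.

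The bound holds only on average over the advice distribution. By server replay, the probability that $\advice$ outputs $(\tx,S)$ with $\Or(S)=v$ is exactly $\Pr[\Or(S)=v]\cdot p$. Hence the expected rank is at most $2^{s}\sum_{(\tx,S,v)}\Pr[\Or(S)=v]\,q(\tx,S,v)\le 2^{s}$. The last inequality holds because the events \emph{a run with an unconditioned oracle and a uniform state has transcript $\tx$, server query set $S$, and answers $v$} are disjoint across triples (again by replay). Markov then finishes the bound.

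The paper implements this with the truncation $z'=\lfloor 2^{2s}/p\rfloor$ in Lemma~\ref{lem:hint-counter}. It also needs Lemma~\ref{lem:p}: $p\ge 2^{-3n}$ except with probability $2^{-n}$, proved by counting the $2^{o(n)}$ pairs $(\tx,S)$. That lemma caps the loop at $z=2^{3n+2s}$, so the $e^{-6n}$ per-resampling failures can be union-bounded. Your single $e^{-\zeta}$ term ignores that your sketch performs an unbounded number of resamplings.

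\textbf{The acceptance test.} Your test, \emph{$\st_i$ is a valid $\setup^{\Or_i}(\edb)$ output}, must mean $\st_i=\st_i'$ for an independent fresh sample $\st_i'\gets\setup^{\Or_i}(1^\lambda;\edb)$. A support check would weight an accepted triple $(\Or^*,\st^*,\omega^*)$ by $\1[\st^*\in\mathrm{supp}]$ rather than by $\Pr[\setup^{\Or^*}(1^\lambda;\edb)=\st^*]$. Your Step~3 claim that the accepted triple follows the real posterior would then be false. A counterexample is a scheme that rarely outputs any of many junk states which yield the same transcripts but wrong answers. The equality test makes the uniform proposal cost exactly the constant factor $2^{-s}$, which is what Lemma~\ref{lem:output-dist} exploits.

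\textbf{Parameter constants.} Reducing $k=\Omega(s)$ to $k\ge Cs$ \emph{by batching sub-batches} is not a valid step. Parallel copies scale $k$ and $s$ together, and a $k$-batch scheme answers no more than $k$ queries per state. The paper's proof simply takes $k\ge 27s$ and $\varepsilon\ge 2^{-s+4}$, i.e., it reads the hypotheses with large enough constants. Your argument should do the same.
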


Our above result unconditionally rules out the existence of any such efficient single-server
PIR with preprocessing schemes with oracle-free encodings that make blackbox usage of any cryptography
that may be implemented using a crypto oracle (see Section~\ref{sec:blackbox} for more details on blackbox constructions). This includes random oracles (one-way functions),
virtual blackbox obfuscation, fully homomorphic encryption and idealized models such as generic multilinear groups (see Section~\ref{sec:prelim:oracles}).
As discussed in Section~\ref{sec:pir-def}, the oracle-free encoding
restriction is satisfied by every existing single-server PIR with client
preprocessing construction that we are aware of, most of which do not
encode the database at all along with the state-of-the-art public-key DEPIR.

We note our lower bound only applies to schemes that support sufficiently large $k = \Omega(s)$ queries. This turns out to be a necessary
restriction as there exist prior constructions that support a
smaller number of queries with $o(n/s)$ communication per
query (see Appendix~\ref{ap:small-queries}).

Finally, we remark on the wide applicability of our lower bound towards the parameterizations of $\varepsilon$ and $\delta$. In particular,
our lower bound rules out any scheme that is correct with exponentially small probability in $s$.
Note, the requirement that $\delta \le \varepsilon/3$ is essentially required as if $\delta > \varepsilon$ there are trivial algorithms
that essentially do plaintext retrievals with probability $\delta$ (sacrificing privacy), but obtaining sufficient correctness.
For more larger choices of correctness guarantees such as $\varepsilon = O(1)$ or $\varepsilon = 1/\poly(\lambda, n)$, we note
our lower bound applies to PIR with preprocessing schemes with weaker privacy guarantees with non-negligible choices of $\delta$.

\heading{Proof Overview.}
Our proof proceeds in three steps as follows. 
Each of the following sections is devoted to one step of the proof.

\begin{enumerate}
\item First, we present and define the new notion of dual PIR where the roles of
the online queries and offline preprocessing in PIR with preprocessing are essentially switched.
\item Next, we present a reduction that any PIR with preprocessing that is
secure relative to a crypto oracle $\Or$ implies that there is a dual PIR
relative to that same crypto oracle $\Or$ such that the dual PIR's efficiency guarantees directly
relate to the client storage size, communication and server cryptographic operations
in the original PIR with preprocessing scheme.
\item Finally, we prove information-theoretically (and therefore relative to any $\Or$) that no dual PIR can be both correct and efficient, by reducing to the subkey prediction problem from leakage-resilient cryptography.
\end{enumerate}

\subsection{Dual Private Information Retrieval}
We introduce a notion that we call \emph{dual} private information retrieval. 
At a high level, we reverse the order of the offline preprocessing and online querying 
that appears in the PIR with preprocessing setting.
Indeed, 
for dual PIR, there is first some $t$-bit {\em advice} that must be generated
without any knowledge (independently) of future queries by the client.
Afterwards, the client is given a batch of $k$ query indices $\qrys \in [n]^k$ to retrieve from the database $\db \in \{0,1\}^n$.
To answer their queries, the client may additionally
request an arbitrary $s$-bit {\em hint} about the database $\db$.
After receiving this information though, the client must be able to correctly retrieve the queried indices, 
$\db[\qrys] \in \{0,1\}^k$, using the $t$-bit advice and the $s$-bit hint.

The $t$-bit advice %
produced by $\advice$ plays the role of the transcript of the online queries, whereas 
the $s$-bit hint computed using $\hint$ 
is equivalent to the $s$-bit client storage from offline preprocessing in standard PIR with preprocessing.
Furthermore, the efficiency guarantees have been swapped in dual PIR where the client obtains $t$ bits in the first phase and $s$ bits in the second phase.
In contrast, a PIR with preprocessing scheme obtains an $s$ bits of client storage in the offline phase and $t$ bits of communication
during online queries. For this reason we shall sometimes call the advice {\em the transcript} and say that the
hint is the {\em secret}.

The definition of dual PIR that we present does not explicitly provide any privacy guarantees.
It turns out that privacy is essentially baked into the syntax of the algorithms. In the PIR with preprocessing model,
the privacy guarantees required by the query transcript viewed by the adversary during the online phase
should not reveal information about the queried indices $\qrys$.
In dual PIR, this privacy is essentially guaranteed since the client must generate the $t$-bit transcript
without any knowledge of the future query indices. Similarly, we allow the client to compute
$\hint$ outside of the adversary's view similar to our definition of $\setup$ in PIR with preprocessing (see Definition~\ref{def:pir}).

We formally present the definition of dual PIR below:

\begin{definition}[Dual private information retrieval]\label{def:dpir}
	A \emph{dual private information retrieval} for database size $n$, batch size $k$, 
	advice size $t$, and %
	hint size $s$, shortened to a $(n,k,t,s)$-dual PIR, is a tuple of algorithms
	$\dpir=(\advice,\allowbreak\hint,\allowbreak\recon)$ which satisfies the following syntax:
	\begin{itemize}
		\item $\advice(\db) \to \adviceout$, takes as input a database $\db\in \bits^n$ and outputs
		a string of advice $\adviceout \in \bits^t$.
		\item $\hint(\db,\qrys,\adviceout) \to \hintout$, takes as input a database $\db \in \bits^n$, a
		batch of queries $\qrys\in [n]^k$ as well as advice $\adviceout \in \bits^t$  and outputs a
		hint string $\hintout \in \bits^s$.
		\item $\recon(\qrys, \adviceout, \hintout) \to \ans$, takes as input queries $\qrys\in [n]^k$,
		advice $\adviceout \in \bits^t$ as well as hint string $\hintout \in \bits^s$ and outputs
		the answer $\ans \in \bits^k$.
	\end{itemize}
	We say that a $(n,k,t,s)$-\emph{dual PIR} is $\varepsilon$-correct if
	\[
	\Pr
	\left[\recon(\bqrys,\badviceout,\bhintout) = \bdb[\bqrys]
	\right]
		\ge \varepsilon,
	\]
	where $\bdb \getsr \bits^n$, $\bqrys \getsr [n]^k$ are uniform and independent random variables, and $\badviceout \getsr \advice(\bdb)$
	and $\bhintout \getsr \hint(\bdb,\bqrys,\badviceout)$ are the outputs of the respective algorithms on the random inputs.
\end{definition}

First, we note that dual PIR schemes are very easy in the case that $s \ge k$. In this setting, the client can simply ignore the
$t$-bit transcript from the
$\advice$ algorithm and simply privately retrieve the $k$ queried entries, $\db[\qrys]$ in $\hint$. Therefore, we focus
on the setting of $s < k$ where this construction is not possible.

One may wonder the formal relations between dual PIR and PIR with preprocessing. In a sense, dual PIR provides
the client both more and less information and flexibility at different steps of the protocol compared to PIR with preprocessing.
Nevertheless, we show that any PIR with preprocessing may be converted into a dual PIR scheme.

\subsection{Constructing a Dual PIR}
\label{sec:reduce-dual-pir}

This section proves that any PIR with preprocessing scheme $\Pi^{\Or} = (\init, \setup^{\Or}, \query^{\Or})$
with an oracle-free encoding\footnote{We discuss
plausible relaxations of this restriction in Remark~\ref{rem:init-ext}.}
(see \defref{def:oracle-free})
that has access to a crypto oracle $\Or$
that is sufficiently private and correct can be used to build
a correct dual PIR $\Pi' = (\advice^{\Or,\RO}, \hint^{\Or,\RO}, \recon^{\Or,\RO})$ with  access to both a crypto oracle $\Or$ as well as an additional independent random oracle $\RO$.
We make the following assumptions about $\Pi^{\Or}$ without loss of generality. First,
we will consider underlying PIR schemes where the client storage is
always exactly $s$ bits (one can always pad to make sure the client storage is $s$ bits).
Both $\init$ and $\setup^{\Or}$ will be randomized.\footnote{We note that randomized $\init$ enables us to capture schemes with randomized (but oracle-free) encodings, as well as the derandomization of the server described next.}
For $\query^{\Or}$, we will suppose
that the server's execution during $\query^{\Or}$ is a deterministic function of its input (including the crypto oracle $\Or$, encoded database $\edb$ and communication transcript $\tx$).
We note that assuming the server execution of $\query^{\Or}$ is deterministic is without loss of generality
as $\init$ is randomized and may attach the necessary server randomness
to respond to the $k$ queries into the encoded
database $\edb$ to be used during $\query^{\Or}$ by the server.
However, the client may use randomness in its execution of $\query^{\Or}$.

In our dual PIR construction, we will consider executing the client-side %
of the $\query$ algorithm of our underlying PIR scheme $\Pi$.
As a reminder,
the $\query$ algorithm is a two-party, interactive protocol (potentially consisting of multiple rounds) between
the client with input client state $\st$ and query sequence $\qrys$,
and a server with the encoded database $\edb$ 
(see~\defref{def:pir} for more details). We denote
the execution of this algorithm by $(\ans; \bot \mid \tx) \leftarrow \query^{\Or}(\st, \qrys; \edb)$.
The output consists of a private output $\ans$ to the client and a shared output $\tx$ consisting of the communication transcript
between both parties ($\bot$ denotes that there is no private output to the server). We will also consider the transcript $\tx$ into two components consisting of communication sent by the client $\tx_{\client}$ and the communication sent by the server $\tx_{\server}$.
We denote the client-side execution of the query algorithm by the $(\ans \mid \tx_\client) \leftarrow \query^{\Or}_{\client}(\st, \qrys; \tx_\server)$ where we
simulate the server-side computation by using its communication transcript denoted by $\tx_\server$. The output is both
the client's output $\ans$ along with the client's communication transcript $\tx_\client$.

\medskip\noindent{\bf Overview of Dual PIR Construction.} At a high level, $\advice^{\Or,\RO}(\db)$ will sample a communication transcript from the proper execution of underlying PIR construction. However, we note that $\advice^{\Or,\RO}(\db)$ only receives a database $\db$ and, in particular, does not receive the query $\qrys$. Instead, $\advice_{\qrys'}^{\Or,\RO}(\db)$ is parameterized by some fixed query sequence $\qrys' \in [n]^k$ and executes $\Pi$ using the fixed query sequence $\qrys'$ to obtain a communication transcript $\tx$.
Additionally, we will record all crypto oracle queries performed by the server in the set $S$. The final output of $\advice$ will be $\adviceout = (\tx, S)$ consisting of the sampled communication transcript $\tx$ and all crypto oracle queries performed by the server $S$
during the $\query$ algorithm.

In the next subroutine of dual PIR, $\hint^{\Or, \RO}(\db, \qrys, \adviceout = (\tx, S))$ receives the database $\db$, the real query sequence of interest $\qrys$ and the output of $\advice$ denoted by $\adviceout = (\tx, S)$.
As a note, $\hint$ accepts any query sequences $\qrys$ including those that are different from the fixed query sequence used in $\advice$ such that
$\qrys \ne \qrys'$.
With this input, $\hint^{\Or, \RO}(\db,\qrys,\adviceout = (\tx, S))$ will essentially re-randomize the client's execution of $\Pi$ using input query sequence $\qrys$ while fixing the server's
execution. In more detail, we will iterate through all possible choices of randomness (including randomness in $\setup$ and the choice of crypto oracle) to find some client state $\st$ such that the client's execution matches the communication transcript $\tx$.
To fix the server's execution, we sample crypto oracles $\Or_1,\ldots,\Or_z$ such that each sampled crypto oracle matches $\Or$ for all queries in $S$ for some choice of $z$ we will fix later.
For sampling each crypto oracle $\Or_i$ that is consistent with $\Or(S)$, we use
the algorithm described in Lemma~\ref{lem:oracle_resample} with the shared random oracle $\RO$.
As we assume without loss of generality that the server's algorithm is deterministic, fixing all its server crypto oracle results means that the server's execution is fixed now.
Note, $\hint$ may obtain the same encoded database $\edb$ by executing
$\init(\db)$ using the same randomness $\omega_\server = \RO(0)$ as used in
$\advice$; as $\init$ makes no oracle queries, this recomputed $\edb$ is
identical to the one in $\advice$ and, importantly, does not depend on the
crypto oracle at all.
Next, $\hint$ will attempt to sample a client state using the $z$ different crypto oracles.
For all $i \in \{1,\ldots,z\}$,
$\hint$ samples a client state $(\st_i; \bot) \leftarrow \setup^{\Or_i}(1^\lambda; \edb)$ and checks whether the execution of the query algorithm $(\ans; \bot \mid \tx_i') \leftarrow \query^{\Or_i}(\st, \qrys; \edb)$ produces the same communication transcript $\tx = \tx_i'$.
For the first such instance with index $i$, $\hint$
will encode the sampled client state $\st_i$ and crypto oracle $\Or_i$ in its output $\hintout$. We will revisit
this encoding in more detail later as encoding $\Or_i$ is clearly not efficient, but we will assume this is possible for now.

Finally, $\recon^{\Or,\RO}(\qrys, \adviceout, \hintout)$ executes the query algorithm
of the underlying PIR protocol using the encoded communication transcript $\tx$ from $\advice$ as well as the
client state $\st$ and crypto oracle $\Or_i$ from $\hint$. As a reminder, $\recon$ does not receive the database
$\db$ as input. Instead, it executes only the client-side of the query algorithm $(\ans \mid \tx_\client) \leftarrow \query^{\Or_i}_\client(\st, \qrys; \tx_\server)$
where the server's interaction is replicated using the server's communication in the transcript $\tx_\server$.
The output of $\recon$ will be the answer $\ans$ obtained by executing the client-side of the query algorithm.

\medskip\noindent{\bf Encoding Algorithm for $\hint$.} 
Now, we revisit the issue of encoding a crypto oracle $\Or_i$ for $\hint$.
While the client state $\st_i$ is $s$ bits,
the crypto oracle $\Or_i$ is much larger and cannot be encoded efficiently.
As a reminder, our high level goal is to show that there exists a 
dual PIR where the output of $\hint$ is $O(s)$ bits.
A first attempt might be for $\hint$ to simply encode the sampled client state $\st$. 
As the crypto oracles
$\Or_1,\ldots,\Or_z$ are sampled using a shared random oracle $\RO$, $\recon$ can also sample them without any additional information.
Afterwards, $\recon$ could attempt to execute the query algorithm using $\st$ for each of the crypto oracles and find the first
crypto oracle $\Or_j$ that results in the communication transcript $\tx$.
Unfortunately, there is no guarantee that $\recon$
ends up at the same oracle as $\hint$ such that $j = i$.
In particular, $\recon$ does not receive the database. Therefore, $\recon$ cannot determine
whether $\st_i$ is a possible output of $\setup^{\Or_j}(1^\lambda; \edb)$.
For example, it is a possible $\st_i$ somehow results in a communication matching $\tx$ for some
index $j < i$ where $\st_i$ cannot be output by $\setup^{\Or_j}(1^\lambda; \edb)$.
Another idea might be to instead encode the index $i$ directly. Unfortunately, it turns out that the index $i$ might need to be quite large
since $\hint$ might need to use $z = 2^{\Omega(|\tx|)}$ iterations to find a sampled client state and crypto oracle 
that results in the communication transcript $\tx$ output by $\advice$. 
This would be necessary if the distribution of the transcripts is uniformly random over all $|\tx|$-bit strings.

To resolve this, we use a combination of the above two ideas with additional changes to fix various issues that arise
in both approaches. To ensure that both $\hint$ and $\recon$ use the same client state in each execution, we will instead
sample client state and randomness $(\st_i, \omega_{\client,i}) \leftarrow \RO_0(i)$ to be used with crypto oracle $\Or_i$ using the shared random oracle $\RO$. We also modify
$\hint$ such that it will also honestly generate a client state $(\st_i'; \bot) \leftarrow \setup^{\Or_i}(1^\lambda; \edb)$
where $\edb$ is the encoded database from $\edb \leftarrow \init(1^\secpar, \db)$ using randomness $\omega_{\server} = \RO(0)$.
If $\st_i \ne \st_i'$, then $\hint$ will essentially skip past using the $i$-th client state and crypto oracle.
However, $\hint$ will still execute the query algorithm using the random client state $\st_i$, client randomness $\omega_{\client,i}$, and crypto oracle $\Or_i$
to obtain the communication transcript $(\ans; \bot \mid \tx_i) \leftarrow \query^{\Or_i}(\st_i, \qrys; \edb)$ using $\omega_{\client,i}$ as the client-side randomness.
$\hint$ will keep a counter $c$ of the number of times that $\tx_i = \tx$ matching the transcript output by $\advice$.
With this change, $\hint$ will find the
first index $i \in [z]$ such that the randomly sampled client state and randomness $(\st_i, \omega_{\client,i})$ using the random oracle $\RO$ matches
the honestly generated client state $\st_i' \leftarrow \setup^{\Or_i}(1^\lambda; \edb)$ and the resulting communication transcript matches
$\tx$. 
Instead of encoding the index $i$, 
$\hint$ will instead encode the counter $c$ keeping track of the number of times 
it has observed transcript matches (that is, $\query$ outputs the same communication transcript as $\tx$ received from $\advice$)
before
finding an iteration where the observed transcript matches and $\setup$ gives the same client state as the one sampled
by using $\RO$. 
We will later show that this counter $c$ only requires $O(s)$ bits except with negligible probability.
It turns out that this counter is sufficient for $\recon$ to reconstruct both the sampled client state $\st_i$ and
crypto oracle $\Or_i$ without needing the database $\db$. In particular, $\recon$ executes the client-side of
the query algorithm for each of the sampled client states and crypto oracles and observes whether the output
communication transcript matches $\tx$ output by $\advice$. On the $c$-th matching transcript occurrence, $\recon$ will 
successfully decode the same client state $\st_i$ and crypto oracle $\Or_i$ that $\hint$ needed to encode.

\medskip\noindent{\bf Our Dual PIR Construction.}
We formally present the Dual PIR below.
In the description below, $Q'$ is a fixed set of $k$ queries such as $Q'=(1,\ldots,k)$
whereas $Q$ is an input. For convenience, we will abuse notation and consider the outputs
of the random oracle $\RO$ to be variable length depending on the desired output.
For random oracles $\RO$ with fixed output lengths, one can always add unique prefixes/suffixes
to derive arbitrary amounts of randomness.
Similarly, we assume the inputs to $\RO$ are domain separated: the input
$0$ used to derive $\omega_\server = \RO(0)$ is encoded distinctly from every
input of the form $i \mid\mid x$ used by the virtual oracles
$\RO_0,\ldots,\RO_z$ (e.g., via a dedicated prefix), so the randomness of
$\omega_\server$ is independent of all other randomness derived from $\RO$.

\begin{enumerate}
\item $\advice_{\qrys'}^{\Or, \RO}(\db)$: %
    \begin{itemize}
    	\item Run $\edb \gets \init(1^{\secpar}, \db)$ using
		$\omega_{\server} = \RO(0)$ as the algorithm's randomness.
        \item Run $(\st;\bot)\gets\setup^{\Or}(1^\lambda ;\edb)$.
        \item Run $(\ans;\bot\mid \tx)\gets\query^{\Or}(\st,\qrys';\edb)$.
        \item Record all server-side oracle $\Or$ queries executed during the $\query$ algorithm as $S$.
        \item Output $\adviceout \leftarrow (\tx, S)$.
    \end{itemize}
\item
$\hint^{\Or, \RO}(\db, \qrys, \adviceout)$:
    \begin{itemize}
    	\item Parse $\adviceout = (\tx, S)$.
    	\item Compute $\Or(S) \leftarrow \{(q,\Or(q)) \mid q \in S\}$.
	\item Run $\edb \leftarrow \init(1^{\secpar}, \db)$ using
	$\omega_{\server} = \RO(0)$ as the algorithm's randomness.
	\item Set $z \leftarrow 2^{3n+2s}$.
    	\item Construct virtual random oracles $\RO_0, \RO_1,\ldots,\RO_z$ such that $\RO_i(\cdot) = \RO(i \mid\mid \cdot)$ and $i \in \{0,1,\ldots,z\}$ is represented using a string of $O(n+s)$ bit length.
	\item Instantiate counter $c \leftarrow 0$.
	\item For $i \in \{1,\ldots,z\}$ and while $c < 2^{3s}$:
	\begin{enumerate}
		\item Sample oracle $\Or_i \leftarrow \mathcal{A}^{\RO_i}(\Or(S))$ such that $\Or_i(S) = \Or(S)$ using Lemma~\ref{lem:oracle_resample} with parameter $\zeta = 6n$. If the algorithm fails to sample crypto oracle $\Or_i$, then output $\hintout \leftarrow \bot$.
		\item Compute $(\st_i, \omega_{\client,i}) \leftarrow \RO_0(i)$.
		\item Compute $(\st_i'; \bot) \leftarrow \setup^{\Or_i}(1^\lambda; \edb)$.
		\item Compute $(\ans; \bot \mid \tx_i) \leftarrow \query^{\Or_i}(\st_i, \qrys; \edb)$ using $\omega_{\client,i}$ as the randomness for the client side of the $\query$ protocol.
		\item If $\tx_i = \tx$:
		\begin{enumerate}
			\item If $\st_i = \st_i'$, output $\hintout \leftarrow c$.
			\item Increment $c \leftarrow c + 1$.
		\end{enumerate}
	\end{enumerate}
	\item Output $\hintout \leftarrow \bot$.
    \end{itemize}
\item
$\recon^{\Or, \RO}(\qrys,\adviceout,\hintout)$:
    \begin{itemize}
	\item Parse $\adviceout = (\tx, S)$.
	\item Parse $\hintout = c$.
    	\item Parse $\tx = (\tx_\client, \tx_\server)$ as client and server communication respectively.
	\item Compute $\Or(S) \leftarrow \{(q, \Or(q)) \mid q \in S\}$.
	\item Set $z \leftarrow 2^{3n+2s}$.
    	\item Construct virtual random oracles $\RO_0, \RO_1,\ldots,\RO_z$ such that $\RO_i(\cdot) = \RO(i \mid\mid \cdot)$ and $i \in \{0,1,\ldots,z\}$ is represented using string of $O(n+s)$ bits.
	\item Instantiate counter $c' \leftarrow 0$.
	\item For $i \in \{1,\ldots,z\}$:
	\begin{enumerate}
		\item Sample oracle $\Or_i \leftarrow \mathcal{A}^{\RO_i}(\Or(S))$ such that $\Or_i(S) = \Or(S)$ using Lemma~\ref{lem:oracle_resample} with parameter $\zeta = 6n$. If the algorithm fails to sample crypto oracle $\Or_i$, then output $\reconout \leftarrow \bot$.
		\item Compute $(\st, \omega_{\client}) \leftarrow \RO_0(i)$.
		\item Compute $(\ans \mid \tx_\client') \leftarrow \query^{\Or_i}_\client(\st, \qrys; \tx_\server)$ using $\omega_{\client}$ as the randomness for the client side of the $\query$ protocol.
		\item If $\tx_\client' = \tx_\client$:
		\begin{enumerate}
			\item If $c' = c$, output $\reconout \leftarrow \ans$.
			\item Increment $c' \leftarrow c' + 1$.
		\end{enumerate}
	\end{enumerate}
	\item Output $\reconout \leftarrow \bot$.
    \end{itemize}
\end{enumerate}

\subsection{Correctness Analysis of Dual PIR Construction}

This section will be devoted to proving the correctness of our dual PIR construction. In particular, we will prove the following theorem:

\begin{theorem}
\label{thm:dualpirmain}
Suppose the PIR construction $\Pi = (\init, \setup^{\Or}, \query^{\Or})$ has an oracle-free encoding (\defref{def:oracle-free}), supports batch queries of size $k$
for databases of size $n$ with client state of $s$ bits, the communication transcript and all server crypto oracle queries during the query algorithm are $t$ bits in total.
Furthermore, suppose $\Pi$ is $\varepsilon$-correct and $\delta$-private.
For any fixed query sequence $\qrys'$, our construction
is a $(n, k, t, 3s)$-dual PIR with $\varepsilon'$-correctness satisfying $\varepsilon' \ge \varepsilon - 2\delta - 2^{-s+1} - 2^{-n+1}$.
\end{theorem}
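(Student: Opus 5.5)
We handle efficiency first and then correctness, which we split into a ``matched query'' case and a ``privacy'' case. Efficiency is immediate. The advice $\adviceout=(\tx,S)$ has at most $t$ bits by hypothesis. The hint is a counter $c<2^{3s}$, so it fits in $3s$ bits; the symbol $\bot$ can be mapped to a fixed value and counted as an error.

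\textbf{Step 1 (identical queries).} Let $\qrys=\qrys'$ and fix $\db$. We argue that the pair (client state, client randomness, oracle) selected by $\hint$ is distributed as a real execution conditioned on the transcript and on $\Or(S)$. Since $\init$ is oracle-free, $\edb$ depends only on $\db$ and $\RO(0)$. Since the server is deterministic, the server side is fixed once $\edb$, $\tx$ and $\Or(S)$ are fixed.

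By Lemma~\ref{lem:oracle_resample}, each $\Or_i$ is a fresh crypto oracle conditioned on $\Or_i(S)=\Or(S)$. The pair $(\st_i,\omega_{\client,i})$ is uniform. An iteration $i$ is accepted exactly when $\st_i=\st_i'$ (where $\st_i'\gets\setup^{\Or_i}(1^\secpar;\edb)$) and $\tx_i=\tx$. This is a rejection sampler: its accepted sample follows the law of $(\Or,\st,\omega_\client)$ under an honest run, conditioned on producing $\tx$ and agreeing with $\Or(S)$. Consequently, the hybrid ``real run, then resample the client side consistently with $(\tx,\Or(S))$'' has the same joint law as the real run. In particular the client's answer is correct with probability at least $\varepsilon$.

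Next we check that $\recon$ decodes the same index. The key point is that under $\Or_i$ with $\Or_i(S)=\Or(S)$, the condition $\tx_i=\tx$ holds if and only if $\query_\client^{\Or_i}(\st_i,\qrys;\tx_\server)$ reproduces $\tx_\client$. The forward direction is clear. For the converse, induct over rounds: when the client messages agree, the deterministic server with the same $\edb$ and oracle answers on $S$ replies exactly with $\tx_\server$. (One must argue that the server queries only points of $S$ along this path, which holds by the same induction.) Hence the $c$-th match seen by $\recon$ equals the $c$-th match seen by $\hint$, and $\recon$ outputs exactly the answer computed inside $\hint$.

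There are three failure events:
\begin{enumerate}
\item some oracle resampling fails, with probability at most $z e^{-6n}\le 2^{-n}$ after a union bound, since $z=2^{3n+2s}$ and $s\le n$;
\item the loop runs out of indices $i\le z$;
\item the counter reaches $2^{3s}$.
\end{enumerate}
For event 3, let $p$ be the probability that a transcript match is also a setup match. Conditioned on matching $\tx$, the probability that a uniform $\st_i$ equals the state produced by setup has expectation at least $2^{-s}$ (this is the step to verify carefully). The number of matches before the first success is then geometric, so exceeding $2^{3s}$ matches has probability about $(1-2^{-s})^{2^{3s}}$, which is tiny. We plan to bound it via Markov on the average to get $2^{-s}$. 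For event 2, we intend a similar bound: the probability that transcript $\tx$ arises, taken over the $\tau$ sampled by $\advice$, is rarely below $2^{-2n}$ (the transcripts number at most $2^{t}\le 2^n$), so $z$ trials suffice except with probability $2^{-n}$. Altogether the loss in Step 1 is at most $2^{-s+1}+2^{-n+1}$.

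\textbf{Step 2 (privacy).} For a uniform $\qrys\ne\qrys'$, suppose the success probability of the dual PIR on input $\qrys$ is lower than on $\qrys'$ by more than $2\delta$. We build a distinguisher for the pair $(\qrys,\qrys')$ and the chosen $\db$. It receives a challenge transcript, computes $S$ by replaying the server (it knows $\edb$), and runs $\hint$ and $\recon$ with input $\qrys$. It lazily samples $\RO$ and queries $\Or$ only on $S$, so it makes polynomially many oracle queries. It outputs $1$ exactly when the result equals $\db[\qrys]$.

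When the challenge transcript comes from $\qrys$, the Step 1 argument applies with $\qrys$ in place of $\qrys'$, so the success probability is at least $\varepsilon-2^{-s+1}-2^{-n+1}$. When it comes from $\qrys'$, the success probability equals that of the dual PIR. The advantage therefore bounds the gap by $2\delta$, which yields $\varepsilon'\ge\varepsilon-2\delta-2^{-s+1}-2^{-n+1}$ after averaging over $\db$ and $\qrys$.

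I expect the main obstacle to be the exact distributional claim of Step 1. It requires showing that conditioning on $\Or(S)$ and $\tx$ leaves every remaining oracle value independent, which holds only because $\edb$ is independent of $\Or$. The counter bound of $2^{3s}$ also depends on the average-case $2^{-s}$ state-match argument.
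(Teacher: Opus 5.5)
You have the same architecture as the paper: the rejection-sampling view of $\hint$, the round-by-round replay showing that $\recon$ sees exactly the same matches, and the distinguisher that replays the server from $\edb$ and lazily samples $\RO$. However, your bound on counter overflow (event 3) does not go through as written, and it is the crux of the $2^{-s+1}$ term.

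Fix an advice outcome $(\tx,S,v)$. Each iteration is a match with probability $q(\tx,S,v)$, which uses a uniform state. It is an acceptance with probability exactly $2^{-s}p(\tx,S,v)$, where $p$ uses an honestly sampled state. So the acceptance probability given a match is $2^{-s}p/q$.

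This quantity is not $\ge 2^{-s}$ pointwise. $q$ can vastly exceed $p$ whenever uniform states produce $\tx$ far more readily than states output by $\setup$. Hence the geometric estimate $(1-2^{-s})^{2^{3s}}$ is unjustified. Even granting a lower bound on the \emph{average} of $2^{-s}p/q$, that is the wrong quantity. It is compatible with the acceptance probability being negligible on a constant fraction of advice outcomes, in which case the counter overflows with constant probability. Markov's inequality cannot turn such a bound into a tail bound on the number of matches.

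What you need is an upper bound on the average of the \emph{reciprocal}. For fixed $(\tx,S,v)$, the expected number of matches before the first acceptance is at most $2^{s}q/p$. To average this over the advice distribution you need two facts:
\begin{enumerate}
\item The advice-consistency identity $\Pr[(\btx,\bS)=(\tx,S)\wedge \Or(S)=v]=\Pr[\Or(S)=v]\cdot p(\tx,S,v)$, which cancels the $p$ in the denominator.
\item The bound $\sum_{(\tx,S,v)}\Pr[\Or(S)=v]\,q(\tx,S,v)\le 1$. This holds because these are probabilities of disjoint events in a single execution with an unconditioned oracle and a uniform state. Disjointness uses your replay fact: any execution under an oracle agreeing with $v$ on $S$ that yields transcript $\tx$ has server query set exactly $S$.
\end{enumerate}
Together these give an expected number of matches before acceptance of at most $2^{s}$. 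Markov then bounds the overflow probability by $2^{-2s}$. The paper carries out the analogous computation for the match count within the first $\lfloor 2^{2s}/p\rfloor$ iterations. With this ingredient supplied, your separate union bound over events 2 and 3 works; without it, the $2^{-s+1}$ term is not established.
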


To prove the correctness of our dual PIR construction, we will use the following steps:
\begin{enumerate}
\item First, we analyze our dual PIR construction when we assume that the input query sequence $\qrys$ to both $\hint$ and $\recon$ is the same as the fixed query sequence $\qrys'$ used in $\advice$. To do this, we prove the following properties
about our dual PIR construction when $\qrys = \qrys'$:
\begin{enumerate}
\item $\hint$ successfully samples a client state $\st_i$ and crypto oracle $\Or_i$ resulting in a communication transcript matching the output of $\advice$ (and, thus $\hint$ outputs a counter $c \ne \bot$) except with probability exponentially small in $s$.
\item If $\hint$ outputs a valid counter $c \ne\bot$, then the resulting output distribution of $\recon$ is equivalent to a proper
execution of the underlying PIR on input database $\db$, query sequence $\qrys'$ and a properly sampled crypto oracle $\Or$.
As a result, $\recon$ successfully obtains the correct answer $\db[\qrys']$ with probability similar to the correctness probability of the
underlying PIR.
\end{enumerate}
\item
As the last step, we prove that our dual PIR must be correct for any query sequence $\qrys \ne \qrys'$. If this is not the case, we show there exists an adversary that breaks the underlying PIR privacy using the same number of oracle queries as an honest server.

\end{enumerate}

\medskip\noindent{\bf Analyzing Output of $\hint$.}
To start, we will prove several properties about the output of $\hint$.
First, we denoted the output of $\advice^{\Or,\RO}_{\qrys'}(D)$ by $\adviceout = (\tx, S)$.
Throughout this section, we fix the encoding randomness $\omega_\server
= \RO(0)$ and, thus, the encoded database $\edb = \init(1^\secpar, \db)$
computed using randomness $\omega_\server$. As $\init$ makes no oracle
queries, $\edb$ is a fixed value that is identical in $\advice$, $\hint$, and
the honest PIR execution, and is independent of the crypto oracle $\Or$. All
probability statements below hold for any fixed choice of $\omega_\server$
(equivalently, of $\edb$).

Consider the following \emph{resampled experiment} associated with the
iterations of $\hint$: first sample a crypto oracle $\Or_i$ that is
consistent with $\Or(S)$, then a client state
$\st' \leftarrow \setup^{\Or_i}(1^\lambda; \edb)$, and execute
$\query^{\Or_i}(\st', \qrys'; \edb)$ with fresh client randomness to obtain
a communication transcript $\tx_i$; finally, check whether $\tx_i = \tx$.
(The iterations of $\hint$ instead run $\query$ with a \emph{uniformly}
sampled state; as we show in Lemma~\ref{lem:hint-counter}, the experiment
above with an honestly sampled state characterizes both the distribution of
the advice and the acceptance rate of $\hint$.) For a set of query-answer pairs
$v = \{(\qry, v(\qry)) \mid \qry \in S\}$, we denote the probability of this
event by
$$p(\tx, S, v, \edb) = \Pr[\btx = \tx]$$
over the random choice of $\Or_i$ conditioned on $\Or_i(S) = v$ and the
internal randomness of $\setup$ and $\query$
(noting that the client-side randomness $\omega_{\client}$ for $\query$ is derived explicitly from $\RO_0$ in $\hint$, but $p$ averages over all such choices),
where $\btx \leftarrow \query^{\Or_i}(\bst, \qrys'; \edb)$ and $\bst \leftarrow \setup^{\Or_i}(1^\lambda; \edb)$.

Before analyzing $\hint$, we record two consequences of the fact that
the fixed encoded database $\edb$ is independent of the crypto oracle. The
first states that, once an oracle agrees with $\Or$ on $S$, the entire
server-side execution of $\query$ is replayed exactly. The second states that
the transcript distribution of an honest execution, conditioned on the
server's oracle queries and answers, is exactly the transcript distribution
of the resampled executions performed by $\hint$. The second part is
precisely where the oracle-free encoding assumption is used: if $\init$ could
query $\Or$, then $\edb$ would be correlated with $\Or$ outside of $S$, and
an honestly generated pair $(\edb, \Or)$ would not be distributed as the pair
$(\edb, \Or_i)$ used in the resampled executions of $\hint$.

\begin{clm}[Server Replay and Advice Consistency]\label{clm:replay}
Fix any database $\db$, query sequence $\qrys'$, encoding randomness
$\omega_\server$ (and thus $\edb$), transcript $\tx = (\tx_\client,
\tx_\server)$, query set $S$, and answers $v$ such that $(\tx, S)$ is a
possible output of $\advice^{\Or,\RO}_{\qrys'}(\db)$ when $\Or(S) = v$.
Then:
\begin{enumerate}
\item \emph{(Server replay.)} For every crypto oracle $\Or^*$ such that
$\Or^*(S) = v$, every client state $\st$, every client randomness
$\omega_\client$, and every query sequence $\qrys$: in the execution
$\query^{\Or^*}(\st, \qrys; \edb)$, for every round $r$, if the client's
first $r$ messages equal the first $r$ messages of $\tx_\client$, then the
server's oracle queries, answers, and messages through round $r$ equal the
corresponding prefix of the execution recorded by $\advice$ (in particular,
its queries lie in $S$ with answers given by $v$, and its messages equal
the corresponding prefix of $\tx_\server$).
Consequently, if the client sends all of $\tx_\client$, then the server
makes exactly the oracle queries $S$ (in the same order), receives the
answers $v$, and sends the messages $\tx_\server$; the execution produces
the full transcript $\tx$ if and only if
the client-side messages equal $\tx_\client$; and any execution under
$\Or^*$ with transcript $\tx$ has server query set exactly $S$.
\item \emph{(Advice consistency.)} For the honest execution inside
$\advice^{\Or,\RO}_{\qrys'}(\db)$ over a uniformly sampled crypto oracle
$\Or$,
$$
\Pr\left[(\btx, \bS) = (\tx, S) \wedge \Or(S) = v\right] =
\Pr\left[\Or(S) = v\right] \cdot p(\tx, S, v, \edb).
$$
\end{enumerate}
\end{clm}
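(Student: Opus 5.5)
Both parts follow from determinism of the server and the fact that $\edb$ is a fixed string independent of the oracle. The plan is an induction over rounds for part 1 and a conditioning argument for part 2.

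\textbf{Server replay.} Fix $\Or^*$ with $\Or^*(S)=v$. Since $(\tx,S)$ is a possible output of $\advice$ under $\Or(S)=v$, there is a recorded honest server execution on input $\edb$ in which the server queries exactly $S$ and receives $v$. By our without-loss-of-generality convention, the server's next action (next oracle query, or next message) is a deterministic function of $\edb$, the client messages received so far, and the oracle answers received so far. I argue by induction on rounds $r$, and within each round by induction on the server's oracle queries. Suppose the client's first $r$ messages agree with the prefix of $\tx_\client$, and the server's state agrees with the recorded state at the current step. Then the server's next query coincides with the recorded one, so it lies in $S$. Because $\Or^*$ agrees with $v$ on $S$, the answer also coincides, and the server's states stay equal. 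The round-$r$ server message is therefore the recorded prefix of $\tx_\server$.

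The consequences follow directly. If the client sends all of $\tx_\client$, the full server behaviour replays. Conversely, the transcript equals $\tx$ iff the client messages equal $\tx_\client$: the server part is forced once the client part matches, and the "only if" direction is trivial. The server query set is then exactly $S$.

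\textbf{Advice consistency.} View a uniform crypto oracle $\Or=\mathcal{B}^R$ as a function of a uniform $R$, and condition on the event $\Or(S)=v$; this event is the $\Pr[\Or(S)=v]$ factor. It remains to show that
$$\Pr[(\btx,\bS)=(\tx,S)\mid \Or(S)=v]=p(\tx,S,v,\edb).$$
Here $p$ is computed under exactly the same conditional distribution: an oracle sampled uniformly conditioned on agreeing with $v$ on $S$ (by Lemma~\ref{lem:oracle_resample}), paired with the same $\edb$, followed by $\setup$ and then $\query$ on $\qrys'$. This matching is exactly where oracle-freeness is used. Since $\edb$ is fixed by $\omega_\server$ and involves no oracle queries, conditioning on $\Or(S)=v$ does not change $\edb$, and the joint law of $(\edb,\Or)$ equals that of $(\edb,\Or_i)$.

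What remains is to reconcile the two events. The left-hand event is $(\btx,\bS)=(\tx,S)$, whereas $p$ only asks for $\btx=\tx$. By part 1 applied to the conditioned oracle, any execution with transcript $\tx$ automatically has server query set $S$. So the events $\{\btx=\tx\}$ and $\{(\btx,\bS)=(\tx,S)\}$ coincide under the conditioning, and multiplying through gives the identity.

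The main obstacle is this last reconciliation. One must check that in the definition of $p$ the conditioning on $\Or_i(S)=v$ is made with the fixed set $S$ before knowing which queries the execution actually makes. The left side conditions on the same fixed $S$, not on the realized query set, and part 1 is what bridges the two. I would also note that possible zero-probability conditioning events are harmless, since the identity then reads $0=0$.
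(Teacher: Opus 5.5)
Your proposal is correct and matches the paper's proof essentially step for step. Part 1 goes by induction over the deterministic server's queries and messages, using $\Or^*(S)=v$ and the oracle-independent $\edb$. Part 2 identifies the honest execution conditioned on $\Or(S)=v$ with the experiment defining $p(\tx,S,v,\edb)$, then uses part 1 to show that the events $\{\btx=\tx\}$ and $\{(\btx,\bS)=(\tx,S)\}$ coincide under that conditioning.
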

\begin{proof}
For (1), recall that the server's execution during $\query$ is a
deterministic function of the encoded database $\edb$, the incoming client
messages, and its oracle answers. The server's execution recorded by
$\advice$ (with an oracle answering $v$ on $S$, encoded database $\edb$, and
incoming client messages $\tx_\client$) made the queries $S$ in some order,
received the answers $v$, and produced the messages $\tx_\server$. Consider
any execution under $\Or^*$ with the same encoded database $\edb$ (recall
that $\edb$ does not depend on the choice of oracle since $\init$ makes no
oracle queries) in which the client sends the messages $\tx_\client$. The
server's first oracle query is determined by $\edb$ and a prefix of
$\tx_\client$, so it is the same first query $\qry_1 \in S$, and the answer
$\Or^*(\qry_1) = v(\qry_1)$ also matches. By induction, every subsequent
query, answer, and outgoing message of the server matches the execution
recorded by $\advice$, proving the first statement. The remaining statements
follow immediately: if the client sends $\tx_\client$, the server sends
$\tx_\server$, so the transcript is $\tx$ if and only if the client-side
messages equal $\tx_\client$, in which case the query set is exactly $S$.

For (2), note that conditioned on the event $\Or(S) = v$, the uniformly
sampled crypto oracle $\Or$ is distributed exactly as the resampled oracles
$\Or_i$ (both are a uniformly random crypto oracle conditioned on answering
$v$ on $S$; see Lemma~\ref{lem:oracle_resample}). Moreover, the honest
execution inside $\advice$ touches the oracle only through $\setup$ and
$\query$ using the fixed $\edb$ (again, since $\init$ makes no oracle
queries), which is exactly the experiment defining $p(\tx, S, v, \edb)$,
except that the event of $p$ does not require the server query set to equal
$S$. By part (1), any execution under an oracle agreeing with $v$ on $S$
that produces transcript $\tx$ has server query set exactly $S$, so the two
events coincide and
$\Pr[(\btx,\bS) = (\tx,S) \mid \Or(S) = v] = p(\tx, S, v, \edb)$.
\end{proof}

We now show that $\advice$ will output a pair $\adviceout = (\tx, S)$ such that this probability $p$ is at least $2^{-3n}$ except with probability negligible in $n$.

\begin{lemma}
\label{lem:p}
Suppose the underlying PIR protocol has $o(n)$ bits of communication and the server performs blackbox crypto operations
on at most $o(n)$ bits. For any choice of database $D$ and query sequence $Q'$,
let $\mathbf{p} = p(\btx, \bS, \Or(\bS), \edb)$ be the above probability where $(\btx, \bS) \leftarrow \advice^{\Or,\RO}_{\qrys'}(\db)$.
Then,
$$
\Pr[\mathbf{p} \ge 2^{-3n}] \ge 1 - 2^{-n}.
$$
over the internal randomness of $\advice$ and random choice of $\Or$.
\end{lemma}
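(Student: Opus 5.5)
The plan is a standard counting argument: low-probability outcomes cannot carry much total mass when there are few possible outcomes. By Claim~\ref{clm:replay}(2), the probability that the honest execution inside $\advice$ outputs $(\btx,\bS)=(\tx,S)$ with $\Or(S)=v$ equals $\Pr[\Or(S)=v]\cdot p(\tx,S,v,\edb)$. So the advice distribution factors through $p$, and the event $\mathbf{p}<2^{-3n}$ can be controlled by summing over outcomes.

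First, I would bound the number of possible pairs $(\tx,S)$. By hypothesis the transcript together with all server oracle inputs totals $t=o(n)$ bits. Hence, with a fixed self-delimiting encoding, the number of distinct pairs $(\tx,S)$ is at most $2^{O(t)}\le 2^{n}$ for large $n$. Padding or length-prefixing changes $t$ by only a constant factor, which stays $o(n)$.

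Next, I would write
\[
\Pr[\mathbf{p}<2^{-3n}] = \sum_{(\tx,S)} \Ex_{v}\left[\1\{p(\tx,S,v,\edb)<2^{-3n}\}\cdot p(\tx,S,v,\edb)\right],
\]
where the expectation is over $v$ distributed as $\Or(S)$ for a uniform crypto oracle $\Or$. This identity follows from part (2) of the claim by summing $\Pr[\Or(S)=v]\cdot p$ over all $v$. Each summand is at most $2^{-3n}$, since the expectation of a quantity bounded by $2^{-3n}$ is at most $2^{-3n}$. Summing over at most $2^{n}$ pairs then gives a total of at most $2^{-2n}\le 2^{-n}$.

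The main obstacle is handling $v$ carefully, since the answers $v$ are not part of the counted advice and range over a possibly large set. The expectation formulation above avoids enumerating $v$. I also need the claim to apply for every fixed $\edb$, which holds because $\edb$ is fixed by $\omega_\server$. Finally, the sets $S$ must be identified as sets of query strings whose total length is $o(n)$, so that they are counted by the $2^{O(t)}$ bound.
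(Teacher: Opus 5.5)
Your proposal is correct and follows essentially the same route as the paper. You use part (2) of Claim~\ref{clm:replay} to factor the advice distribution, sum $\Pr[\Or(S)=v]$ over $v$ so that each pair $(\tx,S)$ contributes less than $2^{-3n}$, and multiply by the $2^{o(n)}$ possible pairs. One small fix: your displayed ``identity'' should be an inequality $\le$, because part (2) of the claim only applies to triples $(\tx,S,v)$ that actually arise as advice outcomes. For other triples the left side is zero while $\Pr[\Or(S)=v]\cdot p(\tx,S,v,\edb)$ may be positive. The paper notes exactly this caveat, and it is harmless for an upper bound.
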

\begin{proof}
Consider the set $Z$ consisting of all triples $(\tx, S, v)$ such that
$p(\tx, S, v, \edb) < 2^{-3n}$. Using part (2) of Claim~\ref{clm:replay},
\begin{align*}
\Pr[\mathbf{p} < 2^{-3n}]
&= \sum_{(\tx, S, v) \in Z} \Pr[(\btx, \bS) = (\tx, S) \wedge \Or(S) = v]
\le \sum_{(\tx, S, v) \in Z} \Pr[\Or(S) = v] \cdot p(\tx, S, v, \edb)\\
&< 2^{-3n} \cdot \sum_{(\tx, S)} \sum_{v} \Pr[\Or(S) = v]
= 2^{-3n} \cdot \sum_{(\tx, S)} 1 \le 2^{-3n} \cdot 2^{o(n)} \le 2^{-n},
\end{align*}
where the first sum effectively ranges over the triples in $Z$ that arise
with positive probability as advice outcomes (to which part (2) of
Claim~\ref{clm:replay} applies; all other summands on the left vanish, and
enlarging the second sum to all of $Z$ only increases the bound). We also
use that $\sum_v \Pr[\Or(S) = v] = 1$ for every fixed query set $S$,
and that the total number of pairs $(\tx, S)$ is at most
$2^{|\tx| + |S|} = 2^{o(n)}$ since the underlying PIR protocol has
$|\tx| = o(n)$ communication and the server performs blackbox cryptography
operations on $|S| = o(n)$ bits.
\end{proof}

In other words, we have shown that by executing $1/p(\btx, \bS, \Or(\bS), \edb) \le 2^{3n}$ sampling experiments in $\hint$ will successfully sample a crypto oracle $\Or_i$ and client state $\st'_i$ that will result in this communication transcript.
Next, we show that $\hint$ outputs $\hintout = c$ such that $c \ne\bot$ except with negligible probability.
To do this, we essentially show the number of erroneous times that the communication transcript matches
the one output by $\advice$ is essentially equivalent to $1/p(\btx, \bS, \Or(\bS), \edb)$.

\begin{lemma}
\label{lem:hint-counter}
Suppose the underlying PIR protocol has $o(n)$ bits of communication and the server performs blackbox crypto operations
on at most $o(n)$ bits. For any choice of database $\db$ and query sequence $\qrys'$,
$$
\Pr[\hint^{\Or,\RO}(\db, \qrys', (\btx, \bS)) = \bot \mid (\btx, \bS) \leftarrow \advice^{\Or,\RO}_{\qrys'}(\db)] \le 2^{-s+1} + 2^{-n+1}
$$
over the random choices of $\Or$ and $\RO$ as well as the internal randomness of $\advice$.
\end{lemma}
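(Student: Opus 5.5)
The plan is to enumerate the ways $\hint$ can output $\bot$ and bound each by a union bound. There are four such events, and they are treated in the first three paragraphs below. Throughout I fix $\omega_\server$ (hence $\edb$), as in the surrounding text. I condition on the advice $(\tx,S)$ and answers $v=\Or(S)$. I also use that, for distinct $i$, the virtual oracles $\RO_i$ and the values $\RO_0(i)$ are independent. So, conditioned on the advice, the iterations of $\hint$ are i.i.d. trials, apart from the fresh $\setup$ coins, which are also independent across $i$.

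\textbf{Failures from resampling and from bad advice.} First, some call to $\mathcal{A}^{\RO_i}$ from Lemma~\ref{lem:oracle_resample} may fail. Each call fails with probability at most $e^{-6n}$, and there are at most $z=2^{3n+2s}$ calls. Since $s\le k\le 0.19n$, the union bound gives at most $2^{3n+2s}e^{-6n}\le 2^{-n}$. Second, the advice may be atypical in that $p(\tx,S,v,\edb)<2^{-3n}$; by Lemma~\ref{lem:p} this happens with probability at most $2^{-n}$. Together these two events contribute the $2^{-n+1}$ term.

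\textbf{The per-iteration probabilities.} In a single iteration, define $m(\tx,S,v)$ as the probability that $\tx_i=\tx$ when $\st_i$ is uniform in $\bits^s$ and the client randomness is uniform. Define the success event as $\tx_i=\tx$ together with $\st_i=\st_i'$.
\begin{itemize}
\item Because $\st_i'\leftarrow\setup^{\Or_i}(1^\lambda;\edb)$ is independent of the uniform $\st_i$ given $\Or_i$, the success probability is $\sum_{\st}2^{-s}\Pr[\setup=\st]\Pr[\tx\mid \st]=2^{-s}p$. Conditioned on a match, each iteration therefore succeeds with probability $r=2^{-s}p/m$.
\item For typical advice, the probability that all $z$ iterations fail is at most $(1-2^{-s}2^{-3n})^{2^{3n+2s}}\le e^{-2^{s}}$. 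So the loop essentially never runs out of indices.
\item The remaining failure is that the counter reaches $2^{3s}$ before any success. If $r\ge 2^{-2s}$, this has probability at most $(1-2^{-2s})^{2^{3s}}\le e^{-2^{s}}$, which is negligible because $s=\omega(\log n)$.
\end{itemize}

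\textbf{Bounding the ratio $m/p$, the main obstacle.} It remains to show that $m(\tx,S,v)\le 2^{s}\,p(\tx,S,v,\edb)$ except with probability $2^{-s}$ over the advice. I would prove this with a Markov-type averaging using part (2) of Claim~\ref{clm:replay}. The advice probability of $(\tx,S,v)$ equals $\Pr[\Or(S)=v]\,p$. Hence the probability of the bad set, where $p<2^{-s}m$, is at most $2^{-s}\sum_{(\tx,S,v)}\Pr[\Or(S)=v]\,m(\tx,S,v)$.

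I would bound that sum by $1$ as follows. By the server-replay part (1), a full transcript $\tx$ produced under any oracle agreeing with $v$ on $S$ determines the server query set $S$. So, for a fixed oracle, the events "transcript is $\tx$" over distinct pairs $(\tx,S)$ are disjoint. This lets me rewrite $\sum_v\Pr[\Or(S)=v]\,m(\tx,S,v)$ as the probability, over a uniformly random $\Or$, a uniform state and client coins, that the execution yields $(\tx,S)$. Summing over all pairs $(\tx,S)$ then gives at most $1$.

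The delicate point is this interchange. Every resampled oracle is distributed as a uniform oracle conditioned on $v$ (Lemma~\ref{lem:oracle_resample}), so averaging $m$ against $\Pr[\Or(S)=v]$ recovers the unconditioned experiment. Then each (uniform state, oracle) execution is counted for at most one $(\tx,S,v)$. With this the bad set has probability at most $2^{-s}$. Adding the $2e^{-2^{s}}$ terms, which are absorbed into a second $2^{-s}$, gives the claimed total $2^{-s+1}+2^{-n+1}$.
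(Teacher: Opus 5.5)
Your proof is correct, and it reaches the same bound as the paper through a somewhat different decomposition.

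\textbf{Shared ingredient.} Both arguments rest on the same averaging fact. Over the advice distribution,
\[
\Ex[m/p] = \sum_{(\tx,S,v)} \Pr[\Or(S)=v]\, m(\tx,S,v) \le 1,
\]
where your $m$ is the paper's $q$. This is proved using part (2) of Claim~\ref{clm:replay} for the advice probabilities and part (1) for disjointness across triples.

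\textbf{The paper's route.} The paper never conditions on the ratio $m/p$. It restricts attention to the first $z' = \lfloor 2^{2s}/p \rfloor$ iterations. It uses the identity to show that the expected number of matches in that window is at most $2^{2s}$, and applies Markov once to get $\Pr[\mathbf{c} \ge 2^{3s}] \le 2^{-s}$. Separately, it shows that some acceptance occurs within those $z'$ iterations except with probability $2e^{-2^s}$.

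\textbf{Your route.} You apply Markov to the ratio itself, over the advice alone. With probability $1-2^{-s}$, each match then accepts independently with conditional probability $r \ge 2^{-2s}$. After that, counter overflow and exhaustion of all $z$ iterations are each only $e^{-2^s}$ events.

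\textbf{Comparison.} Your version cleanly separates the one genuinely non-negligible failure from the iteration randomness. That failure is atypical advice, where uniform-state matches are far likelier than honest ones. Separating it makes the choice of the $2^{3s}$ threshold transparent, and shows overflow is doubly-exponentially unlikely once the advice is typical. The paper's single windowed Markov bound is more compact, and it avoids reasoning about the conditional law of acceptances among matches.

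\textbf{Points of rigor.}
\begin{itemize}
\item Restrict all sums to triples $(\tx,S,v)$ that occur with positive probability as advice outcomes. Part (1) of Claim~\ref{clm:replay} is only stated for those.
\item The disjoint events are ``transcript $\tx$ with server query set $S$ and answers $v$'', not ``transcript is $\tx$'' alone. Your subsequent rewriting uses the correct event, so this is only a wording issue.
\end{itemize}
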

\begin{proof}
We start by analyzing the probability that $\hint$ outputs $\bot$ due to the failure of sampling a crypto oracle.
As a reminder, $\hint$ samples $2^{3n + 2s}$ crypto oracles using the algorithm in Lemma~\ref{lem:oracle_resample}
with parameter $\zeta = 6n = \poly(\lambda)$ since $n = \poly(\lambda)$. Therefore, a single crypto oracle
sample fails with probability $e^{-6n}$. By a union bound, the probability that any crypto oracle sample fails is at
most $2^{3n+2s} \cdot e^{-6n} \le 2^{5n-6n} = 2^{-n}$.
For the remainder of the proof, we condition on all crypto oracle
samples succeeding; by Lemma~\ref{lem:oracle_resample}, each $\Or_i$ is then
an independent, uniformly random crypto oracle conditioned on
$\Or_i(S) = \Or(S)$.

Next, we identify the two remaining ways in which $\hint$ can output $\bot$.
We will consider the first
$z' = \lfloor 2^{2s}/p(\btx, \bS, \Or(\bS), \edb) \rfloor \le 2^{2s} \cdot 2^{3n} = z$
iterations of the loop, where $(\btx, \bS)$ is the output of
$\advice^{\Or,\RO}_{\qrys'}(\db)$ and we assume
$p(\btx, \bS, \Or(\bS), \edb) \ge 2^{-3n}$, adding the failure probability
$2^{-n}$ from Lemma~\ref{lem:p} at the end. Say that iteration $i$ is a
\emph{match} if $\tx_i = \tx$, and an \emph{acceptance} if additionally
$\st_i = \st_i'$. If some iteration $i \le z'$ is an acceptance and the
number of matches strictly before it is fewer than $2^{3s}$, then $\hint$
outputs a counter $c \ne \bot$. Therefore, denoting by $\mathbf{c}$ the
number of matches in the first $z'$ iterations,
$$
\Pr[\hint^{\Or,\RO}(\db, \qrys', (\btx, \bS)) = \bot] \le \Pr[\mathbf{c} \ge 2^{3s}] +
\Pr[\text{no acceptance in } z' \text{ iterations}]
+ 2^{-n} + 2^{-n}.
$$

We first bound $\Ex[\mathbf{c}]$. Fix an advice output and oracle answers
$(\tx, S, v)$ where $v = \Or(S)$. Each iteration $i$ independently samples
$\Or_i$ (uniform conditioned on $\Or_i(S) = v$) and a uniform pair
$(\st_i, \omega_{\client, i}) \leftarrow \RO_0(i)$, all independent of the
advice execution (recall that $\RO$ is independent of $\Or$ and, by domain
separation, the values $\RO_0(i)$ and the randomness of each virtual oracle
$\RO_i$ are independent of $\omega_\server$). Therefore, every iteration is
a match independently with the same probability
$$
q(\tx, S, v) := \Pr\left[\query^{\Or_i}(\st_i, \qrys'; \edb)
\text{ produces transcript } \tx\right]
$$
over the choice of $\Or_i$ conditioned on $\Or_i(S) = v$ and uniform
$(\st_i, \omega_{\client,i})$. Note that $q$ does not depend on $i$ and differs from $p$ only in
that the client state is uniformly random instead of honestly sampled using
$\setup^{\Or_i}$. Using part (2) of Claim~\ref{clm:replay} and linearity of
expectation,
\begin{align*}
    \Ex[\mathbf{c}] &\le \sum_{(\tx, S, v)}
    \Pr[(\btx, \bS) = (\tx, S) \wedge \Or(S) = v] \cdot
    \frac{2^{2s}}{p(\tx, S, v, \edb)} \cdot q(\tx, S, v)\\
    &= 2^{2s} \sum_{(\tx, S, v)} \Pr[\Or(S) = v] \cdot q(\tx, S, v)
    \le 2^{2s},
\end{align*}
where both sums effectively range over the triples $(\tx, S, v)$ arising
with positive probability as advice outcomes (all other summands in the
first line vanish), to which Claim~\ref{clm:replay} applies, and the final
inequality holds as follows. Note, sampling a uniformly random
crypto oracle conditioned on answering $v$ on $S$ and multiplying by
$\Pr[\Or(S) = v]$ is equivalent to sampling an unconditioned uniformly
random crypto oracle and restricting to the event $\Or_i(S) = v$. Combined with part (1) of
Claim~\ref{clm:replay} (any execution under an oracle agreeing with $v$ on
$S$ with transcript $\tx$ has server query set exactly $S$), we get that
$$
\Pr[\Or(S) = v] \cdot q(\tx, S, v) =
\Pr\left[\query^{\Or_i}(\st_i, \qrys'; \edb) \text{ produces } \tx
\text{ with query set } S \text{ and } \Or_i(S) = v\right]
$$
where the probability is now over an \emph{unconditioned} uniformly random
crypto oracle $\Or_i$ and uniform $(\st_i, \omega_{\client, i})$. A single
execution determines a unique triple (its transcript, its server query set,
and the oracle's answers on that query set), so these events are disjoint
across distinct triples $(\tx, S, v)$ and their probabilities sum to at most
$1$. By Markov's inequality, $\Pr[\mathbf{c} \ge 2^{3s}] \le 2^{-s}$.

Finally, we bound the probability that no acceptance occurs in the first
$z'$ iterations. In each iteration, the state $\st_i$ is uniform over
$\bits^s$ (recall client states are padded to exactly $s$ bits) and is
independent of the honestly sampled state
$\st_i' \leftarrow \setup^{\Or_i}(1^\lambda; \edb)$. Therefore, each
iteration is an acceptance independently with probability exactly
\begin{align*}
\Ex_{\Or_i}\left[\sum_{\st \in \bits^s} 2^{-s} \cdot
\Pr\left[\setup^{\Or_i}(1^\lambda; \edb) = \st\right] \cdot
\Pr\left[\query^{\Or_i}(\st, \qrys'; \edb) \text{ produces } \tx\right]\right]
= 2^{-s} \cdot p(\tx, S, v, \edb),
\end{align*}
since requiring the uniform state $\st_i$ to equal an independent honest
sample $\st_i'$ is equivalent to using the honest sample directly, at an
exact multiplicative $2^{-s}$ loss in probability. Over the
$z' = \lfloor 2^{2s}/p \rfloor$
independent iterations, the probability that no acceptance occurs is at most
$$
\left(1 - 2^{-s} \cdot p\right)^{z'} \le
\left(1 - 2^{-s} \cdot p\right)^{2^{2s}/p - 1} \le 2 \cdot e^{-2^{s}}
\le 2^{-2^s+1} \le 2^{-s},
$$
using $(1 - 2^{-s} \cdot p)^{-1} \le 2$ and $2^s \ge s + 1$ for $s \ge 1$.
Altogether, $\hint$ outputs $\bot$ with probability at most
$2^{-s} + 2^{-s} + 2^{-n} + 2^{-n} = 2^{-s+1} + 2^{-n+1}$.
\end{proof}

\medskip\noindent{\bf Correctness of $\recon$.} Finally, we show that  $\recon$ outputs the correct answer of $\db[\qrys']$ with high probability assuming that $\hint$ outputs a counter $c \ne \bot$. To do this, we essentially show that whenever $c \ne \bot$,
the distribution of the output of $\recon$ is identical to a proper execution of the underlying PIR.
As a result, the correctness probability of $\recon$ will be the same as the $\varepsilon$-correctness of the underlying PIR.

\begin{lemma}
\label{lem:output-dist}
Fix any database $\db$, query sequence $\qrys'$, and encoding randomness
$\omega_\server$ (thus, the encoded database $\edb$). For every transcript,
query set, and answer assignment $(\tx, S, v)$ occurring with positive
probability in the honest execution below, the following two
distributions are identical:
\begin{enumerate}
\item The distribution $\mathcal{D}$ of $(\bst, \bans)$ resulting from the execution of the dual PIR where
$(\btx, \bS) \leftarrow \advice_{\qrys'}^{\Or, \RO}(\db)$, $\mathbf{c} \leftarrow \hint^{\Or, \RO}(\db, \qrys', (\btx, \bS))$ and
$\bans \leftarrow \recon^{\Or, \RO}(\qrys', (\btx, \bS), \mathbf{c})$,
conditioned on the events $(\btx, \bS) = (\tx, S)$, $\Or(S) = v$, and
$\mathbf{c} \ne \bot$,
where $\bst$ denotes the client state $\st_{i}$ of the accepting iteration
of $\hint$.
\item The distribution $\mathcal{D'}$ of $(\bst', \bans')$ resulting from the proper execution of the underlying PIR where
$\bst' \leftarrow \setup^{\Or}(1^\lambda; \edb)$, $(\bans' \mid \btx') \leftarrow \query^{\Or}(\bst', \qrys'; \edb)$ and $\bS'$ is the set of all server oracle $\Or$ queries performed during $\query$,
conditioned on the event $\{(\btx', \bS') = (\tx, S) \wedge \Or(S) = v\}$.
\end{enumerate}
Moreover, conditioned on the event in the first distribution, $\recon$
outputs exactly the answer $\bans$ of the client in the accepting iteration
of $\hint$. This implies that
\begin{align*}
\Pr[\bans = \db[\qrys'] &\mid (\btx,\bS) = (\tx,S), \Or(S) = v, \mathbf{c} \ne \bot]\\
&= \Pr[\bans' = \db[\qrys'] \mid (\btx',\bS') = (\tx,S), \Or(S) = v].
\end{align*}
\end{lemma}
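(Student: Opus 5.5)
I will prove the two claims in the statement in reverse order. First I establish that $\recon$ outputs exactly the answer of $\hint$'s accepting iteration. Then I show that the pair (accepted state, answer) has the same law as the honest conditional law.

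\textbf{Step 1: $\recon$ decodes the accepting iteration.} Fix $(\tx,S,v)$ and condition on $\mathbf{c}=c\neq\bot$. Since $\recon$ recomputes the same $\Or_i$ and $(\st_i,\omega_{\client,i})=\RO_0(i)$ from the shared $\RO$ and $\Or(S)=v$, iteration $i$ in $\recon$ uses the same oracle, state and client coins as iteration $i$ in $\hint$. By part (1) of Claim~\ref{clm:replay}, running $\query^{\Or_i}(\st_i,\qrys';\edb)$ gives transcript $\tx$ if and only if the client-side messages equal $\tx_\client$. In that case the server's messages are exactly $\tx_\server$. I will argue by induction over rounds that $\query_\client^{\Or_i}(\st_i,\qrys';\tx_\server)$ produces $\tx_\client$ exactly when the full execution produces $\tx$, and then its output $\ans$ coincides. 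The one subtlety is that on a non-matching execution the simulated client may diverge after some prefix while fed server messages it would not really receive. The check $\tx'_\client=\tx_\client$ still fails in that case, since the first divergent client message differs in both worlds. Hence the matching iterations of $\recon$ and $\hint$ coincide, so the $c$-th match in $\recon$ is the accepting iteration of $\hint$, and its output is the same $\ans$.

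\textbf{Step 2: distribution of the accepting iteration.} Conditioned on $(\tx,S,v)$, the iterations are i.i.d. by Lemma~\ref{lem:oracle_resample} and the independence of $\RO_0$ from everything else. Each iteration produces a tuple $(\Or_i,\st_i,\omega_{\client,i},\st'_i)$, and $\hint$ stops at the first one satisfying the acceptance event $\{\tx_i=\tx,\ \st_i=\st_i'\}$. The stopping rule also depends on the counter bound $2^{3s}$ and on $z$. The standard rejection-sampling fact applies, but it needs care on one point: the conditioning on $\mathbf c\ne\bot$ depends on the number of prior non-accepting matches. Prior iterations are independent of the accepted one, however. So conditioned on the accepting index being $i$ and the history before it, the accepted tuple is distributed as one iteration conditioned on acceptance. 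Averaging over the history then gives the same law. Equivalently, the event $\{\mathbf c\ne\bot\}$ is determined by the history before acceptance together with the fact that acceptance occurred.

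\textbf{Step 3: identify the conditional law with the honest one.} A single iteration conditioned on acceptance samples $\Or_i$ uniformly given $\Or_i(S)=v$, $\st'\gets\setup^{\Or_i}(\edb)$, and fresh client coins. It then conditions on the state equalling the uniform $\st_i$ and on the transcript being $\tx$. The uniform-equals-honest trick, exactly as in the acceptance computation of Lemma~\ref{lem:hint-counter}, removes the uniform $\st_i$ with a constant factor $2^{-s}$ that cancels under conditioning. What remains is (honest oracle given $\Or(S)=v$, honest state, honest coins) conditioned on transcript $\tx$. By part (2) of Claim~\ref{clm:replay}, where $\edb$ is oracle-independent, this is exactly $\mathcal D'$: the honest execution conditioned on $(\btx',\bS')=(\tx,S)$ and $\Or(S)=v$. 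Here transcript $\tx$ forces query set $S$ by part (1). The pair $(\bst,\bans)$ is a fixed function of the accepted tuple, so $\mathcal D=\mathcal D'$, and the probability identity follows.

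I expect Step 2 to be the main obstacle: making rigorous that conditioning on $\mathbf c\ne\bot$, through the cap on the counter and on $z$, does not bias the accepted sample.
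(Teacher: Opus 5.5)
Your proposal is correct and follows essentially the same route as the paper. You first show, through identical iterations and the round-by-round replay argument via part (1) of Claim~\ref{clm:replay}, that $\recon$ stops at the accepting iteration; you then use the i.i.d.\ rejection-sampling argument, with the constant factor $2^{-s}$ cancelling, together with the observation that $\{\mathbf{c}\ne\bot\}$ depends only on the other iterations and on acceptance itself. Two minor imprecisions do not affect the argument: your Step~3 identification really rests on Lemma~\ref{lem:oracle_resample} together with part (1) of the claim, not on the probability identity in part (2), and resampling failures should be folded into each iteration's acceptance event.
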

\begin{proof}
We first show that $\recon$ recovers the state, randomness, and oracle
of the accepting iteration of $\hint$, and then show that the accepting
iteration's client execution is distributed as an honest client execution
conditioned on $(\tx, S, v)$.

We first claim that for every iteration index $i$, iteration $i$ of $\hint$ is a
match (i.e., $\tx_i = \tx$) if and only if iteration $i$ of $\recon$ is a
client-side match (i.e., $\tx'_\client = \tx_\client$). First, note that
$\hint$ and $\recon$ construct identical iterations: the resampling algorithm
of Lemma~\ref{lem:oracle_resample} is deterministic given $\RO_i$ and
$\Or(S)$, and both parties compute $\Or(S)$ by querying the real oracle on
the set $S$ contained in the advice, so iteration $i$ of both algorithms uses
the same oracle $\Or_i$, state $\st_i$, and client randomness
$\omega_{\client,i}$ (and identical resampling failures).
For the forward direction, if $\hint$'s iteration $i$ produces the full
transcript $\tx$, then the client's view in that execution consists of
$(\st_i, \omega_{\client,i}, \Or_i)$ and incoming server messages
$\tx_\server$; the client-side execution
$\query^{\Or_i}_\client(\st_i, \qrys'; \tx_\server)$ run by $\recon$ has the
identical view and therefore produces the identical client messages
$\tx_\client$ and answer. For the reverse direction, suppose $\recon$'s
iteration $i$ produces client messages $\tx_\client$ when fed the server
messages $\tx_\server$. Consider $\hint$'s interactive execution
$\query^{\Or_i}(\st_i, \qrys'; \edb)$. We argue by induction over rounds that
it reproduces the transcript $\tx$: if the first $r-1$ rounds match $\tx$,
then the client's message in round $r$ equals that of $\recon$'s replay
(same state, randomness, oracle, and incoming messages), which is the
corresponding message of $\tx_\client$; and the server's message in round
$r$ equals the corresponding message of $\tx_\server$ by part (1) of
Claim~\ref{clm:replay}, using $\Or_i(S) = \Or(S) = v$ and the identical
encoded database $\edb$ (again, since $\init$ makes no oracle queries).
Hence $\hint$'s iteration $i$ is a full match. It follows that the sequences
of matching iterations in $\hint$ and $\recon$ coincide, so on input the
counter $c$, $\recon$ stops at exactly the accepting iteration of $\hint$
and outputs the answer of that iteration's client execution.

Next, we show that the accepting iteration is distributed honestly.
Condition on $(\btx, \bS) = (\tx, S)$ and $\Or(S) = v$ (and, as
in Lemma~\ref{lem:hint-counter}, on all resamplings succeeding). The
iterations of $\hint$ are independent and identically distributed: iteration
$i$ samples $\Or_i$ uniformly conditioned on $\Or_i(S) = v$, uniform
$(\st_i, \omega_{\client,i})$, and fresh internal coins for
$\setup^{\Or_i}(1^\lambda; \edb)$, all independent of the conditioning event.
Iteration $i$ is an acceptance if and only if the transcript matches $\tx$
and $\st_i = \st'_i$. For any oracle $\Or^*$, state $\st^*$, and randomness
$\omega^*$, the probability that iteration $i$ is an acceptance with
$(\Or_i, \st_i, \omega_{\client,i}) = (\Or^*, \st^*, \omega^*)$ is
\begin{align*}
&\Pr[\Or_i = \Or^*] \cdot 2^{-s} \cdot \Pr[\bomega_{\client,i} = \omega^*] \cdot
\Pr\left[\setup^{\Or^*}(1^\lambda; \edb) = \st^*\right]\\
&\qquad \cdot \1\left[\query^{\Or^*}(\st^*, \qrys'; \edb) \text{ with } \omega^*
\text{ produces } \tx\right],
\end{align*}
where the factor
$\Pr[\setup^{\Or^*}(1^\lambda;\edb) = \st^*]$ arises from the requirement
$\st_i = \st'_i$ over $\setup$'s fresh coins. On the other hand, in the
honest execution, the oracle $\Or$ conditioned on $\Or(S) = v$ is
distributed identically to $\Or_i$, and by part (1) of
Claim~\ref{clm:replay} the event
$\{(\btx', \bS') = (\tx, S)\}$ is exactly the event that the transcript is
$\tx$. Hence the conditional distribution of $(\Or, \bst', \bomega'_\client)$
given $\{(\btx', \bS') = (\tx, S) \wedge \Or(S) = v\}$ assigns each
$(\Or^*, \st^*, \omega^*)$ probability proportional to
\begin{align*}
&\Pr[\Or_i = \Or^*] \cdot \Pr[\bomega_{\client,i} = \omega^*] \cdot
\Pr\left[\setup^{\Or^*}(1^\lambda; \edb) = \st^*\right]\\
&\qquad \cdot \1\left[\query^{\Or^*}(\st^*, \qrys'; \edb) \text{ with } \omega^*
\text{ produces } \tx\right].
\end{align*}
The two expressions differ only by the constant factor $2^{-s}$ (here we use
that client states are padded to exactly $s$ bits), so the
accepted triple $(\Or_i, \st_i, \omega_{\client, i})$, conditioned on
acceptance, is distributed identically to $(\Or, \bst', \bomega'_\client)$
conditioned on the above event. Finally, since the iterations are
independent and identically distributed, the triple at the \emph{first}
acceptance has this same distribution, and further conditioning on the events
defining $\mathbf{c} \ne \bot$ (that the first acceptance occurs within the
loop bounds with fewer than $2^{3s}$ preceding matches) does not change this
distribution, as those events depend only on the other (independent)
iterations and on whether the accepting iteration accepts.

Combining the above, conditioned on
$\{(\btx,\bS) = (\tx,S), \Or(S) = v, \mathbf{c}\ne\bot\}$, the accepting
iteration's triple $(\Or_i, \st_i, \omega_{\client,i})$ is distributed as
$(\Or, \bst', \bomega'_\client)$ in the honest execution conditioned on
$\{(\btx',\bS') = (\tx,S), \Or(S) = v\}$. In both executions, the client's
answer is the same deterministic function of the triple (together with the
fixed $\edb$ and $\qrys'$): the full interactive execution of $\query$ is
determined by the triple and $\edb$ (the server being deterministic).
As shown above, $\recon$ outputs exactly the accepting iteration's answer.
Therefore, $(\bst, \bans) \sim \mathcal{D}$ and
$(\bst', \bans') \sim \mathcal{D}'$ are identically distributed, which
implies the final claim.
\end{proof}

Using this fact, we can deduce that our dual PIR has the same correctness probability as the underlying PIR
except with probability corresponding to the event that the counter $\mathbf{c}$ output by $\hint$ is $\bot$.

\begin{lemma}
\label{lem:same-query}
Fix any database $\db$ and query sequence $\qrys'$ and suppose that the underlying PIR is $\varepsilon$-correct.
Consider a proper execution of the dual PIR to obtain $(\btx, \bS) \leftarrow \advice_{\qrys'}^{\Or,\RO}(\db)$,
$\mathbf{c} \leftarrow \hint^{\Or,\RO}(\db, \qrys', (\btx, \bS))$ and $\bans \leftarrow \recon^{\Or,\RO}(\qrys', (\btx, \bS), \mathbf{c})$. Then,
$$
\Pr[\bans = \db[\qrys']] \ge \varepsilon - 2^{-s+1} - 2^{-n+1}
$$
over the random choice of oracles $\Or$ and $\RO$ as well as the internal randomness of the algorithms.
\end{lemma}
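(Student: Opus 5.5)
The plan is to combine Lemma~\ref{lem:output-dist} with Lemma~\ref{lem:hint-counter} by conditioning on the advice outcome. Fix $\db$, $\qrys'$, and the encoding randomness $\omega_\server$, and hence $\edb$. All probabilities below are over $\Or$, $\RO$, and the internal coins; averaging over $\omega_\server$ comes at the very end. Write $E$ for the event $\mathbf{c} \ne \bot$. First lower bound the success probability by the event that both $E$ holds and the answer is correct:
$$\Pr[\bans = \db[\qrys']] \ge \Pr[\bans = \db[\qrys'] \wedge E].$$
Then decompose over the triples $(\tx, S, v)$ that arise with positive probability as $(\btx, \bS, \Or(\bS))$ in $\advice$.

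For each such triple, Lemma~\ref{lem:output-dist} says that the conditional probability of a correct answer given $(\tx,S,v)$ and $E$ equals the honest probability $\Pr[\bans' = \db[\qrys'] \mid (\btx',\bS')=(\tx,S), \Or(S)=v]$. I will denote this quantity by $g(\tx,S,v)$. This gives
$$\Pr[\bans = \db[\qrys'] \wedge E] = \sum_{(\tx,S,v)} \Pr[(\btx,\bS,\Or(\bS)) = (\tx,S,v)] \cdot \Pr[E \mid (\tx,S,v)] \cdot g(\tx,S,v).$$
Next I use $\Pr[E \mid \cdot] = 1 - \Pr[\neg E \mid \cdot]$ and $g \le 1$. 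The sum is then at least $\sum \Pr[(\tx,S,v)]\, g(\tx,S,v) - \Pr[\neg E]$. The advice execution is literally an honest execution of $\setup^{\Or}$ and $\query^{\Or}$ on $\edb$ with query sequence $\qrys'$; part (2) of Claim~\ref{clm:replay} confirms that its distribution over $(\tx,S,v)$ matches the honest one. So the first term is exactly the honest correctness probability $\Pr[\bans' = \db[\qrys'] \mid \edb]$, and the second term is at most $2^{-s+1} + 2^{-n+1}$ by Lemma~\ref{lem:hint-counter}.

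Finally, average over $\omega_\server = \RO(0)$. By domain separation this value is uniform and independent of the other randomness, so $\edb$ is distributed exactly as $\init(1^\secpar,\db)$. Averaging the honest correctness over $\edb$ then gives at least $\varepsilon$ by the $\varepsilon$-correctness of $\Pi$. Averaging the failure bound of Lemma~\ref{lem:hint-counter} is also fine, because that bound holds for every fixed $\omega_\server$, or at worst on average.

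The main subtlety is a mismatch in conditioning. Lemma~\ref{lem:hint-counter} bounds $\Pr[\neg E]$ overall, whereas the decomposition needs the bound weighted by $g$. This is handled by the inequality $\Pr[E \mid \cdot]\, g \ge g - \Pr[\neg E \mid \cdot]$, which needs only $g \le 1$, so no independence between $g$ and $E$ is required. I would also check that the event where a resampling fails (also $\neg E$) is already included in that bound. I expect no other obstacles, since the heavy lifting is done by the earlier lemmas.
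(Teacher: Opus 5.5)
Your proposal is correct and follows essentially the same route as the paper. Both arguments condition on the advice outcome $(\tx,S,v)$ and apply Lemma~\ref{lem:output-dist} together with the pointwise bound $\Pr[\mathbf{c}\ne\bot\mid\cdot]\cdot g \ge g - \Pr[\mathbf{c}=\bot\mid\cdot]$ (using only $g\le 1$), then average over outcomes and $\omega_\server$, invoking $\varepsilon$-correctness and Lemma~\ref{lem:hint-counter}.
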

\begin{proof}
Since $\advice$ performs an honest execution of the underlying PIR
using the fixed encoded database $\edb$, the joint distribution of
$(\btx, \bS, \Or(\bS))$ in the dual PIR execution is identical to the joint
distribution of $(\btx', \bS', \Or(\bS'))$ in a proper execution of the
underlying PIR. Then, for every outcome $(\tx, S, v)$,
\begin{align*}
\Pr[\bans = \db[\qrys'] \wedge \mathbf{c} \ne \bot &\mid (\btx, \bS, \Or(\bS)) = (\tx, S, v)]\\
&\ge \Pr[\bans' = \db[\qrys'] \mid (\btx', \bS', \Or(\bS')) = (\tx, S, v)]\\
&\phantom{\ge} - \Pr[\mathbf{c} = \bot \mid (\btx, \bS, \Or(\bS)) = (\tx, S, v)],
\end{align*}
where we use Lemma~\ref{lem:output-dist} for the equality of the two
conditional probabilities of outputting $\db[\qrys']$ and that
probabilities are at most $1$.
Averaging over the outcomes $(\tx, S, v)$ and the encoding randomness
$\omega_\server$ (recall that $\varepsilon$-correctness averages over the
coins of $\init$ as well), and applying the
$\varepsilon$-correctness of the underlying PIR together with
Lemma~\ref{lem:hint-counter}, we conclude
$$
\Pr[\bans = \db[\qrys']] \ge
\Pr[\bans = \db[\qrys'] \wedge \mathbf{c} \ne \bot] \ge
\varepsilon - 2^{-s+1} - 2^{-n+1}.
$$
\end{proof}

\medskip\noindent{\bf Dual PIR Correctness for All Queries.} As the last step of the proof, we now show that our dual PIR construction is correct
even when the input query sequence $\qrys \ne \qrys'$ that is given to both $\hint$ and $\recon$ does not match the fixed query sequence $\qrys'$
used in $\advice_{\qrys'}$. At a high level, we essentially show that, if the dual PIR is incorrect when given input query sequences $\qrys \ne \qrys'$,
then we can construct an adversary that can break the privacy of the underlying PIR $\Pi$. In particular, the adversary may directly use the dual PIR's
correctness to distinguish transcripts produced from $\qrys$ and those transcripts produced using $\qrys'$.

As a note, the privacy adversary to the underlying PIR $\Pi$ only has access to a crypto oracle $\Or$
(used by $\Pi$ to generate the challenge transcript) whereas the dual PIR requires both a crypto
oracle $\Or$ and a random oracle $\RO$. As $\Pi$ only uses $\Or$ and does not ever access
$\RO$, we will have our adversary lazily sample the random oracle $\RO$ locally that will be used
by the dual PIR construction.

\begin{lemma}
\label{lem:dualpiradv}
Suppose the underlying PIR $\Pi$ is $\varepsilon$-correct and $\delta$-private.
Furthermore, we consider our dual PIR construction $\Pi' = (\advice^{\Or,\RO}_{\qrys'}, \hint^{\Or,\RO}, \recon^{\Or,\RO})$ for any fixed query sequence $\qrys'$. For any query sequence $\qrys \ne \qrys'$ and database $\db$,
consider an execution of the dual PIR to obtain $(\btx, \bS) \leftarrow \advice_{\qrys'}^{\Or,\RO}(\db)$,
$\mathbf{c} \leftarrow \hint^{\Or,\RO}(\db, \qrys, (\btx, \bS))$ and $\bans \leftarrow \recon^{\Or,\RO}(\qrys, (\btx, \bS), \mathbf{c})$.
Then,
$$
\Pr[\bans = \db[\qrys]] \ge \varepsilon - 2^{-s+1} - 2^{-n+1} - 2\delta
$$
over the random choice of oracles $\Or$ and $\RO$ as well as the internal randomness of the algorithms.
\end{lemma}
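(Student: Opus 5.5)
The plan is a hybrid argument that compares the real experiment, where $\advice$ uses the fixed sequence $\qrys'$ but $\hint$ and $\recon$ use $\qrys$, to a hypothetical experiment where the transcript fed to $\hint$ and $\recon$ comes from an honest execution on $\qrys$ itself. We define the modified advice $\advice_{\qrys}^{\Or,\RO}(\db)$, identical to $\advice_{\qrys'}$ except that it runs $\query^{\Or}(\st,\qrys;\edb)$. Lemma~\ref{lem:same-query} holds for every fixed query sequence. Applying it with $\qrys$ in place of $\qrys'$ gives
$$\Pr[\bans = \db[\qrys]] \ge \varepsilon - 2^{-s+1} - 2^{-n+1}$$
when the advice is generated with $\qrys$. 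It then remains to show that swapping $\advice_{\qrys}$ for $\advice_{\qrys'}$ changes the success probability of $\hint$ and $\recon$ by at most $2\delta$.

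For that step, we build a privacy adversary $\A^{\Or}(1^\secpar,\tx,\edb)$ against $\Pi$ with challenge sequences $\qrys_0 = \qrys$ and $\qrys_1 = \qrys'$ on the fixed database $\db$. The adversary faces three issues.

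\textbf{Recording $S$.} The dual PIR needs the server query set $S$, but the adversary only sees $\tx$ and $\edb$. Since the server is deterministic given $\edb$, the transcript, and oracle answers, the adversary replays the server on $\edb$ with client messages $\tx_\client$, querying $\Or$ exactly as the honest server did. This recovers $S$ and $\Or(S)$ at the cost of the honest server's oracle queries.

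\textbf{Simulating $\RO$ and the encoding randomness.} The adversary lazily simulates $\RO$ locally. It sets $\RO(0)$ to be consistent with the challenger's $\edb$. Because $\init$ is oracle-free and $\hint$ needs $\edb$ only through $\init(\db;\RO(0))$, it suffices to let $\hint$ use the given $\edb$ directly; the domain separation of $\RO(0)$ makes this consistent.

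\textbf{Distinguishing.} The adversary runs $\hint^{\Or,\RO}(\db,\qrys,(\tx,S))$ and then $\recon^{\Or,\RO}(\qrys,(\tx,S),c)$. Both touch $\Or$ only through $\Or(S)$, which is already known. The resampled oracles $\Or_i$ are generated from the local $\RO$, and their exponentially many evaluations cost no real oracle queries. The adversary outputs $b'=0$ iff the answer equals $\db[\qrys]$.

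Let $\alpha_0,\alpha_1$ be the success probabilities when the challenge transcript comes from $\qrys$ and from $\qrys'$ respectively. The joint distribution of $(\tx,S,\Or(S),\edb)$ together with the simulated $\RO$ matches the corresponding dual PIR experiment in each case. So $\alpha_0\ge \varepsilon-2^{-s+1}-2^{-n+1}$, and $\alpha_1$ is exactly the probability in the lemma. The adversary's advantage is $(\alpha_0-\alpha_1)/2\le\delta$, giving $\alpha_1\ge\alpha_0-2\delta$. Averaging over the coins of $\init$ is handled because the privacy definition samples $\bedb$ too.

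The main obstacle is the precise bookkeeping: the lemma statement averages over $\bdb$ only implicitly (it is fixed), so all terms hold pointwise in $\db$. The key subtlety is arguing that the adversary makes only polynomially many real $\Or$ queries, namely $|S|$ plus the replay. Every other oracle evaluation inside $\hint$/$\recon$ is on locally resampled oracles, whose underlying random functions come from the locally simulated $\RO$. The $\setup^{\Or_i}$ calls inside $\hint$ likewise use only $\Or_i$. Once this is checked, the lemma follows.
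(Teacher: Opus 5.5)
Your proposal is correct and follows essentially the same route as the paper. It uses the same privacy adversary: replay the deterministic server on $\edb$ and $\tx_\client$ to recover $S$, lazily simulate $\RO$ while identifying $\RO(0)$ with the challenger's encoding coins, run $\hint$ on the given $\edb$ and then $\recon$, and check the result against $\db[\qrys]$. It also applies Lemma~\ref{lem:same-query} to $\qrys$ in the world where the transcript comes from $\qrys$. The only differences are cosmetic: you argue directly via $\alpha_0-\alpha_1\le 2\delta$ instead of by contradiction, and your challenge labels are swapped relative to the paper's $\qrys_0=\qrys'$, $\qrys_1=\qrys$.
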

\begin{proof}
Towards a contradiction, suppose there exists a query sequence $\qrys \ne \qrys'$ for database $\db$ such that
$$
\Pr[\bans = \db[\qrys]] < \varepsilon - 2^{-s+1} - 2^{-n+1} - 2\delta.
$$
We construct the following adversary $\mathcal{A}$ for the privacy game of the
underlying PIR $\Pi$ that has access to the crypto oracle $\Or$. The adversary
is explicitly defined for $\qrys_0 = \qrys'$ and $\qrys_1 = \qrys$ as the
two query sequences and $\db$ as the challenge database.

\medskip\noindent{$\mathcal{A}^{\Or}(1^\lambda, \tx, \edb)$:}
\begin{enumerate}
\item Parse $\tx = (\tx_\client, \tx_\server)$ as the client and server communication respectively.
\item Execute the server-side execution of the query algorithm $\query_\server^{\Or}(\tx_\client; \edb)$ where we simulate the client using client's communication in the transcript. Record all server performed queries to the crypto oracle $\Or$ as $S$.
\item Lazily sample a local random oracle $\RO$. For any query $x \in \{0,1\}^*$, check if $\RO(x)$ is defined. If not, set $\RO(x)$ to be a uniformly random string. Then, return $\RO(x)$.
\item Execute $\hintout \leftarrow \hint^{\Or,\RO}(\db, \qrys, (\tx, S))$, where the step of $\hint$ computing $\edb \leftarrow \init(1^\secpar, \db)$ using $\omega_\server = \RO(0)$ is omitted and the encoded database $\edb$ received as input is used in its place.
\item Execute $\reconout \leftarrow \recon^{\Or,\RO}(\qrys, (\tx, S), \hintout)$.
\item If $\reconout = \db[\qrys]$, output $1$.
\item If $\reconout \ne \db[\qrys]$, output $0$.
\end{enumerate}

First, we claim that the execution of $\mathcal{A}$ in the privacy game
with challenge bit $b$ perfectly simulates an execution of the dual PIR in
which $\advice$ is parameterized by $\qrys_b$ while $\hint$ and $\recon$
receive the query sequence $\qrys$. To see this, view the dual PIR's random
oracle $\RO$ as follows: the value $\RO(0)$ is set to the challenger's
internal randomness $\omega_\server$ used by $\init$, while $\RO$ at every
other point takes the values lazily sampled by $\mathcal{A}$. This view
is consistent since $\init$ makes no oracle queries and, by domain separation,
the construction uses the value $\RO(0)$ \emph{only} as the encoding
randomness: $\hint$ uses it only to recompute $\edb$ (the step that
$\mathcal{A}$ replaces with the challenge $\edb = \init(1^\secpar, \db)$
computed with randomness $\omega_\server$, which is by definition the same
value), and $\recon$ never queries $\RO(0)$ at all. With this view, the
challenger's execution is exactly
$(\btx, \bS) \leftarrow \advice^{\Or,\RO}_{\qrys_b}(\db)$: note that
$\mathcal{A}$'s step 2 recovers exactly the set $\bS$ recorded by $\advice$,
since the server's execution during $\query$ is deterministic given
$(\Or, \edb, \tx_\client)$. Hence steps 4--5 of $\mathcal{A}$ produce
$(\bhintout, \breconout)$ distributed exactly as in the dual PIR execution.

Next, we prove that our adversary $\mathcal{A}$ performs a polynomial number of
queries to the crypto oracle $\Or$.
Note that indeed $\mathcal{A}$ queries crypto oracle $\Or$ only on queries in $S$ corresponding to crypto oracle queries
performed by an honest server during the online query phase: step 2
re-executes the deterministic server, $\hint$ and $\recon$ query $\Or$ only
to compute $\Or(S)$, and all executions under resampled oracles $\Or_i$ are
simulated locally using only $\RO$, since the sampling algorithm of
Lemma~\ref{lem:oracle_resample} is deterministic with oracle access only to
$\RO$ given the input $\Or(S)$.
Next, we prove that $\mathcal{A}$ wins the privacy game for the underlying PIR $\Pi$.
Consider the case when $\tx$ is output from executing PIR $\Pi$ with query sequence $\qrys_0 = \qrys'$.
We note that $\mathcal{A}$ outputs $0$ with probability at least
$$
1-\Pr[\bans = \db[\qrys]] > 1 - \varepsilon + 2^{-s+1} + 2^{-n+1} + 2\delta
$$
since $\Pr[\bans = \db[\qrys]] < \varepsilon - 2^{-s+1} - 2^{-n+1} - 2\delta$
by our assumption towards a contradiction.
Now, consider the case when $\tx$ is output from executing PIR $\Pi$ with query sequence $\qrys_1 = \qrys$.
In this case, the adversary is executing the dual PIR where $\advice$ is parameterized by $\qrys$ and both $\hint$ and $\recon$ receive
the same query sequence $\qrys$ as input.
By Lemma~\ref{lem:same-query}, we know that the dual PIR is correct with probability at least
$\varepsilon - 2^{-s+1} - 2^{-n+1}$. Therefore, $\mathcal{A}$ outputs $1$ with probability at least
$$
\Pr[\bans = \db[\qrys]] \ge \varepsilon - 2^{-s+1}-2^{-n+1}.
$$
Averaging the two cases over the uniform challenge bit, $\mathcal{A}$
outputs the correct bit with probability
$$
\frac{1}{2}\left(1 - \Pr[\bans = \db[\qrys] \mid b = 0]\right) +
\frac{1}{2}\Pr[\bans = \db[\qrys] \mid b = 1] >
\frac{1}{2} + \frac{2\delta}{2} = \frac{1}{2} + \delta,
$$
contradicting the $\delta$-privacy of PIR.
\end{proof}

The above essentially shows that the dual PIR construction is correct even if we consider any query sequence $\qrys \ne \qrys'$ where $\qrys'$ is the fixed query sequence parameterized in $\advice_{\qrys'}$ of the dual PIR construction. We can now prove our main result as follows:

\begin{proof}[Proof of Theorem~\ref{thm:dualpirmain}]
First, we note that Lemma~\ref{lem:dualpiradv} states that the dual PIR construction is correct when considering any query sequence $\qrys$ with probability at least $\varepsilon - 2^{-s+1} - 2^{-n+1} - 2\delta$. As this holds for any fixed query sequence $\qrys$ and fixed database $\db$, we note it also holds over uniformly randomly chosen
$\qrys$ and database $\db$ to complete the proof.
\end{proof}

\begin{remark}[Extending Beyond Oracle-Free Encodings]\label{rem:init-ext}
Beyond our main claim, our proof technique extends straightforwardly
to rule out encodings which use oracles in particular restricted ways:
\begin{itemize}[noitemsep]
    \item First, if $\init^{\Or}$ makes only $o(n)$ queries to its oracle, then
    we can include these queries along with the transcript $\tx$ output by
    $\advice$. This is analogous to the result in \secref{sec:comm-sublinear}.
    \item Second, if the queries made by $\init^{\Or}$ are deterministic and
    do not depend on the database, or can be predicted from the transcript
    $\tx$ alone, then $\recon$ can compute these queries directly
    and resample only oracles consistent with them. This is analogous to the result in \secref{sec:comm-det}.
\end{itemize}
However, it is unclear how to extend our result to any encoding which depends
on its oracle. This is the most natural extension left open by our work.
\end{remark}

\heading{PIR with Weaker Privacy Guarantees.}
In our above proof, we note that the adversary performs at most the same number of queries as an honest server. In particular,
our adversary makes all crypto oracle queries as performed by an honest server in the underlying PIR during the online $\query$ protocol. Note,
the adversary does not need to make any crypto oracle queries corresponding to $\init$ (which makes none for schemes with oracle-free encodings) nor server crypto oracle queries performed during $\setup$. As a result, we immediately
rule out the existence of more efficient weak PIR schemes that only obtain privacy guarantees against adversaries
that may perform at most a fixed polynomial number of queries. We point readers to Section~\ref{sec:weakpir} for more details.

\subsection{Impossibility of Dual PIR}

In the next lemma, we rule out the existence of dual PIR which are both correct and too efficient.
In particular, we show that
any dual PIR scheme that supports $k = \Omega(s)$ queries will be correct with probability at most
$2^{s-\Omega(k)}$ %
assuming the upfront $t$-bit advice is less than half
of the database size, $t < n/2$. We will prove this as an information-theoretic result and, thus, it holds even in relative worlds.

\begin{lemma}\label{lem:dpir}
	If $n, k, t, s$ are integers with $t < n/2$, $n$ sufficiently large, and
	$1000\log n \le k \le 0.19n$,
	then there is no $(n,k,t,s)$-dual PIR that is $\varepsilon$-correct for
	$\varepsilon > 2^{s - 0.15k}$. Moreover, this is true
	 relative to any oracle $\Or$ that is independent of the database and queries.
\end{lemma}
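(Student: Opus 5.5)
The plan is to reduce a too-good dual PIR to a subkey prediction adversary and then evaluate the bound of Lemma~\ref{lem:subkey}. Given an $\varepsilon$-correct $(n,k,t,s)$-dual PIR $\dpir=(\advice,\hint,\recon)$, I define $\A_1(\db) := \advice(\db)$, which outputs $t$ bits of leakage. I define $\A_2(\qrys, \adviceout)$ to sample a uniform $\hintout \getsr \bits^s$ and output $\recon(\qrys,\adviceout,\hintout)$.

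Conditioned on any values of $(\bdb,\bqrys,\badviceout)$, the guessed string equals the true hint $\hint(\bdb,\bqrys,\badviceout)$ with probability at least $2^{-s}$, because it is uniform over $\bits^s$. Therefore
\[
\Pr[\A_2(\bqrys,\A_1(\bdb)) = \bdb[\bqrys]] \ge 2^{-s}\varepsilon.
\]
Any randomness shared between $\advice$, $\hint$ and $\recon$, and any oracle $\Or$ independent of $\bdb,\bqrys$, can be fixed to the best value by averaging, since Lemma~\ref{lem:subkey} holds for every computationally unbounded adversary. The relativized claim then follows from the same argument. Since the dual PIR correctness is also over uniform $\bdb$ and $\bqrys\getsr[n]^k$, the distributions match the subkey game exactly.

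The remaining step is to show that the right-hand side of Lemma~\ref{lem:subkey} is at most $2^{-0.15k}$ when $t<n/2$ and $1000\log n\le k\le 0.19n$. Then $2^{-s}\varepsilon \le 2^{-0.15k}$ gives $\varepsilon\le 2^{s-0.15k}$. The bound is monotone in $t$, so we may take $t=n/2$. Set $r=\mathsf{rd}_n(2^{n/2})$.

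By the entropy bound $|B_n(\alpha n)|\approx 2^{H(\alpha)n}$, we get $r\approx \alpha^* n$ with $H(\alpha^*)=1/2$, so $\alpha^*\approx 0.110$. I would split the sum into two ranges:
\begin{itemize}
\item For $i \le r+1$, bound each term by $\binom{n}{i}(1-i/n)^k\le \binom ni e^{-ik/n}$.
\item Use $\sum_{i\le r+1}\binom ni \le 2^{n/2}\cdot \poly(n)$, which follows from the definition of $\mathsf{rd}$ plus one extra binomial term that is at most $n$ times larger.
\end{itemize}
The delicate part is that the weight $\binom ni/2^{n/2}$ is concentrated near $i\approx r$, where $(1-r/n)^k\approx (0.89)^k = 2^{-0.168k}$. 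For smaller $i$, $\binom ni$ shrinks by the factor $2^{-(H(\alpha^*)-H(i/n))n}$, which overwhelms the gain in $(1-i/n)^k$ as long as $k\le 0.19n$. This is where the $0.19$ enters: the derivative condition $H'(\alpha)\ge (k/n)\log_2 e/(1-\alpha)$ must hold at $\alpha^*$.

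So the plan is to show that the summand $f(i)=\binom ni(1-i/n)^k$ is increasing on $[0,r+1]$ when $k\le 0.19n$, by comparing consecutive ratios $f(i+1)/f(i)=\frac{n-i}{i+1}(1-\frac1{n-i})^k$. This gives a total of at most $(r+2)\binom{n}{r+1}(1-(r+1)/n)^k/2^{n/2}\le \poly(n)\cdot 2^{-0.168k}$. The requirement $k\ge1000\log n$ then absorbs the $\poly(n)$ factor into the slack between $0.168$ and $0.15$.

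The main obstacle I expect is the numerics of this last step. Pinning down $r/n$ precisely (via entropy estimates valid for large $n$) and verifying the monotonicity ratio at the boundary $k=0.19n$ are tight, and the constants $0.15$, $0.19$, $1000$ must be checked carefully there.
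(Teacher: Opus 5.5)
Your proposal is correct, and its first half is the paper's argument in operational form. Guessing the $s$-bit hint uniformly at a $2^{-s}$ loss is the same step as the paper's $\minH(\bdb[\bqrys]\mid\bqrys,\badviceout,\bhintout)\ge \minH(\bdb[\bqrys]\mid\bqrys,\badviceout)-s$ via \lemref{lem:cond-minh}. Fixing the shared randomness and the independent oracle by averaging handles the relativized claim.

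Where you genuinely differ is in bounding the Bellare--Kane--Rogaway sum.
\begin{itemize}
\item The paper (\lemref{lem:subkey-ours}) uses $|B_n(r)|\le nN$ to turn the sum into $n$ times an average of $(1-i/n)^k$ over the ball. It then splits at weight $n/10$. The heavy part is bounded by $(9/10)^k$, and the light part by the volume ratio $|B_n(n/10)|/|B_n(0.11n)|\le n2^{-0.03n}$. Absorbing that $n^2 2^{-0.03n}$ term into $2^{-0.15k}$ is exactly what forces $k\le 0.19n$.
\item You show $f(i)=\binom ni(1-i/n)^k$ is increasing on $[0,r+1]$ and charge everything to the top term. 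This gives at most $(r+2)\,n\,(1-(r+1)/n)^k=O(n^2)\cdot 0.89^k$. Here $(r+1)/n>0.11$ holds exactly, since $|B_n(r+1)|>N\ge 2^{n/2}>2^{h(0.11)n}$.
\end{itemize}
Your route is sharper than the paper's, and contrary to your expectation it is not delicate. Near $i\approx 0.11n$ the ratio $\frac{n-i}{i+1}(1-\frac{1}{n-i})^k$ is about $8e^{-k/(0.89n)}\approx 6.5$ at $k=0.19n$, and it stays above $1$ for every $k\le n$. Your derivative condition reads roughly $3.0\ge 0.31$, so the constant $0.19$ does not enter there at all. Your argument thus gives the bound for $1000\log n\le k\le n$, while the paper's cruder split needs $k\le 0.19n$.

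One small repair is needed. The BKR expression is not literally monotone in $t$, because the radius $1+\mathsf{rd}_n(2^{n-t})$ jumps. Also, your monotonicity of $f$ can fail once the summation range extends far past $0.11n$ (for $k=0.19n$ it fails near $0.42n$), which happens when $t\ll n/2$. So justify ``take $t=n/2$'' at the adversary level instead. An adversary leaking fewer than $n/2$ bits also leaks at most $\lfloor n/2\rfloor$ bits, so apply \lemref{lem:subkey} with $t'=\lfloor n/2\rfloor$, i.e.\ $N=2^{\lceil n/2\rceil}\le 2^{n/2+1}$. This also yields the upper bound $r<0.12n$ your monotonicity step needs, and costs only a constant factor.
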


By setting $k \ge 14s$, we immediately obtain that it is impossible
for a dual PIR to be correct with probability $2^{-s}$. %
In other words, there are no interesting constructions of dual PIR when the final $s$-bit hint string is small compared
to the number of queries $k$ such as the setting of $s = o(k)$.
Given the prior section showing that any efficient PIR with preprocessing scheme may be transformed into
an efficient dual PIR, we immediately can prove our main lower bound.

\begin{proof}[Proof of \thref{thm:main-lower-bound}]
Towards a contradiction, suppose there exists a too-efficient-to-be-true
PIR with preprocessing scheme with $s$-bit client storage supporting $k = \Omega(s)$ queries with $k \le 0.19n$
where the total bits of communication and server cryptographic bit operations is $t$
that obtains
$\varepsilon$-correctness and $\delta$-privacy
for any choice of $\varepsilon \ge 2^{-s+4}$, $\delta \le \varepsilon / 3$ and $t \le n/2$.
By \thref{thm:dualpirmain}, this implies a
$(n, k, t, 3s)$-dual PIR with $\varepsilon'$-correctness
where
$$\varepsilon' \ge \varepsilon - 2^{-s+1} - 2^{-n+1} - 2\delta  \ge 2^{-s}$$
using the fact that $s \le n$, which holds as we take $k \ge 27s$ below and $k \le 0.19n$ (so that $2^{-n+1} \le 2^{-s+1}$, giving
$\varepsilon' \ge \varepsilon/3 - 2^{-s+2} \ge (16/3 - 4) \cdot 2^{-s} \ge
2^{-s}$).
Next, we can apply \lemref{lem:dpir} showing that there cannot exist
any $(n, k, t, 3s)$-dual PIR that is $\varepsilon'$-correct where
$\varepsilon' \ge 2^{-s}$ to achieve a contradiction, which
is true whenever $k \ge 27s$ (as the dual PIR's hint size is
$3s$, we require $2^{-s} > 2^{3s - 0.15k}$, i.e., $0.15k > 4s$).
In other words, this implies that it must be that $t > n/2$ meaning that the
total bits in the online communication and the server cryptographic bit operations must
be at least $\Omega(n)$. This immediately implies that the total online computation is $\Omega(n)$
and the amortized online computation is $\Omega(n/k)$, which is $\Omega(n/s)$ for $k = \Theta(s)$.
\end{proof}

The remainder of this section is proving the impossibility of efficient dual PIR.
To prove this impossibility, we will relate the correctness properties of dual PIR with the
subkey prediction problem studied in~\cite{C:BelKanRog16} (we also refer readers back to Section~\ref{sec:subkey-def}).
As a reminder, the subkey prediction considers the probability that an adversary is able to correctly guess a subkey consisting of $k$ random locations
of a large $n$-bit key when given $t$ bits of leakage. We note the collision of parameters is not coincidental as there are strong
similarities between the two problems. Interestingly, subkey prediction is used as a tool to prove the security of a construction
of symmetric encryption (upper bounding adversarial advantage) in the bounded retrieval model.
In contrast, we will use the subkey prediction problem as a way to prove lower bounds and impossibilities
about dual PIR constructions (upper bounding dual PIR correctness).

Prior work \cite{C:BelKanRog16} presented an upper bound for subkey prediction and later use approximations to obtain concrete bounds. However, they omit a proof of asymptotic bounds for the
parameters necessary in our proof. We will
provide the relevant analysis below using \lemref{lem:subkey} proven in~\cite{C:BelKanRog16}.
Before we present the lemma, we recall some notation
that will be important.
As a reminder, $B_n(r)$ ball of radius $r$ centered at the origin of
the $n$-dimensional boolean hypercube (all points with distance at most $r$ from $0^n$.
We use
$\mathsf{rd}_n(N) :=
\max\{\hat{r}\in \mathbb{Z}_{\ge 0} : |B_n(\hat{r})| \le N\}$ 
as the radius of the largest ball with at most $N$ points.

\begin{lemma}\label{lem:subkey-ours}
	For sufficiently large integers $n, t, k$ such that $t \le n/2$ and
	$1000\log n \le k \le 0.19 n$ and for all (computationally unbounded) adversaries
	$\A=(\A_1,\A_2)$,
	\[
	\Pr\big[\A_2(\bqrys, \A_1(\bdb)) = \bdb[\bqrys]\big] \le 2^{-0.15 k}
	\]
	where $\bqrys$, the $k$ subkey locations, is uniform over $[n]^k$,
	$\bdb$, $n$-bit large key, is uniform over $\bits^n$,
	adversary $\A_1$ outputs at most $t$ bits of leakage, and
	adversary $\A_2$ outputs $k$ bits.
\end{lemma}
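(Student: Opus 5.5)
The plan is to start from the bound of \lemref{lem:subkey} and show that its right-hand side is at most $2^{-0.15k}$ under the stated parameters. Write $N = 2^{n-t} \ge 2^{n/2}$ and $r = \mathsf{rd}_n(N)$, and split the sum at the index $r+1$. The first step is a lower bound on $r$. Since $|B_n(\hat r)| \le 2^{n H(\hat r/n)}$, any $\hat r$ with $H(\hat r/n) \le (n-t)/n$ satisfies $|B_n(\hat r)| \le N$. Because $(n-t)/n \ge 1/2$, it suffices that $H(\hat r/n) \le 1/2$, which holds for $\hat r/n \le 0.11$ since $H(0.11) \approx 0.4999$. Hence $r \ge \lfloor 0.11 n\rfloor$. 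In words: with only $t \le n/2$ leaked bits, the ``ball'' of keys consistent with the leakage has radius at least about $0.11n$.

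Next, bound the sum. Each summand is $\binom{n}{i}(1-i/n)^k$, and we normalize by $2^{n-t}$. Split at $i_0 = \alpha n$ for a constant $\alpha$ chosen near $0.11$.

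For $i \ge i_0$, bound $(1-i/n)^k \le (1-\alpha)^k = 2^{k\log(1-\alpha)}$. The remaining mass $\frac{1}{N}\sum_{i \le r+1}\binom{n}{i} \le |B_n(r+1)|/N$ is at most roughly $(n+1)$, since consecutive ball sizes differ by at most a factor of $n$. With $\alpha = 0.11$, we get $-\log(0.89) \approx 0.168 > 0.15$. The slack $0.018k \ge 18\log n$, guaranteed by $k \ge 1000\log n$, then absorbs the polynomial factor $n+1$.

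For $i < i_0$, use the trivial bound $(1-i/n)^k \le 1$. This part contributes at most $|B_n(i_0)|/2^{n-t} \le 2^{nH(\alpha')}/2^{n/2}$ for $\alpha' = i_0/n$. To make this exponentially small in $n$ (hence in $k$, since $k \le 0.19n$), take $i_0$ slightly below $0.11n$, e.g.\ $\alpha' = 0.1$ with $H(0.1) \approx 0.469$. This gives $2^{-0.031n} \le 2^{-0.16k}$, using $k \le 0.19n$.

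With the split at $\alpha' = 0.1$, the tail term becomes $(0.9)^k (n+1) = 2^{-0.152k}(n+1)$. This is too tight against $0.15k$ once the factor $n+1$ is included. I therefore expect to refine the split using the summands directly. For $\alpha n \le i \le r+1$, bound each term by $\binom{n}{i}(1-i/n)^k/N$ and maximize the exponent $nH(i/n) - n/2 + k\log(1-i/n)$ over $i/n \in [0.1, 0.11+o(1)]$. Since $k \le 0.19n$, the trade-off between entropy growth and decay can be checked numerically at the endpoints, because the exponent is concave in $i/n$. Adding the two pieces and absorbing $\log n$ factors via $k \ge 1000\log n$ yields $2^{-0.15k}$ for large $n$.

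The main obstacle is this constant-chasing: simultaneously handling the boundary near $r \approx 0.11n$ (the $+1$ index and polynomial factors) and using $k \le 0.19n$ to ensure the low-$i$ ball term $2^{n(H(\alpha)-1/2)}$ is below $2^{-0.15k}$. If the naive endpoint check fails, I would fall back on maximizing the concave exponent $f(x) = H(x) - 1/2 + (k/n)\log(1-x)$ over all $x \le 0.11$. I would verify $f(x) \le -0.15k/n - \Omega(\log n/n)$ for every $k/n \in [1000\log n/n, 0.19]$, treating $k/n$ as a parameter and checking the worst case $k/n = 0.19$ and the small-$k$ regime separately.
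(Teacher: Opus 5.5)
Your first decomposition is the same route the paper takes, and it already finishes the proof. It splits the sum from \lemref{lem:subkey} at $i_0=n/10$. It bounds the high-weight part by $(9/10)^k\cdot|B_n(r+1)|/N\le(n+1)(9/10)^k$, and the low-weight part by $|B_n(n/10)|/N\le 2^{(h(0.1)-1/2)n}\le 2^{-0.031n}$. The obstacle you raise afterwards is not real.

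Since $\log_2(9/10)\approx-0.1520$, you get $(n+1)(9/10)^k\le 2^{-0.15k}\cdot(n+1)2^{-0.002k}\le 2^{-0.15k}\cdot(n+1)/n^2$, because $k\ge 1000\log n$ gives $0.002k\ge 2\log n$. Absorbing the polynomial factor into this $0.002k$ slack is exactly what the hypothesis $k\ge 1000\log n$ is for; the paper writes $n(9/10)^k\le 2^{-0.151k+\log n}\le 2^{-0.15k}$. You also have $2^{-0.031n}\le 2^{-0.16k}\le 2^{-0.15k}n^{-10}$, using $k\le 0.19n$ and again $k\ge1000\log n$. So the two pieces sum to at most $2^{-0.15k}$ for large $n$.

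The one difference from the paper is in your favor. You normalize the low-weight part directly by $N\ge 2^{n/2}$. The paper normalizes by $|B_n(r)|$, so it needs the lower bound $|B_n(r)|\ge|B_n(0.11n)|\ge 2^{h(0.11)n}/(n+1)$. With your normalization, the estimate $r\ge\lfloor 0.11n\rfloor$ is not even needed.

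The refinement you fall back on is therefore unnecessary, and as written its justification is backwards. Concavity of $f(x)=h(x)-\tfrac12+\tfrac{k}{n}\log_2(1-x)$ does not put its maximum at an endpoint: a concave function can peak in the interior, and it is the minimum that sits at an endpoint. What actually holds is that $f$ is increasing on $(0,0.11]$ when $k\le 0.19n$. There $\log_2\frac{1-x}{x}\ge 3$ while $\frac{k/n}{(1-x)\ln 2}\le 0.31$, so the maximum is at the right end.

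That version also tacitly assumes $(r+1)/n\approx 0.11$, i.e.\ $t\approx n/2$. For smaller $t$ you would need one of two things. One is the remark that an adversary with fewer leaked bits is a special case of one with $\lfloor n/2\rfloor$ bits. The other is the uniform bound $|B_n(r+1)|/N\le n+1$ from your first approach.
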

\begin{proof}
We start with the adversarial advantage proven by Bellare, Kane,
and Rogaway \cite{C:BelKanRog16} (see \lemref{lem:subkey}) that
we present below where we denote $N = 2^{n - t} \ge 2^{n/2}$ and
$r = \min(1 + \mathsf{rd}_n(N), n)$:
$$
\Pr[\A_2(\bqrys, \A_1(\bdb)) = \bdb[\bqrys]] \le \frac{1}{N} \cdot \sum\limits_{i = 0}^{r} {n \choose i} \cdot \left(1 - \frac{i}{n}\right)^k.
$$
For convenience, we rewrite this bound using conditional probabilities as follows:
$$
\Pr[\A_2(\bqrys, \A_1(\bdb)) = \bdb[\bqrys]] \le \frac{|B_n(r)|}{N} \cdot \sum\limits_{i = 0}^{r} \Pr[|\mathbf{X}| = i \mid \mathbf{X} \in B_n(r)] \cdot \left(1 - \frac{i}{n}\right)^k.
$$
for $\mathbf{X}$ uniformly random over $\bits^n$.

To start, we show that $|B_n(r)| \le n \cdot N$. First, note that $r \le \mathsf{rd}_n(N) + 1$. By definition of  $\mathsf{rd}_n(N)$, we know that $|B_n(r - 1)| \le N$. Also, $r \ge 2$ since $|B_n(1)| = n + 1 \le 2^{n/2} \le N$ for sufficiently large $n$.
Every point of Hamming weight $r$ is obtained from exactly $r$ points of Hamming weight $r - 1$ by flipping one of at most $n$ coordinates, so ${n \choose r} \le (n/r) \cdot {n \choose r - 1} \le (n/2) \cdot |B_n(r - 1)|$. Therefore, we can see that $|B_n(r)| = |B_n(r - 1)| + {n \choose r} \le n \cdot |B_n(r - 1)| \le n \cdot N$. As a result, we see that $|B_n(r)| / N \le n$.

Next, we split the points into two disjoint sets of Hamming weights: those with at most $n/10$ Hamming weight and the remainder with strictly more than $n/10$ Hamming weight.
We can rewrite the adversary's guessing probability as:
\begin{align*}
\Pr[&\A_2(\qrys, \A_1(\db))= \db[\qrys]]\\
&\le n \cdot \left(\sum\limits_{i = 0}^{n/10} \Pr[|x| = i \mid x \in B_n(r)] \cdot \left(1 - \frac{i}{n}\right)^k + \sum\limits_{i = n/10 + 1}^{r} \Pr[|x| = i \mid x \in B_n(r)] \cdot \left(1 - \frac{i}{n}\right)^k\right)\\
&\le n \cdot \sum\limits_{i = 0}^{n/10} \Pr[|x| = i \mid x \in B_n(r)] \cdot \left(1 - \frac{i}{n}\right)^k  + n \cdot \left(\frac{9}{10}\right)^{k}\\
&\le n \cdot \sum\limits_{i = 0}^{n/10} \Pr[|x| = i \mid x \in B_n(r)]  + 2^{-0.151 k + \log n}\\
&= n \cdot \Pr[x \in B_n(n/10) \mid x \in B_n(r)] + 2^{-0.15k}.
\end{align*}
where we use the fact that $\log_2 (9/10) < -0.152$ and that $k \ge 1000 \log n$.

It remains to bound the first term of the summand. To do this, we will bound the conditional probability
that a random element $x$ from $B_n(r)$ has Hamming weight at most $n/10$. That is, we bound the probability
$\Pr[\mathbf{X} \in B_n(n/10) \mid \mathbf{X} \in B_n(r)]$. We first re-write this probability as
$$
\Pr[\mathbf{X} \in B_n(n/10) \mid \mathbf{X} \in B_n(r)] = \frac{|B_n(n/10)|}{|B_n(r)|}.
$$
Next, we show that $|B_n(r)| \ge |B_n(0.11n)|$.
To see this, we consider the size of the ball $B_n(0.11 n)$ that can be upper bounded as
$$
|B_n(0.11 n)| = \sum\limits_{i = 0}^{0.11 n} {n \choose i} \le 2^{h(0.11)n} < 2^{n/2} \le N
$$
using the fact that $N = 2^{n - t} \ge 2^{n/2}$ since $t \le n/2$. We use the well-known bounds $2^{h(b/a)a}/(a+1)  \le \sum_{i=0}^b {a \choose i} \le 2^{h(b/a)a}$ for $b \le a/2$, where $h(p) = -p\log_2 p - (1-p)\log_2 (1-p)$ is the binary entropy function. We also use the fact that $h(0.11) < 1/2$ and, thus, $2^{h(0.11)n} < 2^{n/2}$
in the second inequality.
This immediately implies that $|B_n(r)| \ge |B_n(0.11n)|$ since $|B_n(r)| \ge N$ since $r = \min(\mathsf{rd}_n(N) + 1, n)$ and we note that $\mathsf{rd}_n(N) + 1 < n$ since $N \ge 2^{n/2}$.
Therefore, we can get that 
\begin{align*}
\Pr[\mathbf{X} &\in B_n(n/10) \mid \mathbf{X} \in B_n(r)] \le \frac{|B_n(n/10)|}{|B_n(r)|} \\
&\le \frac{|B_n(n/10)|}{|B_n(0.11n)|} \le (n+1) 2^{(h(0.1) - h(0.11))n} \le n2^{-0.03n}. %
\end{align*}
where again we use the well-known bounds
$2^{h(b/a)a}/(a+1)  \le \sum_{i=0}^b {a \choose i} \le 2^{h(b/a)a}$ and the fact that $h(0.1) - h(0.11) < -0.0309$, so the last inequality holds for sufficiently large $n$.
Putting it altogether, we obtain the following:
\begin{align*}
\Pr[&\A_2(\bqrys, \A_1(\bdb))= \bdb[\bqrys]]
\le n^2 \cdot 2^{-0.03n} + 2^{-0.151k} \le 2^{-0.15k}
\end{align*}
using the fact that $k \le 0.19n$ and that $n$ is sufficiently large.
\end{proof}

Next, we observe also that any efficient dual PIR construction could equivalently
be considered as an extension of constructing a good adversary for subkey prediction.
In fact, the subkey prediction game is essentially the same, except that
the adversary is not allowed to receive any additional secret function in the last phase of dual PIR.
We use this reduction to complete the proof of \lemref{lem:dpir} showing that any dual PIR
may be used to construct an adversary in the subkey prediction game. Thus, there does not exist
a dual PIR that is both correct and efficient.

\begin{proof}[Proof of \lemref{lem:dpir}]
	Consider $\bqrys \getsr [n]^k$ to be the uniformly random $k$ key locations and
	$\bdb \getsr \bits^n$ to be the uniformly random $n$-bit key.
	We prove the lemma by considering the min-entropy of the random variable
	$\bdb[\bqrys]$ given the outputs of the algorithms $\badviceout \gets \advice(D)$ and $\bhintout \gets \hint(\bdb, \bqrys, \btx)$
	in the DPIR correctness game (\defref{def:dpir}).
	In particular, we will use the fact that
	\[
	\Pr[\recon(\bqrys, \badviceout, \bhintout) = \bdb[\bqrys]]
	\le 2^{-\minH(\bdb[\bqrys] \mid \bqrys, \badviceout,\bhintout)}
	\]
	to prove the bound. So, all that remains is to show that
	\[
	\minH(\bdb[\bqrys] \mid \bqrys,\badviceout,\bhintout) \ge \Omega(k) - s.
	\]
	For this, we can first appeal to \lemref{lem:subkey-ours}, which shows that
	\[
		\minH(\bdb[\bqrys] \mid \bqrys, \badviceout) \ge 0.15 k,
	\]
	for $t < n/2$ and $n$ sufficiently large.
	Finally, we observe that $\bhintout$ is $s$ bits and, thus, supported on at most $2^{s}$ elements.
	This means that $\minH(\bdb[\bqrys] \mid \bqrys,\badviceout,\bhintout) \ge 0.15k - s$ via
	\lemref{lem:cond-minh},
	and the claimed lemma follows immediately.
\end{proof}

\subsection{Impossibility of (Public-Key) Doubly Efficient PIR}
\label{sec:depir}

In this section, we prove that public-key (and therefore unkeyed)
doubly efficient PIR (DEPIR) with an oracle-free encoding built from blackbox cryptography is impossible.
In particular, we show that any such DEPIR constructions requires linear
server time per query (which shows that DEPIR is impossible since it is
defined to use sublinear time).
As a reminder, prior work~\cite{EC:LinMooWic25} ruled out the existence of a
restricted class of secret-key DEPIR
constructions (thus, also public-key/unkeyed DEPIR) in the crypto oracle model.
However, their lower bounds required the DEPIR to consist of a single
roundtrip query, along with the restriction of a passive server that simply
acts as memory without performing any computation. In this work, we rule out the
existence of any construction of any public-key DEPIR in the crypto oracle
model whose preprocessing (key generation and database encoding) makes
no queries to the crypto oracle. Our lower bound makes no assumptions on the
online protocol, which may consist of arbitrarily many rounds with an
actively computing server making arbitrary use of the crypto oracle. The two
results are therefore incomparable:~\cite{EC:LinMooWic25} allows
oracle-dependent encodings but restricts the query structure, whereas we
allow arbitrary query protocols but restrict the preprocessing.

We note that the oracle-free preprocessing restriction is substantive
for public-key DEPIR, where known constructions use cryptography heavily
inside their encodings (such as the virtual blackbox obfuscation used
in~\cite{TCC:BIPW17}); ruling out blackbox DEPIR with oracle-dependent
preprocessing remains an interesting open problem (see also
Remark~\ref{rem:init-ext}). For this reason, our result is most interesting
for \emph{unkeyed} DEPIR, where the encoding is a public procedure applied
to the database. In other words, our result shows that a cryptographic
database encoding is necessary: blackbox cryptography used only during the
online phase cannot yield sublinear-time PIR.

Note that our syntax allows a PIR to encode an $n$-bit database $\db$
as $\edb$. This maps directly onto the \emph{unkeyed} DEPIR syntax. However,
we note that our result below additionally rules out \emph{public-key} DEPIR
as well. This is because without loss of generality, we can assume that the
public-key is included in the encoding $\edb$ and then sent as the first
message of the query algorithm. In more detail, $\init(1^\secpar, \db)$
runs the (oracle-free) key generation and encoding of the public-key DEPIR,
outputs $\edb$ consisting of the public key and encoded database, and
discards the secret key, which the online protocol of a public-key DEPIR
never uses~\cite{TCC:BIPW17}. Therefore, any public-key DEPIR with a
$o(n)$ size public-key\footnote{Since the public key is generated
independently of the database, requiring a $o(n)$ size public key is a very
mild condition; still, our results do not rule out public-key DEPIR with
public keys of $\Omega(n)$ size.} can be transformed into an unkeyed DEPIR with the same parameters at the cost of an additional round and $o(n)$ communication. Since
we have no restrictions on the rounds or communication beyond that it is
$o(n)$, our bound will apply to both public-key and unkeyed
DEPIR.

\heading{Improving \thref{thm:main-lower-bound} for DEPIR.}
The main difference between DEPIR and the PIR with preprocessing setting of \thref{thm:main-lower-bound} is
that the $\setup$ algorithm (see Definition~\ref{def:pir}) is a no-op
as the client receives no private state. Note, we could try to apply \thref{thm:main-lower-bound} to obtain a lower bound
on the query computation immediately. However, this result only applies for $k = \Omega(s)$ queries where $s = \omega(\log(n + \lambda))$.
To apply \thref{thm:main-lower-bound}, we could pad the public-key DEPIR with a dummy $\omega(\log(n + \lambda))$-bit client
storage and obtain that the total query computation over $\omega(\log(n + \lambda))$ queries must be $\Omega(n)$.
Instead, we will refine our proof to show that there does not exist any public-key DEPIR that obtains sublinear query computation
for even a single query. This proof will still follow the same proof framework of building a dual PIR and then showing that a dual PIR cannot be both correct and efficient.

\begin{lemma}
\label{lem:depir-dual}
Consider any public-key DEPIR construction with an oracle-free encoding
for databases of size $n$ where the communication transcript and all server crypto oracle queries during the query algorithm are $t$ bits in total.
Furthermore, suppose $\Pi$ is $\varepsilon$-correct and $\delta$-private.
Then, there exists a $(n, 1, t, 0)$-dual PIR with $\varepsilon'$-correctness satisfying $\varepsilon' \ge \varepsilon - 2\delta - 2^{-n+2}$.
\end{lemma}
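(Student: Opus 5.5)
We follow the construction of Section~\ref{sec:reduce-dual-pir}, specialised to the case where the client has no state ($s=0$, with $\setup$ a no-op). $\advice_{\qrys'}^{\Or,\RO}(\db)$ computes $\edb \gets \init(1^\secpar,\db)$ with randomness $\omega_\server=\RO(0)$, where $\edb$ includes the public key. It then runs $\query^{\Or}(\qrys';\edb)$ for a fixed single index $\qrys'$ and outputs $\adviceout=(\tx,S)$, where $S$ is the server's oracle query set; this is $t$ bits by assumption. $\hint$ has no state to encode. So instead of a counter, $\recon$ alone scans $i=1,\ldots,z$ with $z=2^{3n}$. In iteration $i$ it resamples $\Or_i \gets \mathcal{A}^{\RO_i}(\Or(S))$ via Lemma~\ref{lem:oracle_resample} with $\zeta=6n$, sets $\omega_{\client,i}=\RO_0(i)$, and runs $\query^{\Or_i}_\client(\qrys;\tx_\server)$. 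It outputs the answer of the \emph{first} iteration whose client messages equal $\tx_\client$. Because there is no state, every match is automatically an ``acceptance'' in the sense of Lemma~\ref{lem:hint-counter}. This is why the hint size drops to $0$ and the $2^{-s}$ losses disappear.

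The proof then reuses the earlier steps nearly verbatim. \emph{Step 1:} Claim~\ref{clm:replay} holds unchanged, since it never used $\setup$ and $\edb$ is oracle-free. \emph{Step 2:} Lemma~\ref{lem:p} gives $p(\tx,S,v,\edb)\ge 2^{-3n}$ except with probability $2^{-n}$, where $p$ now averages only over $\Or_i$ and client coins. \emph{Step 3:} we bound the failure probability of $\recon$. Resampling fails with probability at most $2^{3n}e^{-6n}\le 2^{-n}$. Each iteration independently matches with probability exactly $p$, so no match occurs in $z\ge 1/p\cdot 2^{0}$ iterations with probability at most $(1-p)^{2^{3n}}\le e^{-2^{3n}p}$. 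This is tiny when $p\ge 2^{-3n}$ only up to a constant, so we take $z=n\cdot 2^{3n}$ to make it at most $e^{-n}\le 2^{-n}$. (The union bound on resampling still works after adjusting $\zeta$ to, say, $7n$.) In total, $\recon$ outputs $\bot$ with probability at most $2^{-n+1}+2^{-n}\le 2^{-n+2}$, which matches the stated loss.

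\emph{Step 4:} the analogue of Lemma~\ref{lem:output-dist}. Conditioned on $(\tx,S,v)$, the first matching iteration's pair $(\Or_i,\omega_{\client,i})$ has the same distribution as the honest $(\Or,\omega_\client)$ conditioned on the transcript being $\tx$. The argument is the same proportionality argument, now without the $2^{-s}$ factor. Hence, by the argument of Lemma~\ref{lem:same-query}, correctness at $\qrys=\qrys'$ is at least $\varepsilon-2^{-n+2}$. \emph{Step 5:} the privacy adversary of Lemma~\ref{lem:dualpiradv} is unchanged, except that it runs only $\recon$ (there is no $\hint$). It recomputes $S$ from $(\tx_\client,\edb)$, lazily samples $\RO$, and checks whether $\recon$ returns $\db[\qrys]$. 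It queries $\Or$ only on $S$, so a correctness gap between $\qrys$ and $\qrys'$ larger than $2\delta$ would contradict $\delta$-privacy. This yields $\varepsilon'\ge\varepsilon-2\delta-2^{-n+2}$ for every $\qrys$ and $\db$, and hence on average over uniform $\qrys,\db$, as in the proof of Theorem~\ref{thm:dualpirmain}.

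The main obstacle is Step 4: showing that stopping at the first match still yields exactly the honest conditional distribution. This relies on the server-replay property (Claim~\ref{clm:replay}(1)), which ensures that $\recon$'s client-only replay matches if and only if the full execution under $\Or_i$ reproduces $\tx$. That equivalence is where the oracle-free encoding and the public key's placement inside $\edb$ are essential. The remaining bookkeeping is to fold the public key into the first server message and to account for it within the $t$ bits.
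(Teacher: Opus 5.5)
Your proposal is correct and follows the paper's route. You specialize the construction of Section~\ref{sec:reduce-dual-pir} to $s=0$ and have $\recon$ output the first matching iteration (the paper phrases this as $\hint$ outputting nothing and $\recon$ using $c=0$); every match is then an acceptance, so the counter-overflow failure vanishes, and the output-distribution and privacy-adversary arguments carry over. The only difference is the no-match bound: you keep Lemma~\ref{lem:p}'s threshold $2^{-3n}$ and enlarge $z$ to $n\cdot 2^{3n}$, getting $e^{-n}$, whereas the paper keeps $z=2^{3n}$ and reruns Lemma~\ref{lem:p} with threshold $2^{-2n}$, getting $e^{-2^n}$; both give total loss at most $2^{-n+2}$.
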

\begin{proof}
We build our dual PIR using the public-key DEPIR construction in a blackbox manner following \thref{thm:dualpirmain}
with two modifications.
First, the $\hint$ algorithm is modified to perform no operations and always output nothing.
Secondly, the $\recon$ algorithm is modified such that it replaces the counter returned by $\hint$ with $c = 0$.
The resulting dual PIR has the same correctness analysis with the
following differences. Since $s = 0$, the client state is empty, so the
uniformly sampled state always equals the honestly sampled one: every
transcript match is an acceptance, and $\recon$ (using $c = 0$) outputs the
answer of the first matching iteration. Hence the counter-overflow failure
of Lemma~\ref{lem:hint-counter} disappears, and the remaining failure modes
are the resampling failures (probability $2^{-n}$ as before) and the event
that no match occurs among the $z = 2^{3n}$ iterations. For the latter,
rerunning the computation of Lemma~\ref{lem:p} with threshold $2^{-2n}$
shows that $p \ge 2^{-2n}$ except with probability
$2^{-2n} \cdot 2^{o(n)} \le 2^{-n}$; conditioned on $p \ge 2^{-2n}$, no
match occurs within the $z = 2^{3n}$ iterations with probability at most
$(1 - 2^{-2n})^{2^{3n}} \le e^{-2^n}$. Altogether, the failure probability
is at most $2^{-n+1} + e^{-2^n} \le 2^{-n+2}$.
\end{proof}

Next, we prove a more direct impossibility for the dual PIR construction for a single query $k = 1$ with no additional encoded information after receiving the query since $s = 0$. This is actually equivalent to the well-known one-way
communication complexity of the \textsc{Index} problem, which was shown
to require $\Omega(n)$ communication originally by Kremer et al.~\cite{STOC:KreNisRon95,CC:KreNisRon99}. For clarity and completeness,
we recount a direct proof below with explicit constants
using commonly known information theory measures and inequalities
(see,~\cite{CoverThomas06} for the relevant definitions details).

\begin{lemma}[Due to~\cite{CC:KreNisRon99}]
\label{lem:dualpir-single-query}
For any $t < n/100$, then there is no $(n, 1, t, 0)$-dual PIR that is $\varepsilon$-correct for $\varepsilon \ge 0.6$. Moreover,
this is true relative to any oracle $\Or$ that is independent of the database and queries.
\end{lemma}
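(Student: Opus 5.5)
We prove this with the standard information-theoretic argument for one-way \textsc{Index}. A $(n,1,t,0)$-dual PIR has an empty hint, so $\recon$ sees only the query $\bqrys = \mathbf{q} \in [n]$ and the advice $\badviceout = \advice(\bdb)$, where $\mathbf{q}$ is uniform and independent of $\bdb$. Relative to an oracle $\Or$ independent of the database and query, we condition on any fixed value of $\Or$ (and of any shared randomness). The algorithms then become ordinary randomized functions. If the averaged success probability were at least $0.6$, some fixing would also achieve at least $0.6$, so it suffices to rule out oracle-free schemes. Internal randomness of $\recon$ can likewise be fixed to its best value. Since $\advice$ does not see $\mathbf{q}$, we may view $\badviceout$ as a random variable depending only on $\bdb$ and its own coins.

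For each index $i$, let $\varepsilon_i = \Pr[\recon(i,\badviceout) = \bdb[i]]$, so that $\frac1n\sum_i \varepsilon_i \ge 0.6$. Because $\recon(i,\cdot)$ is a predictor of the bit $\bdb[i]$ from $\badviceout$, Fano's inequality for a binary variable gives $H(\bdb[i]\mid \badviceout) \le h(1-\varepsilon_i)$ whenever $\varepsilon_i \ge 1/2$. For $\varepsilon_i < 1/2$ we use the trivial bound $1$. To handle both cases uniformly, we replace $\varepsilon_i$ by $\max(\varepsilon_i, 1/2)$, which is harmless because a predictor can always achieve at least $1/2$. Chain rule and subadditivity then give
\[
H(\bdb\mid\badviceout) \le \sum_{i=1}^n H(\bdb[i]\mid\badviceout) \le \sum_{i=1}^n h(1-\varepsilon_i) \le n\, h\Bigl(1-\tfrac1n\textstyle\sum_i \varepsilon_i\Bigr) \le n\,h(0.4),
\]
where the third inequality is concavity of $h$ (Jensen) and the last uses monotonicity of $h$ on $[0,1/2]$. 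On the other hand, $\badviceout$ takes at most $2^t$ values, so $I(\bdb;\badviceout) \le H(\badviceout) \le t$, which gives $H(\bdb\mid\badviceout) \ge n - t$. Combining the two bounds yields $t \ge n(1-h(0.4))$. Since $h(0.4) \approx 0.971$, this says $t \ge 0.029n > n/100$, contradicting $t < n/100$.

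The main point requiring care is the reduction to a fixed oracle and fixed coins. The oracle may be correlated with the coins shared between $\advice$ and $\recon$, but because it is independent of $\bdb$ and $\bqrys$, conditioning on it preserves the uniform distribution of $(\bdb, \bqrys)$. The per-fixing success probabilities then average to at least $0.6$, so some fixing attains at least $0.6$. The advice length bound $H(\badviceout) \le t$ holds for every fixing, which completes the argument. The remaining entropy computations are routine; the only numerical check needed is $h(0.4) < 0.99$.
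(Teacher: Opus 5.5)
Your proof is correct, but it takes a different route from the paper's.

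The paper bounds the success probability directly:
\begin{itemize}
\item it writes the advantage at index $q$ as the statistical distance between $\bdb[q]$ given $\badviceout$ and a uniform bit;
\item it passes to KL divergence via Pinsker and pulls the square root out with Jensen, giving $\sqrt{I(\bdb[q];\badviceout)/2}$;
\item it uses independence of the bits to get $\sum_q I(\bdb[q];\badviceout)\le I(\bdb;\badviceout)\le t\ln 2$;
\item it finishes with Cauchy--Schwarz, obtaining $\Pr[\text{success}]\le \tfrac12+\sqrt{(\ln 2/2)\,t/n}$.
\end{itemize}

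You stay entirely in entropy language. You apply binary Fano at each index, then subadditivity of conditional entropy (this plays the role of the paper's chain-rule step), then concavity of $h$ to average over indices. This gives the counting bound $t\ge n\bigl(1-h(1-\varepsilon)\bigr)$.

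Your route is shorter: there is no translation between statistical distance and divergence, and no Cauchy--Schwarz, since concavity does the averaging. It also loses nothing. Because $1-h(\tfrac12-\delta)\ge 2\delta^2/\ln 2$, your inequality implies the paper's closed-form tradeoff, and both give a threshold of about $0.559$ at $t=n/100$. The paper's presentation has the advantage of stating the tradeoff explicitly as a success-probability bound valid for every $t$.

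You also make the relativization explicit: fixing $\Or$ and any shared coins preserves the uniform, independent distribution of $(\bdb,\bqrys)$ and the bound $H(\badviceout)\le t$. The paper's proof leaves this step implicit.
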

\begin{proof}
Let $\badviceout$ be the $t$ bit output of $\advice(\bdb)$ for a uniformly
random $n$-bit database $\bdb$. Then the output
$\breconout \gets \recon(\bqrys,\badviceout)$ over the choice of a uniformly
random index $\bqrys \in [n]$ (we drop the notation of
$\hint$ and its output, since we care about $0$-bit outputs).

Now, we consider the success probability of any dual PIR subject to our
constraints, which, expanding with the law of total probability over
all values $\bqrys$ can take is
\[
  \Pr\big[\breconout = \bdb[\bqrys]\big]
  = \frac{1}{2}
    + \frac{1}{n}\sum_{\qrys=1}^{n}
        \Ex_{\badviceout}\!\left[\max_{b \in \bits}\Pr[\bdb[\qrys] = b \mid \badviceout] - \tfrac{1}{2}\right].
\]
The quantity in the expectation is exactly $\Delta(\bdb[\qrys], \mathbf{b} \mid \badviceout)$, the
statistical distance of $(\bdb[\qrys] \mid \badviceout)$ and a uniformly
random bit $\mathbf{b}$.
Below, we use
$D_\mathsf{KL}(\bdb[\qrys], \mathbf{b} \mid \badviceout)$ denotes the KL divergence
between the conditional distribution of $\bdb[\qrys]$ given $\badviceout$ and
the uniform distribution on $\bits$.
Applying Pinsker's inequality pointwise and then
Jensen's inequality
(using concavity of $\sqrt{\cdot}$) yields
\[
  \Ex_{\badviceout}\!\left[\Delta(\bdb[\qrys], \mathbf{b} \mid \badviceout)\right]
  \leq \Ex_{\badviceout}\!\left[\sqrt{\frac{D_\mathsf{KL}(\bdb[\qrys] \mid \badviceout)}{2}}\right]
  \leq \sqrt{\frac{\Ex_{\badviceout}\!\left[D_\mathsf{KL}(\bdb[\qrys] \mid \badviceout)\right]}{2}}
  = \sqrt{\frac{I(\bdb[\qrys]\,;\,\badviceout)}{2}},
\]
where the last equality uses the fact that $\bdb[\qrys]$ is uniform and the
definition of mutual information.
Next, since each $\bdb[\qrys]$ are independent and
$|\badviceout| = t$, the chain rule gives
$\sum_{\qrys} I(\bdb[\qrys] ; \badviceout) \leq I(\bdb ; \badviceout)
\leq t \ln 2$.
Finally, we can apply Cauchy--Schwarz, which shows that
$\frac{1}{n}\sum_{\qrys=1}^{n} \sqrt{I(\bdb[\qrys];\badviceout)/2}
\leq \sqrt{\sum_{\qrys} I(\bdb[\qrys];\badviceout) / (2n)}
\leq \sqrt{t\ln 2/(2n)}$.
This finally gives us that
\[
  \Pr\big[\breconout = \bdb[\bqrys]\big]
  \leq \frac{1}{2} + C\sqrt{\frac{t}{n}},
\]
with explicit constant $C = \sqrt{(\ln 2)/2} \approx 0.589$.

Finally, taking $t < n / 100$, we see that no dual PIR can exist that is
correct with probability better than $\varepsilon = \frac{1}{2} + \frac{1}{10}\sqrt{\frac{\ln 2}{2}} \approx 0.559$.
\end{proof}

\begin{theorem}
\label{thm:public-key-depir}
Relative to any crypto oracle $\Or$, there does not exist a single-server public-key doubly efficient PIR
with an oracle-free encoding (\defref{def:oracle-free})
that is $\varepsilon$-correct and $\delta$-private with sublinear query computation $t = o(n)$ against computationally unbounded
adversaries making $\poly(\lambda)$ queries to $\Or$ for any choice of $\varepsilon \ge 9/10$ and $\delta \le 1/20$.
\end{theorem}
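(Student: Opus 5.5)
The proof goes by contradiction, combining Lemma~\ref{lem:depir-dual} with Lemma~\ref{lem:dualpir-single-query}, just as the proof of \thref{thm:main-lower-bound} combined \thref{thm:dualpirmain} with \lemref{lem:dpir}.

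Suppose there is a public-key DEPIR relative to $\Or$ with an oracle-free encoding, $\varepsilon \ge 9/10$, $\delta \le 1/20$, and query computation $t = o(n)$. The first step is to fold the public key into $\edb$ and have the server send it as the first message, as discussed above. This makes the scheme fit the syntax of Definition~\ref{def:pir} with $s = 0$ and $k = 1$. The total communication plus server crypto-oracle input bits during $\query$ are bounded by the query computation, so they are $o(n)$; this is the parameter $t$ needed by Lemma~\ref{lem:depir-dual}. The extra message carrying the public key costs only $o(n)$ bits, which is where the footnote's requirement of an $o(n)$-size public key enters.

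Next I would invoke Lemma~\ref{lem:depir-dual} to obtain an $(n,1,t,0)$-dual PIR relative to $\Or$ (together with the auxiliary random oracle $\RO$). Its correctness is
\[
\varepsilon' \ge \varepsilon - 2\delta - 2^{-n+2} \ge \tfrac{9}{10} - \tfrac{1}{10} - 2^{-n+2} = 0.8 - 2^{-n+2} \ge 0.6
\]
for sufficiently large $n$. Since $t = o(n)$, we also have $t < n/100$ for large $n$.

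Lemma~\ref{lem:dualpir-single-query} then says no such dual PIR exists, even relative to any oracle independent of the database and queries. Both $\Or$ and $\RO$ are sampled independently of $\bdb$ and $\bqrys$, so the lemma applies, and we have a contradiction.

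The main obstacle is this last application. Correctness in Definition~\ref{def:dpir} is averaged over the oracles, whereas Lemma~\ref{lem:dualpir-single-query} is phrased for a fixed independent oracle. I would resolve this by averaging: if the averaged correctness is at least $0.6$, then some fixed pair $(\Or,\RO)$ achieves correctness at least $0.6$. Fixing that pair gives a plain, oracle-free $(n,1,t,0)$-dual PIR, to which the mutual-information argument of Lemma~\ref{lem:dualpir-single-query} applies directly, since the argument treats the advice merely as a $t$-bit function of $\bdb$ and fresh randomness. The privacy step inside Lemma~\ref{lem:depir-dual}, the adversary of Lemma~\ref{lem:dualpiradv}, already makes only polynomially many $\Or$-queries, so the $\poly(\lambda)$-query adversary restriction is respected.
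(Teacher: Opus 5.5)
Your proposal is correct and follows the paper's proof: apply Lemma~\ref{lem:depir-dual} to obtain an $(n,1,t,0)$-dual PIR with $\varepsilon' \ge \varepsilon - 2\delta - 2^{-n+2} \ge 0.8 - 2^{-n+2}$, which contradicts Lemma~\ref{lem:dualpir-single-query} once $t < n/100$. Your extra remarks, namely folding the public key into $\edb$ and averaging to a fixed pair $(\Or,\RO)$, simply make explicit what the paper covers in its discussion preceding the lemma and in the phrase ``relative to any oracle independent of the database and queries.''
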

\begin{proof}
First, we apply \lemref{lem:depir-dual} to obtain a $(n,1,t,0)$-dual PIR that is $\varepsilon'$-correct where $\varepsilon' \ge \varepsilon - 2\delta - 2^{-n+2} \ge 9/10 - 1/10 - 2^{-n+2} > 3/4$ for sufficiently large $n$.
This immediately contradicts \lemref{lem:dualpir-single-query} to show that such a public-key DEPIR is impossible.
\end{proof}

\subsection{Impossibility of Weak PIR (with Preprocessing)}
\label{sec:weakpir}

In this section, we consider the notion of a weak PIR where the privacy guarantees only hold for weaker adversaries
that are limited to a fixed polynomial number of queries to the crypto oracle. We note such weaker cryptographic primitives have been
studied in other contexts. The most famous example are Merkle puzzles~\cite{merkle1978secure} that
enable building weak key exchange where honest parties perform $q$ random oracle queries and provides privacy guarantees against
any adversary that performs $o(q^2)$ random oracle queries as shown in~\cite{TCC:BihGorIsh08}.

To our knowledge, prior works have not ruled out the existence of weak single-server PIR with sublinear computation even without any preprocessing. For example, both of the previous results ruling out sublinear communication single-server
PIR~\cite{EC:DiCMalOst00} and sublinear public-key operations~\cite{EC:DujHaj24} reduce to the impossibility of oblivious transfer
in the random oracle model. However, replicating the same reductions in~\cite{EC:DiCMalOst00,EC:DujHaj24} would not rule out weak PIR as there do exist weak oblivious transfer
schemes using one-way functions~\cite{TCC:BihGorIsh08}.

We show that our prior lower bounds immediately rule out the existence of such weak PIR built from any blackbox cryptography.
In particular, our reduction constructs a weak adversary that performs crypto oracle queries
that is at most the number of crypto oracle queries performed by an honest server during the online query protocol of a proper execution of the PIR
(see the adversary in the proof of Lemma~\ref{lem:dualpiradv}). Therefore, our lower bounds immediately apply
to weak PIR where the honest parties make $q$ crypto oracles that are secure against any adversaries that perform
$O(q)$ crypto oracle queries.

\begin{theorem}
\label{thm:weak-pir}
Relative to any crypto oracle $\Or$, there does not exist a single-server weak PIR
with an oracle-free encoding (which is trivially satisfied by any PIR
whose server stores the database unmodified)
that is $\varepsilon$-correct and $\delta$-private with sublinear communication, sublinear server cryptographic operations such that honest parties make at most $q$ queries to $\Or$ against computationally unbounded
adversaries making $O(q)$ queries to $\Or$ for any choice of $\varepsilon \ge 9/10$ and $\delta \le 1/20$.
\end{theorem}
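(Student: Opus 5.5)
Suppose towards a contradiction that such a weak PIR $\Pi^{\Or}$ exists. The plan is to rerun the proof of \thref{thm:main-lower-bound} (or \thref{thm:public-key-depir} in the single-query, no-state case) and check that weak privacy is all the argument ever uses. Privacy enters at exactly one point: Lemma~\ref{lem:dualpiradv}. There, a failure of dual PIR correctness on some $\qrys \ne \qrys'$ is turned into a distinguisher $\mathcal{A}$. The other ingredients need no privacy at all. Lemmas~\ref{lem:p}, \ref{lem:hint-counter}, \ref{lem:output-dist} and~\ref{lem:same-query} use only correctness, sublinear communication and server oracle usage, and the oracle-free encoding. \lemref{lem:dpir} and \lemref{lem:dualpir-single-query} are information-theoretic. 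So it suffices to show that $\mathcal{A}$ makes $O(q)$ queries to $\Or$.

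The core step is counting $\mathcal{A}$'s queries to $\Or$. In step 2, $\mathcal{A}$ re-executes the deterministic server $\query^{\Or}_\server(\tx_\client; \edb)$. This costs exactly the honest server's online oracle queries, at most $q$. The set $S$ it records has size at most $q$. Inside $\hint$ and $\recon$, the real oracle is queried only to compute $\Or(S)$, which is at most $2q$ queries even without caching (and at most $q$ with caching). Everything else is local:
\begin{itemize}[noitemsep]
\item $\init$ makes no oracle queries, by the oracle-free encoding assumption.
\item $\setup^{\Or_i}$ and $\query^{\Or_i}$ run only against resampled oracles.
\item The resampled oracles $\Or_i$ come from the deterministic procedure of Lemma~\ref{lem:oracle_resample}. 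It accesses only the lazily sampled local $\RO$ and the already-known values $\Or(S)$.
\end{itemize}
Hence $\mathcal{A}$ makes at most $3q = O(q)$ queries to $\Or$. It may make exponentially many queries to its private $\RO$, which the weak-privacy definition does not restrict. Its computational unboundedness is likewise permitted.

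The parameters then go through unchanged. If we want $k=\Theta(s)$ with $s=\omega(\log(n+\lambda))$, we use \thref{thm:dualpirmain} with \lemref{lem:dpir}. Constant correctness $\varepsilon\ge 9/10 \ge 2^{-s+4}$ and $\delta\le 1/20\le\varepsilon/3$ satisfy the hypotheses there. Since $t = o(n) < n/2$, we get the contradiction $\varepsilon' \ge 2^{-s}$. For the plain single-query (no preprocessing) case, we instead invoke Lemma~\ref{lem:depir-dual} with \lemref{lem:dualpir-single-query}. This gives $\varepsilon' \ge 9/10 - 1/10 - 2^{-n+2} > 0.6$, which contradicts $t<n/100$.

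The main obstacle I expect is subtle rather than computational. We must verify that the simulation argument in Lemma~\ref{lem:dualpiradv} still yields a \emph{legal} weak adversary. That means confirming that no step hides an extra real-oracle query, for example a hidden query inside the resampling success check, or inside $\query_\client$ during $\recon$. It also means confirming that $O(q)$ counts real oracle calls rather than query bits. The remaining work is bookkeeping.
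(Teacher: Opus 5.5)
Your proposal is correct and follows the paper's route. The paper proves this in one line: it invokes Theorem~\ref{thm:public-key-depir} (that is, Lemma~\ref{lem:depir-dual} together with Lemma~\ref{lem:dualpir-single-query}) and observes that the distinguisher of Lemma~\ref{lem:dualpiradv} queries $\Or$ only on the honest server's online query set $S$, which is exactly your $O(q)$ count. Your additional branch through Theorem~\ref{thm:dualpirmain} and Lemma~\ref{lem:dpir} matches the paper's subsequent unnumbered theorem for weak PIR with preprocessing.
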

\begin{proof}
This immediately follows from \thref{thm:public-key-depir} where we use the fact that the adversary constructed in Lemma~\ref{lem:dualpiradv} makes at most $O(q)$ crypto oracle queries.
\end{proof}

Note, the above applies to even public-key DEPIR where the server is able to perform arbitrary oracle-free encoding of the database. Note,
we can also rule out the existence of single-server weak PIR with preprocessing by applying the same observation to \thref{thm:main-lower-bound}.

\begin{theorem}
Relative to any crypto oracle $\Or$, there does not exist a single-server
PIR with preprocessing scheme with an oracle-free encoding that is $\varepsilon$-correct
and $\delta$-private
with $s$-bit client storage and total communication and server cryptographic operations of
$t = o(n)$ bits
supporting $k = \Theta(s)$ queries
such that honest parties make at most $q$ queries to $\Or$
against computationally unbounded adversaries making $O(q)$ queries to $\Or$
for any choice of $s = \omega(\log(n + \lambda))$, $\varepsilon = \Omega(2^{-s})$ and $\delta \le \varepsilon / 3$.
\end{theorem}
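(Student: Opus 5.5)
The plan is to reuse the proof of \thref{thm:main-lower-bound} essentially verbatim. The one new ingredient is to check that the privacy adversary built in Lemma~\ref{lem:dualpiradv} makes only $O(q)$ queries to $\Or$, so that weak privacy already suffices for the contradiction.

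Concretely, suppose towards a contradiction that a weak PIR with preprocessing exists with an oracle-free encoding, $s$-bit state, $k = \Theta(s)$ (taking $k \ge 27s$ and $k \le 0.19n$), total communication plus server oracle input bits $t = o(n) \le n/2$, $\varepsilon \ge 2^{-s+4}$ and $\delta \le \varepsilon/3$. Apply the dual PIR construction of Section~\ref{sec:reduce-dual-pir} with $\advice_{\qrys'}$, $\hint$, and $\recon$.

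Lemmas~\ref{lem:p}, \ref{lem:hint-counter}, \ref{lem:output-dist} and \ref{lem:same-query} use only correctness and the oracle-free encoding, never privacy, so they carry over unchanged. Privacy enters only in Lemma~\ref{lem:dualpiradv}. I would therefore redo that lemma and count the $\Or$-queries of the adversary $\mathcal{A}$:
\begin{itemize}[noitemsep]
\item step 2 replays the deterministic server of $\query$ on $(\tx_\client, \edb)$, which costs exactly the honest server's online queries, and this is at most $q$;
\item $\hint$ and $\recon$ each query $\Or$ only to compute $\Or(S)$, which adds at most $2|S| \le 2q$ further queries;
\item all executions under the resampled oracles $\Or_i$, and the sampler of Lemma~\ref{lem:oracle_resample}, touch only the lazily simulated $\RO$;
\item $\init$ makes no oracle queries, and $\mathcal{A}$ never runs $\setup$ under the real $\Or$.
\end{itemize}
Hence $\mathcal{A}$ makes at most $3q = O(q)$ queries to $\Or$. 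It lies in the class against which weak privacy is guaranteed, and the distinguishing argument of Lemma~\ref{lem:dualpiradv} applies verbatim. This yields Theorem~\ref{thm:dualpirmain}: a $(n,k,t,3s)$-dual PIR with correctness $\varepsilon' \ge \varepsilon - 2\delta - 2^{-s+1} - 2^{-n+1} \ge 2^{-s}$.

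Finally, Lemma~\ref{lem:dpir} is information-theoretic and relative to any oracle. It forbids such a dual PIR whenever $t < n/2$ and $0.15k > 4s$, which is exactly the parameter computation from the proof of \thref{thm:main-lower-bound}. This contradicts $t = o(n)$.

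The main point to verify carefully is the query count. The key concern is that nothing in $\hint$ or $\recon$ secretly queries the real $\Or$ beyond $S$, for example when checking $\st_i = \st_i'$ or running $\setup^{\Or_i}$. I would confirm that every such call is made relative to $\Or_i$, which is simulated from $\RO$ and $\Or(S)$. I would also confirm that $q$ as defined bounds the server's online queries, so that the adversary's count stays within $O(q)$.
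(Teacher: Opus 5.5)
Your proposal is correct and follows the paper's own route: the paper obtains this theorem by rerunning \thref{thm:main-lower-bound} and observing that the privacy adversary of \lemref{lem:dualpiradv} queries $\Or$ only on the honest server's online query set $S$, with every other oracle call simulated locally from $\RO$ and $\Or(S)$. One small refinement: if you reuse the answers $\Or(S)$ already obtained during the server replay rather than re-querying them in $\hint$ and $\recon$, your count drops from $3q$ to at most $q$. That matches the paper's remark that the adversary makes no more queries than an honest server, and it covers the case where $O(q)$ is read with a fixed constant below $3$.
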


\section{Communication Lower Bounds}
\label{sec:comm-lb}

This section presents some notable corollaries and
variations of our main result that yield \emph{communication} lower bounds.
In contrast to our computation bound, each of the variations applies to a
somewhat restricted class of PIR with preprocessing. As
discussed in Section~\ref{sec:bb-sep} (and we iterate here),
there are good reasons to believe
these communication bounds cannot apply in full generality.

\subsection{PIR with Sublinear Server Cryptographic Operations}
\label{sec:comm-sublinear}

This communication lower bound is a direct corollary of
Theorem~\ref{thm:main-lower-bound}. Consider a correct PIR protocol with
an oracle-free encoding with a batch
size $k \ge \Omega(s)$ that requires
that a server only makes $o(n)$ total queries to its crypto oracle during
the online query protocol. Then, our
main theorem shows that this would imply a dual PIR with $o(n)$ advice and
a $O(s)$-bit hint that contradicts information-theoretic bounds.
We state this formally below.

\begin{corollary}\label{cor:comm-sublinear}
Relative to any crypto oracle $\Or$, any single-server
PIR with preprocessing scheme with an oracle-free encoding
(\defref{def:oracle-free}) that is $\varepsilon$-correct
and $\delta$-private
with $s$-bit client storage and total of $t=o(n)$ server-side queries
to $\Or$,
supporting $k = \Omega(s)$ queries with $k \le 0.19n$
against computationally unbounded adversaries making $\poly(\lambda)$
queries to $\Or$, requires total communication $\Omega(n)$.
This holds for any choice of $s = \omega(\log(n + \lambda))$,
$\varepsilon = \Omega(2^{-s})$ and $\delta \le \varepsilon / 3$.
\end{corollary}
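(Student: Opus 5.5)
We argue by contradiction and reduce directly to the machinery behind \thref{thm:main-lower-bound}. Suppose a scheme as in the statement exists with total online communication $|\tx| = o(n)$. By hypothesis, the server's online queries to $\Or$ also total $o(n)$ bits. (If the hypothesis counts queries rather than bits, each query has $\poly(\lambda)$ length; we absorb this by requiring the total bit-length of the recorded set $S$ to be $o(n)$, which is the quantity the proof actually uses.) The advice $\adviceout = (\tx, S)$ of the dual PIR from \secref{sec:reduce-dual-pir} therefore has total length $t' = |\tx| + |S| = o(n)$. In particular, $t' \le n/2$ for all sufficiently large $n$. This is exactly the efficiency hypothesis that \lemref{lem:p} and \lemref{lem:hint-counter} need, since both only use that the number of pairs $(\tx, S)$ is $2^{o(n)}$.

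Next we invoke \thref{thm:dualpirmain} with the fixed sequence $\qrys' = (1,\ldots,k)$. This gives an $(n, k, t', 3s)$-dual PIR whose correctness satisfies $\varepsilon' \ge \varepsilon - 2\delta - 2^{-s+1} - 2^{-n+1}$. Take $\varepsilon \ge 2^{-s+4}$ (the constant hidden in $\varepsilon = \Omega(2^{-s})$ can be normalized, or the constant $27$ below adjusted) and $\delta \le \varepsilon/3$. Since $s \le k \le 0.19n$, we have $2^{-n+1} \le 2^{-s+1}$, and the same arithmetic as in the proof of \thref{thm:main-lower-bound} gives $\varepsilon' \ge \varepsilon/3 - 2^{-s+2} \ge 2^{-s}$.

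Finally we apply \lemref{lem:dpir} with hint length $3s$. It forbids $\varepsilon' > 2^{3s - 0.15k}$ whenever $t' < n/2$ and $1000\log n \le k \le 0.19n$. Taking $k \ge 27s$ gives $2^{3s-0.15k} < 2^{-s} \le \varepsilon'$, which is a contradiction. The lower bound $k \ge 1000\log n$ holds because $s = \omega(\log(n+\lambda))$. Hence $|\tx| + |S| \ge n/2$, and since $|S| = o(n)$ we conclude $|\tx| \ge n/2 - o(n) = \Omega(n)$.

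The main obstacle is the parameter regime $k = \Omega(s)$ with a small constant. The clean argument needs $k \ge 27s$, so for smaller constants we first amplify. One option is to pad the client state to a smaller effective scale. The more natural option is to note that if $k = cs$ with $c < 27$, we can view the scheme as having state $s$ and consider only sub-batches. However, sub-batching lowers $k$ rather than raising it, so in fact we would simply restate the constraint as the paper does in \thref{thm:main-lower-bound}. Beyond this, everything is a direct corollary, because the proof of \thref{thm:main-lower-bound} already bounds the sum $|\tx| + |S|$.
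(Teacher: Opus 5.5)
Your proposal is correct and is essentially the paper's argument. The paper reads the corollary off the proof of \thref{thm:main-lower-bound}: \thref{thm:dualpirmain} yields an $(n,k,|\tx|+|S|,3s)$-dual PIR, and \lemref{lem:dpir} then forces $|\tx|+|S| \ge n/2$, so $|S|=o(n)$ gives $|\tx| = \Omega(n)$, under the same normalizations you use ($\varepsilon \ge 2^{-s+4}$, $S$ measured in bits, $k \ge 27s$). You need no amplification for small constants in $k=\Omega(s)$: the paper also reads that hypothesis as $k \ge 27s$, and \apref{ap:small-queries} explains why some threshold of this kind cannot be dropped.
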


We find it important to consider this lower bound separately from the main
result, since it illustrates the limitations of blackbox constructed PIR
when the server doesn't use a significant amount of cryptography.
Notice that this bound applies to PIR which run in $\Omega(n)$ time but which
do not use the oracle more than $o(n)$ times. And in fact, many of the OWF-based
PIR with preprocessing require the server issue no OWF evaluations,
e.g., see~\cite{EC:HPPY25}. This bound shows that these approaches are at their
communication limit, when supporting an unbounded number of queries. This
is because they use $O(n)$ communication across
$k = \Theta(s)$ queries to refresh a client's state, which is optimal given
the above corollary.

Perhaps more importantly, this corollary shows significant limitations
of what can be achieved by combining server-side and client-side preprocessing
(since our definition allows database encoding).
It has been an open question, whether one could decrease the (amortized)
communication of OWF-based, single-server PIR by combining server and client
preprocessing. The above shows that this is not possible from blackbox
cryptography without the server using at least $\Omega(n)$ calls to its crypto
oracle (which is achievable via FHE). However, it remains open whether one
could improve communication using $\Omega(n)$ server-side
evaluations of a weaker oracle like a random oracle,
which may be more practically efficient than the
FHE-based construction.

\subsection{PIR with Communication-Determined Server Cryptographic Operations}
\label{sec:comm-det}

Our next communication lower bound follows from similar techniques used
in our proof of Theorem~\ref{thm:main-lower-bound}. We observe that
for a specific class of PIR with preprocessing, we can actually simplify the
PIR to dual PIR reduction. In particular, we define the following class of
PIR which are ``communication-determined,'' meaning that the server-side
queries to $\Or$ depend only on the transcript $\tx$ and not on the database
itself. More formally, we provide the following definition.

\begin{definition}
\label{def:determined}
A PIR protocol $\pir^{\Or}=(\init^{\Or},\setup^{\Or},\query^{\Or})$ is
\emph{communication-determined} if the server-side queries on the
$\query^{\Or}$ protocol are a function of the online transcript $\tx$ and
$\Or$. In other words, there exists an oracle algorithm
$\mathsf{SvrQ}^{\Or}(\tx)$, issuing
$\poly(\lambda)$ queries to $\Or$, which outputs $S$, the $\Or$ queries
by the server. (More formally, it must return the server queries
with probability 1 for every $\qrys$ and $\db$ over an honestly
sampled transcript $\tx$ and oracle $\Or$.)
\end{definition}

Next, we present the dual PIR used for the following theorem. We highlight
the differences from the version used in Theorem~\ref{thm:dualpirmain} with
old code in {\color{gray} gray} and new code in {\color{blue} blue}. We
then sketch the proof of the reduction theorem and state the
corresponding corollary. For brevity, we leave out the intermediate
lemmas as they are nearly identical to those used to prove
Theorem~\ref{thm:dualpirmain} with the exception that the server side
queries are computed by each party. Roughly speaking, the old difference is that
the set $S$ of server-side queries is computed by each party instead of being
sent along with the transcript.\footnote{For our
reduction, we could allow that $\mathsf{SvrQ}$ takes $\qrys$ as input, but this
is essentially equivalent to the given definition. If the output depended
on $Q$ in a (computationally) noticeable way, then a server would be able
to break the privacy of PIR.}

\begin{enumerate}
\item $\advice_{\qrys'}^{\Or, \RO}(\db)$: %
    \begin{itemize}
    	\item Run $\edb \gets \init(1^{\secpar}, \db)$ using
		$\omega_{\server} = \RO(0)$ as the algorithm's randomness.
        \item Run $(\st;\bot)\gets\setup^{\Or}(1^\lambda ;\edb)$.
        \item Run $(\ans;\bot\mid \tx)\gets\query^{\Or}(\st,\qrys';\edb)$.
        \item {\color{gray} Record all server-side oracle $\Or$ queries executed during the $\query$ algorithm as $S$.}
        \item {\color{blue} Output $\adviceout \leftarrow \tx$.} {\color{gray} Output $\adviceout \leftarrow (\tx, S)$.}
    \end{itemize}
\item
$\hint^{\Or, \RO}(\db, \qrys, \adviceout)$:
    \begin{itemize}
    	\item {\color{blue} Parse $\adviceout = \tx$.} {\color{gray} Parse $\adviceout = (\tx, S)$.}
    	\item {\color{blue} Compute $S \gets \mathsf{SvrQ}^{\Or}(\tx)$.}
    	\item Compute $\Or(S) \leftarrow \{(q,\Or(q)) \mid q \in S\}$.
	\item Run $\edb \leftarrow \init(1^{\secpar}, \db)$ using
	$\omega_{\server} = \RO(0)$ as the algorithm's randomness.
	\item Set $z \leftarrow 2^{3n+2s}$.
    	\item Construct virtual random oracles $\RO_0, \RO_1,\ldots,\RO_z$ such that $\RO_i(\cdot) = \RO(i \mid\mid \cdot)$ and $i \in \{0,1,\ldots,z\}$ is represented using a string of $O(n+s)$ bit length.
	\item Instantiate counter $c \leftarrow 0$.
	\item For $i \in \{1,\ldots,z\}$ and while $c < 2^{3s}$:
	\begin{enumerate}
		\item Sample oracle $\Or_i \leftarrow \mathcal{A}^{\RO_i}(\Or(S))$ such that $\Or_i(S) = \Or(S)$ using Lemma~\ref{lem:oracle_resample} with parameter $\zeta = 6n$. If the algorithm fails to sample crypto oracle $\Or_i$, then output $\hintout \leftarrow \bot$.
		\item Compute $(\st_i, \omega_{\client,i}) \leftarrow \RO_0(i)$.
		\item Compute $(\st_i'; \bot) \leftarrow \setup^{\Or_i}(1^\lambda; \edb)$.
		\item Compute $(\ans; \bot \mid \tx_i) \leftarrow \query^{\Or_i}(\st_i, \qrys; \edb)$ using $\omega_{\client,i}$ as the randomness for the client side of the $\query$ protocol.
		\item If $\tx_i = \tx$:
		\begin{enumerate}
			\item If $\st_i = \st_i'$, output $\hintout \leftarrow c$.
			\item Increment $c \leftarrow c + 1$.
		\end{enumerate}
	\end{enumerate}
	\item Output $\hintout \leftarrow \bot$.
    \end{itemize}
\item
$\recon^{\Or, \RO}(\qrys,\adviceout,\hintout)$:
    \begin{itemize}
    \item {\color{blue} Parse $\adviceout = \tx$.} {\color{gray} Parse $\adviceout = (\tx, S)$.}
    \item {\color{blue} Compute $S \gets \mathsf{SvrQ}^{\Or}(\tx)$.}
	\item Parse $\hintout = c$.
    	\item Parse $\tx = (\tx_\client, \tx_\server)$ as client and server communication respectively.
	\item Compute $\Or(S) \leftarrow \{(q, \Or(q)) \mid q \in S\}$.
	\item Set $z \leftarrow 2^{3n+2s}$.
    	\item Construct virtual random oracles $\RO_0, \RO_1,\ldots,\RO_z$ such that $\RO_i(\cdot) = \RO(i \mid\mid \cdot)$ and $i \in \{0,1,\ldots,z\}$ is represented using string of $O(n+s)$ bits.
	\item Instantiate counter $c' \leftarrow 0$.
	\item For $i \in \{1,\ldots,z\}$:
	\begin{enumerate}
		\item Sample oracle $\Or_i \leftarrow \mathcal{A}^{\RO_i}(\Or(S))$ such that $\Or_i(S) = \Or(S)$ using Lemma~\ref{lem:oracle_resample} with parameter $\zeta = 6n$. If the algorithm fails to sample crypto oracle $\Or_i$, then output $\reconout \leftarrow \bot$.
		\item Compute $(\st, \omega_{\client}) \leftarrow \RO_0(i)$.
		\item Compute $(\ans \mid \tx_\client') \leftarrow \query^{\Or_i}_\client(\st, \qrys; \tx_\server)$ using $\omega_{\client}$ as the randomness for the client side of the $\query$ protocol.
		\item If $\tx_\client' = \tx_\client$:
		\begin{enumerate}
			\item If $c' = c$, output $\reconout \leftarrow \ans$.
			\item Increment $c' \leftarrow c' + 1$.
		\end{enumerate}
	\end{enumerate}
	\item Output $\reconout \leftarrow \bot$.
    \end{itemize}
\end{enumerate}

\begin{theorem}\label{thm:comm-computable}
Suppose there is a PIR $\Pi = (\init, \setup^{\Or}, \query^{\Or})$ with
an oracle-free encoding (\defref{def:oracle-free}) supporting batch queries of size $k$
for databases of size $n$ with client state of $s$ bits, and the communication transcript is $t$ bits.
Furthermore, suppose $\Pi$ is $\varepsilon$-correct and $\delta$-private and
communication-determined.

For any fixed query sequence $\qrys'$, our construction
is a $(n, k, t, 3s)$-dual PIR with $\varepsilon'$-correctness satisfying $\varepsilon' \ge \varepsilon - 2\delta - 2^{-s+1} - 2^{-n+1}$.
\end{theorem}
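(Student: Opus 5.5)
The proof will retrace the argument for Theorem~\ref{thm:dualpirmain} step by step. The one change is that the query set $S$ is no longer shipped inside the advice. Instead, $\hint$, $\recon$, and the privacy adversary each recompute it as $S = \mathsf{SvrQ}^{\Or}(\tx)$. The advice is therefore just $\tx$, which has $t$ bits, and the hint is still a counter below $2^{3s}$, so it fits in $3s$ bits. This gives the claimed $(n,k,t,3s)$ parameters.

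First I will show that $\mathsf{SvrQ}^{\Or}(\tx)$ returns exactly the set $S$ that $\advice$ recorded. By Definition~\ref{def:determined} this holds with probability $1$ over an honestly sampled transcript and oracle, and $(\tx,\Or)$ inside $\advice$ is exactly such an honest sample. Since the recomputed $S$ agrees with the recorded one, every execution of $\hint$ and $\recon$ on advice $\tx$ coincides with the corresponding execution of the original construction on advice $(\tx,S)$. Claim~\ref{clm:replay}, Lemma~\ref{lem:output-dist}, and Lemma~\ref{lem:hint-counter} then transfer verbatim, as statements about the joint distribution of $(\tx,S,\Or(S))$.

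The only quantitative change is in Lemma~\ref{lem:p}. There the union bound counted at most $2^{|\tx|+|S|}$ pairs $(\tx,S)$. Now $S$ is a deterministic function of $\tx$ and $\Or$, so for a fixed oracle there are at most $2^{t}$ such pairs. I will redo that sum as a sum over triples $(\tx, S, v)$ with $S = \mathsf{SvrQ}^{\Or}(\tx)$. For each $\tx$, the inner sum over the possible $(S,v)$ of $\Pr[\Or(S)=v \wedge \mathsf{SvrQ}^{\Or}(\tx)=S]$ is at most $1$, because for a fixed oracle these events are disjoint. This yields the bound $2^{-3n}\cdot 2^{t}\le 2^{-n}$ for $t=o(n)$, with no assumption on $|S|$. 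This re-indexing is the step I expect to need the most care: I must check that the disjointness argument survives when $S$ depends on $\Or$ itself rather than being read off the advice. I also need $|S|=\poly(\lambda)$ for Lemma~\ref{lem:oracle_resample}, and that holds because $\mathsf{SvrQ}$ makes $\poly(\lambda)$ queries.

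Finally, Lemma~\ref{lem:dualpiradv} goes through with the modified adversary, which computes $S=\mathsf{SvrQ}^{\Or}(\tx)$ instead of replaying the server. This costs $\poly(\lambda)$ oracle queries, so the adversary is still an admissible privacy adversary. As before, it lazily samples $\RO$, embeds the challenger's $\edb$ in place of the value derived from $\RO(0)$, and runs $\hint$ and $\recon$ with $\qrys$. The error terms are the same as in the original argument: $2^{-s+1}+2^{-n+1}$ from Lemma~\ref{lem:hint-counter} and the analogue of Lemma~\ref{lem:same-query}, plus $2\delta$ from the distinguishing argument. Averaging over uniform $\db$ and $\qrys$, exactly as in the proof of Theorem~\ref{thm:dualpirmain}, gives $\varepsilon'\ge\varepsilon-2\delta-2^{-s+1}-2^{-n+1}$.
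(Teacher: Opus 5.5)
Your overall structure matches the paper's sketch: recompute $S=\mathsf{SvrQ}^{\Or}(\tx)$, note that it equals the recorded set with probability $1$, and carry Claim~\ref{clm:replay} and the subsequent lemmas over. You are also right that Lemma~\ref{lem:p} is where care is needed. Its count of $2^{|\tx|+|S|}$ pairs used $|S|=o(n)$, which is not a hypothesis here, and the paper's sketch passes over this point. However, your proposed fix has a gap.

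\textbf{Why the fix fails.} For $(\tx,S,v)\in Z$, Claim~\ref{clm:replay}(2) together with $p<2^{-3n}$ bounds the term by $2^{-3n}\Pr[\Or(S)=v]$. To sum against $\Pr[\Or(S)=v\wedge\mathsf{SvrQ}^{\Or}(\tx)=S]$ instead, you would need $p(\tx,S,v,\edb)\le 2^{-3n}\cdot\Pr[\mathsf{SvrQ}^{\Or}(\tx)=S\mid\Or(S)=v]$. Nothing gives you this.
\begin{itemize}
\item The quantity $p$ conditions only on $\Or(S)=v$, which is exactly what the resampling in $\hint$ realizes.
\item Correctness of $\mathsf{SvrQ}$ constrains its output only on oracles under which an honest run produces $\tx$. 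So all you get is $p\le\Pr[\mathsf{SvrQ}^{\Or}(\tx)=S\mid\Or(S)=v]$.
\item For instance, suppose the client sends $c=\Or(u)$ with $u\notin S$, and $\mathsf{SvrQ}$ outputs some other set whenever $\Or(u)\ne c$. Then the two quantities coincide, and your termwise inequality would read $p\le 2^{-3n}p$.
\end{itemize}
Disjointness of $\mathsf{SvrQ}$'s outputs for a fixed oracle therefore only yields the trivial bound.

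\textbf{The repair.} The disjointness you need comes from the server itself, not from $\mathsf{SvrQ}$. Fix $\edb$ and $\tx_\client$; the server is then a deterministic adaptive query tree in $\Or$. By Claim~\ref{clm:replay}(1), each pair $(S,v)$ consistent with $\tx$ is a root-to-leaf path of this tree, and $\{\Or(S)=v\}$ is exactly the event that $\Or$ follows that path. Distinct consistent pairs are therefore disjoint events, so $\sum_{(S,v)}\Pr[\Or(S)=v]\le 1$ for each $\tx$. This gives $\Pr[\mathbf{p}<2^{-3n}]<2^{-3n}\cdot 2^{t}\le 2^{-n}$, with no bound on $|S|$ and no use of $\mathsf{SvrQ}$ in this lemma.

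With that replacement, the rest of your plan goes through: the $\poly(\lambda)$ size of $S$ for Lemma~\ref{lem:oracle_resample}, the counter bound, and the privacy adversary computing $\mathsf{SvrQ}$.
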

\begin{proof}[Proof sketch]
The proof of this theorem follows the same line of reasoning as
Theorem~\ref{thm:dualpirmain} (in particular, the oracle-free encoding
assumption is used in the same places; see Claim~\ref{clm:replay}). Each of
Lemmas~\ref{lem:p},~\ref{lem:hint-counter},~\ref{lem:output-dist},
\ref{lem:same-query}, and~\ref{lem:dualpiradv} can be analogously
shown for the construction outlined above. Most of these do not change
significantly, such as bounding the counter, number of oracles needed, and
switching from one query to another. Lemma~\ref{lem:output-dist} changes
the most as we now must argue that the output distribution of the two
games are the same where $\bS$ is now computed from $\btx$ instead of output
by $\advice$. However, this change adds a single step to the proof to establish
that if $\btx$ is distributed identically to $\btx'$ then $(\btx,\bS)$ is
identically distributed to $(\btx',\bS')$. This follows from the straightforward
observation that $\bS \gets \mathsf{SvrQ}^{\Or}(\btx)$ is deterministic
and correct with probability 1.
\end{proof}

Next, we state the corollary, which is most analogous to
Theorem~\ref{thm:main-lower-bound} and follows from the above
Theorem~\ref{thm:comm-computable}
combined with Lemma~\ref{lem:dpir}.

\begin{corollary}\label{cor:comm-computable}
Relative to any crypto oracle $\Or$, any single-server
PIR with preprocessing scheme with an oracle-free encoding that is $\varepsilon$-correct, $\delta$-private,
and communication-determined
with $s$-bit client storage,
supporting $k = \Theta(s)$ queries
against computationally unbounded adversaries making $\poly(\lambda)$
queries to $\Or$, requires total communication $\Omega(n)$.
This holds for any choice of $s = \omega(\log(n + \lambda))$,
$\varepsilon = \Omega(2^{-s})$ and $\delta \le \varepsilon / 3$.
\end{corollary}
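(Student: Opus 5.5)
The proof mirrors the proof of \thref{thm:main-lower-bound}, with \thref{thm:comm-computable} in place of \thref{thm:dualpirmain}. We argue by contradiction. Suppose a communication-determined PIR with preprocessing and oracle-free encoding has $s$-bit client storage, supports $k = \Theta(s)$ queries, and uses total communication $t \le n/2$. Without loss of generality $k \ge 27s$, since a protocol for $k$ queries can be padded or batched to this size with only constant-factor loss. Since $k \le 0.19n$ is needed by \lemref{lem:dpir}, we restrict to that regime; this mirrors the hypothesis of \thref{thm:main-lower-bound}, and otherwise we shrink the batch. We also take $\varepsilon \ge 2^{-s+4}$ and $\delta \le \varepsilon/3$.

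Applying \thref{thm:comm-computable} with any fixed $\qrys'$ gives an $(n,k,t,3s)$-dual PIR with correctness
\[
\varepsilon' \ge \varepsilon - 2\delta - 2^{-s+1} - 2^{-n+1}.
\]
This is where the change from the main theorem matters. The advice is now only the transcript $\tx$, because both $\hint$ and $\recon$ recompute $S = \mathsf{SvrQ}^{\Or}(\tx)$. So the advice length is bounded by the communication alone, and the number of server oracle queries never enters.

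Next we verify that the counting step (the analogue of \lemref{lem:p}) still works. We need the number of possible advice values to be $2^{o(n)}$. Since $S$ is a deterministic function of $(\tx,\Or)$, we now sum over $(\tx, v)$ with $S$ determined by $\tx$ and $\Or$. The subtlety is that $S$ depends on $\Or$, not only on $\tx$. We handle this by summing over triples $(\tx, S, v)$ consistent with $S = \mathsf{SvrQ}^{\Or}(\tx)$. Fixing $\tx$ and conditioning on the answers $v$ pins down the queries of $\mathsf{SvrQ}$ adaptively. So for each $\tx$ the events $\{\Or(S)=v\}$ over consistent pairs $(S,v)$ form a partition, and $\sum_{(S,v)} \Pr[\Or(S)=v] \le 1$. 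The bound therefore becomes $2^{-3n}\cdot 2^{|\tx|} \le 2^{-n}$ whenever $|\tx| = o(n)$. I expect this adaptive-partition observation to be the main point needing care. The rest of the chain (Lemmas~\ref{lem:hint-counter} through~\ref{lem:dualpiradv}) carries over as described in the proof sketch of \thref{thm:comm-computable}.

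The remaining arithmetic is identical to the main proof. As before, $\varepsilon' \ge \varepsilon/3 - 2^{-s+2} \ge 2^{-s}$. \lemref{lem:dpir} applied with hint size $3s$ rules out correctness above $2^{3s-0.15k}$, and this is below $2^{-s}$ because $k \ge 27s$. This contradicts $t \le n/2$, so the total communication must be $t > n/2 = \Omega(n)$.
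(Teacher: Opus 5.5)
Your route is the paper's. \thref{thm:comm-computable} yields an $(n,k,t,3s)$-dual PIR whose advice is the transcript alone, and \lemref{lem:dpir}, with the arithmetic from the proof of \thref{thm:main-lower-bound}, gives the contradiction.

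Your extra check of the analogue of \lemref{lem:p} is worthwhile, because the paper's sketch passes over it. \lemref{lem:p} as written bounds the number of pairs $(\tx,S)$ by $2^{|\tx|+|S|}$ using $|S|=o(n)$, and that is not a hypothesis here. Your per-transcript bound $\sum_{(S,v)}\Pr[\Or(S)=v]\le 1$ is the right repair. However, justify it through part~(1) of Claim~\ref{clm:replay}, not through the queries of $\mathsf{SvrQ}$, which need not lie in $S$. The correct reason is this: if one oracle agreed with two pairs $(S,v)\neq(S',v')$ that both arise with the same $\tx$, then the single deterministic server execution on $(\edb,\tx_\client)$ under that oracle would have to produce both query/answer records, which is impossible.

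One step does not hold: your claim that $k\ge 27s$ is without loss of generality ``by padding or batching.'' Running parallel copies scales $k$ and $s$ together, and there is no generic way to raise the ratio $k/s$. The constant is not cosmetic: Appendix~\ref{ap:small-queries} notes schemes that beat the $\Omega(n/s)$ bound when $k$ is at most about $s$. As in the paper's proof of \thref{thm:main-lower-bound}, you must read $k=\Theta(s)$ as the hypothesis $k\ge 27s$. The condition $k\le 0.19n$ is likewise inherited from the main theorem; shrinking the batch recovers it only when $27s\le 0.19n$. With these taken as hypotheses, your argument is complete.
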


\heading{Comparison to ``database oblivious/independent'' PIR.}
Prior work~\cite{C:IshShiWic24} considered a definition of
``database-oblivious'' PIR. This is arguably a more restricted definition 
than Definition~\ref{def:determined} above though technically incomparable. 
Database-oblivious PIR requires both that the database is not encoded
and that all client messages depend only on their randomness and not on server
responses. Our definition makes neither of these restrictions. Instead, we only
require that the server's use of cryptography must be fully determined by the
transcript itself and therefore cannot depend on the entire database.

In fact, under our definition the server crypto oracles may depend on
the database if they appear in the communication transcript (so, they do not
need to be completely database-independent). In fact, this lower bound may
be generalized to the case where the server crypto oracles
depend on a sublinear subset of the database or to a function of the database
with output size $o(n)$. To our knowledge, this
generality does not widen the applicability of our lower bound
as, all known PIR schemes either make the server crypto
oracles completely independent of the database (such as those based on OWFs)
or the server crypto oracle queries depend on the entire database (such
as schemes using FHE).

\subsection{PIR with Perfect Privacy}
\label{sec:comm-perfect}
Our next communication lower bound applies to any PIR which is
\emph{perfectly private} against all adversaries with $\poly(\lambda)$ queries
to a specific crypto oracle $\Or$ and whose client-side $\query$
algorithm is deterministic given its state, its queries, the incoming server
messages, and $\Or$ (e.g., the state contains a seed $\mathsf{sd}$ and the
client derives its query randomness as
$\Or(\mathsf{sd} \mid\mid \cdot)$).\footnote{For example, the schemes
of~\cite{CCS:RenMugSun24,EC:HPPY25} derive all query randomness from the
oracle and key material in their state. More generally, any scheme can store
its query randomness in its state, preserving privacy exactly with a larger
state.}
Recall that perfect privacy is equivalent to our
traditional privacy when all adversaries have advantage $\delta=0$.
This assumption simplifies many of the
difficulties that arise in the proof of Theorem~\ref{thm:main-lower-bound}.
Despite this simplification, perfect privacy is not as strong of an assumption
as one might originally think. In particular, we require perfect privacy
only with respect to an \emph{idealized crypto oracle}. For example,
recent works, like~\cite{EC:HPPY25}, are perfectly secret relative to
a random oracle, derive their client query randomness from the random
oracle and key material in their state, and therefore this bound applies to
them. We also note that, in contrast to our previous lower bounds, this
result requires no oracle-free encoding restriction on $\init^{\Or}$, as the
dual PIR below never resamples crypto oracles.

The simplified dual PIR for this result is defined as follows, where
all algorithms have access to the crypto oracle $\Or$ but share no other
randomness.

\begin{enumerate}
\item $\advice_{\qrys'}^{\Or}(\db)$: %
    \begin{itemize}
    	\item Run $\edb \gets \init^{\Or}(1^{\secpar}, \db)$.
        \item Run $(\st;\bot)\gets\setup^{\Or}(1^\lambda ;\edb)$.
        \item Run $(\ans;\bot\mid \tx)\gets\query^{\Or}(\st,\qrys';\edb)$.
		\item Parse $\tx = (\tx_\client, \tx_\server)$
        \item Output $\adviceout \leftarrow \tx_\server$.
    \end{itemize}
\item
$\hint^{\Or}(\db, \qrys, \adviceout)$:
    \begin{itemize}
	\item Sample $\st \getsr \bits^s$
	\item Output $\hintout \leftarrow \st$.
    \end{itemize}
\item
$\recon^{\Or}(\qrys,\adviceout,\hintout)$:
    \begin{itemize}
	\item Parse $\adviceout = \tx_\server$.
	\item Parse $\hintout = \st$.
	\item Compute $(\ans \mid \tx_\client') \leftarrow \query^{\Or}_\client(\st, \qrys; \tx_\server)$,
    where the client derives its randomness from $\Or$ and the seed in $\st$.
	\item Output $\reconout \leftarrow \ans$.
    \end{itemize}
\end{enumerate}

Notably, this dual PIR only requires that the advice protocol output the
server-side transcript, meaning that we obtain a stronger lower bound in the
sense that it is a lower bound on the \emph{download} communication online,
regardless of how much upload communication the PIR uses. This also allows
our hint algorithm to avoid searching for consistent states and instead just
try and guess the state that was used in the initial phase directly. The
following theorem and subsequent corollary remove many of the more
technical parts of our main result.

\begin{theorem}\label{thm:dpir-perfect-privacy}
Suppose there is a PIR $\Pi = (\init^{\Or}, \setup^{\Or}, \query^{\Or})$ whose client-side $\query$ algorithm is deterministic given its state as above, supporting batch queries of size $k$
for databases of size $n$ with client state of $s$ bits, and the
download communication transcript is $t$ bits.
Furthermore, suppose $\Pi$ is $\varepsilon$-correct and perfectly private.

For any fixed query sequence $\qrys'$, our construction
is a $(n, k, t, s)$-dual PIR with $\varepsilon'$-correctness satisfying $\varepsilon' \ge \varepsilon \cdot 2^{-s}$.
\end{theorem}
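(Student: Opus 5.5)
Fix a database $\db$ and a query sequence $\qrys$. I will show that the dual PIR, run with $\advice_{\qrys'}$ and with $\hint$/$\recon$ receiving $\qrys$, succeeds with probability at least $\varepsilon\cdot 2^{-s}$. Averaging over uniform $\bdb,\bqrys$ then gives the claimed $\varepsilon'$-correctness. The sizes are immediate from the construction: the advice is $\tx_\server$, which is at most $t$ bits, and the hint is an $s$-bit state.

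\textbf{Step 1: reduce to the case $\qrys=\qrys'$.} First I compare the dual PIR to a hybrid in which $\advice$ is instead run with $\qrys$. In both experiments, $\hint$ ignores the advice and outputs a uniform $\st$, and $\recon$ is the same deterministic procedure of $(\qrys,\tx_\server,\st)$ and $\Or$. The success event is therefore a fixed test on $(\tx_\server,\Or)$, averaged over an independent uniform $\st$; concretely, $\db[\qrys]$ and $\qrys$ are fixed constants.

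Now suppose the success probabilities of the two experiments differed. Then the following adversary would break privacy. It receives the challenge transcript $\tx$ and keeps only $\tx_\server$. It samples a uniform $\st$ and runs $\query^{\Or}_\client(\st,\qrys;\tx_\server)$, which makes $\poly(\lambda)$ oracle queries. It outputs $1$ exactly when the result equals $\db[\qrys]$. This adversary has nonzero advantage against $\qrys_0=\qrys'$, $\qrys_1=\qrys$, contradicting perfect privacy with $\delta=0$.

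Here it matters that the joint distribution is over $(\tx,\edb,\Or)$ and that the adversary is allowed to query $\Or$. Because of this, no oracle resampling is needed, and $\init^{\Or}$ may freely use $\Or$.

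\textbf{Step 2: analyze the hybrid.} Let $\st^*$ be the honest state produced inside $\advice$. With probability at least $2^{-s}$, the guessed state equals $\st^*$, since $\st$ is uniform over $\bits^s$ and independent of everything else. More precisely,
\[
\Pr\bigl[\st=\st^*\bigr]=\sum_{x}2^{-s}\Pr[\st^*=x]=2^{-s},
\]
using the padding convention that states are exactly $s$ bits.

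On this event, $\recon$ replays the client deterministically. It uses the same state, the same queries, and the same oracle $\Or$, and the client derives its randomness from $\Or$ and the seed in $\st$. By induction over rounds, with the incoming server messages fixed to $\tx_\server$, each client message and the final answer coincide with those of the honest execution.

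Note that only download messages are needed. This is because the client's messages are a function of its state and the incoming messages, so this induction goes through.

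\textbf{Step 3: combine.} The honest answer is correct with probability at least $\varepsilon$. Independence of the guess from the honest execution gives success probability at least $\varepsilon\cdot 2^{-s}$ in the hybrid. Step 1 then transfers this bound to the dual PIR with $\qrys\neq\qrys'$.

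\textbf{Main obstacle.} The main obstacle is Step 1. The privacy definition gives the adversary the full transcript and $\edb$ but not $\st$. The argument works only because success is a function of $\tx_\server$, $\Or$, and fresh independent coins, so it must be checked that nothing correlated with the hidden honest state enters. The determinism of the client is exactly what guarantees this: the guessed state determines all of the client's randomness.
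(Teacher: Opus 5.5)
Your proposal is correct and matches the paper's proof. The paper likewise uses a privacy adversary that samples a uniform state and replays the deterministic client on the server messages, which shows that replacing $\advice_{\qrys'}$ by $\advice_{\qrys}$ does not change the success probability under perfect privacy; it then bounds the hybrid below by $\Pr[\st_{\hint} = \st_{\advice}] \cdot \varepsilon = \varepsilon \cdot 2^{-s}$, since $\recon$ reproduces the honest answer on that event. (For the direction you actually need, $p_0 \ge p_1$, the adversary that outputs $1$ on a correct reconstruction already suffices, because its success probability is $\frac{1}{2} + \frac{p_1 - p_0}{2}$.)
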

\begin{proof}
To show the above dual PIR is correct, we can argue much more directly compared
to our previous Theorem~\ref{thm:main-lower-bound}. In particular, we
first observe that perfect privacy implies that for any database $\db$
and two query sets $\qrys,\qrys'$, the probability that the above dual
PIR outputs the correct answer $\db[\qrys]$ is identical whether using $\btx$ output
from $\advice_{\qrys'}^{\Or}$ or from $\advice_{\qrys}^{\Or}$.
Indeed, a privacy adversary given $(\tx, \edb)$ can sample a
random state $\st \getsr \bits^s$, run the deterministic client-side replay
$\query^{\Or}_\client(\st, \qrys; \tx_\server)$ itself, and check whether
the output equals $\db[\qrys]$; this adversary makes $\poly(\secpar)$
queries to $\Or$, its execution with challenge bit $b$ is exactly the dual
PIR with $\advice$ parameterized by $\qrys_b$, and applying $\delta = 0$ to
it and its complement forces the two probabilities to be equal. (This is a
simplification of the argument given in the proof of
Lemma~\ref{lem:dualpiradv}.) In particular, we consider
the following two games:
\begin{enumerate}
    \item $\Game_0$: The dual PIR correctness game.
    \item $\Game_1$: The dual PIR correctness game replacing
    $\advice_{\qrys'}^{\Or}(\bdb)$ with $\advice_{\bqrys}^{\Or}(\bdb)$.
\end{enumerate}

Now, consider that the $\ans_{\advice}$ computed in $\advice_{\qrys}^{\Or}$ is
computed just as in the regular PIR game and therefore the probability
that it is equal to $\db[\qrys]$ is at least $\varepsilon$ by assumption. Further,
notice that $\query^{\Or}_{\client}$ is deterministic given
the state $\st$ (which contains the seed determining the client's randomness),
queries $\qrys$, oracle
$\Or$, and transcript $\tx_{\server}$. Further, conditioned on the event
that the state sampled in $\hint$, $\st_{\hint} = \st_{\advice}$, the output
of the $\recon$, $\ans_{\recon} = \ans_{\advice}$.
Let $p_i$ denote the probability that the event
\[
    \recon^{\Or}(\bqrys,\badviceout,\bhintout) = \bdb[\bqrys]
\]
occurs in $\Game_i$, then we have
\[
    p_0 = p_1 \ge \Pr[\st_{\hint} = \st_{\advice}] \cdot
    \Pr[\ans_{\recon} = \bdb[\bqrys] \mid \st_{\hint} = \st_{\advice}] \ge  \varepsilon\cdot 2^{-s},
\]
where the probabilities on the right are taken in $\Game_1$,
which establishes the claim that the proposed dual PIR is correct
with probability
$\varepsilon' \ge \varepsilon\cdot  2^{-s}$.
\end{proof}

We note that the proof degrades gracefully for $\delta$-private
schemes: the adversary above then has advantage
$|p_0 - p_1|/2 \le \delta$, yielding a dual PIR with
$\varepsilon' \ge \varepsilon \cdot 2^{-s} - 2\delta$. Hence, the corollary
below also holds for $\delta$-private schemes with
$\delta \le \varepsilon \cdot 2^{-s-2}$, though this tolerance is
necessarily exponentially small.

Next, we state the corollary, which is most analogous to
Theorem~\ref{thm:main-lower-bound} and follows from the above
Theorem~\ref{thm:dpir-perfect-privacy}
combined with Lemma~\ref{lem:dpir}.

\begin{corollary}\label{cor:perfect-privacy}
Relative to any crypto oracle $\Or$, any single-server
PIR with preprocessing scheme with a deterministic client-side $\query$ algorithm as above that is $\varepsilon$-correct and perfectly private,
with $s$-bit client storage,
supporting $k = \Omega(s)$ queries
against computationally unbounded adversaries making $\poly(\lambda)$
queries to $\Or$, requires total download communication $\Omega(n)$.
This holds for any choice of $s = \omega(\log(n + \lambda))$
and $\varepsilon = \Omega(2^{-s})$.
\end{corollary}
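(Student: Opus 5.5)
This corollary follows by combining \thref{thm:dpir-perfect-privacy} with \lemref{lem:dpir}, in the same way the proof of \thref{thm:main-lower-bound} combines \thref{thm:dualpirmain} with \lemref{lem:dpir}.

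The argument is by contradiction. Suppose a perfectly private, $\varepsilon$-correct scheme exists with a deterministic client-side $\query$, $s$-bit state, and total download communication $t \le n/2$. Fix any $\qrys'$. By \thref{thm:dpir-perfect-privacy}, the simplified dual PIR above is an $(n,k,t,s)$-dual PIR with correctness $\varepsilon' \ge \varepsilon \cdot 2^{-s}$. Since $\varepsilon = \Omega(2^{-s})$, we may write $\varepsilon \ge 2^{-c_0 s}$ for a constant $c_0$ (indeed $\varepsilon \ge 2^{-s-O(1)}$). Hence $\varepsilon' \ge 2^{-(c_0+1)s}$.

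Next we apply \lemref{lem:dpir}. Its hypotheses are $t < n/2$, $n$ sufficiently large, and $1000\log n \le k \le 0.19n$. The lemma then says no such dual PIR is correct with probability exceeding $2^{s-0.15k}$. We therefore need $2^{-(c_0+1)s} > 2^{s-0.15k}$, i.e.\ $0.15k > (c_0+2)s$. This holds once $k \ge C s$ for a sufficiently large constant $C$, which is the usual reading of $k=\Omega(s)$ in the paper (the main proof similarly takes $k \ge 27s$).

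The hypothesis $s = \omega(\log(n+\lambda))$ makes $k \ge Cs \ge 1000\log n$ for large $n$. As in the main theorem, we also restrict to $k \le 0.19n$; otherwise we run the protocol on a sub-batch of $\min(k,0.19n)$ queries, which still exceeds $Cs$ because $s \le n$ in any nontrivial regime. The contradiction then forces $t \ge n/2 = \Omega(n)$ download bits. Since the dual PIR's advice is only $\tx_\server$, the bound applies to download communication specifically.

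No oracle-free encoding assumption is needed. The dual PIR never resamples oracles: $\advice$ runs $\init^{\Or}$ honestly, and $\recon$ replays the client with the real $\Or$. Its correctness therefore follows from perfect privacy via the adversary argument already given, and \lemref{lem:dpir} holds relative to any oracle independent of $\db$ and $\qrys$.

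The main obstacle is bookkeeping of constants. The $2^{-s}$ loss from guessing the state, together with $\varepsilon = \Omega(2^{-s})$, costs about $2s$ in the exponent, and this must be dominated by $0.15k$. That requires interpreting $k=\Omega(s)$ as $k \ge Cs$ with $C$ large enough (about $14$ here, since the hint has size $s$ rather than $3s$). We will state this explicitly, exactly as the main proof does.
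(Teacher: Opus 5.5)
Your proposal is correct and follows the paper's route exactly: the corollary comes from feeding the $(n,k,t,s)$-dual PIR of Theorem~\ref{thm:dpir-perfect-privacy} into Lemma~\ref{lem:dpir}, implicitly under $k \le 0.19n$ as in Theorem~\ref{thm:main-lower-bound} (that dual PIR has advice $\tx_\server$ only, correctness $\varepsilon' \ge \varepsilon\cdot 2^{-s}$, and never resamples the oracle, so no oracle-free encoding is needed). There are two small slips: first, the $2s$ loss in the exponent plus the $s$ in the bound $2^{s-0.15k}$ forces $0.15k > 3s + O(1)$, i.e.\ $k > 20s + O(1)$ rather than about $14s$, as your own inequality $0.15k > (c_0+2)s$ already says; second, the sub-batch aside is unjustified, because $s \le n$ does not give $0.19n \ge Cs$, and in fact for $s \approx n/2$ a client that stores half of $\db$ plus a seed and pseudorandomly guesses the other half has zero download and correctness at least $2^{-s}$, so $k \le 0.19n$ must simply remain a hypothesis.
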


\section{Symmetric PIR without Online Public-Key Cryptography}\label{sec:spir}
In this section, we study symmetric PIR 
(SPIR) with client-side preprocessing focused on online efficiency,
which to our knowledge has only been studied in one concurrent work
\cite{EPRINT:LWZW25}.\footnote{The work of \cite{EPRINT:LWZW25}
requires two servers. Moving our main construction's 
offline computation to
a second server can match their construction's parameters
up to polylog factors, but will only obtain computational
rather than statistical security.}
We begin with an overview of how this direction relates to prior works on 
SPIR, and we extend our lower bound to this setting. 
We then give both simple and complex constructions of SPIR with
no online public-key operations,
the best of which match the online communication and time of the best PIR constructions.

Recall that SPIR is equivalently formulated as efficient 
$1$-out-of-$n$ Oblivious Transfer (OT). 
And, it is well known from~\cite{STOC:ImpRud89} that
OT cannot be built from one-way functions in a blackbox way. 
Regardless of the efficiency limitations,
this separation also applies to any SPIR with preprocessing as well,
since we know that across the whole protocol an OT has been performed.
This seemingly makes it impossible to base SPIR on one-way functions.

However, there is still an asymmetry between the offline and online phases
of a PIR/SPIR protocol. 
And, since using public-key operations is generally more expensive than symmetric-key ones (like OWFs),
we would ideally only rely on heavy computation in the offline phase.
This is a problem that has been theoretically and practically considered in the
context of OT-extensions,
which allow two parties to run a few OTs offline and then
use symmetric-key primitives online to quickly perform even more OTs~\cite{STOC:Beaver96a,C:IKNP03,CCS:ALSZ13,C:KolKum13,EC:ALSZ15}.

\subsection{Formal model}
The primary difference from other 
versions of secure two-party computation (2PC) with preprocessing is that
we wish to build  SPIR with \emph{no per-client storage}
and \emph{no online public-key operations}.

\heading{Shared offline/online server randomness.}
In our SPIR syntax, the server obtains a random string
$R$ that is the same between the offline and online phases.
As we show later in \secref{sec:spir-limits}, this shared randomness is
necessary. In particular, any perfectly private
SPIR with preprocessing supporting $k$ queries
must allow the server to maintain at least $k$-bits of shared randomness
between the offline and online phases.\footnote{For computational privacy,
the servers can share $O(\secpar)$ bits of randomness which
can be expanded with a pseudorandom generator.}
If we allow randomized encodings, we could have instead fit this into an
$\init$ procedure, but for conceptual simplicity, we remove the $\init$
function. Fortunately, our lower bounds apply to any PIR whose encoding
procedure makes no crypto oracle queries, and the encoding induced by a
SPIR (the database $\db$ stored together with the plain uniform randomness
$R$) requires no oracle access. Hence, our lower bounds will similarly apply to any SPIR with preprocessing.

\heading{Per-client storage.}
Beyond a shared random string or encoding, we could allow the server to store
its own state from interacting with the client in the offline phase.
However, in a setting with many clients this
would be impractical as the server storage grows with the size of the user base.
Additionally, if both the client and server are allowed a state,
then the (computational version) of the
problem is exactly efficient oblivious transfer with preprocessing (or 2PC with preprocessing
more generally), which has been studied in prior works~\cite{C:CDGGMP17,C:DotGar17,EC:DKLLMR23}.
But, those works do not capture the same asymmetric preprocessing
between the sender and receiver that is present between the client and server in SPIR.

\heading{Adversarial model.}
We only prove our constructions secure against semi-honest servers and
clients.
We make this simplifying
assumption because our schemes are primarily proofs of concept,
relying heavily on generic 2PC. 
Using some generic 2PC which is secure against malicious adversaries,
we could have maliciously secure SPIR with client-side preprocessing (for at least our toy constructions);
however, there are known barriers for generic maliciously-secure 2PC
(see \cite{STOC:GolMicWig87,EC:CanKusLin03,EC:LinPin07,C:HazVen16} and the references therein).
Therefore, we leave maliciously-secure SPIR with preprocessing as an interesting
open problem for future work.

\begin{definition}[Symmetric Private Information Retrieval]
\label{def:spir}
A $k$-batch 
\emph{symmetric private information retrieval with preprocessing scheme (SPIR)}
is a pair of efficient two-party protocols $\spir=(\setup,\query)$ parameterized
	by a database size $n$, batch size $k$, download $t$,
	state size $s$, and randomness $r$ with the following syntax:\footnote{Note that $k,t,s,r$
		may be functions of $\lambda$ and $n$; however, we omit this
		notation for simplicity.}
	\begin{itemize}
		\item $\setup(1^\secpar;1^\secpar,\db,R)\to(\st;\bot\mid\txoff)$,
        where the client and server receive as input the security parameter $\secpar$,
        and the server additionally receives a database $D\in\bits^n$ 
        and private randomness $R\in\bits^r$.
		At the end of the protocol,
        the client outputs a state $\st\in\bits^s$,
        the server receives no output,
        and $\txoff$ is the transcript of the interaction.
        
		\item $\query(\st,\qrys;\db,R)\to(\ans;\bot\mid\txon)$,
        where the client takes its precomputed state $\st$ and 
        an ordered set of queries
		$\qrys\in [n]^k$, and the server receives 
        a database $\db$ and private randomness $R\in \bits^r$.
		At the end of the protocol, the client outputs $\ans \in \bits^k$
		and the server outputs nothing.
        We denote the client messages $\tau_u \in \bits^*$ and the
		server messages $\tau_d\in \bits^t$, so the transcript of this entire protocol
		is $\txon=(\tau_u,\tau_d)$.
	\end{itemize}
	
	A SPIR is \emph{$\varepsilon$-correct} if for all sufficiently large $\lambda$,
	tuples $\qrys\in [n]^k$, and $\db \in \bits^n$,
	\[
	\Pr%
	\left[\query(\bst, \qrys ; \db, \bR) = (\db[\qrys]; \bot) ~\bigg|~
	    \substack{
	    	\bR \getsr \bits^r\\
			(\bst;\bot\mid\txoff) \getsr \setup(1^{\lambda} ; 1^{\lambda},\db,\bR)
		}
	\right]
	\ge \varepsilon.
	\]
	
	A SPIR is \emph{$\delta$-private} if for all sufficiently large $\lambda$,
	tuples $\qrys_0,\qrys_1 \in [n]^k$, efficient
	adversaries $\A$, and $\db \in \bits^n$,
	\[
	\Pr
	\left[\A(1^{\lambda},\db,\bR,\btxoff,\btxon) = b~\bigg|~
	\substack{
		\bR \getsr \bits^r ~;~ b \getsr \bits\\
		(\bst;\bot\mid \btxoff) \getsr \setup(1^{\lambda} ; 1^{\lambda},\db,\bR)\\
		(\bans;\bot \mid \btxon) \getsr  \query(\bst,\qrys_b;\db,\bR)
	}\right]
	\le \frac{1}{2} + \delta.
	\]
	
	A SPIR is \emph{$\gamma$-data private}
	if for all sufficiently large $\lambda$,
	tuples $\qrys \in [n]^k$,
	databases $\db_0, \db_1 \in \bits^n$ with $\db_0[\qrys] = \db_1[\qrys]$,
	efficient adversaries $\A$,
	\[
	\Pr
	\left[\A(1^{\lambda},\qrys,\bst, \btxoff, \btxon) = b~\bigg|~
	\substack{
		\bR \getsr \bits^r ~;~b \getsr \bits\\
		(\bst;\bot \mid\btxoff) \getsr \setup(1^{\lambda} ; 1^{\lambda},\db_b,\bR)\\
		(\bans;\bot \mid\btxon) \getsr  \query(\bst,\qrys;\db_b,\bR)
	}\right]
	\le \frac{1}{2} + \gamma.
	\]
	
One may also consider SPIR relative to an oracle $\Or$, as
was done in Definition~\ref{def:pir}. We omit this notation to
avoid redundancy. Notably, security in such a model
only requires that adversaries $\A$ submit only polynomially-many
queries to $\Or$.

A SPIR scheme  is called \emph{perfectly} correct, private, and data private when
$\varepsilon=1$, $\delta=0$, and $\gamma=0$, respectively.
\end{definition}

\subsection{Limitations of SPIR without Public-Key Cryptography}
\label{sec:spir-limits}

In this section, we briefly discuss limitations of SPIR with preprocessing.
Obviously, the bounds in previous sections will still apply to SPIR with
preprocessing, since they apply to any PIR with an oracle-free encoding
procedure and the encoding induced by a SPIR
similarly makes no oracle queries (i.e., sharing
plain randomness for the server doesn't circumvent our
lower bounds).

Beyond these lower bounds, however, the introduction of
data privacy allows us to prove some limitations on the capability of SPIR
with no online public-key operations.
First, we show that any scheme with $s$-bit client storage may support at
most $k \le s$ queries in the random oracle model. Secondly,
we show that SPIR supporting $k$ queries must allow the server at least
$k$-bits of shared randomness between the offline and online phases.
These are presented in the context of perfectly private and
data private SPIR to simplify the proofs as they are meant just to better
explain the limitations of our own construction.

\heading{Limits on Number of Queries.}
Here, we show that one can use data privacy to simplify and potentially
improve over our general lower bounds in previous sections.
In particular, we show a simple proof that perfectly private and data
private SPIR with $s$ bits of preprocessing can only correctly query
$k \le s$.
Specifically,
Theorem~\ref{thm:spir} below shows that there is no 
SPIR relative to a random oracle $\RO$ which is perfectly private and
data private against adversaries with $\poly(\lambda)$ queries
to $\RO$, correct, and supports
$k>s$ queries — no matter how much communication is allowed.

Unfortunately, this result shows that it is impossible to construct SPIR
with preprocessing for an unbounded number of queries without introducing
online public-key operations and using blackbox OWFs.
Even if one were to allow $\Omega(n)$ communication!
This is a key difference from the PIR setting, where there
are constructions which can support an unbounded number of queries 
by using $\Omega(n)$ communication to continuously ``refresh'' hints without
public-key operations.
In our main construction, we show how to actually match this bound
up to $\polylog(\lambda,n)$ factors with sublinear per-query computation.

\begin{theorem}[Batch SPIR limitations]\label{thm:spir}
	Let $s < k \le n$ and $\varepsilon > 2^{s-k}$. In the random
	oracle model,
	there does not exist a $k$-batch SPIR scheme with $s$-bit client storage that is
	$\varepsilon$-correct, perfectly private, and perfectly data private.
\end{theorem}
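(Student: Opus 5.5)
Assume such a SPIR exists. We will derive a compression contradiction: the $k$ bits $\db[\qrys]$ of a uniformly random database would have to be predictable from an $s$-bit string with probability above $2^{s-k}$. The paper says these limits use compression rather than dual PIR, so the plan is a direct counting argument.

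\textbf{Key steps.} Fix $\qrys=(1,\ldots,k)$ (distinct indices) and take $\bdb$ uniform. Consider the real execution: $\bR\getsr\bits^r$, $(\bst;\bot\mid\btxoff)\gets\setup(1^\secpar;1^\secpar,\bdb,\bR)$, then $(\bans;\bot\mid\btxon)\gets\query(\bst,\qrys;\bdb,\bR)$.

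1. By $\varepsilon$-correctness, $\bans=\bdb[\qrys]$ with probability at least $\varepsilon$.
2. Use perfect data privacy to show that $\bst$ carries essentially all the information about $\bdb[\qrys]$. Perfect data privacy says that the joint distribution of $(\bst,\btxoff,\btxon)$ depends on $\db$ only through $\db[\qrys]$. The decisive move is then to re-simulate the server in the online phase. Given only $\bst$ and the public offline transcript, a predictor does the following:
   - it samples a fake database $\db'$, uniform outside $\qrys$ with arbitrary values on $\qrys$;
   - it samples randomness $R'$ conditioned on the offline view being consistent;
   - it runs $\query(\bst,\qrys;\db',R')$ as both parties.
3. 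The predictor must in fact avoid even the offline transcript. We handle this by noting the transcript may be absorbed: the predictor only needs the client's state to run its side.

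\textbf{Main obstacle.} The hardest step is justifying that the simulated server's answers, computed with a wrong database, still yield the correct $\db[\qrys]$. I would argue this via a hybrid. By perfect data privacy, the true execution on $\db$ and an execution on $\db_1$ with $\db_1[\qrys]=\db[\qrys]$ but otherwise arbitrary are identically distributed jointly with the state. Because the server's online computation is a deterministic function of its inputs and the client's messages, and the oracle is a random oracle that the predictor can lazily simulate (polynomially many queries suffice), correctness of the simulated run with probability $\ge\varepsilon$ transfers. However, the real server needs $\db[\qrys]$, which the predictor lacks. So instead I would enumerate all $2^k$ candidate values $y$ for $\db[\qrys]$ and count how many of them yield the correct output.

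The counting goes as follows. Perfect (query) privacy plus data privacy imply that the state together with the online download determines the answer. This lets us define a map from $(\st,\text{guess})$ to answers. The predictor outputs $\query_\client(\bst,\qrys;\tau_d)$, where $\tau_d$ is obtained by simulating the server with a guessed consistent database. Equivalently, we bound
\[
\Pr\big[\text{predict }\bdb[\qrys]\text{ from }\bst\big]\ge\varepsilon\cdot 2^{-0}
\]
and compare it against the min-entropy bound $2^{-(k-s)}$ from \lemref{lem:cond-minh}: $\bdb[\qrys]$ has $k$ bits of entropy, and $\bst$ has support of size $2^s$. This contradicts $\varepsilon>2^{s-k}$.

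I expect the main difficulty to be making this simulation rigorous. The plan is to rely on perfect data privacy to replace $\db$ with $\db'$ that agrees on $\qrys$. If the guess is uncertain, the predictor instead guesses the online server messages directly, and that guess must be shown not to cost extra entropy.
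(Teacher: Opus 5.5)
There is a genuine gap: your predictor never obtains the online download $\tau_d$ in a way that is both faithful to the real execution and independent of $\bdb[\qrys]$, and that is the whole proof. Simulating the server with a fake $(\db',R')$ does not work. If $\db'$ is arbitrary on $\qrys$, the client's output need not equal $\bdb[\qrys]$. If you sample $(\db',R')$ from the posterior given the client's view, the replay is distributed like the real execution, but it then outputs $\db'[\qrys]$ with probability at least $\varepsilon$. That is only a posterior sample, and nothing makes it equal the true $\bdb[\qrys]$ with probability $\varepsilon$. Enumerating all $2^k$ candidate values gives no way to select the right one without $k$ further bits, which is exactly what the counting forbids. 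The claimed bound $\Pr[\text{predict }\bdb[\qrys]\text{ from }\bst]\ge\varepsilon$ is therefore never established.

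More fundamentally, predicting from $\bst$ alone is the wrong frame. Data privacy only says the client's view depends on $\db$ through $\db[\qrys]$, so conditioned on $\bst$ the download can be correlated with $\bdb[\qrys]$, and no predictor can sample it from $\bst$. Also, ``privacy implies the state and download determine the answer'' is just the syntax of the client, not a consequence of privacy.

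The missing idea is the one you gesture at in your last sentence, but it needs its justification and a reversal of the sampling order. Perfect data privacy gives that, for each fixed $\qrys$, $\btxon$ depends on $\db$ only through $\db[\qrys]$. Perfect query privacy gives that it does not depend on $\qrys$. Together these make $\btxon$ independent of $\bdb$: for databases differing in one coordinate, pick a query tuple avoiding it and chain single-coordinate changes, as in the proof of \lemref{lem:spir}. Hence $\btxon$, together with $\RO$ and the client's coins, can be treated as shared randomness sampled before the database.

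The compression argument is then as follows. The encoder, who knows $\db$, picks an $s$-bit state consistent with that fixed transcript. It can sample from the conditional distribution, so that $(\bdb,\btxon,\bst)$ is distributed as in the real execution, or take any state whose replay outputs $\db[\qrys]$. The decoder replays $\query_{\client}$ on $(\st,\tau_d)$ and succeeds with probability at least $\varepsilon$. For fixed shared randomness the decoder has at most $2^s$ outputs, so averaging over uniform $\bdb[\qrys]$ gives success at most $2^{s-k}$, a contradiction. In your min-entropy language, the correct inequality is $\minH(\bdb[\qrys]\mid\btxon,\bst)\ge\minH(\bdb[\qrys]\mid\btxon)-s=k-s$, and the last equality is exactly where independence of $\btxon$ from $\bdb$ is used.
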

\begin{proof}
	We could prove this result using our dual PIR paradigm,
	but in fact, we can provide the proof more directly.
	From perfect privacy and data privacy,
	we know that the
	messages between the client and server in the online phase are
	independent of the indices
	queried $\qrys$  and of the database $\db$. This means any party can
	sample a random
	variable $\txon$ from this distribution for a specific random oracle $\RO$
	without knowledge of either $\qrys$ nor $\db$.
	
	This gives a natural compression scheme for $k$ bits into $s$ bits given some common random
	string, which we interpret as a random oracle $\RO$, client query randomness
	and a sample of $\txon$ (for $\RO$, without knowledge of $\db$ nor $\qrys$).
	Specifically, a sender can set the bits of an otherwise
	random database $\db$ at a canonical set $\qrys$. Then, the sender will sample some state $\st$ 
	such that $\query^{\RO}(\st, \qrys) = \db[\qrys]$ when using the transcript $\txon$ for the server's
	replies. Finally, they send $\st$ ($s$ bits) to the other party. If no such state exists, then fail.
	
	Then, the receiver will run $\query^{\RO}(\st, \qrys)$ themselves to recover the $k$ bits.
	Here we assume that the client side of $\query$ runs with the same random coin tosses
	as the sender, which can be done
	without loss of generality by putting the client's randomness in the common random string.
	Notice that, if the underlying
	SPIR scheme is correct with probability $\varepsilon$, then this scheme will also be correct
	with probability at least $\varepsilon$ (over the choice of client randomness, $\txon$, and $\RO$).
	
	Note there are at most $2^{s}$ possible outputs of the first phase. For any fixed randomness,
	the receiver can only output at most $2^{s}$ strings, so at most $2^{s}$ of the $2^{k}$ possible
	values of $\db[\qrys]$ are recovered correctly. Since correctness must hold for every database $\db$,
	averaging over all $2^k$ values of $\db[\qrys]$ shows that the scheme is correct with probability
	at most $2^{s}/2^{k} = 2^{s-k}$.
\end{proof}

\noindent{\bf Necessity of Shared Randomness.}
We observe that any SPIR with client-side preprocessing can be transformed into a
(non-simultaneous) 2-server
SPIR by performing the preprocessing at one server and the only queries at the other.
Furthermore, in \cite{STOC:GIKM98}, a short argument is given that any multi-server SPIR system,
with information-theoretic data privacy, must have shared randomness. 
We review and strengthen this argument to show that any 2-server
SPIR requires some minimum amount of randomness, 
with an argument that fully relativizes.
Therefore, any SPIR with preprocessing construction necessitates some amount of shared randomness to avoid
public-key operations in the online phase.
We start by extending the argument in~\cite{STOC:GIKM98} as follows:

\begin{lemma}\label{lem:spir}
	There is no $2$-server SPIR (without preprocessing) which is perfectly correct, private, and data private
	and supports $k < n$ queries where the random string $\mathbf{R}$ shared by the two servers (beyond the database $\db$) has entropy $H(\mathbf{R}) < k$.
\end{lemma}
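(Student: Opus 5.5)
We argue by a counting/entropy argument in the spirit of \cite{STOC:GIKM98}. Write the two servers as $S_1,S_2$ sharing $(\db,\mathbf{R})$, and the client's (randomized) query $\qrys\in[n]^k$. By perfect privacy, the distribution of each server's view (its received messages) is independent of $\qrys$; by perfect data privacy, the joint transcript $\btx$ is identically distributed for all $\db,\db'$ with $\db[\qrys]=\db'[\qrys]$. By perfect correctness, the client's output $\db[\qrys]$ is a function of $\btx$ together with its own randomness.

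First, fix a client randomness $\omega$ and a query $\qrys$ (take $\qrys=(1,\dots,k)$ with distinct indices, possible since $k<n$, and pick an index $j\notin\qrys$). Consider the client's messages $m_1$ to $S_1$ and $m_2$ to $S_2$. By single-server privacy, the message $m_1$ paired with a server-$2$ message $m_2'$ that the client would send when querying a \emph{different} tuple $\qrys'$ (one containing $j$) is also a legal pair of messages that could occur (each marginal has full support as the queries vary). The key step is a mix-and-match: the answers $a_1=S_1(m_1,\db,R)$ and $a_2'=S_2(m_2',\db,R)$ are deterministic functions of $(\db,R)$. If $R$ were fixed (zero entropy), then from $(a_1,a_2)$ and $(a_1,a_2')$ a client could reconstruct $\db[\qrys]$ and $\db[\qrys']$ simultaneously, but the transcript $(m_1,m_2,a_1,a_2)$ must, by data privacy, be independent of $\db[j]$, while $a_1$ combined with $a_2'$ determines $\db[j]$; since $a_1$ alone must then be independent of $\db[j]$ and of $\db$ outside $\qrys$, and symmetrically $a_2$ too, one derives that $(a_1,a_2)$ must carry information that is hidden from each server's answer individually only through the shared $R$.

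To make this quantitative, I would show $H(\mathbf{R})\ge k$ as follows. Condition on a fixed $\db$ restricted outside $\qrys$ and let $\db[\qrys]$ be uniform over $\bits^k$. Perfect correctness gives $H(\db[\qrys]\mid \bA_1,\bA_2)=0$ (client messages fixed). Now choose $\qrys'$ disjoint from $\qrys$ in at least one position, and via mix-and-match argue that $\bA_1$ is also a valid first answer in a transcript for a query avoiding $\qrys$ entirely (for $k<n$ use shifting one coordinate at a time); data privacy for that transcript forces $I(\bA_1;\db[\qrys])=0$, and similarly $I(\bA_2;\db[\qrys])=0$. Since $\bA_1,\bA_2$ are functions of $(\db,\mathbf{R})$, we get $k=I(\db[\qrys];\bA_1\bA_2)\le I(\db[\qrys];\bA_2\mid\bA_1)\le H(\bA_2\mid\bA_1,\db)+\dots$, and ultimately the standard one-time-pad argument: given $\db$, the pair is a function of $\mathbf{R}$, so the secret-sharing of $k$ bits of entropy between two individually independent shares requires $H(\mathbf{R})\ge k$ (Shannon's bound applied to $\bA_1$ as the ``ciphertext'' and $\bA_2$ as ``key'').

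The main obstacle is the mix-and-match step: establishing that each individual server answer is independent of $\db[\qrys]$ requires that the same client message to one server also arises for a query avoiding some index, and then using data privacy for the \emph{other} query. I'd handle it one coordinate at a time (requiring only $k<n$), replacing each index of $\qrys$ by an unqueried index and invoking privacy for $S_1$ to preserve $m_1$'s distribution, so that $\bA_1$ reveals nothing about that coordinate; chaining over the $k$ coordinates gives full independence.
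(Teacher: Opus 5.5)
Your overall route is the paper's. You show that each server's answer is independent of the database by combining one-server privacy with data privacy for a query avoiding a single coordinate, chaining over coordinates (this is where $k<n$ enters). Correctness then makes the two answers a secret sharing of the $k$ uniform bits $\db[\qrys]$, and the two shares can only be correlated through $\bR$.

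Your per-message form of the independence step is sound. Privacy says the fixed $m_1$ also occurs under some query $\qrys_x$ with $x\notin\qrys_x$; data privacy for $\qrys_x$, conditioned on the message being $m_1$, then shows that the law of server~1's answer to $m_1$ is unchanged by flipping $\db[x]$. This is somewhat cleaner than the paper's factorization argument for $\btx_1$ given $\bomega$.

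The mix-and-match of \emph{pairs} in your second paragraph, however, is false. Privacy constrains only each server's marginal, not the joint law of $(m_1,m_2)$, so $(m_1,m_2')$ need not occur in any honest run. For example, in the XOR-based two-server scheme $m_1=T$ and $m_2=T\triangle\{q\}$, and a mixed pair is almost never of this form. Nothing later needs this claim, so drop it.

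The genuine gap is the last step. You treat the servers as deterministic (``$a_1=S_1(m_1,\db,R)$'', ``given $\db$, the pair is a function of $\bR$''). Under that assumption you are done, since $H(\bR)\ge H(\mathbf{A}_2\mid\db[\qrys])=H(\mathbf{A}_2)\ge k$ by Shannon's bound. But the model, and the paper's proof, allow each server private coins, and then this inference breaks. Shannon's bound still gives $H(\mathbf{A}_2)\ge k$, yet server~2 could append private random bits, so $H(\mathbf{A}_2\mid\db)$ need not be bounded by $H(\bR)$. Nor can you simply fix the coins to reduce to the deterministic case: that can destroy the data privacy you used to get $\mathbf{A}_i\perp\db[\qrys]$.

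What you need is conditional independence of the answers given $(\db,\bR)$, not determinism. With $\omega$ and $\db$ outside $\qrys$ fixed,
\begin{align*}
k &= I(\db[\qrys];\mathbf{A}_2\mid \mathbf{A}_1) \le I(\db[\qrys],\mathbf{A}_1;\mathbf{A}_2) = I(\mathbf{A}_1;\mathbf{A}_2\mid \db[\qrys])\\
&\le I(\mathbf{A}_1;\bR\mid \db[\qrys]) + I(\mathbf{A}_1;\mathbf{A}_2\mid \db[\qrys],\bR) \le H(\bR).
\end{align*}
Here the first equality uses correctness and $\mathbf{A}_1\perp\db[\qrys]$, and the second uses $\mathbf{A}_2\perp\db[\qrys]$. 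The last mutual-information term vanishes because, given $(\db,\bR)$ and the fixed messages, the answers depend only on the servers' independent private coins. This is essentially the paper's chain.

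You also fix both client messages from $\omega$, which assumes a single non-adaptive round. The paper lets $m_2$ depend on $\btx_1$ and handles this by conditioning on $\mathbf{m}_2$, using that server~2's whole view $(\mathbf{m}_2,\mathbf{a}_2)$ is independent of $\db$.
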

\begin{proof}
	Fix any set $\qrys$ of $k$ queries, let $\bdb$ be a uniformly random database,
	let $\bomega$ denote the client's private randomness, and let $\btx_1$ and $\btx_2$
	denote the transcripts with the first and second server, respectively.
	Each server's messages are a function of its own transcript so far, $\bdb$, $\mathbf{R}$,
	and its private randomness, while each client message is a function of $\bomega$
	and the messages the client has received so far. The argument below holds
	even relative to an independent oracle $\Or$.

	First, we claim that each $\btx_i$ is independent of $\bdb$.
	By perfect privacy, the distribution of $\btx_i$ does not depend on the queried indices $\qrys$ and,
	by perfect data privacy, it depends on $\bdb$ only through $\bdb[\qrys]$.
	Hence, for any two databases that differ in a single position $x$, choosing a set of $k < n$ queries
	that avoids $x$ shows that $\btx_i$ has the same distribution under both databases, and
	chaining single-bit changes shows that the distribution of $\btx_i$ is the same for every database.
	Moreover, $\btx_1$ remains independent of $\bdb$ conditioned on $\bomega$: the probability of a
	transcript $\btx_1$ factors into a client part that depends only on $\bomega$ and a server part that
	depends only on $\bdb$, $\mathbf{R}$, and the first server's randomness, and the independence above
	shows that the server part does not depend on $\bdb$ for any transcript in the support.

	Second, by perfect correctness the client recovers $\bdb[\qrys]$ from $(\bomega, \btx_1, \btx_2)$,
	and $\bdb[\qrys]$ is uniform over $\bits^k$, so $I(\btx_1\btx_2 ; \bdb \mid \bomega) \ge H(\bdb[\qrys]) = k$.
	Combining this with the first claim,
	\begin{align*}
		k &\le I(\btx_1\btx_2 ; \bdb \mid \bomega)\\
		&= I(\btx_1 ; \bdb \mid \bomega) + I(\btx_2 ; \bdb \mid \btx_1, \bomega)\\
		&= I(\btx_2 ; \bdb \mid \btx_1, \bomega).
	\end{align*}

	Finally, we show that $I(\btx_2 ; \bdb \mid \btx_1, \bomega) \le H(\mathbf{R})$, which completes the proof.
	Let $\mathbf{m}_2$ and $\mathbf{a}_2$ denote the client's messages and the second server's responses in $\btx_2$.
	Since $\mathbf{m}_2$ is determined by $(\bomega, \btx_1)$ and the earlier responses in $\mathbf{a}_2$,
	we have $I(\btx_2 ; \bdb \mid \btx_1, \bomega) = I(\mathbf{a}_2 ; \bdb \mid \btx_1, \bomega)$.
	We give the argument for a single round of interaction with the second server, so that $\mathbf{m}_2$
	is a function of $(\bomega, \btx_1)$; the general case follows by applying the same argument to each
	response in turn using the chain rule. We have
	\begin{align*}
		I(\mathbf{a}_2 ; \bdb \mid \btx_1, \bomega)
		&\le I(\mathbf{a}_2 ; \bdb, \btx_1, \bomega \mid \mathbf{m}_2)\\
		&= I(\mathbf{a}_2 ; \bdb \mid \mathbf{m}_2) + I(\mathbf{a}_2 ; \btx_1, \bomega \mid \bdb, \mathbf{m}_2)\\
		&\le I(\mathbf{a}_2 ; \mathbf{R} \mid \bdb, \mathbf{m}_2)\\
		&\qquad + I(\mathbf{a}_2 ; \btx_1, \bomega \mid \bdb, \mathbf{m}_2, \mathbf{R})\\
		&\le H(\mathbf{R}).
	\end{align*}
	The first inequality uses that $\mathbf{m}_2$ is determined by $(\btx_1, \bomega)$, and the second
	uses that $I(\mathbf{a}_2 ; \bdb \mid \mathbf{m}_2) = 0$, since the second server's view
	$(\mathbf{m}_2, \mathbf{a}_2)$ is independent of $\bdb$ by the first claim.
	The last inequality uses that $I(\mathbf{a}_2 ; \btx_1, \bomega \mid \bdb, \mathbf{m}_2, \mathbf{R}) = 0$,
	because given $\mathbf{m}_2$, $\bdb$, and $\mathbf{R}$, the response $\mathbf{a}_2$ depends only on the
	second server's private randomness, which is independent of everything else.
	Therefore, $H(\mathbf{R}) \ge k$.
\end{proof}

Finally, we prove that this implies shared randomness
in the single-server SPIR with preprocessing setting. Roughly speaking, this follows as it is
possible to convert any single-server SPIR with preprocessing into a two-server SPIR.

\begin{theorem}
	There is no single-server SPIR with preprocessing which is perfectly correct, private, and data private
	and supports $k < n$ queries where the server's random string $\mathbf{R}$ shared between the offline and online phases has entropy $H(\mathbf{R}) < k$.
\end{theorem}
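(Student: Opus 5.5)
The plan is to reduce to \lemref{lem:spir} by splitting a single-server SPIR with preprocessing $\spir=(\setup,\query)$ into a two-server SPIR (without preprocessing). Suppose, for contradiction, that $\spir$ is perfectly correct, perfectly private and perfectly data private, supports $k<n$ queries, and its shared server randomness satisfies $H(\mathbf{R})<k$.

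We build a two-server scheme in which both servers hold $\db$ and share the same $\mathbf{R}$.
\begin{itemize}
\item The client runs $\setup(1^\secpar;1^\secpar,\db,\mathbf{R})$ with server~1 and obtains $\st$.
\item It then runs $\query(\st,\qrys;\db,\mathbf{R})$ with server~2 and outputs $\ans$.
\end{itemize}
This is a non-simultaneous protocol: the client's messages to server~2 may depend on its view with server~1. The proof of \lemref{lem:spir} already allows this, since $\mathbf{m}_2$ is a function of $(\bomega,\btx_1)$.

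Perfect correctness transfers immediately, because the joint execution is exactly the honest SPIR execution.

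For perfect privacy against each server individually, server~1's view consists of $(\db,\mathbf{R},\btxoff)$. Here $\btxoff$ is produced before $\qrys$ is known, so its distribution is trivially independent of $\qrys$. Server~2's view consists of $(\db,\mathbf{R},\btxon)$. The privacy definition (\defref{def:spir}) gives the adversary $(\db,\bR,\btxoff,\btxon)$, and $\delta=0$, so the marginal of $\btxon$ jointly with $(\db,\bR)$ is independent of $\qrys$.

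Data privacy is similar. The client's view in the two-server scheme is exactly $(\qrys,\bst,\btxoff,\btxon)$ together with its randomness. The data-privacy adversary sees $\bst$ and both transcripts, and can be assumed to see the client's coins without loss of generality. Its output distribution is therefore identical for $\db_0,\db_1$ with $\db_0[\qrys]=\db_1[\qrys]$.

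The step that needs the most care is the quantifier mismatch. \lemref{lem:spir} is applied with the random string shared by both servers, which here is exactly $\mathbf{R}$. Its proof's first claim, that each $\btx_i$ is independent of $\bdb$ (even conditioned on $\bomega$ for $\btx_1$), must be justified from the SPIR definitions. I would verify this by replaying that argument: each transcript depends on $\db$ only through $\db[\qrys]$ and is independent of $\qrys$, so chaining single-bit changes over queries avoiding a position (using $k<n$) removes any dependence on $\db$. This holds identically for $\btxoff$ and $\btxon$.

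Everything in the argument relativizes to an independent oracle, as noted in the lemma. \lemref{lem:spir} then gives $H(\mathbf{R})\ge k$, contradicting $H(\mathbf{R})<k$.
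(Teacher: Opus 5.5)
Your proposal is correct and follows the paper's proof: split the scheme into a non-simultaneous two-server SPIR (preprocessing with server~1, queries with server~2, both sharing $\mathbf{R}$) and invoke \lemref{lem:spir}. The only difference is that the paper has server~1 compute the hint in plaintext rather than run $\setup$ interactively, which is inessential. One small correction: the data-privacy adversary of \defref{def:spir} does not see the client's coins, so giving it those coins is not ``without loss of generality.'' This step is also unnecessary, since the proof of \lemref{lem:spir} only uses data privacy to conclude that each transcript $\btx_i$ depends on $\bdb$ only through $\bdb[\qrys]$, and your construction inherits this directly from the definition.
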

\begin{proof}
We can convert any single-server SPIR with preprocessing into a two-server SPIR as follows.
We have one server to execute the offline phase and compute the $s$-bit hint. We can do this in plaintext since
this server will not receive the online query transcript and the preprocessing is done independent
of the queried indices. Using the $s$-bit hint, the client runs the query
algorithm with the second server where query privacy holds as the second server does not observe
the $s$-bit hint. For data privacy, we note that the client can only learn one entry per query as the
original single-server SPIR scheme obtained data privacy. The only difference is that the client receives
the $s$-bit hint in plaintext (as opposed to potentially using more complex cryptography such as MPC). Note,
the client learns less information about the database in this two-server SPIR scheme as it only learns the $s$-bit hint
output of the offline preprocessing phase. As a result, we obtain a two-server SPIR with the same correctness that is
both private and data private while supporting $k$ queries.
Finally, we can apply \lemref{lem:spir} that rules out the existence of such a single-server SPIR.
\end{proof}

\subsection{Toy Constructions}

Before giving our main construction in Section~\ref{se:mainspir},
we go over two simple constructions
of SPIR using no public-key cryptography in the online phase.
For these constructions, we will assume that we have access to some generic
secure two-party computation (2PC) \cite{FOCS:Yao86,STOC:GolMicWig87},
which can semi-honestly compute the functionality that our $\setup$ protocol calls for.
As these constructions are proofs of concept, we leave it to future work to optimize the
setup functionality for a specific scheme.

\heading{$n$-bit client storage, $O(k\log n)$-bit communication.}
Unlike PIR, notice that SPIR is non-trivial even when the client has $n$ bits of storage
because the client cannot store the database in plaintext, as this would violate data privacy.
However, we can still construct client-side preprocessing SPIR as follows:
\begin{itemize}
	\item $\setup(1^{\secpar} ; 1^\secpar,\db,R)$:
the client privately chooses a random permutation $\pi: [n] \to [n]$ 
and the client and server use 2PC so that the client learns
	and outputs $\st = \db \oplus (\pi \circ R)$ and the server learns nothing. 
Here, we use $n$-bit of randomness $R$ and the
	notation $\pi\circ R = (R[\pi(1)],\ldots,R[\pi(n)])$.
	
\item $\query(\st, \qrys ; \db, R)$: on input $\qrys = (\qry_1,\ldots,\qry_k)$ consisting of $k$ distinct indices, the client sends
	$\tau_u = (\pi(\qry_1),\ldots,\allowbreak\pi(\qry_k))$ and
	the server responds with $\tau_d = R[\tau_u]$. Then, the client outputs
	$\ans = \st[Q] \oplus \tau_d$.
\end{itemize}
The proofs of correctness, privacy, and data privacy are straightforward. Note that the indices in a batch must be distinct, as a repeated index would reveal the equality pattern of the batch to the server.

\heading{$O(k)$-bit client storage, $O(kn)$ communication.}
Next, we consider SPIR at the other end of the spectrum and show %
a SPIR with preprocessing which has optimal client storage but huge online
communication.
\begin{itemize}
	\item $\setup(1^{\secpar} ;1^\secpar, \db, R\in \bits^k)$:
the client privately chooses a random
	PRF key $\key\in\bits^\lambda$ for a pseudorandom function $\prf(\key, \cdot) : [k] \to \bits^n$
	and the client and server use 2PC, so that the client learns
	the $k$-bit string
	$C$ with $C[i]=\langle\prf(\key,i),\db\rangle\oplus R[i]$,
    where $\langle\cdot,\cdot\rangle$ denotes the inner product.
	Finally, the client outputs $\st = (\key, C)$.
	
	\item $\query(\st, \qrys ; \db, R)$: on input $\qrys = (\qry_1,\ldots,\qry_k)$, the client parses
	$(\key, C) \gets \st$ and uses each $e_{\qry_i} = 0^{\qry_i-1}\| 1 \| 0^{n - \qry_i}$ to send
	$\tau_u = (\prf(\key, i) \oplus e_{\qry_i})_{i\in [k]}$ to the server, and
	the server responds with $\tau_d = R \oplus (\langle \tau_u[i], \db \rangle)_{i\in [k]}$.
	Then, the client outputs $\ans = C \oplus \tau_d$.
\end{itemize}
We omit the proof that the scheme is correct, private, and data private, but it is clear by construction
that the sum of the inner products of vectors that are off by one will reveal the correct entry.

\heading{Existing 2-server SPIR.}
We additionally observe that we can convert any
information-theoretic 2-server SPIR into a an offline-online SPIR without any public-key
operations online, assuming the existence of generic 2PC. In particular, a client could
obtain $k$ independent responses from the first server under 2PC in the offline phase. Then,
the client could make online second server queries (which do not require public-key operations,
as they're information theoretically secure).  This scheme would require $O(k\cdot C_1)$
state and $k\cdot C_2$ online communication, where $C_i$ is the communication between the
client and the $i$-th server for single query. However, this doesn't leverage the asymmetry
between the offline and online phases — since most schemes to date have $C_1 = C_2$.
Also, as far as the authors are aware,
the best information-theoretic SPIR has
$C_1 = C_2 = \tO(n^{1/3})$~\cite{STOC:GIKM98}.

If we instead consider computational 2-server SPIR, we could adapt approaches using 2-server
computational PIR, e.g., those based on distributed point functions \cite{EC:GilIsh14}.
This approach is similar to our second toy construction, as we could one-time pad queries to
the first server and then query the second server, who additionally adds the same pad to their
responses. This will achieve better concrete efficiency than our toy construction above reducing
the communication to be $k\cdot \polylog(n)$.

However, all of these approaches require linear online server time. 
And, as it was done in
\cite{EC:CorKog20} for PIR, we wish to show how to use offline preprocessing to achieve
online \emph{sublinear query time} for SPIR.

\subsection{SPIR with Sublinear Online Computation}
\label{se:mainspir}
Our main construction adapts the recent works of \cite{CCS:RenMugSun24,EC:HPPY25}
to the SPIR setting.
This construction achieves both online sublinear communication and
computation per query — substantially improving over the efficiency of our
simple constructions above. Additionally, it
supports the maximum number of queries possible per offline phase (see Theorem~\ref{thm:spir})
as well as it meets an optimal state/computation trade-off from 
prior work~\cite{EC:Yeo23}, up to logarithmic factors.
Our approach is not entirely generic, but it could possibly be applied to other PIR with
client-side preprocessing; however, we critically use that the scheme does not require
running parallel repetition to achieve a negligible failure rate.

For our high-level overview, we recall the way that recent PIRs with preprocessing schemes
work at high level
\cite{EC:CorKog20,EC:CorHenKog22,CCS:RenMugSun24,SP:ZPZS24,EC:HPPY25,EC:WanRen25}.
We will then, give an overview of the difficulties and solutions that come up when augmenting these
schemes with data privacy. Finally, we will give our more detailed construction.

\heading{PIR construction review.}
Simplifying a bit, the way that these schemes preprocess for their PIR queries is to select
$h = \tO(\sqrt{n})$ random sets $S_1,\ldots,S_{h}$ each of size (about) $\sqrt{n}$.
Then, the client will learn the XOR of each $S_i$ for the database, i.e.,
$p_i \gets \bigoplus_{x\in S_i} \db[x]$, and finally the client then stores each $(S_i,p_i)$.

In the online phase, to query index $\qry$, the client will find some set $S_i$ with $\qry\in S_i$,
which exists with high probability based on the choice of $h$. Then, the client sends
$S' \gets S_i \setminus \qry$ to the server, who returns $p' \gets \bigoplus_{x \in S'} \db[x]$ to the client.
Finally, the client recovers $\db[\qry] \gets p' \oplus p_i$ correctly. Notice that this scheme
requires large client storage; however, we can easily compress the representation of the
random sets with a PRF, which we do below.

As written, the above has $\Theta(1/\sqrt{n})$ privacy for single query
and no clear way to support multiple queries. But, all schemes to date operate in this overall paradigm of
preprocessing by finding the XOR of small random sets, choosing a set containing their query, and
recovering the database element by asking the online server for the XOR of a modified set. The details
of how they support multiple queries and achieve stronger privacy are scheme specific, and we will
only illuminate these details later for our specific scheme.

\heading{Transforming PIR to SPIR.}
In order to transform this scheme from a PIR to a SPIR, we first need to modify the state so
that it reveals no information about the database $\db$. Since the sets $S_1,\ldots,S_{h}$ are
already generated independently of the database, we just need to mask the parity bits
$p_1,\ldots,p_h$. For all schemes like this, we can just add on a single one-time pad for each
of these, so that they are actually computed as $p_i \gets R[i] \oplus \bigoplus_{x\in S_i} \db[x]$
for a uniform and independent $R[i]$. Additionally, symmetric PIR restricts \emph{how}
we can run the offline preprocessing, which is often either done by streaming or homomorphic
encryption. So, we assume that the entire functionality of the streaming is run inside of a
generic 2PC protocol, which provides privacy of the database (since the state is now
independent of the database).

From this modification, it is straightforward to modify the online phase. All that is required
is that the client, in addition to sending the set $S'$, sends the index $i$ for the hint that
they used for the query. Then, the server can return $p' \gets R[i] \oplus \bigoplus_{x \in S'} \db[x]$,
and the client can learn $\db[\qry] \gets p_i \oplus p'$ as before.
This transformation will provide data privacy, because the client only downloaded one entry of
information! So, even a malicious client will be unable to infer any more than a single database
entry.

\heading{Dealing with complications.}
The three main details left out of the above transformation are: (1) how to compress set
representations, (2) how to achieve negligible privacy, and (3) how to support multiple queries.
In the next subsection, we give the details on how these are overcome with specific pseudocode.
Here we provide the sketch of how we can lift the same ideas that were used in recent
works \cite{CCS:RenMugSun24,EC:HPPY25} from PIR to SPIR.

First, we modify the sets that are used for the hints. Instead of choosing random
sets of size $\sqrt{n}$, we break up $[n]$ into $\sqrt{n}$ blocks of size $\sqrt{n}$ and
restrict our sets so that that have a at most a single element from each block. We do
this by choosing $\sqrt{n}/2 + 1$ random blocks and sampling a random offset (in $[\sqrt{n}]$)
for each one. These elements constitute a hint set $S$ and require us to store one PRF key for
the partition sampling and one invertible PRF key per block to sample offsets, as was done in
\cite{EC:HPPY25}.\footnote{As observed in prior works, it is only necessary
for the client to choose a single master key for the entire state — from which they can derive
other implicit keys.} In the offline phase, the client will learn the (one-time padded) XOR of these
sets just as before.

Now, to query the index $\qry$, the client again finds a hint set $S_i$ which contains $\qry$.
The client will remove $\qry$ from $S_i$ to get $S'$ and sample fresh random offsets for
the blocks which are not covered by $S'$ (including the block that $\qry$ belongs to).
Then, the client will upload all of these offsets ordered by the blocks, together with
the blocks covered by $S'$ and the index $i$. %
Then, the server will send back the XOR of the values in these blocks as $p_1'$ in addition to
the XOR of the values in the other blocks as $p_2'$,  each padded with $R[i]$. Then, the
first of which can be used with the
precomputed parity to learn $\db[\qry] \gets p_1' \oplus p_i$.

Unfortunately, as described, the above as two issues. First, it is not private if we reveal the
blocks that $S'$ covers to the server, since it allows the server to learn which blocks our query
was certainly not in. Additionally, if we pad both $p_1'$ and $p_2'$ with the same value, then
the data privacy is no longer preserved! To fix both of these problems, our scheme will
with probability $1/2$ send the blocks that are not covered by $S'$ rather than the blocks
covered by $S'$, which is effectively just permuting whether $p_1'$ or $p_2'$ is the relevant
value to recover $\db[\qry]$.

Additionally, we want to modify the online server so that it pads the $p_1'$ and $p_2'$ with
different bits, so that
the irrelevant $p_j'$ is uniformly random and unknown to the client but that
the relevant $p_j'$ needs to remain hidden to the server for the client's privacy.
To handle this tension, we can modify preprocessing so that, for each hint, the client privately
selects whether it will send the real or fake blocks. Then, in the 2PC functionality, we have
$p_i$ set so that it is padded with either the left or right random bit depending on the choice
(so that the other bit remains independent of the client state).  Then online, the client will send
send the real or fake blocks based on its offline choice rather than a new random choice. This way
both the client's privacy and the server's data privacy are achieved.

\heading{Multiple queries.}
Fortunately, we can modify the techniques from prior work to achieve privacy even for adaptive 
queries, even though our definitions are written as batch queries. To do this, the key insight is
to specify a way to \emph{refresh hints}. At a high level, we cannot reuse hints while preserving
privacy and removing hints from the pool of hints biases the distribution of queries in a noticeable 
way. To rectify this, in addition to preprocessing the hints, we compute a collection of ``backup''
hints.

Following prior work, these backup hints have a similar but different structure than the normal
hints. Specifically, a backup hint consists of a set of size $\sqrt{n}/2$, based on a random partition
of size $\sqrt{n}/2$ blocks and $\sqrt{n}/2$ offsets and
the XOR of that set (padded with one of two random bits so that it gives no information
about the database).
After a client learns the value $\db[\qry]$, we can refresh the hint that
was used by finding a backup hint that does not cover the partition $\qry$.
Then, we can take the compressed set representation (i.e., PRF key) $S$, and store
$(S,\qry,p\oplus \db[\qry])$, which represents the hint set $S\cup \{\qry\}$ and its padded
parity as the refreshed hint. Then, this can be used and a normal hints and preserves the
distribution of the preprocessed hints (since we removed one hint conditioned on it containing
$\qry$ and added back a hint conditioned on it containing $\qry$).

The only remaining issue with our scheme, is that we have to set the index of the hint that
we send to the server. This is not an issue when making a single query, since it is distributed
identically for any query $\qry$. However, for multiple queries, the hint index leaks information
about the queries that have been made.\footnote{In particular, if a user queries for
an index $\qry$ which has already appeared in a previous query, then the index sent
to the server is more likely to be larger (since it is more likely that the user must use a backup hint).}
As a simple fix to this issue, the client can choose a secret random permutation $\pi$ over their
hints and backup hints. This way, every query just sends a distinct random index to the server
no matter what the client queries.

This is essentially the entire sketch of our scheme, which can continue processing
querying adaptively until it eventually runs out of backup hints. Next, we give a more detailed
description of the above and give the code to support any specified $k$ number of queries.

\subsection{Detailed Construction}
Here, we provide our detailed construction and theorem, with
the appropriate details. We
show how to handle $k$ adaptive queries for any $k$, taking
$h = k \cdot \polylog(\lambda,n)$, so that with high probability the online phase
can find valid hints for the online phase with all but negligible probability. This also involves
modifying the sizes of the hint sets (to $n/k$ rather than $\sqrt{n}$) for better efficiency.
We additionally assume that $t$ (and therefore $k$) divides $n$ without loss of generality,
since one could always pad the database up to the nearest multiple of $t$.

\begin{figure}
	\begin{center}
		\fbox{
			\begin{minipage}{4cm}
				\begin{tabbing}
					123\=123\=123\=123\=123\=\kill
					\underline{$\mathcal{F}(\key_p, \key_1,\ldots,\key_{n/k}, \pi, B; \db, R)$}\\
					Parse $(r_{i,0}, r_{i,1})_{i\in [h+t]} \gets R$\\
					For each $i = 1,\ldots,h$:~~ \texttt{// construct regular hints}\\
					\> $P \gets \prf_1(\key_p, i)$ ; $o_j \gets  \prf_2(\key_j, i)$ for each $j \in [n/k]$\\
					\> $X \gets \{(p-1)\cdot k + o_p : p\in P\}$\\
					\> $H[i] \gets \bigoplus_{x \in X} \db[x] \oplus r_{\pi(i),B[i]}$\\
					For each $i = h+1,\ldots,h+t$:~~\texttt{// construct backup hints}\\
					\> $P \gets \prf_1(\key_p, i)$ ; $o_j \gets  \prf_2(\key_j, i)$ for each $j \in [n/k]$\\
					\> $d \getsr P$ ; $P \gets P \setminus\{d\}$\\
					\> $X \gets \{(p-1)\cdot k + o_p : p\in P\}$\\
					\> $H[i] \gets (d, \bigoplus_{x \in X} \db[x] \oplus r_{\pi(i),B[i]})$\\
					Return $(H ; \bot)$
				\end{tabbing}
			\end{minipage}
		}
		\caption{Functionality for the setup phase of SPIR based on
			\cite{CCS:RenMugSun24,EC:HPPY25}.}
		\label{fig:hppy-func}
	\end{center}
\end{figure}

\begin{itemize}[leftmargin=*]
	\item $\setup(1^{\secpar} ;1^\secpar, \db\in \bits^n, R\in \bits^{2(h+t)})$: 
the client privately samples keys $\key_p\from\bits^\secpar$ 
for PRF $\prf_1(\key_p,\cdot):[h+t] \to {n/k \choose n/2k + 1}$,
keys $\key_1,\ldots,\key_{n/k}\from\bits^\secpar$ 
for invertible PRF\footnote{See \cite{EC:HPPY25} for definitions and construction of invertible PRFs.}
$\prf_2(\key,\cdot) : [h+t] \to [k]$,
	a secret permutation $\pi$ over $[h+t]$, and $B \getsr \bits^{h+t}$. Run
	secure MPC for functionality $\mathcal{F}$ specified in \figref{fig:hppy-func}, so the client
	learns $H$ and then outputs 
$\st = (\key_p, \key_1,\ldots,\key_{n/k}, \pi, B, H)$.

	\item $\query(\st, \qrys ; \db, R)$: on input $\qrys = (\qry_1,\ldots,\qry_k)$, the client 
	parses $(\key_p, \key_1,\ldots,\key_{n/k}, \pi, B, H)\gets\st$.
	Then, iteratively, the client runs the following query protocol for each $i \in [k]$:
	
    \begin{itemize}
    \item The client picks a random unused (regular or refreshed) hint $H[i^*]$ which
    contains $\qry_i$ (in block $j$), sets
	$(o_1,\ldots,o_{n/k})\gets (\prf_2(\key_1,i^*),\ldots, \prf_2(\key_{n/k},i^*))$,
	partition set $P \gets \prf_1(\key_p,i^*) \setminus \{j\}$, and resamples $o_j \getsr [k]$.

	The client sends $(o_1,\ldots,o_{n/k})$, $\pi(i^*)$, and $P$
	(if $B[i^*] = 0$) or its complement $\overline{P} = [n/k] \setminus P$ (if $B[i^*] = 1$).

    \item The server receives offsets $o_1,\ldots, o_{n/k}$, index $\iota$,
    and partition set $P'$ and computes
	$X_0 \gets \{(p-1)\cdot k + o_p : p\in P'\}$ and $X_1 \gets \{(p-1)\cdot k + o_p : p\in [n/k] \setminus P'\}$.

    The server
	returns $s_b = \bigoplus_{x \in X_b} \db[x] \oplus r_{\iota,b}$ for $b = 0,1$.
	
    \item 
	The client computes $\db[\qry_i] = H[i^*] \oplus s_{B[i^*]}$.
	For the next query, the client uses this result to refresh a
	backup hint. 
    In particular, the client will find a backup hint $(d,x) \gets H[\hat{i}]$ such that
	$j\not\in \prf_1(\key_p,\hat{i})$ or $j = d$.  Then, the client stores $H[\hat{i}] \gets (j, d, x \oplus \db[\qry_i])$.
	For this hint, the client will construct the set $P$ as $(\prf_1(\key_p,\hat{i}) \setminus \{d\}) \cup \{j\}$.\footnote{We omit this in the above pseudocode for simplicity, see \cite{EC:HPPY25} for detailed
	pseudocode about managing backup hints.}
	
	At the end, the client outputs $(\db[\qry_1], \ldots, \db[\qry_k])$.
\end{itemize}
\end{itemize}

We write the construction in the form of $k$ rounds to illustrate that our construction can
be used without batching. If instead one is specifically concerned about batching, one
could figure out the offsets and partitions for refreshed backup hints before learning the
results. Then, one can continue the loop before ever querying the server and send all
$k$ queries at once. Then, as long as the database entries are inferred in the correct order,
the client can figure out all $k$ values correctly.

\begin{theorem}[SPIR construction]\label{thm:spir-hppy}
Assuming the existence of one-way functions and semi-honest generic 2PC, there exists a SPIR with
preprocessing scheme for $k$ queries on an $n$-bit database which:
\begin{itemize}[noitemsep]
	\item uses no online public-key operations,
	\item operates in a single round-trip,
	\item $\tO(k)$ state size,
	\item $\tO(n/k)$ per-query upload, $O(1)$ per-query download,
	\item and $\tO(n/k)$ per-query computation.
\end{itemize}
\end{theorem}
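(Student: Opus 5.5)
Proving Theorem~\ref{thm:spir-hppy} amounts to checking the five efficiency claims and then establishing correctness, query privacy, and data privacy separately. Throughout, the $\setup$ functionality $\mathcal{F}$ of \figref{fig:hppy-func} is treated as an ideal call, justified by the semi-honest 2PC composition theorem. The PRF $\prf_1$ and the invertible PRF $\prf_2$ are replaced by truly random functions via a standard hybrid, so the hint sets become genuinely random block-partitions with random offsets.

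\textbf{Efficiency.} This part is immediate once $h=k\cdot\polylog(\secpar,n)$ and $t=k\cdot\polylog(\secpar,n)$ are fixed.
\begin{itemize}
\item \emph{State.} The state consists of $O(n/k)$ PRF keys, though these can be derived from one master key. It also holds $\pi$, which has an $O(\secpar)$-bit representation via a small-domain PRP, together with $B$ and $H$. The total is therefore $\tO(k)$.
\item \emph{Online cost.} Each query uploads $n/k$ offsets of $\log k$ bits plus a partition set, for $\tO(n/k)$ bits. The server returns the two bits $s_0,s_1$ after XORing $n/k$ database entries.
\item \emph{Client work.} Finding a hint containing $\qry$ uses the inverse of $\prf_2$, exactly as in \cite{EC:HPPY25}, in $\tO(n/k)$ time.
\item \emph{Rounds and cryptography.} The protocol uses a single round trip and only symmetric-key operations online.
\end{itemize}

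\textbf{Correctness.} Correctness follows from \cite{CCS:RenMugSun24,EC:HPPY25}, with one additional check. Unpadding with $r_{\pi(i^*),B[i^*]}$ and the matching choice of $P$ or $\overline{P}$ yields $\db[\qry]$. The probability that no unused hint covers $\qry$, or that the backup hints run out before $k$ queries, is bounded by a Chernoff/union bound, and is negligible for the chosen $h,t$.

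\textbf{Query privacy.} I would use a hybrid argument in the style of \cite{EC:HPPY25}. The server's view for each query is a tuple of offsets, an index $\pi(i^*)$, and a partition set. The key invariant, taken from that work, is that after each refresh the multiset of unused hints is distributed as fresh independent hints. Given the invariant:
\begin{itemize}
\item the offsets are uniform, since the one in block $j$ is resampled;
\item the partition set is a uniform $n/2k$-subset, because $B[i^*]$ is uniform and hidden, so sending $P$ or its complement hides which side contains $j$;
\item $\pi(i^*)$ is a fresh uniformly random unused index, because each hint is consumed at most once and $\pi$ is secret.
\end{itemize}
The main obstacle is that the refreshed hint's position is correlated with the earlier queries. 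The secret permutation $\pi$ is introduced precisely to remove this correlation, and proving it requires checking that the index sent is distinct from all previous ones and otherwise independent.

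\textbf{Data privacy.} I would build a simulator for the client's view from $\db[\qrys]$ alone. The client's state contains $H[i]$ padded with $r_{\pi(i),B[i]}$ for distinct pads, so $H$ is uniform. Online, for each used index $\iota=\pi(i^*)$, the client receives two bits. One of them, $s_{B[i^*]}$, is determined by $H[i^*]$ and $\db[\qry]$. The other is padded by $r_{\iota,1-B[i^*]}$, which appears nowhere else in the view, so it is uniform. Each $\iota$ is used at most once, because refreshed backup hints carry their own index and pad. The simulator can therefore sample $H$ uniformly, set $s_{B}=H\oplus\db[\qry]$, and sample the other bit uniformly. This gives perfect data privacy against semi-honest clients, up to the 2PC simulation.
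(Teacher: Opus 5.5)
Your proposal is correct and follows essentially the same route as the paper's proof. Efficiency is by inspection, with the keys derived from one master key. Correctness comes from the negligible probability that no active hint contains the query or no refreshable backup hint exists. Query privacy comes from the uniform offsets, the $B$-randomized choice between $P$ and its complement, and the secret permutation $\pi$ making the sent indices uniform without replacement under the active-hint invariant. Data privacy comes from the pad $r_{\iota,1-B[i^*]}$ being independent of the client's view, plus the 2PC simulation. (Minor point: storing $\pi$ explicitly already costs only $(h+t)\log(h+t)=\tO(k)$ bits, so the small-domain PRP is unnecessary, though harmless with one extra hybrid.)
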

\begin{proof}
The construction we analyze is described by the $\setup$ and $\query$ algorithms earlier in this section.
Each of the claims about the scheme's efficiency can be verified by direct inspection of the scheme
and the observation that the keys $\key_1,\ldots,\key_{n/k}$ can stored in $O(1)$ space
and derived from a single master key in the client's memory. For claims about invertible PRF efficiency
and security, see \cite{EC:HPPY25}. Throughout, we additionally assume that
the 2PC protocol used is secure and correct with all but negligible probability.

For the purposes of this proof, we will call the hints which can be selected for a query ``active'' hints.
For this first query, this is exactly the regular hints. When a hint is used, it is no longer considered active,
and the backup hint that is refreshed is considered active. We will use the fact that for any of the $k$
queries, the distribution of active hints is distributed identically with all but negligible probability. In particular,
as long as the query function finds a suitable hint for the query and there is some backup hint that gets
refreshed, the distribution is identical to that of the regular hints. And, both of these events occur with all
but $\negl(\lambda,n)$ probability by our choice of $h = t = k\cdot \polylog(\lambda,n)$ and the
pseudorandomness of the PRFs (see \cite{CCS:RenMugSun24}).

The protocol $\query(\st,\qry; \db, R)$ as described will be correct whenever some active hint
contains the query $\qry$, which as mentioned happens with all but negligible probability.

The privacy of our protocol follows from the similar proofs given in prior works
\cite{CCS:RenMugSun24,EC:HPPY25}, which we extend here. First, each offset sent to the server is
distributed uniformly within the block and independently of the chosen index. This is because we only select
a hint based on whether the specific offset for $\qry$ is in the set. So, all other offsets are pseudorandom
and independent if the PRF is secure, and the offset for $\qry$ is replaced by a fresh random sample.
The partition sent is private, because it is only selected based on if the relevant block is in the partition.
But, then this block is removed and the partition is inverted with probability 1/2. So, the server only sees
a truly random partition of the $n/k$ blocks. Although this inversion is chosen at preprocessing time by the
client, the 2PC security ensures that the server cannot learn about which partition $\qry$ belongs to.
Finally, the indices sent to the server are $k$ random values, drawn without replacement from $[h+t]$,
because the permutation $\pi$ is also kept private from the server and no hint is ever re-used.

More formally, consider an adversary $\A$ gets as input $\txoff$ that is simulated from the server's
inputs only, and assume the pseudorandom functions are replaced with truly random ones. Then, no
matter what the input query set $\qrys$ is, the
uploaded transcript $\txon$ will appear as $k$ messages each with $n/k$ random offsets,
a random partition of $n/k$ blocks, and a uniformly random index drawn from $[h+t]$ without
replacement. Therefore, $\A$ cannot distinguish in this world between any two query sets $\qrys_0$
and $\qrys_1$.

Finally, data privacy is ensured by the security of the 2PC protocol. This is easy to see though.
The download of the $\query$ protocol is only $2$ bits, and one of these bits is padded with
a random value $r_{i,b}$ which is independent of the client's state. Because, in preprocessing,
the client committed through their string $B$ to one of the two $r_{i,0}, r_{i,1}$ values. The other
value is not used to generate any part of the output $H$. Therefore, one of the two bits
the client downloads is padded with a uniformly random and independent bit from the client's
view, which therefore gives the client no information about the database.

A bit more formally, if we give an adversary  $\A$, a state $\st$ and $\txoff$ which is
simulated using only the client's input and the output $H$ of the 2PC protocol, then
the adversary cannot distinguish between databases with $\db[Q]$ fixed. Because,
of their $2k$ bits downloaded in $\txon$, $k$ are independent of their view and
the other $k$ can be simulated from $\db[Q]$ and $H$. And, therefore, the client can never
distinguish between two databases $\db_0$ and $\db_1$ conditioned on their entries at $\qrys$
being the same.
\end{proof}

For simplicity, we give definitions that are more in line with the syntax from our definition of PIR
(\defref{def:pir}). However, our construction has some additional benefits that we do not make
formal in this work. 

\heading{Adaptive security.}
Our construction, although stated as a batch SPIR, is actually secure even if the client is allowed to issues
its queries adaptively. This is because the client can refresh the hint it consumed at the end of each query,
as is done in the adaptive schemes of \cite{CCS:RenMugSun24,EC:HPPY25}.

\heading{Online perfect data privacy and privacy.}
We comment that our scheme, as written, actually achieves perfect
data privacy against adversaries
without access to the offline transcript. This is essentially because the hints stored in $\st$ are all
padded and give no information about the database until queried. Then, on each query, the client is
only able to infer one of the database entries, as the other bit of information it receives is padded
with a bit that is independent of $\st$. Moreover, if we treat the
pseudorandom functions as random oracles, then the queries are
perfectly private
against adversaries without the offline transcript. This shows that our
scheme matches the limits shown in Lemma~\ref{lem:spir}.

\section*{Acknowledgements}
We'd like to thank David Cash for helpful suggestions pointing us toward
prior results in big-key cryptography.
We would also like to thank the
anonymous reviewers for their helpful comments and suggestions, including
one who pointed out a flaw in an earlier version of this work.
We are grateful to Wei-Kai Lin, Shiyu Li, and Mingxun Zhou
for pointing out a proof gap in our previous
version regarding database encodings.

Part of this work was done while the second author was at Google, New York.
The work of the second author at the Universit\`a di Salerno is partially supported by a Google Sponsored Research Agreement.
Part of this work was done while the last
author was at Columbia University that was partially supported by NSF grant CCF-2312242. 

{\small
\bibliographystyle{alpha}
\bibliography{others,abbrev0,crypto}
}

\appendix
\section{Communication Costs for Few Queries}
\label{ap:small-queries}

We note that many of our lower bounds show that PIR with preprocessing
requires $t = \Omega(n/s)$
amortized communication (see Section~\ref{sec:comm-lb}). However, all of
these only apply to constructions in our model (in particular, with
oracle-free encodings) with $s$-bit client storage
that support at least $k = \Omega(s)$ queries per offline phase.
One may wonder, what is the communication complexity
of schemes that potentially support a small and bounded number of
queries per offline phase where our lower bound
no longer applies? It turns out that our lower bound is not applicable for
good reason as there exist constructions with smaller communication in
this case.

In particular, the recent works of
Ghoshal et al.~\cite{EC:GhoZhoShi24,EC:GZSP25}
show that this communication limit is not inherent. They provide PIR which
achieve $\tilde{O}(\sqrt{n/s})$ and $\polylog(\lambda, n/s)$ (respectively) per-query communication
without re-initializing the state. Re-initializing the state costs amortized $O(n/s)$ download, matching our lower bound.
We leave it as an open problem to prove a lower bound on the upload
and download trade-offs for constructions that support only $k \le s$ queries.

\section{Generalization to Resampleable Oracles}
\label{sec:ap-oracles}

A very helpful anonymous reviewer pointed out that our techniques are not limited to crypto oracles. 
Prior works looking at crypto oracles
``compile out'' their oracle at some point in their
proof~\cite{EC:DujHaj24,EC:LinMooWic25}. This
leads to a setting where an honest party can simulate the oracle for the
entire protocol.

In contrast, our results require no such transformation.
We only require that the adversary can sample oracles $\Or'$ and
check for those which are \emph{consistent} with a given set
of $q$ pairs of query inputs and outputs 
$S = \{(\mathsf{in_1},\mathsf{out_1}),\ldots,(\mathsf{in_q},\mathsf{out_q})\}$.
In fact, our proof does not even require that the adversary can do this efficiently, only that it can be done in some
bounded amount of time. From the main body of the paper,
\lemref{lem:oracle_resample} shows that crypto oracles from prior works fall into this class.

More precisely, 
all we need is that the (random) oracle $\Or$ satisfies the following definition of resampleability.

\begin{definition}[Resampleable Oracles]
\label{def:resampleable-oracles}
We say an oracle (random variable) $\Or$ is
$(q,\delta)$-\emph{resampleable} if there exists
a (possibly inefficient) algorithm $\mathcal{A}$ such that for
all sets $S = \{(\mathsf{in}_i,\mathsf{out}_i)\}_{i\in [q]}$, 
we have that 
$\Or' \gets \mathcal{A}(S)$ and
\[
    \Delta(\Or', \Or \mid \forall_{i\in [q]}~
    \Or(\mathsf{in}_i) = \mathsf{out}_i ) < \delta.
\]
In other words, $\mathcal{A}$ can resample an oracle that is $\delta$-close
(in statistical distance) to a fresh sample of $\Or$ conditioned on the event 
that $\Or$ is consistent with the set of queries $S$.
\end{definition}

From this definition, it is clear that the construction in \secref{sec:reduce-dual-pir} and \thref{thm:main-lower-bound} apply to this class
of oracles and not only to crypto oracles, under the same restriction
that the encoding algorithm $\init$ makes no oracle queries
(\defref{def:oracle-free}).
This is just because we can appeal directly to the definition everywhere that we otherwise would've appealed
to \lemref{lem:oracle_resample}, as long as $q = \poly(\secpar)$ and
$\delta \le 2^{-(4n + 2s + 1)}$. Here, the bound on $\delta$ comes from the fact that $\hint$ and $\recon$ resample the oracle at most $2^{3n + 2s + 1}$ times in total and a hybrid argument charges $\delta$ per resampling, so the total statistical distance from the crypto oracle case is at most $2^{-n}$, which only changes the $2^{-n+1}$ term of \thref{thm:dualpirmain} by a constant factor. Note that $\delta$ must be exponentially small in $n$; negligible in $\secpar$ does not suffice.

\begin{remark}
While this is a generalization of our results, %
at the time of writing, the authors are unaware of any interesting oracle
class $\Or$ that is not a crypto oracle but is still useful for cryptography.
Depending on which oracles are shown to be $(q, \delta)$-resampleable, this 
definition may be useful for showing further impossibility results or general
barriers in constructing cryptography.
\end{remark}

\end{document}